\documentclass{article}

\usepackage{fullpage}
\usepackage[bottom]{footmisc}
\usepackage{amsmath}
\usepackage[utf8]{inputenc}
\usepackage[numbers]{natbib}
\usepackage{amsmath,amssymb}
\usepackage{bbm}

\DeclareMathOperator*{\argmax}{arg\,max}
\DeclareMathOperator*{\argmin}{arg\,min}
\usepackage[ruled]{algorithm2e}
 



\usepackage{hyperref}
\usepackage{cleveref}

\usepackage{amsthm}
\newtheorem{theorem}{Theorem}[section]
\newtheorem{lemma}[theorem]{Lemma}
\newtheorem{claim}[theorem]{Claim}

\newtheorem{definition}[theorem]{Definition}
\newtheorem{corollary}[theorem]{Corollary}

\newtheorem{observation}[theorem]{Observation}

\newtheorem{conjecture}[theorem]{Conjecture}

\usepackage{fullpage}

\usepackage[pdftex]{graphicx}

\usepackage{comment}
\usepackage{mathtools}

\usepackage[shortlabels]{enumitem}

\usepackage[shortlabels]{enumitem}

\newcommand{\ignore}[1]{}

\newcommand{\hide}[1]{}


\usepackage[shortlabels]{enumitem}

\usepackage[shortlabels]{enumitem}

\newcommand{\A}{\mathcal{A}}
\newcommand{\B}{\mathcal{B}}

\newcommand{\M}{\mathcal{M}}

\usepackage{caption}
\usepackage{xcolor}


\def\Rset{\mathbb{R}}

\DeclareMathOperator*{\conv}{conv}

\DeclareMathOperator{\Reg}{\mathsf{Reg}}
\DeclareMathOperator{\diam}{diam}

\newcommand{\cA}{\mathcal{A}}

\newcommand{\cM}{\mathcal{M}}

\newcommand{\cO}{\mathcal{O}}

\newcommand{\cT}{\mathcal{T}}

\newcommand{\cX}{\mathcal{X}}

\newcommand{\bb}{{\mathbf b}}

\newcommand{\eps}{{\varepsilon}}

\DeclareMathOperator{\Prof}{\mathsf{Prof}}
\newcommand{\MB}{\M_{\mathrm{MB}}}
\DeclareMathOperator{\BR}{\mathsf{BR}}


\newcommand{\csp}{\varphi}

\newcommand{\Up}{U^{+}}
\newcommand{\Um}{U^{-}}

\newcommand{\Uzs}{U_{ZS}}
\newcommand{\ox}{\overline{x}}

\newcommand{\boundary}{T^{1-\tau}}
\newcommand{\regret}{\alpha}

\newcommand{\intersectionmb}{\mathcal{M}_{MI}}

\newcommand{\esh}[1]{}

\newcommand{\nat}[1]{}
\newcommand\jon[1]{}

\newcommand{\ftrl}{\textsf{FTRL}}
\DeclareMathOperator{\regbound}{\rho}
\newcommand{\wtA}{\widetilde{A}}
\newcommand{\opt}{\text{opt}}
\DeclareMathOperator{\sshift}{\zeta}
\usepackage{nicematrix}

\title{Pareto-Optimal Algorithms for Learning in Games}
\author{
Eshwar Ram Arunachaleswaran\thanks{University of Pennsylvania. 
Email: \texttt{eshwarram.arunachaleswaran@gmail.com} }
\and 
Natalie Collina\thanks{University of Pennsylvania.
Email: \texttt{ncollina@seas.upenn.edu} }
\and
Jon Schneider\thanks{Google Research. Email: \texttt{jschnei@google.com}}
}

\begin{document}

\maketitle
\abstract{We study the problem of characterizing optimal learning algorithms for playing repeated games against an adversary with unknown payoffs. In this problem, the first player (called the learner) commits to a learning algorithm against a second player (called the optimizer), and the optimizer best-responds by choosing the optimal dynamic strategy for their (unknown but well-defined) payoff. Classic learning algorithms (such as no-regret algorithms) provide some counterfactual guarantees for the learner, but might perform much more poorly than other learning algorithms against particular optimizer payoffs. 

In this paper, we introduce the notion of \emph{asymptotically Pareto-optimal learning algorithms}. Intuitively, if a learning algorithm is Pareto-optimal, then there is no other algorithm which performs asymptotically at least as well against all optimizers and performs strictly better (by at least $\Omega(T)$) against some optimizer. We show that well-known no-regret algorithms such as Multiplicative Weights and Follow The Regularized Leader are Pareto-dominated. However, while no-regret is not enough to ensure Pareto-optimality, we show that a strictly stronger property, no-swap-regret, is a sufficient condition for Pareto-optimality. 

Proving these results requires us to address various technical challenges specific to repeated play, including the fact that there is no simple characterization of how optimizers who are rational in the long-term best-respond against a learning algorithm over multiple rounds of play. To address this, we introduce the idea of the \emph{asymptotic menu} of a learning algorithm: the convex closure of all correlated distributions over strategy profiles that are asymptotically implementable by an adversary. Interestingly, we show that all no-swap-regret algorithms share the same asymptotic menu, implying that all no-swap-regret algorithms are ``strategically equivalent''.

}

\pagebreak
\tableofcontents

\section{Introduction}
\label{sec:introduction}
    
Consider an agent faced with the problem of playing a repeated game against another strategic agent. In the absence of complete information about the other agent’s goals and behavior, it is reasonable for the agent to employ a learning algorithm to decide how to play. This raises the (purposefully vague) question: What is the ``best’’ learning algorithm for learning in games?

One popular yardstick for measuring the quality of learning algorithms is \emph{regret}. The regret of a learning algorithm is the worst-case gap between the algorithm’s achieved utility and the counterfactual utility it would have received if it instead had played the optimal fixed strategy in hindsight. There exist learning algorithms which achieve sublinear $o(T)$ regret when played across $T$ rounds (\emph{no-regret algorithms}), and researchers now have a very good understanding of the strongest regret guarantees possible in a variety of different settings. It is tempting to conclude that one of these regret-minimizing algorithms is the optimal choice of learning algorithm for our agent.

However, many standard no-regret algorithms -- including popular algorithms such as Multiplicative Weights and Follow-The-Regularized-Leader -- have the unfortunate property that they are vulnerable to strategic manipulation \cite{deng2019strategizing}. What this means is that if one agent (a learner) is using a such an algorithm to play a repeated game against a second agent (an optimizer), there are games where the optimizer can exploit this by playing a time-varying dynamic strategy (e.g. playing some strategy for the first $T/2$ rounds, then switching to a different strategy in the last half of the rounds). By doing so, in some games the optimizer can obtain significantly more ($\Omega(T)$) utility than they could by playing a fixed static strategy, often at cost to the learner. This is perhaps most striking in the case of auctions, where~\cite{braverman2018selling} show that if a bidder uses such an algorithm to decide their bids, the auctioneer can design a dynamic auction that extracts the full welfare of the bidder as revenue (leaving the bidder with zero net utility). On the flip side, \cite{deng2019strategizing} show that if the learner employs a learning algorithm with a stronger counterfactual guarantee -- that of \emph{no-swap-regret} -- this protects the learner from strategic manipulation, and prevents the optimizer from doing anything significantly better than playing a static strategy for all $T$ rounds. Perhaps, then, a no-swap-regret algorithm is the ``best’’ learning algorithm for game-theoretic settings.

But even this is not the complete picture: even though strategic manipulation from the other agent \emph{may} harm the learner, there are other games where both the learner and optimizer can benefit from the learner playing a manipulable algorithm. Indeed,  \cite{guruganesh2024contracting} prove that there are contract-theoretic settings where both the learner and optimizer benefit from from the learner running a manipulable no-regret algorithm (with the optimizer best-responding to it). In light of these seemingly contradictory results, is there anything meaningful one can say about what learning algorithm a strategic agent should use? 

\subsection{Our results and techniques}

In this paper, we acknowledge that there may not be a consistent total ordering among learning algorithms, and instead study this question through the lens of \emph{Pareto-optimality}. Specifically, we consider the following setting. As before, one agent (the learner) is repeatedly playing a general-sum normal-form game $G$ against a second agent (the optimizer). The learner knows their own utility $u_L$ for outcomes of this game but is uncertain of the optimizer’s utility $u_O$, and so commits to playing according to a learning algorithm $\cA$ (a procedure which decides the learner’s action at round $t$ as a function of the observed history of play of both parties). The optimizer observes this and plays a (potentially dynamic) best-response to $\cA$ that maximizes their own utility $u_O$. We remark that in addition to capturing the strategic settings mentioned above, this asymmetry also models settings where one of the participants in a repeated game (such as a market designer or large corporation) must publish their algorithms up front and has to play against a large collection of unknown optimizers.

We say one learning algorithm $\cA$ \emph{(asymptotically) Pareto-dominates} a second learning algorithm $\cA'$ for the learner in this game if: i. for any utility function the optimizer may have, the learner receives at least as much utility (up to sublinear $o(T)$ factors) under committing to $\cA$ as they do under committing to $\cA'$, and ii. there exists at least one utility function for the optimizer where the learner receives significantly more utility (at least $\Omega(T)$ more) by committing to $\cA$ instead of committing to $\cA'$. A learning algorithm $\cA$ which is not Pareto-dominated by any learning algorithm is \emph{Pareto-optimal}. 

We prove the following results about Pareto-domination of learning algorithms for games.

\begin{itemize}
\item First, our notion of Pareto-optimality is non-vacuous: there exist many learning algorithms (including many no-regret learning algorithms) which are Pareto-dominated. In fact, we can show that there exist large classes of games where any instantiation of the Follow-The-Regularized-Leader (FTRL) with a strongly convex regularizer is Pareto-dominated. This set of learning algorithms contains many of the most popular no-regret learning algorithms as special cases (e.g. Hedge, Multiplicative Weights, and Follow-The-Perturbed-Leader).

\item In contrast to this, any no-swap-regret algorithm is Pareto-optimal. This strengthens the case for the strategic power of no-swap-regret learning algorithms in repeated interactions. 

\item That said, the Pareto-domination hierarchy of algorithms is indeed not a total order: there exist infinitely many Pareto-optimal learning algorithms that are qualitatively different in significant ways. And of the learning algorithms that are Pareto-dominated, they are not all Pareto-dominated by the same Pareto-optimal algorithm (indeed, in many cases FTRL is not dominated by a no-swap-regret learning algorithm, but by a different Pareto-optimal learning algorithm). 
\end{itemize}

In addition to this, we also provide a partial characterization of all no-regret Pareto-optimal algorithms, which we employ to prove the above results. In order to understand this characterization, we need to introduce the notion of the \emph{asymptotic menu} of a learning algorithm. 

To motivate this concept, consider a transcript of the repeated game $G$. If the learner has $m$ actions to choose from each round and the optimizer has $n$ actions, then after playing for $T$ rounds, we can describe the average outcome of play via a \emph{correlated strategy profile (CSP)}: a correlated distribution over the $mn$ pairs of learner/optimizer actions. The important observation is that this correlated strategy profile (an $mn$-dimensional object) is all that is necessary to understand the average utilities of both players, regardless of their specific payoff functions -- it is in some sense a ``sufficient statistic’’ for all utility-theoretic properties of the transcript. 

Inspired by this, we define the \emph{asymptotic menu} $\cM(\cA)$ of a learning algorithm $\cA$ to be the convex closure of the set of all CSPs that are asymptotically implementable by an optimizer against a learner who is running algorithm $\cA$. \nat{would an example be helpful here? or maybe a link to an example in the appendix?} That is, a specific correlated strategy profile $\csp$ belongs to $\cM(\cA)$ if the optimizer can construct arbitrarily long transcripts by playing against $\cA$ whose associated CSPs are arbitrarily close to $\csp$. We call this a ``menu'' since we can think of this as a set of choices the learner offers the optimizer by committing to algorithm $\cA$ (essentially saying, ``pick whichever CSP in this set you prefer the most''). 

Working with asymptotic menus allows us to translate statements about learning algorithms (complex, ill-behaved objects) to statements about convex subsets of the $mn$-simplex (much nicer mathematical objects). In particular, our notion of Pareto-dominance translates directly from algorithms to menus, as do concepts like ``no-regret'' and ``no-swap-regret''. This allows us to prove the following results about asymptotic menus:

\begin{itemize}
\item First, by applying Blackwell's Approachability Theorem, we give a simple and complete characterization of which convex subsets of the $mn$-simplex $\Delta_{mn}$ are valid asymptotic menus: any set $\M$ with the property that for any optimizer action $y \in \Delta_{n}$, there exists a learner action $x \in \Delta_{m}$ such that the product distribution $x\otimes y$ belongs to $\M$ (Theorem \ref{thm:characterization}).
\item We then use this characterization to show that there is a \emph{unique} no-swap-regret menu, which we call $\M_{NSR}$ (Theorem \ref{thm:unique_nsr_menu}), can be described explicitly as a polytope, and which is contained as a subset of \emph{any} no-regret menu (Lemma \ref{lem:nsr_within_all_nr}). In particular, this implies that all no-swap-regret algorithms share the same asymptotic menu, and hence are \emph{strategically equivalent} from the perspective of an optimizer strategizing against them. This is notably not the case for no-regret algorithms, which  have many different asymptotic menus.
\item In our main result, we give a characterization of all Pareto-optimal no-regret menus for menus that are polytopes (the intersection of finitely many half-spaces). We show that such a menu $\M$ is Pareto-optimal iff the set of points in $\M$ which minimize  the learner's utility are the same as that for the no-swap-regret menu $\M_{NSR}$ (Theorem \ref{thm:pareto-optimal-characterization}). It is here where our geometric view of menus is particularly useful: it allows us to reduce this general question to a non-trivial property of two-dimensional convex curves (Lemma \ref{lem:geometry}).
\item As an immediate consequence of this, we show the no-swap-regret menu (and hence any no-swap-regret algorithm) is Pareto-optimal (Corollary \ref{cor:NSR_pareto_optimal}), and that there exist infinitely many distinct Pareto-optimal menus (each of which can be formed by starting with the no-swap-regret menu $\M_{NSR}$ and expanding it to include additional no-regret CSPs).
\item Finally, we demonstrate instances where the asymptotic menu of FTRL is Pareto-dominated. This would follow nearly immediately from the characterization above (in fact, for the even larger class of \emph{mean-based} no-regret algorithms), but for the restriction that the above characterization only applies to polytopal menus. To handle this, we also find a class of examples where we can prove that the asymptotic menu of FTRL is a polytope. Doing this involves developing new tools for optimizing against mean-based algorithms, and may be of independent interest.
\end{itemize}

\subsection{Takeaways and future directions}

What do these results imply about our original question? Can we say anything new about which learning algorithms a learner should use to play repeated games? From a very pessimistic point of view, the wealth of Pareto-optimal algorithms means that we cannot confidently say that any specific algorithm is the ``best’’ algorithm for learning in games. But more optimistically, our results clearly highlight no-swap-regret learning algorithms as a particularly fundamental class of learning algorithms in strategic settings (in a way generic no-regret algorithms are not), with the no-swap-regret menu being the \emph{minimal} Pareto-optimal menu among all no-regret Pareto-optimal menus. 

We would also argue that our results do have concrete implications for how one should think about designing new learning algorithms in game-theoretic settings. In particular, they suggest that instead of directly designing a learning algorithm via minimizing specific regret notions (which can lead to learning algorithms which are Pareto-dominated, in the case of common no-regret algorithms), it may be more fruitful to first design the specific asymptotic menu we wish the algorithm to converge to (and only then worry about the rate at which algorithms approach this menu). Our characterization of Pareto-optimal no-regret menus provides a framework under which we can do this: start with the no-swap-regret menu, and expand it to contain other CSPs that we believe may be helpful for the learner. For example, consider a learner who believes that the optimizer has a specific utility function $u_O$, but still wants to run a no-regret learning algorithm to hedge against the possibility that they do not. This learner can use our characterization to first find the best such asymptotic menu, and then construct an efficient learning algorithm that approaches it (via e.g. the Blackwell approachability technique of Theorem \ref{thm:characterization}).

There are a number of interesting future directions to explore. Most obvious is the question of extending our characterization of Pareto-optimality from polytopal no-regret menus to all asymptotic menus. While we conjecture the polytopal constraint is unnecessary, there do exist non-trivial high-regret Pareto-optimal menus (Theorem \ref{thm:high-regret-po}), and understanding the full class of such menus is an interesting open problem.

Secondly, throughout this discussion we have taken the perspective of a learner who is aware of their own payoff $u_L$ and only has uncertainty about the optimizer they face. Yet one feature of most common learning algorithms is that they do not even require this knowledge about $u_L$, and are designed to work in a setting where they learn their own utilities over time. Some of our results (such as the Pareto-optimality of no-swap-regret algorithms) carry over near identically to such utility-agnostic settings (see Appendix \ref{sec:game_agnostic}), but we still lack a clear understanding of Pareto-domination there.

Finally, we focus entirely on normal-form two-player games. But many practical applications of learning algorithms take place in more general strategic settings, such as Bayesian games or extensive-form games. What is the correct analogue of asymptotic menus and Pareto-optimality for these settings? 

\subsection{Related work}

There is a long history of work in both economics and computer science of understanding the interplay between game theory and learning. We refer the reader to any of \cite{cesa2006prediction, young2004strategic, fudenberg1998theory} for an introduction to the area. Much of the recent work in this area is focused on understanding the correlated equilibria that arise when several learning algorithms play against each other, and designing algorithms which approach this set of equilibria more quickly or more stably (e.g., \cite{syrgkanis2015fast,Anagnostides21:NearOptimal,Anagnostides22:Last-Iterate,Farina22:NearOptimal,piliouras2022beyond, zhang2023computing}). It may be helpful to compare the learning-theoretic characterization of the set of correlated equilibria (which contains all CSPs that can be implemented by having several no-swap-regret algorithms play against each other) to our definition of asymptotic menu -- in some ways, one can think of an asymptotic menu as a one-sided, algorithm-specific variant of this idea. 

Our paper is most closely connected to a growing area of work on understanding the strategic manipulability of learning algorithms in games. \cite{braverman2018selling} was one of the first works to investigate these questions, specifically for the setting of non-truthful auctions with a single buyer. Since then, similar phenomena have been studied in a variety of economic settings, including other auction settings \cite{deng2019prior, cai2023selling, kolumbus2022and, kolumbus2022auctions}, contract design \cite{guruganesh2024contracting}, Bayesian persuasion \cite{chen2023persuading}, general games \cite{deng2019strategizing, brown2023is}, and Bayesian games \cite{MMSSbayesian}. \cite{deng2019strategizing} and \cite{MMSSbayesian} show that no-swap-regret is a necessary and sufficient condition to prevent the optimizer from benefiting by manipulating the learner. \cite{brown2023is} introduce an asymmetric generalization of correlated and coarse-correlated equilibria which they use to understand when learners are incentivized to commit to playing certain classes of learning algorithms. Our no-regret and no-swap-regret menus can be interpreted as the sets of $(\emptyset, \mathcal{E})$-equilibria and $(\emptyset, \mathcal{I})$-equilibria in their model (their definition of equilibria stops short of being able to express the asymptotic menu of a specific learning algorithm, however). In constructing an example where the asymptotic menu of FTRL is a polytope, we borrow an example from \cite{guruganesh2024contracting}, who present families of principal-agent problems which are particularly nice to analyze from the perspective of manipulating mean-based agents.

Our results highlight no-swap-regret algorithms as particularly relevant algorithms for learning in games. The first no-swap-regret algorithms were provided by \cite{foster1997calibrated}, who also showed their dynamics converge to correlated equilibria. Since then, several authors have designed learning algorithms for minimizing swap regret in games \cite{hart2000simple, blum2007external, peng2023fast, dagan2023external}. Our work shows that all these algorithms are in some ``strategically equivalent’’ up to sublinear factors; this is perhaps surprising given that many of these algorithms are qualitatively quite different (especially the very recent swap regret algorithms of \cite{peng2023fast} and \cite{dagan2023external}).

Finally, although we phrase our results from the perspective of learning in games, it is equally valid to think of this work as studying a Stackelberg variant of a repeated, finite-horizon game, where one player must commit to a repeated strategy without being fully aware of the other player’s utility function. In the full-information setting (where the learner is aware of the optimizer's payoff), the computational aspects of this problem are well-understood \cite{conitzer2006computing, Peng_Shen_Tang_Zuo_2019, collina2023efficient}. In the unknown-payoff setting, preexisting work has focused on learning the optimal single-round Stackelberg distribution by playing repeatedly against a myopic~\cite{balcan15, Janusz2012, lauffer2022} or discounting~\cite{haghtalab2022learning} follower. As far as we are aware, we are the first to study this problem in the unknown-payoff setting with a fully non-myopic follower.

\section{Model and preliminaries}

We consider a setting where two players, an \emph{optimizer} $O$ and a \emph{learner} $L$, repeatedly play a two-player bimatrix game $G$ for $T$ rounds. The game $G$ has $m$ actions for the optimizer and $n$ actions for the learner, and is specified by two bounded payoff functions $u_{O}: [m] \times [n] \rightarrow [-1, 1]$ (denoting the payoff for the optimizer) and $u_{L}: [m] \times [n] \rightarrow [-1, 1]$ (denoting the payoff for the learner). During each round $t$, the optimizer picks a mixed strategy $x_t \in \Delta_{m}$ while the learner simultaneously picks a mixed strategy $y_{t} \in \Delta_{n}$; the learner then receives reward $u_{L}(x_t, y_t)$ and the optimizer receives reward $u_{O}(x_t, y_t)$ (where here we have linearly extended $u_{L}$ and $u_{O}$ to take domain $\Delta_{m} \times \Delta_{n}$). Both the learner and optimizer observe the full mixed strategy of the other player (the ``full-information'' setting).

True to their name, the learner will employ a \emph{learning algorithm} $\cA$ to decide how to play. For our purposes, a learning algorithm is a family of horizon-dependent algorithms $\{A^T\}_{T \in \mathbb{N}}$. Each $A^{T}$ describes the algorithm the learner follows for a fixed time horizon $T$. Each horizon-dependent algorithm is a mapping from the history of play to the next round's action, denoted by a  collection of $T$ functions $A^T_1, A^T_2 \cdots A^T_T$, each of which deterministically map the transcript of play (up to the corresponding round) to a mixed strategy to be used in the next round, i.e., $A^T_t(x_1, x_2,\cdots, x_{t-1}) = y_t$. 



We assume that the learner is able to see $u_{L}$ before committing to their algorithm $\cA$, but not $u_O$. The optimizer, who knows $u_L$ and $u_{O}$, will approximately best-respond by selecting a sequence of actions that approximately (up to sublinear $o(T)$ factors) maximizes their payoff. They break ties in the learner's favor. Formally, for each $T$ let

\begin{equation*}
    V_{O}(\cA, u_O, T) = \sup_{(x_1, \dots, x_T) \in \Delta_{m}^{T}} \frac{1}{T}\sum_{t=1}^{T} u_{O}(x_t, y_t)
\end{equation*}

\noindent
represent the maximum per-round utility of the optimizer with payoff $u_O$ playing against $\cA$ in a $T$ round game (here and throughout, each $y_t$ is determined by running $A^{T}_{t}$ on the prefix $x_1$ through $x_{t-1}$). For any $\eps > 0$, let 

\begin{equation*}
\cX(\cA, u_O, T, \eps) = \left\{(x_1, x_2, \dots, x_T) \in \Delta_{m}^T \bigm| \frac{1}{T}\sum_{t=1}^{T} u_{O}(x_t, y_t) \geq V_{O}(\cA, u_O, T) - \eps \right\}
\end{equation*}

\noindent
be the set of $\eps$-approximate best-responses for the optimizer to the algorithm $\cA$. Finally, let

\begin{equation*}
V_{L}(\cA, u_O, T, \eps) = \sup_{(x_1, \dots, x_T) \in \cX(\cA, u_O, T, \eps)} \frac{1}{T}\sum_{t=1}^{T} u_{L}(x_t, y_t)
\end{equation*}

\noindent
represent the maximum per-round utility of the learner under any of these approximate best-responses. 

We are concerned with the asymptotic per-round payoff of the learner as $T \rightarrow \infty$ and $\eps \rightarrow 0$. Specifically, let

\begin{equation}\label{eq:learner_value}
V_{L}(\cA, u_O) = \lim_{\eps \rightarrow 0} \liminf_{T \rightarrow \infty} V_{L}(\cA, u_O, T, \eps).
\end{equation}

\noindent
Note that the outer limit in \eqref{eq:learner_value} is well-defined since for each $T$, $V_{L}(A, u_O, T, \eps)$ is decreasing in $\eps$ (being a supremum over a smaller set). 

The learner would like to select a learning algorithm $\cA$ that is ``good'' regardless of what the optimizer payoffs $u_O$ are. In particular, the learner would like to choose a learning algorithm that is \emph{asymptotically Pareto-optimal} in the following sense.

\begin{definition}[Asymptotic Pareto-dominance for learning algorithms]
Given a fixed $u_L$, A learning algorithm $\cA'$ \emph{asymptotically Pareto-dominates} a learning algorithm $\cA$ if for all optimizer payoffs $u_O$, $V_{L}(\cA', u_O) \geq V_{L}(\cA, u_O)$, and for a positive measure set of optimizer payoffs $u_O$, $V_{L}(\cA', u_O) > V_{L}(\cA, u_O)$. A learning algorithm $\cA$ is \emph{asymptotically Pareto-optimal} if it is not asymptotically Pareto-dominated by any learning algorithm.
\end{definition}







\paragraph{Classes of learning algorithms.} We will be interested in three specific classes of learning algorithms: \emph{no-regret algorithms}, \emph{no-swap-regret algorithms}, and \emph{mean-based algorithms} (along with their subclass of \emph{FTRL algorithms}).

A learning algorithm $\cA$ is a \emph{no-regret algorithm} if it is the case that, regardless of the sequence of actions $(x_1, x_2, \dots, x_T)$ taken by the optimizer, the learner's utility satisfies:

\begin{equation*}
\sum_{t=1}^{T} u_{L}(x_t, y_t) \geq \left(\max_{y^* \in [n]} \sum_{t=1}^{T} u_{L}(x_t, y^*)\right) - o(T).
\end{equation*}

A learning algorithm $\cA$ is a \emph{no-swap-regret algorithm} if it is the case that, regardless of the sequence of actions $(x_1, x_2, \dots, x_T)$ taken by the optimizer, the learner's utility satisfies:

\begin{equation*}
\sum_{t=1}^{T} u_{L}(x_t, y_t) \geq \max_{\pi: [n] \rightarrow [n]} \sum_{t=1}^{T} u_{L}(x_t, \pi(y_t)) - o(T).
\end{equation*}

\noindent
Here the maximum is over all swap functions $\pi: [n] \rightarrow [n]$ (extended linearly to act on elements $y_t$ of $\Delta_n$). It is a fundamental result in the theory of online learning that both no-swap-regret algorithms and no-regret algorithms exist (see \cite{cesa2006prediction}).

Some no-regret algorithms have the property that each round, they approximately best-respond to the historical sequence of losses. Following \cite{braverman2018selling} and \cite{deng2019strategizing}, we call such algorithms \emph{mean-based algorithms}. Formally, we define mean-based algorithms as follows.

\begin{definition}\label{def:mean-based}
A learning algorithm $\cA$ is \emph{$\gamma(t)$-mean-based} if whenever $j, j' \in [m]$ satisfy

$$\frac{1}{t}\sum_{s=1}^{t} u_L(x_t, j') - 
\frac{1}{t}\sum_{s=1}^{t} u_L(x_s, j) \geq \gamma(t),$$

\noindent
then $y_{t, j} \leq \gamma(t)$ (i.e., if $j$ is at least $\gamma(t)$ worse than some other action $j'$ against the historical average action of the opponent, then the total probability weight on $j$ must be at most $\gamma(t)$). A learning algorithm is \emph{mean-based} if it is $\gamma(t)$-mean-based for some $\gamma(t) = o(1)$.
\end{definition}

Many standard no-regret learning algorithms are mean-based, including Multiplicative Weights, Hedge, Online Gradient Descent, and others (see \cite{braverman2018selling}). In fact, all of the aforementioned algorithms can be viewed as specific instantiations of the mean-based algorithm Follow-The-Regularized-Leader. It is this subclass of mean-based algorithms that we will eventually show is Pareto-dominated in Section \ref{sec:mb_algos_and_menus}; we define it below. 


\begin{definition}
\sloppy{$\ftrl_T(\eta,R)$ is the horizon-dependent algorithm for a given learning rate $\eta > 0$ and bounded, strongly convex regularizer $R: \Delta^n \rightarrow \mathbb{R}$ which picks action $y_t \in \Delta^n$ via $y_t = \arg \max_{y \in \Delta^n} \left(\sum_{s=1}^{t-1}u_{L}(x_s, y) - \frac{R(y)}{\eta} \right)$. A learning algorithm $\cA$ belongs to the family of learning algorithms $\ftrl$ if for all $T > 0$, the finite-horizon $A_T$ is of the form $\ftrl_T(\eta_T, R)$ for some sequence of learning rates $\eta_T$ with $\eta_T = 1/o(T)$ and fixed regularizer $R$.}
\end{definition}


As mentioned, the family $\ftrl$ contains many well-known algorithms. For instance, we can recover Multiplicative Weights with the negative entropy regularizer $R_T(y) = \sum_{i \in [n]} y_i \log y_i$, and Online Gradient Descent via the quadratic regularizer $R_T(y) = \frac{1}{2} ||y||_2^2$ (see \cite{hazan201210} for details).


\paragraph{Other game-theoretic preliminaries and assumptions.}

Fix a specific game $G$. For any mixed strategy $x$ of the optimizer, let $\BR_{L}(x) = \argmax_{y \in [n]} u_{L}(x, y)$ represent the set of best-responses to $x$ for the learner. Similarly, define $\BR_{O}(y) = \argmax_{x \in [m]} u_{O}(x, y)$. 

A \emph{correlated strategy profile (CSP)} $\csp$ is an element of $\Delta_{mn}$ and represents a correlated distribution over pairs $(i, j) \in [m] \times [n]$ of optimizer/learner actions. For each $i \in [m]$ and $j \in [n]$, $\csp_{ij}$ represents the probability that the optimizer plays $i$ and the learner plays $j$ under $\csp$. For mixed strategies $x \in \Delta_{m}$ and $y \in \Delta_{n}$, we will use tensor product notation $x \otimes y$ to denote the CSP corresponding to the product distribution of $x$ and $y$. We also extend the definitions of $u_L$ and $u_O$ to CSPs (via $u_L(\csp) = \sum_{i, j} \csp_{ij}u_L(i, j)$, and likewise for $u_O(\csp)$).

Throughout the rest of the paper, we will impose two constraints on the set of games $G$ we consider (really, on the learner payoffs $u_L$ we consider). These constraints serve the purpose of streamlining the technical exposition of our results, and both constraints only remove a measure-zero set of games from consideration. The first constraint is that we assume that over all possible strategy profiles, there is one which is uniquely optimal for the learner; i.e., a pair of moves $i^* \in [m]$ and $j^* \in [n]$ such that $u_L(i^*, j^*) > u_L(i, j)$ for any $(i, j) \neq (i^*, j^*)$.  Note that slightly perturbing the entries of any payoff $u_L$ causes this to be true with probability $1$. We let $\csp^{+} = (i^*) \otimes (j^*)$ denote the corresponding optimal CSP. 

Secondly, we assume that the learner has no weakly dominated actions. To define this, we say an action $y \in [n]$ for the learner is \emph{strictly dominated} if it is impossible for the optimizer to incentivize $y$; i.e., there doesn't exist any $x \in \Delta_{m}$ for which $y \in \BR_{L}(x)$. We say an action $y \in [n]$ for the learner is \emph{weakly dominated} if it is \emph{not} strictly dominated but it is impossible for the optimizer to \emph{uniquely} incentivize $y$; i.e., there doesn't exist any $x \in \Delta_{m}$ for which $\BR_{L}(x) = \{y\}$. Note that this is solely a constraint on $u_{L}$ (not on $u_O$) and that we still allow for the possibility of the learner having strictly dominated actions. Moreover, only a measure-zero subset of possible $u_L$ contain weakly dominated actions, since slightly perturbing the utilities of a weakly dominated action causes it to become either strictly dominated or non-dominated. This constraint allows us to remove some potential degeneracies (such as the learner having multiple copies of the same action) which in turn simplifies the statement of some results (e.g., Theorem \ref{thm:unique_nsr_menu}).

\paragraph{Game-agnostic learning algorithms.} Here we have defined learning algorithms as being associated with a fixed $u_L$ and being able to observe the optimizer's sequence of actions (if not their actual payoffs). However many natural learning algorithms (including Multiplicative Weights and FTRL) only require the counterfactual payoffs of each action from each round. In Appendix~\ref{sec:game_agnostic} we explore Pareto-optimality and Pareto-domination over this class of algorithms.

\section{From learning algorithms to menus}
\label{sec:menu_results}

\subsection{The asymptotic menu of a learning algorithm}

Our eventual goal is to understand which learning algorithms are Pareto-optimal for the learner. However, learning algorithms are fairly complex objects; instead, we will show that for our purposes we can associate each 
learning algorithm with a much simpler object we call an \emph{asymptotic menu}, which can be represented as a convex subset of $\Delta_{mn}$. Intuitively, the asymptotic menu of a learning algorithm describes the set of correlated strategy profiles an optimizer can asymptotically incentivize in the limit as $T$ approaches infinity. 

More formally, for a fixed horizon-dependent algorithm $A^{T}$, define the \emph{menu} $\M(A^T) \subseteq \Delta_{mn}$ of $A^{T}$ to be the convex hull of all CSPs of the form $\frac{1}{T}\sum_{t=1}^T x_t \otimes y_t$, where $(x_1, x_2, \dots, x_T)$ is any sequence of optimizer actions and $(y_1, y_2, \dots, y_T)$ is the response of the learner to this sequence under $A^T$ (i.e., $y_t = A^{T}_t(x_1, x_2, \dots, x_{t-1})$).

If a learning algorithm $\cA$ has the property that the sequence $\M(A^1), \M(A^2), \dots$ converges under the Hausdorff metric\footnote{The \emph{Hausdorff distance} between two bounded subsets $X$ and $Y$ of Euclidean space is given by $d_H(X, Y) = \max(\sup_{x \in X} d(x, Y), \sup_{y \in Y} d(y, X))$, where $d(a, B)$ is the minimum Euclidean distance between point $a$ and the set $B$.}, we say that the algorithm $\cA$ is \emph{consistent} and call this limit value the \emph{asymptotic menu} $\M(\cA)$ of $\cA$. More generally, we will say that a subset $\M \subseteq \Delta_{mn}$ is an asymptotic menu if it is the asymptotic menu of some consistent algorithm. It is possible to construct learning algorithms that are not consistent (for example, imagine an algorithm that runs multiplicative weights when $T$ is even, and always plays action $1$ when $T$ is odd); however even in this case we can find subsequences of time horizons where this converges and define a reasonable notion of asymptotic menu for such algorithms. We defer discussion of this to Appendix \ref{sec:inconsistent}, and otherwise will only concern ourselves with consistent algorithms. See also Appendix \ref{sec:examples} for some explicit examples of asymptotic menus.

The above definition of asymptotic menu allows us to recast the Stackelberg game played by the learner and optimizer in more geometric terms. Given some $u_L$, the learner begins by picking a valid asymptotic menu $\cM$. The optimizer then picks a point $\csp$ on $\cM$ that maximizes $u_{O}(\csp)$ (breaking ties in favor of the learner). The optimizer and the learner then receive utility $u_O(\csp)$ and $u_L(\csp)$ respectively.

For any asymptotic menu $\M$, define $V_{L}(\M, u_O)$ to be the utility the learner ends up with under this process. Specifically, define $V_{L}(\M, u_O) = \max \{ u_{L}(\csp) \mid \csp \in \argmax_{\csp \in \M} u_{O}(\csp)\}$. We can verify that this definition is compatible with our previous definition of $V_{L}$ as a function of the learning algorithm $\cA$ (see Appendix \ref{app:omitted_proofs} for proof).

\begin{lemma}\label{lem:menu_alg_equiv}
For any learning algorithm $\cA$, $V_{L}(\M(\cA), u_O) = V_{L}(\cA, u_O)$.
\end{lemma}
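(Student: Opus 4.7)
The plan is to prove the two inequalities separately, exploiting linearity of $u_O$ and $u_L$ in the CSP. A key preliminary is that $V_O(\cA, u_O, T) = \max_{\csp \in \M(A^T)} u_O(\csp)$, since $u_O$ is linear on $\Delta_{mn}$ and $\M(A^T)$ is the convex hull of CSPs coming from actual length-$T$ optimizer sequences. Hausdorff convergence of $\M(A^T) \to \M(\cA)$ together with continuity of $u_O$ then gives $V_O(\cA, u_O, T) \to V_O^* := \max_{\csp \in \M(\cA)} u_O(\csp)$.

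For the inequality $V_L(\cA, u_O) \geq V_L(\M(\cA), u_O)$, fix $\eps > 0$ and pick an optimal $\csp^* \in \M(\cA)$ realizing $V_L(\M(\cA), u_O)$. By Hausdorff convergence, choose $\csp_T \in \M(A^T)$ with $\csp_T \to \csp^*$, and by Carath\'eodory write $\csp_T = \sum_i \lambda_{T,i}\csp_T^i$ where each $\csp_T^i$ is the CSP of an actual length-$T$ optimizer sequence. The ``bad set'' $B_T = \{i : u_O(\csp_T^i) < V_O(\cA, u_O, T) - \eps\}$ has total mass at most $(V_O(\cA, u_O, T) - u_O(\csp_T))/\eps$, which tends to $0$ since both $V_O(\cA, u_O, T)$ and $u_O(\csp_T)$ converge to $V_O^*$. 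Combining this with the averaging identity $u_L(\csp_T) = \sum_i \lambda_{T,i}u_L(\csp_T^i)$ and the bound $|u_L| \leq 1$, there must exist $i^* \notin B_T$ with $u_L(\csp_T^{i^*}) \geq u_L(\csp_T) - \sum_{i \in B_T}\lambda_{T,i} \to u_L(\csp^*)$. The corresponding optimizer sequence lies in $\cX(\cA, u_O, T, \eps)$, so $V_L(\cA, u_O, T, \eps) \geq u_L(\csp^*) - o(1)$; taking $\liminf_T$ and then $\eps \to 0$ gives the desired bound.

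For the reverse inequality, fix $\eps > 0$ and extract a subsequence $T_k$ along which $V_L(\cA, u_O, T_k, \eps) \to \liminf_T V_L(\cA, u_O, T, \eps)$. For each $T_k$ pick an optimizer sequence in $\cX(\cA, u_O, T_k, \eps)$ attaining the supremum up to $1/T_k$, producing CSPs $\csp_{T_k} \in \M(A^{T_k})$ satisfying $u_O(\csp_{T_k}) \geq V_O(\cA, u_O, T_k) - \eps$. Pass to a further subsequence (by compactness of $\Delta_{mn}$); Hausdorff convergence forces the limit $\csp^\dagger$ to lie in $\M(\cA)$, and continuity gives $u_O(\csp^\dagger) \geq V_O^* - \eps$. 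Hence $\liminf_T V_L(\cA, u_O, T, \eps) = u_L(\csp^\dagger) \leq V_L^\eps(\M(\cA), u_O)$, where $V_L^\eps(\M, u_O) := \max\{u_L(\csp) : \csp \in \M,\ u_O(\csp) \geq V_O^* - \eps\}$. A standard compactness argument (the feasible sets are nested and compact, decreasing to the exact argmax of $u_O$, and any convergent subsequence of $\eps$-maximizers lies in the argmax) shows $V_L^\eps(\M(\cA), u_O) \to V_L(\M(\cA), u_O)$ as $\eps \to 0$, completing the proof.

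The main obstacle is the asymmetry between the convex-hull definition of $\M(A^T)$ and the sequence-based definition of $V_L(\cA, u_O, T, \eps)$: for finite $T$, a CSP in $\M(A^T)$ need not be realized by any single length-$T$ optimizer sequence. The Carath\'eodory decomposition plus the averaging step resolves this by showing that any CSP in $\M(A^T)$ attaining near-maximal $u_O$ is in fact a convex combination of actual-sequence CSPs almost all of which individually achieve near-maximal $u_O$, so one of them must also attain $u_L$ close to that of the combination.
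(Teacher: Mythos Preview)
Your argument is correct and follows the same two-inequality structure as the paper's proof, passing to limits via Hausdorff convergence of $\M(A^T)$ to $\M(\cA)$ and compactness of $\Delta_{mn}$. The substantive addition is your Carath\'eodory step in the first direction: the paper picks $\csp^T \in \M(A^T)$ converging to $\csp^*$ and then asserts without further comment that ``there is a transcript in $\cX(\cA, u_O, T, \eps)$ that induces $\csp^T$'', which is not immediate since $\M(A^T)$ is only defined as the \emph{convex hull} of transcript-CSPs. Your decomposition $\csp_T = \sum_i \lambda_{T,i}\csp_T^i$ into actual transcript-CSPs, plus the Markov-type bound on the mass of the non-$\eps$-optimal components, is a clean way to bridge this; the paper's argument can alternatively be repaired by choosing $\csp^T$ to be an extreme point of $\M(A^T)$ in a direction that $\csp^*$ uniquely maximizes in $\M(\cA)$, which then forces $\csp^T \to \csp^*$ and makes $\csp^T$ an honest transcript-CSP. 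One small imprecision: the averaging bound $u_L(\csp_T^{i^*}) \geq u_L(\csp_T) - \sum_{i\in B_T}\lambda_{T,i}$ omits a $(1-\sum_{i\in B_T}\lambda_{T,i})^{-1}$ factor from the correct lower bound $\frac{u_L(\csp_T)-\delta}{1-\delta}$, but since both $u_L(\csp_T) \to u_L(\csp^*)$ and the bad mass $\delta \to 0$, the asserted limit $u_L(\csp_T^{i^*}) \to u_L(\csp^*)$ (which is all that is needed) holds regardless.
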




As a consequence of Lemma \ref{lem:menu_alg_equiv}, instead of working with asymptotic Pareto-dominance of learning algorithms, we can entirely work with Pareto-dominance of \emph{asymptotic menus}, defined as follows.

\begin{definition}[Pareto-dominance for asymptotic menus]
Fix a payoff $u_L$ for the learner. An asymptotic menu $\cM'$ \emph{Pareto-dominates} an asymptotic menu $\cM$ if for all optimizer payoffs $u_O$, $V_{L}(\cM, u_O) \geq V_{L}(\cM', u_O)$, and for at least one\footnote{Alternatively, we can ask that one menu strictly beats the other on a positive measure set of payoffs. This may seem more robust, but turns out to be equivalent to the single-point definition. We prove this in Appendix \ref{app:pareto-dominance-eq} (in particular, see Theorem \ref{thm:pareto-equiv}). Note that by Lemma \ref{lem:menu_alg_equiv}, this implies a similar equivalence for Pareto-domination of algorithms.} payoff $u_O$, $V_{L}(\cM, u_O) > V_{L}(\cM', u_O)$. An asymptotic menu $\cM$ is \emph{Pareto-optimal} if it is not Pareto-dominated by any asymptotic menu. 
\end{definition}

\subsection{Characterizing possible asymptotic menus}

Before we address the harder question of which asymptotic menus are Pareto-optimal, it is natural to wonder which asymptotic menus are even possible: that is, which convex subsets of $\Delta_{mn}$ are even attainable as asymptotic menus of some learning algorithm. In this section we provide a complete characterization of all possible asymptotic menus, which we describe below.

\begin{theorem}\label{thm:characterization}
A closed, convex subset $\M \subseteq \Delta_{mn}$ is an asymptotic menu iff for every $x \in \Delta_{m}$, there exists a $y \in \Delta_{n}$ such that $x \otimes y \in \M$. 
\end{theorem}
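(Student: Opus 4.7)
The plan is to prove both directions separately, with the forward direction following from a simple stationary-strategy argument and the reverse direction reducing to Blackwell's approachability theorem together with a two-phase ``signal-and-realize'' construction.

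For necessity ($\Rightarrow$), suppose $\M = \M(\cA)$ for a consistent algorithm $\cA$ and fix any $x \in \Delta_m$. I would have the optimizer play the stationary strategy $x$ in every round. For each horizon $T$, the algorithm produces responses $y_1^T, \ldots, y_T^T$, and the empirical CSP is $\frac{1}{T}\sum_t x \otimes y_t^T = x \otimes \bar y_T$ where $\bar y_T = \frac{1}{T}\sum_t y_t^T \in \Delta_n$. In particular, $x \otimes \bar y_T \in \M(A^T)$. By compactness of $\Delta_n$ extract a subsequence $\bar y_{T_k} \to y$; since $\M(A^{T_k})$ converges to $\M$ in Hausdorff metric and $\M$ is closed, $x \otimes y \in \M$.

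For sufficiency ($\Leftarrow$), I would first observe that the product property of $\M$ is exactly the Blackwell approachability condition for the vector-valued game whose payoff at round $t$ is $x_t \otimes y_t \in \Delta_{mn}$. Blackwell's theorem then produces an algorithm $\cA_0$ such that $d\!\left(\frac{1}{T}\sum_t x_t \otimes y_t,\, \M\right) = O(1/\sqrt{T})$ against any optimizer strategy, which gives $\M(\cA_0) \subseteq \M + o(1)$. The harder direction, $\M \subseteq \M(\cA) + o(1)$, requires augmenting $\cA_0$ so that every $\csp \in \M$ is actually realizable.

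The augmented algorithm $\cA$ runs in two phases. In a signaling phase of length $\sqrt{T}$, the optimizer may encode (via a standard binary encoding in their action sequence) a discretized target $\csp^* \in \M$ to precision $1/T$. In the realization phase, the algorithm uses the signal to pre-schedule its moves: split the remaining rounds into blocks of length proportional to the marginals $x_i^* = \sum_j \csp^*_{ij}$, and during block $i$ play the conditional distribution $y^*(i)_j = \csp^*_{ij}/x_i^*$. If the optimizer plays $e_i$ throughout block $i$, the realization-phase CSP equals $\sum_i x_i^* (e_i \otimes y^*(i)) = \csp^*$ exactly. If the optimizer's empirical play deviates substantially from the expected schedule, the algorithm irrevocably switches to $\cA_0$ for the remaining rounds. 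If no valid target is signaled, $\cA$ just runs $\cA_0$ throughout. The signaling phase contributes at most $O(1/\sqrt{T})$ to the overall CSP, so for cooperative optimizers the output is within $o(1)$ of $\csp^*$, and for deviating optimizers the pre-deviation CSP is near $\csp^* \in \M$ while the post-deviation CSP is pulled within $o(1)$ of $\M$ by $\cA_0$, so by convexity the overall CSP is in $\M + o(1)$.

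The main obstacle will be the sufficiency direction's deviation handling: because play is simultaneous, the algorithm's round-$t$ action depends only on $(x_1,\ldots,x_{t-1})$, so the realization schedule must be set in advance from the signal and cannot react to $x_t$ within the same round. Consequently one must argue carefully that (i) a one-round-late detection of deviation still bounds the off-$\M$ contribution, (ii) switching to $\cA_0$ mid-game from an arbitrary starting empirical CSP still drives the running average back toward $\M$ at rate $o(1)$, and (iii) the resulting family $\{\cA^T\}$ is consistent in the Hausdorff sense so that its asymptotic menu is well-defined and equals $\M$.
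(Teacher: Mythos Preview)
Your proposal is correct and follows essentially the same approach as the paper: necessity via the stationary-strategy argument, and sufficiency via Blackwell approachability combined with a signal-then-schedule construction with fallback to the approachability algorithm upon defection. The paper factors the sufficiency direction into two separate lemmas (Blackwell gives a valid sub-menu $\M' \subseteq \M$; then any convex superset of a valid menu is itself a valid menu via the same signal/schedule/defect construction), which makes the upwards-closure statement reusable elsewhere, but the underlying mechanics are the same as yours.
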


The necessity condition of Theorem \ref{thm:characterization} follows quite straightforwardly from the observation that if the optimizer only ever plays a fixed mixed strategy $x \in \Delta_{m}$, the resulting average CSP will be of the form $x \otimes y$ for some $y \in \Delta_{n}$. The trickier part is proving sufficiency. For this, we will need to rely on the following two lemmas.

The first lemma applies Blackwell approachability to show that any $\M$ of the form specified in Theorem \ref{thm:characterization} must \emph{contain} a valid asymptotic menu.

\begin{lemma}\label{lem:char_blackwell}
Assume the closed convex set $\M \subseteq \Delta_{mn}$ has the property that for every $x \in \Delta_{m}$, there exists a $y \in \Delta_{n}$ such that $x \otimes y \in \M$. Then there exists an asymptotic menu $\M' \subseteq \M$.
\end{lemma}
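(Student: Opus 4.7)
The plan is to apply Blackwell's approachability theorem in $\mathbb{R}^{mn}$, viewing each round's rank-one CSP $x_t \otimes y_t$ as the vector-valued payoff. If the learner can force the empirical average $\frac{1}{T}\sum_{t=1}^T x_t \otimes y_t$ to approach $\M$ regardless of the optimizer's play, then extracting a Hausdorff-convergent subsequence of the induced finite-horizon menus will give a limit set $\M' \subseteq \M$ that serves as the desired asymptotic menu.

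The first substantive step is to verify Blackwell's halfspace condition for $\M$. Let $H = \{z : \langle w, z \rangle \leq c\}$ be any closed halfspace containing $\M$. For each $x \in \Delta_m$ the hypothesis of the lemma produces some $y_x \in \Delta_n$ with $x \otimes y_x \in \M \subseteq H$, so $\min_{y \in \Delta_n} \langle w, x \otimes y \rangle \leq c$ for every $x$. I would then apply von Neumann's minimax theorem to the bilinear function $(x,y) \mapsto \langle w, x \otimes y \rangle$ on $\Delta_m \times \Delta_n$ to swap $\max_x \min_y$ into $\min_y \max_x$, yielding a single $y^* \in \Delta_n$ that keeps $x \otimes y^* \in H$ against every $x$. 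This is exactly Blackwell's sufficient condition for approachability of the closed convex set $\M$.

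Invoking Blackwell's theorem now produces a learner strategy inducing finite-horizon algorithms $A^T$ such that, for any optimizer sequence, $d\!\bigl(\tfrac{1}{T}\sum_{t=1}^T x_t \otimes y_t,\, \M\bigr) \leq \epsilon(T)$ with $\epsilon(T) \to 0$. Consequently $\M(A^T)$ lies inside the closed $\epsilon(T)$-neighborhood of $\M$. Since $\Delta_{mn}$ is compact, the space of its closed convex subsets is compact in the Hausdorff metric (Blaschke's selection theorem), so I would pass to a subsequence $\M(A^{T_k}) \to \M'$ for some nonempty closed convex $\M'$. The one-sided deviation $\sup_{z \in \M(A^{T_k})} d(z, \M)$ vanishes along the subsequence, and $\M$ is closed, so the containment $\M' \subseteq \M$ follows.

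The remaining bookkeeping — which I expect to be the main obstacle, since Blackwell's theorem does not by itself guarantee that the sequence $\M(A^T)$ converges — is upgrading $\M'$ to the asymptotic menu of a \emph{consistent} algorithm. I would thin the subsequence so that $T_{k+1}/T_k \to 1$ and then define a new family $\cA'$ that, on horizon $T$, runs $A^{T_{k(T)}}$ for the first $T_{k(T)}$ rounds (where $T_{k(T)}$ is the largest subsequence index with $T_{k(T)} \leq T$) and plays an arbitrary fixed action on the remaining $T - T_{k(T)}$ rounds. Since the padded rounds contribute only a $o(1)$ fraction of the empirical CSP, $\M(A'^T) \to \M'$ in Hausdorff distance, and $\M'$ is realized as the asymptotic menu of the consistent algorithm $\cA'$. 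Alternatively, one can appeal to the extended notion of asymptotic menu for inconsistent algorithms sketched in Appendix \ref{sec:inconsistent} and skip this final construction.
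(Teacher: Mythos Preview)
Your approach is essentially the same as the paper's: both apply Blackwell approachability to the vector-valued payoff $u(x,y) = x \otimes y$, observe the lemma's hypothesis is exactly the required condition, and conclude the Blackwell algorithm's menu sits inside $\M$. Two differences worth noting.

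First, you verify Blackwell's halfspace condition via von Neumann minimax, whereas the paper directly invokes the equivalent ``response-satisfiability'' formulation (for every opponent action $x$ there is a response $y$ with $u(x,y) \in \M$), which is literally the hypothesis. Your route is correct but a detour; the paper's is one line.

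Second, you are more careful than the paper about the consistency issue, which is good. But your explicit construction has a gap: you cannot \emph{thin} a subsequence to force $T_{k+1}/T_k \to 1$; thinning can only increase consecutive ratios (e.g., if the convergent subsequence happens to be $T_k = 2^k$, every sub-subsequence has ratios $\geq 2$). What you would need is to \emph{densify} the subsequence, but you have no control over $\M(A^T)$ at the interpolated horizons. The fix is precisely the recursive construction of Theorem~\ref{thm:nonconsistent}: on horizon $T$, run $A^{T_i}$ for the largest $T_i \leq T$, then recurse on the remaining $T - T_i$ rounds rather than padding arbitrarily. Your fallback of citing Appendix~\ref{sec:inconsistent} is the right move and suffices to close the argument.
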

\begin{proof}

We will show the existence of an algorithm $\cA$ for which $\cM(\cA) \subseteq \M$. To do so, we will apply the Blackwell Approachability Theorem (\cite{blackwell1956analog}).

Consider the repeated vector-valued game in which the learner chooses a distribution $y_t \in \Delta_n$ over their $n$ actions, the optimizer chooses a distribution $x_t \in \Delta_m$ over their $m$ actions, and the learner receives the vector-valued, bilinear payoff $u(x_t, y_t) = x_t \otimes y_t$ (i.e., the CSP corresponding to this round). The Blackwell Approachability Theorem states that if the set $\cM$ is response-satisfiable w.r.t. $u$ -- that is, for all $x \in \Delta_m$, there exists a $y \in \Delta_n$ such that $u(x, y) \in \cM$ -- then there exists a learning algorithm $\cA$ such that

$$\lim_{T \rightarrow \infty} d\left(\frac{1}{T}\sum_{t=1}^{T}u(x_t, y_t), \cM\right) = 0,$$

\noindent
for any sequence of optimizer actions $\{x_t\}$ (here $d(p, S)$ represents the minimal Euclidean distance from point $p$ to the set $S$). In words, the history-averaged CSP of play must approach to the set $\cM$ as the time horizon grows. Since any $\csp \in \cM(\cA)$ can be written as the limit of such history-averaged payoffs (as $T \rightarrow \infty$), this would imply $\cM(\cA) \subseteq \cM$.

Therefore all that remains is to prove that $\M$ is response-satisfiable. But this is exactly the property we assumed $\M$ to have, and therefore our proof is complete.  
\end{proof}

The second lemma shows that asymptotic menus are \emph{upwards closed}: if $\cM$ is an asymptotic menu, then so is any convex set containing it.

\begin{lemma}\label{lem:char_upwards_closed}

If $\cM$ is an asymptotic menu, then any closed convex set $\cM'$ satisfying $\cM \subseteq \cM' \subseteq \Delta_{mn}$ is an asymptotic menu. 
\end{lemma}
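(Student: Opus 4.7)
The goal is to construct a consistent algorithm $\cA'$ with $\cM(\cA') = \cM'$, given a consistent $\cA$ with $\cM(\cA) = \cM$. I first observe that $\cM'$ inherits response-satisfiability from $\cM$: since $\cM$ is an asymptotic menu, for every $x \in \Delta_m$ the CSP obtained by running $\cA$ against the constant optimizer strategy $x$ is of the form $x \otimes y$ for some $y \in \Delta_n$ and lies in $\cM$; hence $x \otimes y \in \cM \subseteq \cM'$. This lets me invoke Blackwell approachability on $\cM'$ (as in Lemma \ref{lem:char_blackwell}) whenever needed.

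The construction of $\cA'$ uses a signaling scheme. Fix a countable dense subset $\{\csp^{(k)}\}_{k \ge 1}$ of $\cM'$ and reserve the first $s(T) = \lceil \log T \rceil$ rounds of horizon $T$ as a signaling phase, in which the learner plays a fixed action while the optimizer's moves encode an index $k \in \{0, 1, 2, \dots\}$ (with $k = 0$ a distinguished default). Because $s(T) = o(T)$, the signaling phase contributes $o(1)$ to the average CSP and does not affect the asymptotic menu. In the remaining rounds, if $k = 0$ the learner runs $\cA$ (contributing $\cM$ to the menu); otherwise the learner runs a target-specific algorithm $\cA_k$ that I design to satisfy (i) there is an optimizer strategy under which the average CSP against $\cA_k$ converges to $\csp^{(k)}$, and (ii) for every optimizer strategy, the average CSP against $\cA_k$ lies in $\cM'$ in the limit.

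The main obstacle is constructing each $\cA_k$ to satisfy both (i) and (ii) simultaneously, since Blackwell approachability alone gives (ii) but not (i). My plan is to use the decomposition $\csp^{(k)} = \sum_i x^{(k)}_i\, e_i \otimes y^{(k)}_i$, where $x^{(k)}$ is the optimizer marginal of $\csp^{(k)}$ and $y^{(k)}_i$ is the conditional distribution of the learner given the optimizer plays $e_i$. The algorithm $\cA_k$ then follows a predetermined schedule of responses---playing $y^{(k)}_i$ during a block of rounds on which the optimizer is ``supposed to'' play $e_i$, with block lengths proportional to $x^{(k)}_i$---so that against the matching target optimizer strategy the per-round pairs aggregate exactly to $\csp^{(k)}$; simultaneously, $\cA_k$ runs a Blackwell-style correction that absorbs any deviation from the target schedule back into $\cM'$ using response-satisfiability of $\cM'$.

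Finally, $\cM(\cA') = \cM'$ will follow: the containment $\cM(\cA') \subseteq \cM'$ holds since each main-phase algorithm keeps the average CSP in $\cM'$ and the signaling prefix is asymptotically negligible, while $\cM(\cA') \supseteq \cM'$ holds by density---given any $\csp^* \in \cM'$ and $\varepsilon > 0$, the optimizer chooses $k$ with $\|\csp^{(k)} - \csp^*\| < \varepsilon$, signals $k$ in the prefix, and then plays the matching target strategy against $\cA_k$. Taking $T \to \infty$ and $\varepsilon \to 0$ together with closedness of $\cM'$ and Hausdorff-convergence of the finite-horizon menus yields the desired equality.
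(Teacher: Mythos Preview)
Your overall architecture matches the paper's: a sublinear signaling prefix, then a coordination phase targeting a dense point of $\cM'$, with a fallback if the optimizer defects. But the specific construction of $\cA_k$ has a genuine gap.

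The issue is your block schedule. With sequential blocks of lengths proportional to $x^{(k)}_i$, the running time-average of the coordination phase need not stay in $\cM'$: after the first block the average is exactly $e_1 \otimes y^{(k)}_1$, which has no reason to lie in $\cM'$. If the optimizer cooperates through that block and then defects, the prefix average sits outside $\cM'$. Your ``Blackwell-style correction'' cannot repair this: standard approachability on the suffix only guarantees that the \emph{suffix} average lands in $\cM'$, and the overall average is then a convex combination of a point outside $\cM'$ (the prefix) with a point inside $\cM'$ (the suffix), which can still be outside $\cM'$. The phrase ``absorbs any deviation back into $\cM'$'' is doing work that is not actually justified, and this is exactly the obstacle you flagged but did not resolve.

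The paper's fix is to make the schedule \emph{cyclic}. It discretizes $\cM'$ by a $1/C$-net with $C \approx \sqrt{T}$, and for each target point $\csp'$ on the net it plays a cycle of length $C$ whose per-cycle average is exactly $\csp'$. This ensures that after any prefix of the coordination phase, the running average is within $O(C/t')$ of $\csp' \in \cM'$ (or else $t'$ is so small that the prefix has weight $O(C/T) = o(1)$). For the fallback, the paper does not build a new Blackwell approacher for $\cM'$; it simply switches to running the given algorithm $\cA$ (with menu $\cM \subseteq \cM'$) for the remaining $T - t'$ rounds, so the suffix average lies near $\cM$. The overall average is then a convex combination of two points each close to $\cM'$, hence close to $\cM'$. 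Replacing your sequential blocks with short repeating cycles, and replacing the vague correction with ``defect to $\cA$ upon any deviation,'' would complete your argument.
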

\begin{proof}[Proof Sketch]
We defer the details of the proof to Appendix \ref{app:omitted_proofs} and provide a high-level sketch here. Since $\cM$ is an asymptotic menu, we know there exists a learning algorithm $\cA$ with $\cM(\cA) = \cM$. We show how to take $\cA$ and transform it to a learning algorithm $\cA'$ with $\cM(\cA') = \cM'$. The algorithm $\cA'$ works as follows:

\begin{enumerate}
\item At the beginning, the optimizer selects a point $\csp \in \cM'$ they want to converge to. They also agree on a ``schedule'' of moves $(x_t, y_t)$ for both players to play whose history-average converges to the point $\csp$ without ever leaving $\cM'$. (The optimizer can communicate this to the learner solely through the actions they take in some sublinear prefix of the game -- see the full proof for details). 
\item The learner and optimizer then follow this schedule of moves (the learner playing $x_t$ and the optimizer playing $y_t$ at round $t$). If the optimizer never defects, they converge to the point $\csp$.
\item If the optimizer ever defects from their sequence of play, the learner switches to playing the original algorithm $\cA$. In the remainder of the rounds, the time-averaged CSP is guaranteed to converge to some point  $\csp_{\mathrm{suff}} = \cM(\cA) = \cM$. Since the time-averaged CSP of the prefix $\csp_{\mathrm{pre}}$ lies in $\cM'$, the overall time-averaged CSP will still lie in $\cM'$, so the optimizer cannot incentivize any point outside of $\cM'$.
\end{enumerate}
\end{proof}

Combining Lemmas \ref{lem:char_blackwell} and Lemmas \ref{lem:char_upwards_closed}, we can now prove Theorem \ref{thm:characterization}.

\begin{proof}[Proof of Theorem \ref{thm:characterization}]
As mentioned earlier, the necessity condition is straightforward: assume for contradiction that there exists an algorithm $\cA$ with asymptotic menu $\cM$ such that, for some $x \in \Delta_{m}$, there is no point in $\cM$ of the form $x \otimes y$ for any $y$. Then, let the optimizer play $x$ in each round. The resulting CSP induced against $\cA$ must be of the form $x \otimes y$ for some $y \in \Delta_n$, deriving a contradiction. 

Now we will prove that if a set $\cM$ has the property that $\forall x \in \Delta_{m}$, there exists a $y \in \Delta_{n}$ such that $x \otimes y \in \M$, then it is a valid menu. To see this, consider any set $\cM$ with this property. Then by Lemma~\ref{lem:char_blackwell} there exists a valid menu $\cM' \subseteq \cM$. Then, by the upwards-closedness property of Lemma~\ref{lem:char_upwards_closed}, the set $\cM \supseteq \cM'$ is also a menu.

\end{proof}

\subsection{No-regret and no-swap-regret menus}
\label{sec:NSR}

Another nice property of working with asymptotic menus is that no-regret and no-swap-regret properties of algorithms translate directly to similar properties on these algorithms' asymptotic menus (the situation for mean-based algorithms is a little bit more complex, and we discuss it in Section~\ref{sec:mb_algos_and_menus}).

To elaborate, say that the CSP $\csp$ is \emph{no-regret} if it satisfies the no-regret constraint

\begin{equation}\label{eq:no-regret-constraint}
\sum_{i\in[m]}\sum_{j\in[n]} \csp_{ij}u_{L}(i, j) \geq \max_{j^{*} \in [n]}\sum_{i\in[m]}\sum_{j\in[n]} \csp_{ij}u_{L}(i, j^*).
\end{equation}

\noindent
Similarly, say that the CSP $\csp$ is \emph{no-swap-regret} if, for each $j \in [n]$, it satisfies

\begin{equation}\label{eq:no-swap-regret-constraint}
\sum_{i \in [m]} \csp_{ij}u_{L}(i, j) \geq \max_{j^* \in [n]} \sum_{i \in [m]} \csp_{ij}u_{L}(i, j^*).
\end{equation}

For a fixed $u_L$, we will define the \emph{no-regret menu} $\M_{NR}$ to be the convex hull of all no-regret CSPs, and the \emph{no-swap-regret menu} $\M_{NSR}$ to be the convex hull of all no-swap-regret CSPs. In the following theorem we show that the asymptotic menu of any no-(swap-)regret algorithm is contained in the no-(swap-)regret menu.

\begin{theorem}
\label{thm:nr_nsr_containment}
If a learning algorithm $\cA$ is no-regret, then for every $u_L$, $\M(\cA) \subseteq \M_{NR}$. If $\cA$ is no-swap-regret, then for every $u_L$, $\M(\cA) \subseteq \M_{NSR}$.
\end{theorem}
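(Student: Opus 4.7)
The plan is to show that for any no-regret algorithm $\cA$, every CSP in $\M(\cA)$ satisfies the no-regret inequality \eqref{eq:no-regret-constraint}, and similarly for the no-swap-regret case. The key observation driving everything is that both \eqref{eq:no-regret-constraint} and \eqref{eq:no-swap-regret-constraint} are \emph{linear} inequalities in $\csp$: the no-regret condition is a conjunction of $n$ linear inequalities (one per alternative action $j^{*} \in [n]$), and the no-swap-regret condition is a conjunction of $n^{2}$ linear inequalities (one per pair $(j, j^{*})$). Hence $\M_{NR}$ and $\M_{NSR}$ are polytopes carved out by these inequalities inside $\Delta_{mn}$, and in particular are closed and convex.

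Next I would argue that for each $T$, every history-averaged CSP of the form $\csp_{T} = \frac{1}{T}\sum_{t=1}^{T} x_{t} \otimes y_{t}$ arising from running $\cA$ against some sequence $(x_1,\dots,x_T)$ satisfies each no-regret inequality up to an additive slack $\eps(T) = o(1)$. This is just a rewriting of the no-regret guarantee of $\cA$: substituting $u_{L}(\csp_{T}) = \frac{1}{T}\sum_t u_{L}(x_t, y_t)$ and $\sum_{i,j}(\csp_{T})_{ij} u_{L}(i, j^{*}) = \frac{1}{T}\sum_t u_{L}(x_t, j^{*})$ into the regret bound and dividing by $T$. Because each no-regret inequality is linear in $\csp$, the same $\eps(T)$-slack is preserved under taking convex combinations; hence \emph{every} point of $\M(A^{T})$ — not just the history-averaged extreme points — satisfies every no-regret inequality up to $\eps(T)$.

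Finally, fix any $\csp \in \M(\cA)$. By Hausdorff convergence $\M(A^{T}) \to \M(\cA)$, there is a sequence $\csp_{T} \in \M(A^{T})$ with $\csp_{T} \to \csp$. Each $\csp_{T}$ satisfies the no-regret inequalities up to $\eps(T)$, and since these inequalities are continuous (in fact linear) in $\csp$ and $\eps(T) \to 0$, the limit $\csp$ satisfies them exactly, so $\csp \in \M_{NR}$. The no-swap-regret case is structurally identical, using \eqref{eq:no-swap-regret-constraint} in place of \eqref{eq:no-regret-constraint}; the only extra bookkeeping step is to observe that the aggregate no-swap-regret guarantee on $\cA$ implies each per-$j$ inequality up to $o(1)$, because each per-$j$ term is trivially upper-bounded by its max over $j^{*}$, and so if the sum of these gaps is at most $o(1)$ then each individual gap is also at most $o(1)$. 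There is no substantive obstacle in this proof: the whole argument is driven by the linearity of the regret constraints in the CSP, which is what lets per-transcript $o(1)$ slacks survive both convex combination and the Hausdorff limit.
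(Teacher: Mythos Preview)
Your proposal is correct and is essentially the same argument as the paper's: the paper defines the relaxed polytope $\M_{NR}(\eps_T)$ (exactly the set of CSPs satisfying each linear no-regret inequality up to slack $\eps_T$), shows $\M(A^T)\subseteq \M_{NR}(\eps_T)$ with $\eps_T\to 0$, and passes to the limit, handling swap regret by ``a similar argument''. Your version is slightly more explicit in unpacking why the aggregate no-swap-regret bound yields each per-$j$ inequality up to $o(1)$ (via nonnegativity of the per-$j$ gaps), but the approach is identical.
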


Note that both $\M_{NR}$ and $\M_{NSR}$ themselves are valid asymptotic menus, since for any $x \in \Delta_{m}$, they will contain some point of the form $x \otimes y$ for some $y \in \BR_{L}(x)$. In fact, we can say something much stronger about the no-swap-regret menu: it is exactly the convex hull of all such points.

\begin{lemma}\label{lem:nsr_characterization}
The no-swap-regret menu $\M_{NSR}$ is the convex hull of all CSPs of the form $x \otimes y$, with $x \in \Delta_{m}$ and $y \in \BR_{L}(x)$.
\end{lemma}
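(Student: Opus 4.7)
The plan is to establish the two inclusions separately, with the harder direction following from a direct row-by-row decomposition of a no-swap-regret CSP.

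For the easy direction, I would show every CSP of the form $\csp = x \otimes y$ with $x \in \Delta_m$ and $y \in \BR_L(x)$ satisfies the no-swap-regret constraint \eqref{eq:no-swap-regret-constraint}, and hence belongs to $\M_{NSR}$. The entries are $\csp_{ij} = x_i y_j$, so the left-hand side of \eqref{eq:no-swap-regret-constraint} for index $j$ is $y_j \sum_i x_i u_L(i,j)$ and the right-hand side is $y_j \max_{j^*} \sum_i x_i u_L(i, j^*)$. When $y_j = 0$ both sides vanish. When $y_j > 0$, since $y$ is supported on $\BR_L(x)$, the action $j$ achieves $\max_{j^*} \sum_i x_i u_L(i, j^*)$, so equality holds. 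Taking convex hulls then gives one inclusion for free.

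For the non-trivial direction, given any no-swap-regret CSP $\csp$, I would condition on the learner's action. Let $q_j = \sum_i \csp_{ij}$, and for each $j$ with $q_j > 0$ define $x^{(j)} \in \Delta_m$ by $x^{(j)}_i = \csp_{ij}/q_j$. Then
\[
\csp \;=\; \sum_{j : q_j > 0} q_j \bigl( x^{(j)} \otimes e_j \bigr),
\]
where $e_j$ denotes the indicator vector of the pure action $j$. It remains to verify $e_j \in \BR_L(x^{(j)})$ whenever $q_j > 0$. Plugging $\csp_{ij} = q_j x^{(j)}_i$ into the no-swap-regret inequality \eqref{eq:no-swap-regret-constraint} for this $j$ and dividing through by $q_j > 0$ yields $\sum_i x^{(j)}_i u_L(i,j) \geq \max_{j^*} \sum_i x^{(j)}_i u_L(i, j^*)$, which is precisely the statement that $j \in \BR_L(x^{(j)})$. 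Hence every no-swap-regret CSP lies in the convex hull of products of the desired form, and since $\M_{NSR}$ is the convex hull of all no-swap-regret CSPs, the other inclusion follows.

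There is no substantive obstacle here: the key observation that drives the proof is that the no-swap-regret constraints \eqref{eq:no-swap-regret-constraint} decouple across the learner's actions $j \in [n]$, so row-wise normalization yields a canonical convex decomposition into best-response products. The only minor care needed is to simply omit rows with $q_j = 0$ from the convex combination, since the conditional distribution $x^{(j)}$ is undefined there but the corresponding term contributes zero.
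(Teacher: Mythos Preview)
Your proposal is correct and matches the paper's proof essentially line for line: the paper also conditions on the learner's action, defines the conditional optimizer strategy $\beta(a)$ (your $x^{(j)}$), writes $\csp = \sum_a (\sum_k \csp_{ka})(\beta(a)\otimes a)$, and observes that the no-swap-regret constraint for index $a$ forces $a \in \BR_L(\beta(a))$. The only cosmetic difference is that the paper states the easy direction more tersely (since in this paper $y \in \BR_L(x)$ is a pure action), whereas you spell out the case analysis on $y_j$.
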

\begin{proof}
First, note that every CSP of the form $x \otimes y$, with $x \in \Delta_{m}$ and $y \in \BR_{L}(x)$, is contained in $\M_{NSR}$. This follows directly follows from the fact that this CSP satisfies the no-swap-regret constraint \eqref{eq:no-swap-regret-constraint}, since no action can be a better response than $y$ to $x$.

For the other direction, consider a CSP $\csp \in \M_{NSR}$. We will rewrite $\csp$ as a convex combination of product CSPs of the above form. For each pure strategy $a \in [n]$ for the learner, let $\beta(a) \in \Delta_{m}$ represent the conditional mixed strategy of the optimizer corresponding to $X$ given that the learner plays action $a$, i.e. $\beta_j(a) =\frac{\csp_{ja}}{\sum_{k \in [m]} \csp_{ka}}$ for all $j \in [m]$ (setting $\beta_k(a)$ arbitrarily if all values $\csp_{ka}$ are zero). With this, we can write $\csp = \sum_{a \in [m]} (\sum_{k \in [m]} \csp_{ka}) (\beta(a) \otimes a)$. 

Now, note that if $a \notin \argmax_{b} u_L(\beta(a), b)$, this would violate the no-swap-regret constraint \eqref{eq:no-swap-regret-constraint} for $j = a$. Thus, we have rewritten $\csp$ as a convex combination of CSPs of the desired form, completing the proof.
\end{proof}


One key consequence of this characterization is that it allows us to show that the asymptotic menu of \emph{any} no-regret algorithm must contain the no-swap-regret menu $\M_{NSR}$ as a subset. Intuitively, this is since every no-regret menu should also contain every CSP of the form $x \otimes y$ with $y \in \BR_L(x)$, since if the optimizer only plays $x$, the learner should learn to best-respond with $y$ (although some care needs to be taken with ties).

\begin{lemma}
\label{lem:nsr_within_all_nr}
    For any no-regret algorithm $\A$, $\M_{NSR} \subseteq \M(\A)$.
\end{lemma}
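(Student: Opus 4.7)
\textbf{Proof proposal for Lemma \ref{lem:nsr_within_all_nr}.}

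The plan is to reduce to showing, by Lemma \ref{lem:nsr_characterization} and convexity of $\M(\A)$, that every CSP of the form $x \otimes y$ with $x \in \Delta_m$ and $y \in \BR_L(x)$ belongs to $\M(\A)$. The natural idea is to have the optimizer play the fixed mixed strategy $x$ in every round and invoke the no-regret guarantee of $\A$. The issue is that $\BR_L(x)$ may contain several actions, so no-regret only pins down the learner's average play to lie in the convex hull of $\BR_L(x)$; it does not force it to concentrate on the specific $y$ we care about. The workaround is to tilt $x$ slightly so that $y$ becomes the unique best response.

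Concretely, I would proceed as follows. First, since $y \in \BR_L(x)$, the action $y$ is incentivizable, hence not strictly dominated, so the no-weak-dominance hypothesis on $u_L$ yields some $x' \in \Delta_m$ with $\BR_L(x') = \{y\}$. For $\delta \in (0,1]$, set $x_\delta = (1-\delta)x + \delta x'$. A short calculation shows
\[
u_L(x_\delta, y) - u_L(x_\delta, y') = (1-\delta)\bigl(u_L(x,y) - u_L(x,y')\bigr) + \delta\bigl(u_L(x',y) - u_L(x',y')\bigr) > 0
\]
for every $y' \neq y$, because the first term is $\geq 0$ (as $y \in \BR_L(x)$) and the second is strictly positive (as $y$ is the unique best response to $x'$). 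Hence $\BR_L(x_\delta) = \{y\}$, with some strict gap $\gamma_\delta > 0$ between $u_L(x_\delta,y)$ and any other $u_L(x_\delta,y')$.

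Next, fix $\delta > 0$ and suppose the optimizer plays $x_t = x_\delta$ in every round against $\A$. Let $\bar y_T = \frac{1}{T}\sum_t y_t$. The no-regret property gives $u_L(x_\delta, \bar y_T) \geq u_L(x_\delta, y) - o(1)$, and combining with the gap $\gamma_\delta$ forces $\bar y_T \to e_y$ as $T \to \infty$. Consequently the history-averaged CSP $\frac{1}{T}\sum_t x_\delta \otimes y_t = x_\delta \otimes \bar y_T$ converges to $x_\delta \otimes y$, so $x_\delta \otimes y \in \M(\A)$. Finally, letting $\delta \to 0^+$ and using that $\M(\A)$ is closed, $x \otimes y = \lim_{\delta \to 0} x_\delta \otimes y \in \M(\A)$. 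Combined with Lemma \ref{lem:nsr_characterization} and convexity of $\M(\A)$, this yields $\M_{NSR} \subseteq \M(\A)$.

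The only mildly delicate step is the passage from the no-regret inequality to actual concentration of $\bar y_T$ on $y$; this is where the strictness of $\BR_L(x_\delta) = \{y\}$ (guaranteed by the no-weak-dominance assumption) is essential. Everything else is bookkeeping: producing $x'$ from the hypothesis, verifying the convex-combination strictness, and using closedness of $\M(\A)$ to pass to the limit in $\delta$.
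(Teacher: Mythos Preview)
Your proposal is correct and follows essentially the same route as the paper: reduce via Lemma~\ref{lem:nsr_characterization} to best-response CSPs, use the no-weak-dominance assumption to perturb $x$ so that $y$ becomes the unique best response, and pass to the limit $\delta \to 0$ using closedness of $\M(\A)$. The only cosmetic difference is that where you invoke the no-regret inequality directly (playing $x_\delta$ every round and showing $\bar y_T \to e_y$) to obtain $x_\delta \otimes y \in \M(\A)$, the paper instead cites Theorem~\ref{thm:characterization} (the menu must contain some $x_\delta \otimes y'$) and argues that no-regret forces $y' = y$.
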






This fact allows us to prove our first main result: that all consistent\footnote{Actually, as a consequence of this result, it is possible to show that any no-swap-regret algorithm must be consistent: see Corollary \ref{cor:all_nsr_algos} for details.} no-swap-regret algorithms have the same asymptotic menu (namely, $\M_{NSR}$). 

\begin{theorem}\label{thm:unique_nsr_menu}
If $\cA$ is a no-swap-regret algorithm, then $\M(\cA) = \M_{NSR}$.
\end{theorem}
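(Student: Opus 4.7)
The plan is to observe that this theorem is an immediate consequence of the two ingredients already assembled: Theorem~\ref{thm:nr_nsr_containment} (containment in no-swap-regret menus) and Lemma~\ref{lem:nsr_within_all_nr} (every no-regret menu contains $\M_{NSR}$). The two inclusions sandwich $\M(\cA)$ exactly around $\M_{NSR}$.

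First I would spell out the upper inclusion: since $\cA$ is no-swap-regret, Theorem~\ref{thm:nr_nsr_containment} directly yields $\M(\cA) \subseteq \M_{NSR}$. Next I would argue the lower inclusion by noting that any no-swap-regret algorithm is in particular a no-regret algorithm, because external regret is bounded above by swap regret (any single fixed comparator action $j^*$ corresponds to the constant swap function $\pi \equiv j^*$). Hence Lemma~\ref{lem:nsr_within_all_nr} applies and gives $\M_{NSR} \subseteq \M(\cA)$. Chaining the two inclusions delivers $\M(\cA) = \M_{NSR}$.

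The only subtlety is that $\M(\cA)$ is defined only when the sequence $\{\M(A^T)\}_T$ converges in Hausdorff metric (i.e.\ when $\cA$ is consistent). If I wanted to be careful here (and to support the footnoted strengthening that every no-swap-regret algorithm is automatically consistent), I would apply the above sandwich argument not to the limit but to every subsequential Hausdorff limit $\M^* $ of $\{\M(A^T)\}$: the no-swap-regret property of $\cA$ transfers to $\M^*$ (the defining inequalities \eqref{eq:no-swap-regret-constraint} are closed conditions, so they pass to the Hausdorff limit) and likewise for the no-regret property used in Lemma~\ref{lem:nsr_within_all_nr}. Every such $\M^*$ therefore equals $\M_{NSR}$, which forces the full sequence $\{\M(A^T)\}$ to converge to $\M_{NSR}$.

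I do not expect any serious obstacle: essentially all the work has been done in the two preceding results, and the only thing to watch is that "no-swap-regret implies no-regret" is invoked at the algorithmic (not just CSP) level, which is immediate from the definitions in Section~2.
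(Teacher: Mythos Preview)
Your proposal is correct and mirrors the paper's proof exactly: apply Theorem~\ref{thm:nr_nsr_containment} for $\M(\cA)\subseteq\M_{NSR}$, note that no-swap-regret implies no-regret, and invoke Lemma~\ref{lem:nsr_within_all_nr} for the reverse inclusion. Your added remark on consistency via subsequential limits is also how the paper handles the strengthening (Corollary~\ref{cor:all_nsr_algos}).
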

\begin{proof}
From Theorem~\ref{thm:nr_nsr_containment}, $\M(A) \subseteq \M_{NSR}$. However, since any no-swap-regret algorithm also has no-regret, Lemma \ref{lem:nsr_within_all_nr} implies $\M_{NSR} \subseteq \M(A)$. The conclusion follows.
\end{proof}

Note that in the proof of Theorem \ref{thm:unique_nsr_menu}, we have appealed to Lemma~\ref{lem:nsr_within_all_nr} which uses the fact that $u_L$ has no weakly dominated actions. This is necessary: consider, for example, a game with two identical actions for the learner, $a$ and $a'$ ($u_L(\cdot, a) = u_L(\cdot, a')$). We can consider two no-swap-regret algorithms for the learner, one which only plays $a$ and never plays $a'$, and the other which only plays $a'$ and never plays $a$. These two algorithms will have different asymptotic menus, both of which contain only no-swap-regret CSPs. But as mentioned earlier, this is in some sense a degeneracy -- the set of learner payoffs $u_L$ with weakly dominated actions has zero measure (any small perturbation to $u_L$ will prevent this from taking place). 

Theorem \ref{thm:unique_nsr_menu} has a number of conceptual implications for thinking about learning algorithms in games:

\begin{enumerate}
    \item First, all no-swap-regret algorithms are \emph{asymptotically equivalent}, in the sense that regardless of which no-swap-regret algorithm you run, any asymptotic strategy profile you converge to under one algorithm you could also converge to under another algorithm (for appropriate play of the other player). This is true even when the no-swap-regret algorithms appear qualitatively quite different in terms of the strategies they choose (compare e.g. the fixed-point based algorithm of \cite{blum2007external} with the more recent algorithms of \cite{dagan2023external} and \cite{peng2023fast}).
    \item In particular, there is no notion of regret that is meaningfully \emph{stronger} than no-swap-regret for learning in (standard, normal-form) games. That is, there is no regret-guarantee you can feasibly insist on that would rule out some points of the no-swap-regret menu while remaining no-regret in the standard sense. In other words, the no-swap-regret menu is \emph{minimal} among all no-regret menus: every no-regret menu contains $\M_{NSR}$, and no asymptotic menu (whether it is no-regret or not) is a subset of $\M_{NSR}$. 
    \item Finally, these claims are \emph{not} generally true for external regret. There are different no-regret algorithms with very different asymptotic menus (as a concrete example, $\M_{NR}$ and $\M_{NSR}$ are often different, and they are both asymptotic menus of some learning algorithm by Theorem \ref{thm:characterization}). 
\end{enumerate}

Of course, this does not tell us whether it is actually \emph{good} for the learner to use a no-swap-regret algorithm, from the point of view of the learner's utility. In the next section we will revisit this question through the lens of understanding which menus are Pareto optimal.

\section{Characterizing Pareto-optimal menus}
\label{sec:pareto-optimal-menus}

In this section we shift our attention to understanding which asymptotic menus are Pareto-optimal and which are Pareto-dominated by other asymptotic menus. The ideal result would be a characterization of all Pareto-optimal asymptotic menus; we will stop a little short of this and instead provide a full characterization of all Pareto-optimal \emph{no-regret} menus that are also \emph{polytopal} -- i.e., can be written as the intersection of a finite number of half-spaces. This characterization will be sufficient for proving our main results that the no-swap-regret menu $\M_{NSR}$ is Pareto-optimal, but that the menu corresponding to multiplicative weights is sometimes Pareto-dominated. 

Before we introduce the characterization, we introduce a little bit of additional notation. For any menu $\M$, let $\Up(\M) = \max_{\csp \in \M} u_{L}(\csp)$ denote the maximum learner payoff of any CSP in $\M$; likewise, define $\Um(\M) = \min_{\csp \in \M} u_{L}(\csp)$. We will also let $\M^{+} = \argmax_{\csp \in \M} u_{L}(\csp)$ and $\M^{-} = \argmin_{\csp \in \M} u_{L}(\csp)$ be the subsets of $\M$ that attain this maximum and minimum (we will call these the \emph{maximum-value} and \emph{minimum-value sets} of $\M$).

Our characterization can now be simply stated as follows.

\begin{theorem}\label{thm:pareto-optimal-characterization}
Let $\M$ be a polytopal no-regret menu. Then $\M$ is Pareto-optimal iff $\M^{-} = \M^{-}_{NSR}$. That is, $\M$ must share the same minimum-value set as the no-swap-regret menu $\M_{NSR}$.
\end{theorem}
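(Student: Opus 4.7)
The plan is to prove both directions of the iff. Fix $v = \Um(\M_{NSR})$ and recall the ``zero-sum identity'' $V_L(\M, -u_L) = \Um(\M)$, valid for any menu $\M$ since all $u_L$-minimizers share the same learner utility; I will also use repeatedly that $\M \supseteq \M_{NSR}$ since $\M$ is no-regret (Lemma~\ref{lem:nsr_within_all_nr}). Note that this containment immediately gives $\Um(\M) \leq v$.

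For the forward direction (if $\M^{-} \neq \M_{NSR}^{-}$, then $\M$ is Pareto-dominated), I split into two cases. In Case A ($\Um(\M) < v$), I take the candidate dominating menu to be $\M' = \M \cap \{\csp : u_L(\csp) \geq v\}$. This is a polytope and is a valid asymptotic menu by Theorem~\ref{thm:characterization}: for any $x \in \Delta_m$, Lemma~\ref{lem:nsr_characterization} provides a $y \in \BR_L(x)$ with $x \otimes y \in \M_{NSR} \subseteq \M'$ (since $u_L(x \otimes y) = \max_{y'} u_L(x, y') \geq v$). Weak dominance follows from a short tie-breaking case analysis: either the $u_O$-argmax over $\M$ survives the cap (and $V_L$ agrees), or it lies entirely in the removed slab and then $V_L(\M', u_O) \geq v > V_L(\M, u_O)$. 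Strict dominance at $u_O = -u_L$ is immediate. In Case B ($\Um(\M) = v$ with $\M^{-} \supsetneq \M_{NSR}^{-}$) I instead take $\M'$ to be the closed convex hull of $\M_{NSR} \cup (\M \cap \{u_L > v\})$; it is a valid menu by upwards-closure (Lemma~\ref{lem:char_upwards_closed}), it satisfies $\M' \subseteq \M$, and a short calculation shows $(\M')^{-} = \M_{NSR}^{-}$, so $\M'$ misses precisely the unwanted points of $\M^{-} \setminus \M_{NSR}^{-}$. Weak dominance is analogous to Case A. For strict dominance I pick an extreme point $\csp_0 \in \M^{-} \setminus \M_{NSR}^{-}$ (which must exist because $\M^{-}$ is a polytopal face of $\M$ and $\M_{NSR}^{-}$ is a proper convex subset), separate it from $\M_{NSR}^{-}$ via a linear functional $w$, and consider a small perturbation $u_O = -u_L + \eps w$; Lemma~\ref{lem:geometry} then lets me conclude that the optimizer's $\M'$-best response lies strictly above $u_L = v$.

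For the backward direction (if $\M^{-} = \M_{NSR}^{-}$, then $\M$ is Pareto-optimal), suppose for contradiction that some menu $\M'$ Pareto-dominates $\M$. The zero-sum identity at $u_O = -u_L$ forces $\Um(\M') \geq v$, so $\M' \subseteq \{\csp : u_L(\csp) \geq v\}$. Now pick a $u_O$ at which $\M'$ strictly beats $\M$ and project both menus onto the two-dimensional $(u_O(\csp), u_L(\csp))$-plane -- the optimizer's choice and the learner's resulting utility depend only on this projection. The hypothesis $\M^{-} = \M_{NSR}^{-}$ together with the polytopal structure pins down the shape of $\M$'s projected curve along the $u_L = v$ boundary, and Lemma~\ref{lem:geometry} then asserts that any strict improvement of the projection of $\M'$ over that of $\M$ in one direction must be paid for by a strict loss in some other direction, contradicting weak dominance.

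The main obstacle is Lemma~\ref{lem:geometry}: making precise why a polytopal no-regret menu whose low-$u_L$ face coincides with $\M_{NSR}^{-}$ cannot be uniformly improved by any convex set confined to the halfspace $\{u_L \geq v\}$. Correctly handling the tie-breaking-in-favor-of-the-learner convention inside this two-dimensional reduction is the most delicate step, and the polytopal hypothesis is what guarantees the projected curves are piecewise linear and hence amenable to this geometric analysis.
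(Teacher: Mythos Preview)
Your forward direction has a substantive gap. First, Case~A is vacuous: by Lemma~\ref{lem:menu-min-value}, every no-regret asymptotic menu satisfies $\Um(\M)=U_{ZS}=\Um(\M_{NSR})=v$, so $\Um(\M)<v$ never occurs for the $\M$ under consideration. More seriously, your Case~B construction does not produce a proper subset of $\M$. For any $p\in\M^{-}$ and any $q\in\M$ with $u_L(q)>v$ (such $q$ exists since $\csp^{+}\in\M$), the segment $(1-t)p+tq$ lies in $\M\cap\{u_L>v\}$ for every $t>0$ and converges to $p$ as $t\to 0$. Hence the closure of $\M\cap\{u_L>v\}$ is already all of $\M$, and your $\M'=\overline{\conv}\bigl(\M_{NSR}\cup(\M\cap\{u_L>v\})\bigr)$ equals $\M$. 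The ``short calculation'' that $(\M')^{-}=\M_{NSR}^{-}$ is therefore false, and you have built no dominating menu.

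The paper repairs this by using the polytopal hypothesis in the one place it is genuinely needed: choose an extreme point $\csp_0$ of $\M$ lying in $\M^{-}\setminus\M_{NSR}^{-}$, and let $\M'$ be the convex hull of $\M_{NSR}$ together with all extreme points of $\M$ \emph{except} $\csp_0$. Because $\M$ has finitely many extreme points, $\csp_0$ is isolated from the rest and is truly removed, yielding $\M'\subsetneq\M$ with the same extreme points as $\M$ aside from $\csp_0$. Weak dominance then follows because for any $u_O$ the optimizer's choice over $\M$ is either $\csp_0$ (learner value $v=\Um(\M')$, so no worse for $\M'$) or already an extreme point of $\M'$; strict dominance comes from Lemma~\ref{lem:non-dominance}. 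Your construction tries to bypass the finiteness of extreme points and falls into exactly the closure trap the paper explicitly flags after the proof as the obstruction to extending beyond polytopes.

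Your backward direction is in the right spirit but too sketchy. The paper structures it as a case split on $\Um(\M')$ and on whether $\M'$ is no-regret; when $\M'$ is not no-regret it invokes Corollary~\ref{cor:high-regret-non-dominance}, and otherwise it applies Lemma~\ref{lem:non-dominance}, which packages the 2D projection and carefully verifies the boundary conditions $f(u^{+})=g(u^{+})$ and $f(u^{-})\le g(u^{-})$ needed for Lemma~\ref{lem:geometry} via $\M^{+}=(\M')^{+}=\{\csp^{+}\}$ and $\M^{-}=\M_{NSR}^{-}\subseteq(\M')^{-}$. Your outline does not address the non-no-regret case and does not explain why the endpoints of the projected curves match up as Lemma~\ref{lem:geometry} requires.
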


Note that while this characterization only allows us to reason about the Pareto-optimality of polytopal no-regret menus, in stating that these menus are Pareto-optimal, we are comparing them to all possible asymptotic menus. That is, we show that they are not Pareto-dominated by any possible asymptotic menu, even one which may have high regret and/or be an arbitrary convex set. We conjecture that this characterization holds for all no-regret menus (even ones that are not polytopal).

The remainder of this section will be dedicated to proving Theorem \ref{thm:pareto-optimal-characterization}. We will begin in Section \ref{sec:learner-utilities} by establishing some basic properties about $\M^{+}$, $\M^{-}$, $\Um(\M)$, and $\Up(\M)$ for no-regret and Pareto-optimal menus. Then in Section \ref{sec:minimal-lemma} we prove our main technical lemma (Lemma \ref{lem:non-dominance}), which shows that a menu cannot be Pareto-dominated by a menu with a larger minimal set. Finally, we complete the proof of Theorem \ref{thm:pareto-optimal-characterization} in Section \ref{sec:completing-the-proof}, and discuss some implications for the Pareto-optimality of the no-regret and no-swap-regret menus in Section \ref{sec:regret-implications}.



\subsection{Constraints on learner utilities}
\label{sec:learner-utilities}

We begin with some simple observations on the possible utilities of the learner under Pareto-optimal menus and no-regret menus. We first consider $\M^{+}$. Recall that (by assumption) there is a unique pure strategy profile $\csp^{+} = (i^{*}) \otimes (j^{*})$ that maximizes the learner's reward. We claim that any Pareto-optimal menu must contain $\csp^{+}$.

\begin{lemma}\label{lem:has_max_point}
If $\M$ is a Pareto-optimal asymptotic menu, then $\M^{+} = \{\csp^{+}\}$.
\end{lemma}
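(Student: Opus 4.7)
The plan is to prove the contrapositive: if $\csp^+ \notin \M$ (equivalently $\M^+ \neq \{\csp^+\}$, since $\csp^+$ is by assumption the unique global maximizer of $u_L$ on $\Delta_{mn}$), then $\M$ is Pareto-dominated. The natural candidate dominator is the enlargement $\M' = \conv(\M \cup \{\csp^+\})$. First I would check that $\M'$ is a valid asymptotic menu: compactness and convexity are immediate (we are adding a single point to a compact convex set), and response-satisfiability in the sense of Theorem~\ref{thm:characterization} is inherited for free from $\M \subseteq \M'$.

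The main work is establishing that $\M'$ weakly dominates $\M$ on every $u_O$. Since $u_O$ is linear and every point of $\M'$ can be written as $(1-\lambda)\csp_1 + \lambda \csp^+$ for some $\csp_1 \in \M$ and $\lambda \in [0,1]$, the optimizer's maximum value on $\M'$ equals $\max\bigl(\max_{\csp \in \M} u_O(\csp),\, u_O(\csp^+)\bigr)$. A short case analysis then suffices. If $\max_{\csp \in \M} u_O(\csp) > u_O(\csp^+)$, every optimizer-optimal point in $\M'$ must have $\lambda = 0$, so the optimizer's optimal set and hence the learner's tie-broken utility are identical to those in $\M$. Otherwise $\csp^+$ itself is optimizer-optimal in $\M'$, and by the tie-breaking rule the learner attains at least $u_L(\csp^+)$, which is the global maximum of $u_L$ on $\Delta_{mn}$ and hence upper-bounds $V_L(\M, u_O)$.

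For strict improvement on some $u_O$, I would take $u_O = u_L$, so that $\csp^+$ is simultaneously the unique $u_O$-maximizer on $\Delta_{mn}$. Since $\csp^+ \notin \M$ we have $\max_{\csp \in \M} u_O(\csp) < u_O(\csp^+)$, which places us strictly in the second case with $\csp^+$ the unique $u_O$-maximizer in $\M'$. This yields $V_L(\M', u_O) = u_L(\csp^+) > \max_{\csp \in \M} u_L(\csp) \geq V_L(\M, u_O)$, giving the strict inequality needed for Pareto-domination (the single-payoff form of the definition suffices by the equivalence noted in the footnote).

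The only subtle step is the weak-dominance case analysis: a priori, adjoining $\csp^+$ could let the optimizer access a new convex combination that drags the learner's utility down. This is precluded precisely because $\csp^+$ is the learner's global optimum, so whenever $\csp^+$ enters the optimizer's optimal set the learner tie-breaks toward it, and when it does not enter, the linear structure of $u_O$ along segments through $\csp^+$ forces every new optimizer-optimum back into $\M$ itself.
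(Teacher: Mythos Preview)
Your proof is correct and follows essentially the same approach as the paper's: enlarge $\M$ to $\M' = \conv(\M \cup \{\csp^+\})$, use $u_O = u_L$ for the strict improvement, and observe that in all other cases the optimizer either selects a point already in $\M$ or selects $\csp^+$, which is globally best for the learner. Your case analysis is slightly more explicit than the paper's (which simply states that ``the maximizer of $u_O$ over $\M'$ is either equal to the maximizer of $u_O$ over $\M$, or equal to $\csp^{+}$''), and you cite Theorem~\ref{thm:characterization} directly where the paper cites Lemma~\ref{lem:char_upwards_closed}, but these are cosmetic differences.
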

\begin{proof}
Assume $\M$ is a Pareto-optimal asymptotic menu that does not contain $\csp^{+}$. By Lemma \ref{lem:char_upwards_closed}, the set $\M' = \conv(\M, \csp^{+})$ is also a valid asympotic menu. We claim $\M'$ Pareto-dominates $\M$.

To see this, first note that when $u_O = u_L$, $V_L(\M', u_O) = u_L(\csp^{+}) > V_{L}(\M, u_O)$, since $\csp^{+}$ is the unique CSP in $\Delta_{mn}$ maximizing $u_L$. On other hand, for any other $u_O$, the maximizer of $u_O$ over $\M'$ is either equal to the maximizer of $u_O$ over $\M$, or equal to $\csp^{+}$. In either case, the learner's utility is at least as large, so $V_L(\M', u_O) \geq V_L(\M, u_O)$ for all $u_O$. It follows that $\M'$ Pareto-dominates $\M$.  
\end{proof}

Note also that $\csp^{+}$ belongs to $\M_{NSR}$ (since it a best-response CSP of the same form as in Lemma \ref{lem:nsr_characterization}), so $\M^{+}_{NSR} = \csp^{+}$. Since $\M_{NSR}$ is also contained in every no-regret menu, this also means that for any (not necessarily Pareto-optimal) no-regret menu $\M$, $\M^{+} = \M^{+}_{NSR} = \csp^{+}$.

We now consider the minimum-value set $\M^{-}$. Unlike for $\M^{+}$, it is no longer the case that all Pareto-optimal menus share the same set $\M^{-}$. It is not even the case (as we shall see in Section \ref{sec:completing-the-proof}), that all Pareto-optimal menus have the same minimum learner utility $\Um(\M)$. 


However, it is the case that all \emph{no-regret} algorithms share the same value for the minimum learner utility $\Um(\M)$, namely the ``zero-sum'' utility $\Uzs = \min_{x \in \Delta_{m}} \max_{y \in \Delta_{n}} u_L(x, y)$. The reason for this is that $\Uzs$ is the largest utility the learner can guarantee when playing a zero-sum game (i.e., when the optimizer has payoffs $u_{O} = -u_{L}$), and thus it is impossible to obtain a higher value of $\Um(\M)$. This is formalized in the following lemma.

\begin{lemma}\label{lem:menu-min-value}
Every asymptotic menu must have $\Um(\M) \leq \Uzs$. Moreover, if $\M$ is a no-regret asymptotic menu, then $\Um(\M) = \Uzs$, and $\M_{NSR}^{-} \subseteq \M^{-}$.
\end{lemma}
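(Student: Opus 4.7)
The plan is to prove the lemma in three parts, corresponding to the three claims.

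For the first claim, $\Um(\M) \leq \Uzs$, I would use the characterization of asymptotic menus from Theorem~\ref{thm:characterization}. Let $x^* \in \argmin_{x \in \Delta_m} \max_{y \in \Delta_n} u_L(x,y)$ be a minimax strategy for the optimizer, so that $\max_{y \in \Delta_n} u_L(x^*, y) = \Uzs$. By Theorem~\ref{thm:characterization}, there exists some $y \in \Delta_n$ with $x^* \otimes y \in \M$. Since $u_L(x^* \otimes y) = u_L(x^*, y) \leq \max_{y' \in \Delta_n} u_L(x^*, y') = \Uzs$, we conclude $\Um(\M) \leq u_L(x^* \otimes y) \leq \Uzs$.

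For the second claim, suppose $\M$ is a no-regret menu (i.e., $\M \subseteq \M_{NR}$). It suffices to show that every $\csp \in \M$ satisfies $u_L(\csp) \geq \Uzs$. First, I would verify that the set of no-regret CSPs is itself convex: if $\csp = \lambda \csp_1 + (1-\lambda)\csp_2$ with both $\csp_1, \csp_2$ no-regret, then using linearity of $u_L$ and subadditivity of the max, the no-regret inequality for $\csp$ follows from those for $\csp_1$ and $\csp_2$. Hence every $\csp \in \M \subseteq \M_{NR}$ satisfies the no-regret constraint \eqref{eq:no-regret-constraint}. Letting $\csp_x \in \Delta_m$ denote the optimizer's marginal of $\csp$, the no-regret constraint gives
\[
u_L(\csp) \;\geq\; \max_{j^* \in [n]} u_L(\csp_x, j^*) \;=\; \max_{y \in \Delta_n} u_L(\csp_x, y) \;\geq\; \min_{x \in \Delta_m} \max_{y \in \Delta_n} u_L(x,y) \;=\; \Uzs.
\]
Combined with the first claim, this gives $\Um(\M) = \Uzs$.

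For the third claim, note that $\M_{NSR}$ is itself a no-regret menu (every no-swap-regret CSP is a no-regret CSP, by choosing the constant swap function), so by the second claim $\Um(\M_{NSR}) = \Uzs = \Um(\M)$. Moreover, because $\M$ is an asymptotic menu contained in $\M_{NR}$, the characterization in Theorem~\ref{thm:characterization} guarantees that $\M$ arises as the asymptotic menu of some no-regret algorithm, so Lemma~\ref{lem:nsr_within_all_nr} yields $\M_{NSR} \subseteq \M$. Given any $\csp \in \M_{NSR}^-$, we have $\csp \in \M_{NSR} \subseteq \M$ and $u_L(\csp) = \Um(\M_{NSR}) = \Um(\M)$, so $\csp \in \M^-$, completing the proof.

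The only mildly subtle step is the convexity of the no-regret CSPs used in the second claim (which reduces to $\max$ being convex); everything else is a direct application of the previously established characterization theorem and the containment $\M_{NSR} \subseteq \M$ from Lemma~\ref{lem:nsr_within_all_nr}.
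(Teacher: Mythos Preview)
Your proof is correct and follows essentially the same route as the paper: use Theorem~\ref{thm:characterization} with a minimax $x^*$ for the upper bound, the no-regret inequality to bound $u_L(\csp)$ below by $\Uzs$, and then $\M_{NSR}\subseteq\M$ together with equality of minimum values for the last part. Your explicit check that the no-regret constraint is preserved under convex combinations is a nice touch that the paper leaves implicit.

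One small wrinkle in your third step: you invoke Lemma~\ref{lem:nsr_within_all_nr} by asserting that Theorem~\ref{thm:characterization} produces a \emph{no-regret} algorithm realizing $\M$. Theorem~\ref{thm:characterization} only yields \emph{some} algorithm with asymptotic menu $\M$; that this algorithm is no-regret follows from $\M\subseteq\M_{NR}$ together with Hausdorff convergence of $\M(A^T)\to\M$, but you have not said so. The paper avoids this detour entirely: the proof of Lemma~\ref{lem:nsr_within_all_nr} uses only (i) the response-satisfiability of Theorem~\ref{thm:characterization} and (ii) the fact that each $x\otimes y'\in\M$ must be no-regret, so it applies verbatim to any asymptotic menu $\M\subseteq\M_{NR}$ without ever naming an algorithm. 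Either fix works, but the direct one is cleaner.
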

\begin{proof}
Let $(x_{ZS}, y_{ZS})$ be the solution to the minimax problem $\min_{x \in \Delta_{m}} \max_{y \in \Delta_{n}} u_L(x, y)$ (i.e., the Nash equilibrium of the corresponding zero-sum game). By Theorem \ref{thm:characterization}, any asymptotic menu $\M$ must contain a point of the form $x_{ZS} \otimes y$. By construction, $u_L(x_{ZS} \otimes y) \leq U_{ZS}$, so $\Um(\M) \leq U_{ZS}$.

To see that every no-regret asymptotic menu satisfies $\Um(\M) = \Uzs$, assume that $\M$ is a no-regret menu, and $\csp \in \M$ satisfies $u_L(\csp) < \Uzs$. Since $\csp$ has no-regret (satisfies the conditions of \eqref{eq:no-regret-constraint}), we must also have $u_L(\csp) \geq \max_{y \in \Delta_{n}} \min_{x \in \Delta_{m}} u_L(x, y)$, since this holds for whatever marginal distribution $x$ is played by the optimizer under $\csp$. But by the minimax theorem, $\max_{y \in \Delta_{n}} \min_{x \in \Delta_{m}} u_L(x, y) = U_{ZS}$, and so we have a contradiction.

Finally, note that since $\M$ is no-regret, $\M_{NSR} \subseteq \M$ and so $\M_{NSR}^{-} \subseteq \M^{-}$ (since they share the same minimum value). 
\end{proof}

\subsection{Pareto-domination and minimum-value sets}
\label{sec:minimal-lemma}



We now present our two main lemmas necessary for the proof of Theorem \ref{thm:pareto-optimal-characterization}. The first lemma shows that if one menu contains a point not present in the second menu (and both menus share the same maximum-value set), then the first menu cannot possibly Pareto-dominate the second menu.
 
\begin{lemma}\label{lem:non-dominance}
Let $\M_1$ and $\M_2$ be two distinct asymptotic menus where $\M_{1}^{+} = \M_{2}^{+}$. Then if either: 

\begin{itemize}
    \item i. $\M_2 \setminus \M_1 \neq \emptyset$, or
    \item ii. $\M_{1}^{-} = \M_{2}^{-}$, 
\end{itemize}

\noindent
then there exists a $u_O$ for which $V_{L}(\M_1, u_O) > V_{L}(\M_2, u_O)$ (i.e., $\M_2$ does not Pareto-dominate $\M_1$).
\end{lemma}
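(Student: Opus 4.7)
My plan is to handle the two conditions via separate separating hyperplane arguments, each producing an explicit $u_O$ witnessing $V_L(\M_1, u_O) > V_L(\M_2, u_O)$.

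For case (i), let $\csp^* \in \M_2 \setminus \M_1$, which I may take (by Krein--Milman) to be an extreme point of $\M_2$. Strict separation of $\csp^*$ from the closed convex $\M_1$ gives a linear functional $u^*$ with $u^*(\csp^*) > \max_{\csp \in \M_1} u^*(\csp)$, and after refining $u^*$ within the exposing cone of $\csp^*$ on $\M_2$, I may additionally ensure that $u^*$ uniquely exposes $\csp^*$ on $\M_2$. Setting $u_O = u^*$ gives $V_L(\M_2, u_O) = u_L(\csp^*) < u_L(\csp^+)$ (the last inequality holds since $\csp^+$ is the unique $u_L$-maximizer on $\Delta_{mn}$ and $\csp^* \neq \csp^+$). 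On $\M_1$, the optimizer is indifferent over the supporting face $F_1 = \argmax_{\csp \in \M_1} u^*(\csp)$, and tie-breaking in the learner's favor gives $V_L(\M_1, u_O) = \max_{\csp \in F_1} u_L(\csp)$. It then remains to show that for some suitably refined $u^*$, we have $\max_{\csp \in F_1} u_L(\csp) > u_L(\csp^*)$.

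For case (ii) with $\M_1^- = \M_2^-$: the subcase $\M_2 \setminus \M_1 \neq \emptyset$ reduces to case (i), so I focus on $\M_2 \subsetneq \M_1$. Pick a vertex $\csp'$ of $\M_1$ lying in $\M_1 \setminus \M_2$ (which exists by Krein--Milman, since otherwise the convex hull of vertices of $\M_1$ would be contained in $\M_2$). Since $\M_1^- = \M_2^- \subseteq \M_2$, we have $u_L(\csp') > \Um(\M_1)$. Separating $\csp'$ from $\M_2$ yields a direction $v$ with $v(\csp') > \sup_{\csp \in \M_2} v(\csp)$, and I consider $u_O = -u_L + \lambda v$ for a suitably tuned $\lambda > 0$. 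For small $\lambda$, the argmax on both menus lies in $\M_2^-$ (giving learner value $\Um(\M_1)$); as $\lambda$ increases, the argmax on $\M_1$ transitions to $\csp'$ (since $v(\csp')$ greatly exceeds $v$ on $\M_2$), while the argmax on $\M_2$ remains in $\M_2^-$ (as $v$ is bounded there). At the appropriate $\lambda$, $V_L(\M_1, u_O) = u_L(\csp') > \Um(\M_1) = V_L(\M_2, u_O)$.

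The main obstacle, present in both cases, is establishing the key $u_L$-comparison between the supporting faces of the two menus. In case (i), the challenge is that the supporting face $F_1$ of $\M_1$ under $u^*$ might \emph{a priori} consist of low-$u_L$ points even though $\csp^+ \in \M_1$ has high $u_L$; direct separation does not immediately force $F_1$ to contain a high-$u_L$ point (in particular, the exposing cone of $\csp^*$ on $\M_2$ need not meet the normal cone of $\csp^+$ on $\M_1$). My plan for overcoming this is to project $\M_1$ and $\M_2$ onto the two-dimensional plane spanned by $u_L$ and $u^*$, where the question reduces to a comparison of the upper-right boundaries of two planar convex sets sharing a common maximum-$u_L$ point (the image of $\csp^+$, since $\M_1^+ = \M_2^+$). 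The presence of $\csp^*$ outside $\M_1$ imposes a structural constraint on these 2D curves that yields the required strict inequality for a suitable choice of $u^*$ within the exposing cone. A parallel 2D argument in case (ii) underwrites the tunability of $\lambda$.
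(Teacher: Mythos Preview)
Your final plan---project both menus onto the two-dimensional plane spanned by $u_L$ and a separating functional, then compare the upper boundaries of the resulting planar convex sets---is exactly the paper's approach. The paper packages the 2D step as a stand-alone lemma: if $f, g: [a,b] \to \mathbb{R}$ are distinct concave functions with $f(a) \leq g(a)$ and $f(b) = g(b)$, then there is a direction $\theta \in [0,\pi]$ in which the $x$-coordinate of the maximizer of $x\cos\theta + f(x)\sin\theta$ strictly exceeds that for $g$. With the $x$-axis being $u_L$, the right-endpoint equality comes from $\M_1^+ = \M_2^+$; the left-endpoint inequality comes from $\M_1^- = \M_2^-$ in case (ii), and in case (i) from restricting to the subinterval $[u_L(\csp^*), \Up(\M_1)]$, where separation of $\csp^*$ from $\M_1$ forces $f_{\text{up}}(u_L(\csp^*)) < g_{\text{up}}(u_L(\csp^*))$.

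One caution about your preliminary direct attempts before you invoke the 2D reduction. In case (i), ``refining $u^*$ within the exposing cone of $\csp^*$ on $\M_2$'' need not be compatible with separation from $\M_1$: the exposing cone of $\csp^*$ on $\M_2$ and the cone of functionals strictly separating $\csp^*$ from $\M_1$ are two different open cones and can be disjoint. The paper does not attempt this; it only uses the separating direction to make the two 2D projections distinct and lets the 2D lemma pick the actual $u_O \in \mathrm{span}(u_L, u_O^0)$. In case (ii), the naive $\lambda$-tuning cannot be completed as stated (nothing forces the $\M_1$-argmax to be $\csp'$ rather than some other high-$v$, low-$u_L$ point, nor the $\M_2$-argmax to stay in $\M_2^-$ as $\lambda$ grows); as you note, the 2D lemma is what actually closes the argument here too.
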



Note that Lemma \ref{lem:non-dominance} holds also under the secondary assumption that $\M_{1}^{-} = \M_{2}^{-}$. One important consequence of this is that all menus with identical minimum value and maximum value sets $\M^{-}$ and $\M^{+}$ are incomparable to each other under the Pareto-dominance order (even such sets that may contain each other).

The key technical ingredient for proving Lemma ~\ref{lem:non-dominance} is the following lemma, which establishes a ``two-dimensional'' variant of the above claim.

\begin{lemma}\label{lem:geometry}
Let $f, g: [a, b] \rightarrow \mathbb{R}$ be two distinct concave functions satisfying $f(a) \leq g(a)$ and $f(b) = g(b)$. For $\theta \in [0, \pi]$, let $\hat{f}(\theta) = \argmax_{x \in [a, b]} (x\cos\theta + f(x)\sin\theta)$ (if the argmax is not unique, then $\hat{f}(\theta)$ is undefined). Define $\hat{g}(\theta)$ symmetrically. Then there exists a $\theta$ for which $\hat{f}(\theta) > \hat{g}(\theta)$. 
\end{lemma}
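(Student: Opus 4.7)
The plan is to argue by contradiction: assume $\hat{f}(\theta) \leq \hat{g}(\theta)$ at every $\theta$ at which both are defined, and derive $f \equiv g$, contradicting the hypothesis that $f$ and $g$ are distinct. The key reformulation is to pass to the Legendre transform. For a concave function $F$ on $[a,b]$, I introduce the convex function $\psi_F(s) := \sup_{x \in [a,b]} (sx + F(x))$ on $s \in \mathbb{R}$. Since $x\cos\theta + F(x)\sin\theta = \sin\theta \cdot (sx + F(x))$ with $s = \cot\theta$, for $\theta \in (0,\pi)$ the argmax $\hat{F}(\theta)$ is exactly the set of maximizers of $sx + F(x)$, equivalently the subdifferential $\partial \psi_F(s)$. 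In particular, $\hat{F}(\theta)$ is uniquely defined iff $\psi_F$ is differentiable at $s = \cot\theta$, in which case $\hat{F}(\theta) = \psi_F'(s)$.

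Convex functions on $\mathbb{R}$ are differentiable off a countable set, so both $\psi_f$ and $\psi_g$ are differentiable almost everywhere, and the hypothesis translates into $\psi_f'(s) \leq \psi_g'(s)$ for a.e.\ $s \in \mathbb{R}$. Being finite convex functions on $\mathbb{R}$, both $\psi_f$ and $\psi_g$ are locally Lipschitz, hence absolutely continuous on compacts; integrating then shows that $\Psi(s) := \psi_g(s) - \psi_f(s)$ is non-decreasing in $s$.

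Finally I would compute the limits of $\Psi$ at $\pm\infty$. A short estimate using continuity of $F$ on the closed interval $[a,b]$ shows $\psi_F(s) - sb \to F(b)$ as $s \to +\infty$ (the supremum concentrates near $x = b$), so $\Psi(s) \to g(b) - f(b) = 0$; analogously $\psi_F(s) - sa \to F(a)$ as $s \to -\infty$, giving $\Psi(s) \to g(a) - f(a) \geq 0$. A non-decreasing function whose limit at $-\infty$ is $\geq 0$ and whose limit at $+\infty$ is $0$ must be identically zero (and as a byproduct this forces $g(a) = f(a)$). Hence $\psi_f \equiv \psi_g$, and since a concave function on $[a,b]$ is uniquely recovered from its conjugate via $F(x) = \inf_s (\psi_F(s) - sx)$ (Fenchel biconjugation applied to the extension of $F$ by $-\infty$ outside $[a,b]$), we conclude $f \equiv g$, the desired contradiction.

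The main obstacle I anticipate is cleanly handling the $\theta$ at which $\hat{f}$ or $\hat{g}$ is undefined -- these correspond to slopes at which one of the graphs has a non-degenerate affine piece. The convex-conjugate formulation handles this essentially for free: the excluded $s$ form a countable set, which is negligible for the integration step, so the pointwise inequality between $\hat{f}(\theta)$ and $\hat{g}(\theta)$ lifts directly to an a.e.\ inequality between $\psi_f'$ and $\psi_g'$. A secondary technical point is verifying $\psi_F(s) - sb \to F(b)$ even when the one-sided derivative $F'_-(b)$ is infinite (e.g.\ when $F$ has a vertical tangent at $b$); this follows from a routine $\varepsilon$-argument splitting the maximization into a neighborhood of $b$ (where continuity of $F$ controls things) and its complement (where the penalty $s(x-b)$ dominates for large $s$).
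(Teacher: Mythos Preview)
Your proof is correct and takes a genuinely different (dual) route from the paper's argument. Both proofs hinge on the identification $\hat{F}(\theta) \leftrightarrow$ ``derivative information at slope $s=\cot\theta$,'' but they integrate on opposite axes. The paper works on the $x$-side: from $f(b)-f(a)\ge g(b)-g(a)$ it integrates $f'-g'$ over $[a,b]$, finds a point $c$ with $f'(c)>g'(c)$, and then explicitly names the $\theta$ via $\hat{f}(\theta)=(f')^{-1}(-\cot\theta)$. You work on the $s$-side: passing to the conjugate $\psi_F(s)=\sup_x(sx+F(x))$ so that $\hat{F}(\theta)=\psi_F'(\cot\theta)$, the contradiction hypothesis becomes $\psi_f'\le\psi_g'$ a.e., whence $\psi_g-\psi_f$ is monotone, and the boundary conditions at $s\to\pm\infty$ (which exactly encode $f(b)=g(b)$ and $f(a)\le g(a)$) force $\psi_f\equiv\psi_g$ and hence $f\equiv g$ by biconjugation.

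What each buys: the paper's argument is more elementary (no conjugate machinery, just interval-valued derivatives of concave functions) and is constructive, actually producing the witnessing $\theta$ from the point $c$. Your argument is non-constructive but handles the set of ``bad'' $\theta$ more cleanly --- the countable non-differentiability set of a convex function on $\mathbb{R}$ is automatically negligible for integration, whereas the paper has to manage interval-valued $f'$, $(f')^{-1}$ by hand. Your asymptotic computation $\psi_F(s)-sb\to F(b)$ is also a nice way to encode the endpoint hypotheses without separately tracking what happens at $\theta\in\{0,\pi\}$.
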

\begin{proof}

\begin{figure}[htbp]
    \centering
    \includegraphics[width=0.7\textwidth]{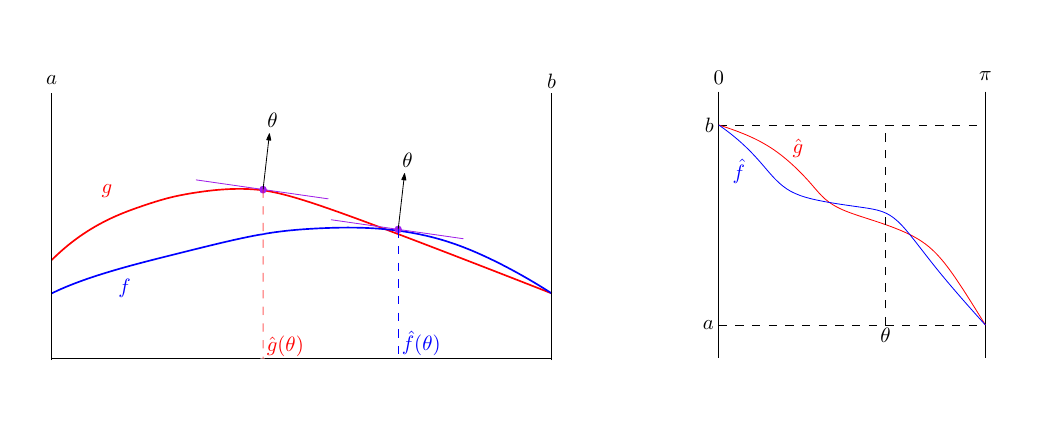}
    \caption{A visual depiction of Lemma~\ref{lem:geometry}. The purple points in the left figure denote the maximizers of $f$ and $g$ in the direction $\theta$; since the purple point on $f$ is to the right of that on $g$, we have $\hat{f}(\theta) > \hat{g}(\theta)$ for this $\theta$.}
    \label{fig:two_sets_and_functions}
\end{figure}

Since $f(x)$ is a concave curve, it has a (weakly) monotonically decreasing derivative $f'(x)$. This derivative is not necessarily defined for all $x \in [a, b]$, but since $f$ is concave it is defined almost everywhere. At points $c$ where it is not defined, $f$ still has a well-defined left derivative $f'_L(x) = \lim_{h \rightarrow 0} (f(x) - f(x - h))/h$ and right derivative $f'_R(x) = \lim_{h \rightarrow 0} (f(x + h) - f(x))/h$. We will abuse notation and let $f'(x)$ denote the interval $[f'_L(x), f'_R(x)]$ (at the boundaries defining $f'(a) = (-\infty, f_{R}(a)]$ and $f'(b) = [f_{L}(b), \infty)$. Similarly, the interval-valued inverse function $(f')^{-1}(y)$ is also well-defined, decreasing in $y$, and uniquely-defined for almost all values of $y$ in $(-\infty, \infty)$. 

Note that since $\hat{f}(\theta)$ is the $x$ coordinate of the point on the curve $f(x)$ that maximizes the inner product with the unit vector $(\cos\theta, \sin\theta)$, $f'(\hat{f}(\theta))$ must contain the value $-\cos\theta / \sin\theta = -\cot \theta$. In particular, if $\hat{f}(\theta)$ is uniquely defined, $\hat{f}(\theta) = (f')^{-1}(-\cot \theta)$. So it suffices to find a $y$ for which $(f')^{-1}(y) > (g')^{-1}(y)$. 

To do this, we make the following observation: since $f(b) - f(a) \geq g(b) - g(a)$, $\int_{a}^{b} f'(x)dx \geq \int_{a}^{b} g'(x)dx$~\footnote{These integrals are well defined because the first derivatives of $f$ and $g$ exist almost everywhere.}. This means there must be a point $c \in (a, b)$ where $f'(c) > g'(c)$. If not, then we must have $f'(x) \leq g'(x)$ for all $x \in (a, b)$; but the only way we can simultaneously have $f'(x) \leq g'(x)$ for all $x \in (a, b)$ and $\int_{a}^{b}f'(x)dx \geq \int_{a}^{b}g'(x)dx$ is if $f'(x) = g'(x)$ for almost all $x \in (a, b)$ -- but this would contradict the fact that $f$ and $g$ are distinct concave functions.

Now, take a point $c \in (a, b)$ where $f'(c) > g'(c)$ and choose a $y$ in $f'(c)$. Since $g'$ is a decreasing function, there must exist a $c' < c$ such that $y \in g'(c')$, and so $(f')^{-1}(y) > (g')^{-1}(y)$. 
\end{proof}

We can now prove Lemmas~\ref{lem:non-dominance} through an application of the above lemma. 

\begin{proof}[Proof of Lemma~\ref{lem:non-dominance}]
We will consider the two preconditions separately, and  begin by considering the case where $\M_1^{-} = \M_2^{-}$. Since $\M_1$ and $\M_2$ are distinct asymptotic menus, there must be an extreme point $\csp$ in one menu that does not belong to the other. In particular, there must exist an optimizer payoff $u_{O}$ where $u_{O}(\csp) > u_{O}(\csp')$ for any $\csp'$ in the other menu. Denote this specific optimizer payoff by $u_{O}^{0}$.

We will show that there exists a $u_{O} \in \mathrm{span}(u_L, u_{O}^{0})$ where $V_{L}(\M_1, u_O) > V_{L}(\M_2, u_O)$. To do this, we will project $\M_1$ and $\M_2$ to two-dimensional sets $S_1$ and $S_{2}$ by letting $S_{1} = \{(u_L(\csp), u_{O}^{0}(\csp)) \mid \csp \in \M_1 \}$ (and defining $S_{2}$ symmetrically). By our construction of $u_{O}^{0}$, these two convex sets $S_{1}$ and $S_{2}$ are distinct. Also, note that if $u_O = \lambda_{1}u_L + \lambda_{2}u_{O}^{0}$, we can interpret $V_{L}(\M_1, u_O)$ as the maximum value of $z_1$ for any point in $\arg\max_{(z_1, z_2) \in S_1} (\lambda_1z_1 + \lambda_2z_2)$. We can interpret $V_{L}(\M_2, u_O)$ similarly.

Let us now consider the geometry of $S_1$ and $S_2$. Let $u^{-}$ denote the common value of $\Um(\M_1)$ and $\Um(\M_2)$, and similarly, let $u^{+}$ denote the common value of $\Up(\M_1)$ and $\Up(\M_2)$. Both $S_1$ and $S_2$ are contained in the ``vertical'' strip $u^{-} \leq z_1 \leq u^{+}$. We can therefore write $S_1$ as the region between the concave curve $f_{\text{up}}:[u^{-}, u^{+}]$ (representing the upper convex hull of $S_1$) and $f_{\text{down}}$ (representing the lower convex hull of $S_1$; define $g_{\text{up}}$ and $g_{\text{down}}$ analogously for $S_2$. Since $S_1$ and $S_2$ are distinct, either $f_{\text{up}} \neq g_{\text{up}}$ or $f_{\text{down}} \neq g_{\text{down}}$; without loss of generality, assume $f_{\text{up}} \neq g_{\text{up}}$ (we can switch the upper and lower curves by changing $u_{O}^{0}$ to $-u_{O}^{0}$).

Note also that since $\M^{+}_1 = \M^{+}_2$ and $\M^{-}_1 \subseteq \M^{-}_2$, we have $f_{\text{up}}(u^{+}) = g_{\text{up}}(u^{+})$ and $f_{\text{up}}(u^{-}) \leq g_{\text{up}}(u^{-})$ (since $[f_{\text{down}}(u^{-}), f_{\text{up}}(u^{-})] \subseteq [g_{\text{down}}(u^{-}), g_{\text{up}}(u^{-})]$). By Lemma \ref{lem:geometry}, there exists a $\theta \in [0, \pi]$ for which $\hat{f}_{\text{up}}(\theta) > \hat{g}_{\text{up}}(\theta)$. But by the definition of $\hat{f}$ and $\hat{g}$, this implies that for $u_{O} = \cos(\theta) u_L + \sin(\theta) u^{0}_{O}$, $V_{L}(\M_1, u_{O}) > V_{L}(\M_2, u_O)$, as desired. This proof is visually depicted in Figure~\ref{fig:two_sets_and_functions}.

The remaining case, where $\M_2 \setminus \M_1 \neq \emptyset$, can be proved very similarly to the above proof. We make the following changes:

\begin{itemize}
    \item First, we choose an extreme point $\csp$ of $\M_2$ that belongs to $\M_2$ but not $\M_1$. Again, we choose a $u_O$ which separates $\csp$ from $\M_1$. We let $u^* = u_O(\csp)$; note that $u^* < u^{+}$ (since $\M_1^{+} = \M_2^{+}$). 
    \item Instead of defining our functions $f_{up}$ and $g_{up}$ on the full interval $[u^{-}, u^{+}]$, we instead restrict them to the interval $[u^{*}, u^{+}]$. Because of our construction of $\csp$, we have that $f_{up}(u^*) < g_{up}(u^*)$, and $f_{up}(u^{+}) = g_{up}(u^{+})$.
    \item We can again apply Lemma \ref{lem:geometry} to these two functions on this sub-interval, and construct a  $u_O$ for which $V_{L}(\M_1, u_O) > V_{L}(\M_2, u_O)$.
\end{itemize}
\end{proof}

One useful immediate corollary of Lemma \ref{lem:non-dominance} is that it is impossible for high-regret menus (menus that are not no-regret) to Pareto-dominate no-regret menus.

\begin{corollary}\label{cor:high-regret-non-dominance}
Let $\M_1$ and $\M_2$ be two asymptotic menus such that $\M_1$ is no-regret and $\M_2$ is \emph{not} no-regret. Then $\M_2$ does not Pareto-dominate $\M_1$.
\end{corollary}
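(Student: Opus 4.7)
The plan is to split into two cases based on whether $\csp^+$ (the unique CSP in $\Delta_{mn}$ maximizing $u_L$) lies in $\M_2$, reducing the nontrivial case to Lemma~\ref{lem:non-dominance}(i). Before the split I will record that $\csp^+ \in \M_1$ and in fact $\M_1^+ = \{\csp^+\}$: membership holds because $\csp^+ = (i^*) \otimes (j^*)$ has $j^* \in \BR_L(i^*)$, so Lemma~\ref{lem:nsr_characterization} gives $\csp^+ \in \M_{NSR}$ and Lemma~\ref{lem:nsr_within_all_nr} gives $\M_{NSR} \subseteq \M_1$; the maximum-value equality then follows from uniqueness of $\csp^+$ as the global maximizer of $u_L$ on $\Delta_{mn}$.

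First, suppose $\csp^+ \notin \M_2$. I will take the witness $u_O = u_L$: then $V_L(\M_1, u_L) = u_L(\csp^+)$, while $V_L(\M_2, u_L) = \max_{\csp \in \M_2} u_L(\csp) < u_L(\csp^+)$ (strict by uniqueness of $\csp^+$, and attained since asymptotic menus are closed). This single witness already shows that $\M_2$ does not Pareto-dominate $\M_1$.

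Otherwise $\csp^+ \in \M_2$, so $\M_2^+ = \{\csp^+\} = \M_1^+$, matching the hypothesis of Lemma~\ref{lem:non-dominance}. To invoke part~(i) I need to exhibit a point of $\M_2 \setminus \M_1$. Since $\M_2$ is not a no-regret menu, there exists $\csp^* \in \M_2$ violating~\eqref{eq:no-regret-constraint}. The brief auxiliary fact I will record is that the no-regret constraint is preserved under convex combinations: the LHS of~\eqref{eq:no-regret-constraint} is linear in $\csp$ and the RHS is a pointwise maximum of linear functions of $\csp$, hence convex, so ``LHS $\geq$ RHS'' survives convex combinations. Consequently every point of $\M_1 \subseteq \M_{NR}$ is genuinely no-regret, so $\csp^* \notin \M_1$, placing $\csp^*$ in $\M_2 \setminus \M_1$. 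Lemma~\ref{lem:non-dominance}(i) then supplies the required $u_O$ with $V_L(\M_1, u_O) > V_L(\M_2, u_O)$.

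There is no substantive obstacle here once the case split is in place; the only bit of care is verifying that no-regret CSPs form a convex set, which is what guarantees that a point of $\M_1$ really satisfies~\eqref{eq:no-regret-constraint} and hence cannot coincide with $\csp^*$.
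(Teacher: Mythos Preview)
Your proof is correct and follows essentially the same route as the paper's. The only cosmetic difference is in the case $\csp^{+}\notin\M_2$: you exhibit the witness $u_O=u_L$ directly, whereas the paper replaces $\M_2$ by $\conv(\M_2,\csp^{+})$ (invoking Lemma~\ref{lem:has_max_point} to note this only moves $\M_2$ up in the Pareto order) and thereby reduces everything to a single application of Lemma~\ref{lem:non-dominance}(i); both arguments are equally valid, and your explicit witness is arguably cleaner.
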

\begin{proof}
If $\M_2$ does not contain $\csp^{+}$, add it to $\M_2$ via Lemma \ref{lem:has_max_point} (this only increases the position of $\M_2$ in the Pareto-dominance partial order). Since $\M_1$ is no-regret, it must already contain $\csp^{+}$, and therefore we can assume $\M^{+}_1 = \M^{+}_2 = \{\csp^{+}\}$.

Since $\M_2$ is not no-regret, it must contain a CSP $\csp$ that does not lie in $\M_{NR}$, and therefore $\M_2 \setminus \M_1 \neq \emptyset$. It then follows from Lemma \ref{lem:non-dominance} that $\M_2$ does not Pareto-dominate $\M_1$.
\end{proof}

\subsection{Completing the proof}
\label{sec:completing-the-proof}

We can now finish the proof of Theorem~\ref{thm:pareto-optimal-characterization}.
\begin{proof}[Proof of Theorem~\ref{thm:pareto-optimal-characterization}]
We will first prove that if a no-regret menu $\M$ satisfies $\M^{-} = \M_{NSR}^{-}$, then it is Pareto-optimal. To do so, we will consider any other menu $\M'$ and show that $\M'$ does not Pareto-dominate $\M$. There are three cases to consider:

\begin{itemize}
\item \textbf{Case 1}: $U^{-}(\M') < U^{-}(\M)$. In this case, $\M'$ cannot dominate $\M$ since $V_{L}(\M, -u_L) > V_{L}(\M', -u_L)$ (note that $U^{-}(\M) = V_{L}(\M, -u_L)$, since if $u_O = -u_L$, the optimizer picks the utility-minimizing point for the learner).

\item \textbf{Case 2}: $U^{-}(\M') > U^{-}(\M)$. By Lemma \ref{lem:menu-min-value}, this is not possible.

\item \textbf{Case 3}: $U^{-}(\M') = U^{-}(\M)$. 
If $\M'$ is not a no-regret menu, then by Corollary \ref{cor:high-regret-non-dominance} it cannot dominate $\M$. We will therefore assume that $\M'$ is a no-regret menu, i.e. $\M' \subseteq \M_{NR}$
Then, by Lemma~\ref{lem:menu-min-value}, $\M^{-} \subseteq (\M')^{-}$. Also, by Lemma \ref{lem:has_max_point}, we can assume without loss of generality that $(\M')^{+} = \{\csp^{+}\} = \M^{+}$ (if $\M'$ does not contain $\csp^{+}$, replace it with the Pareto-dominating menu that contains it). Now, by Lemma~\ref{lem:non-dominance}, $\M'$ does not dominate $\M$. 
\end{itemize}

We now must show that if $\M^{-} \neq \M_{NSR}^{-}$, then it is Pareto-dominated by some other menu. Since $\M$ is (by assumption) a no-regret menu, we must have $U^{-}(\M) = U^{-}(\M_{NSR}) = U_{ZS}$, and $\M^{-} \supset \M^{-}_{NSR}$ (Lemmas \ref{lem:menu-min-value}). Consider an extreme point $\csp_0$ that belongs to $\M^{-}$ but not to $\M_{NSR}^{-}$. Construct the menu $\M'$ as follows: it is the convex hull of $\M_{NSR}$ and all the extreme points in $\M$ \emph{except for} $\csp_0$. By Lemma \ref{lem:char_upwards_closed}, this is a valid menu (it is formed by adding some points to the valid menu $\M_{NSR}$). Note also that $\M'$ has all the same extreme points of $\M$ except for $\csp_0$ (since $\M$ is a polytope, we add a finite number of extreme points to $\M'$, all of which are well-separated from $\csp_0$), and in particular is distinct from $\M$.

We will show that $\M'$ Pareto-dominates $\M$. To see this, note first that, by Lemma~\ref{lem:non-dominance}, there is some $u_{O}$ such that $V_{L}(U_{\M},u_{O}) < V_{L}(U_{\M'},u_{O})$. Furthermore, for all other values of $u_{O}$, $V_{L}(U_{\M},u_{O}) \leq V_{L}(U_{\M'},u_{O})$. This is since the maximizer of $u_O$ over $\M$ is either the minimal-utility point $\csp_0$ (which cannot be strictly better than the maximizer of $u_O$ over $\M'$), or exactly the same point as the maximizer of $u_O$ over $\M'$. It follows that $\M'$ Pareto-dominates $\M$.
\end{proof}

Note that in the Proof of Theorem \ref{thm:pareto-optimal-characterization}, we only rely on the fact that the menu $\cM$ is polytopal in precisely one spot, when we construct a menu $\M'$ that Pareto-dominates $\M$ by ``removing'' an extreme point from $\M^{-}$. As stated, this removal operation requires $\M$ to be a polytope: in general, it is possible that any extreme point $\csp_0$ that belongs to $\M^{-}$ is a limit of other extreme points in $\M$, and so when attempting to construct $\M'$ per the procedure above, we would just perfectly recover the original $\M$ when taking the convex closure of the remaining points. 

That said, it is not clear whether the characterization of non-polytopal Pareto-optimal menus is any different than the characterization in Theorem \ref{thm:pareto-optimal-characterization}. In fact, by the argument in the proof of Theorem \ref{thm:pareto-optimal-characterization}, one direction of the characterization still holds (if a non-polytopal no-regret menu satisfies $\M^{-} = \M^{-}_{NSR}$, then it is Pareto-optimal). We conjecture that this characterization holds for non-polytopal menus (and leave it as an interesting open problem). 

\begin{conjecture}\label{conj:main}
Any no-regret menu $\M$ is Pareto-optimal iff $\M^{-} = \M_{NSR}$.
\end{conjecture}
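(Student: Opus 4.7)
The plan is to handle the two directions separately.

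For the ``if'' direction, fix a no-regret menu $\M$ with $\M^{-} = \M_{NSR}^{-}$ and let $\M'$ be any candidate dominator. By Corollary~\ref{cor:high-regret-non-dominance} we may assume $\M'$ is no-regret, and by Lemma~\ref{lem:has_max_point} we may further assume $(\M')^{+} = \{\csp^{+}\} = \M^{+}$. If $\M' \setminus \M \neq \emptyset$, Lemma~\ref{lem:non-dominance}(i) produces a $u_O$ at which $V_{L}(\M, u_O) > V_{L}(\M', u_O)$. Otherwise $\M' \subseteq \M$, giving $(\M')^{-} \subseteq \M^{-} = \M_{NSR}^{-}$; combined with $\M_{NSR}^{-} \subseteq (\M')^{-}$ from Lemma~\ref{lem:menu-min-value}, this forces $(\M')^{-} = \M^{-}$, and Lemma~\ref{lem:non-dominance}(ii) again produces such a $u_O$. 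Hence $\M$ is Pareto-optimal, and the argument never uses polytopality.

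For the ``only if'' direction, suppose $\M^{-} \supsetneq \M_{NSR}^{-}$; my plan is to build a Pareto-dominator via a halfspace cut at an exposed ``excess'' point. Since $\M^{-}$ is closed and convex and $\M_{NSR}^{-}$ is a proper closed convex subset, Krein--Milman produces an extreme point of $\M^{-}$ outside $\M_{NSR}^{-}$; Straszewicz's theorem then lets us pick an exposed point $\csp_e \in \M^{-} \setminus \M_{NSR}^{-}$ with exposing direction $w$ in the affine hull of $\M^{-}$. Taking $v := w - t \cdot u_L$ (viewing $u_L$ as a vector in $\Rset^{mn}$) for sufficiently large $t$, the $-t u_L$ term penalizes any $p \in \M$ with $u_L(p) > \Uzs$ heavily enough that $\csp_e$ uniquely maximizes $\langle v, \cdot \rangle$ over all of $\M$; finiteness of the required $t$ follows from a tangent-cone analysis at $\csp_e$, noting that the only directions $d$ in $T_{\csp_e}\M$ with $\langle u_L, d\rangle = 0$ lie in $T_{\csp_e} \M^{-}$ and so satisfy $\langle w, d\rangle < 0$ strictly. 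Choosing $\alpha$ strictly between $\max_{\csp \in \M_{NSR}} \langle v, \csp \rangle$ and $\langle v, \csp_e \rangle$, define $\M' := \M \cap \{\csp : \langle v, \csp \rangle \leq \alpha\}$. This $\M'$ is closed convex, contains $\M_{NSR}$ (hence is a valid asymptotic menu by Theorem~\ref{thm:characterization}), is distinct from $\M$ (since $\csp_e$ is excluded), and has $(\M')^{+} = \M^{+} = \{\csp^{+}\}$. Strict improvement at some $u_O$ then follows from Lemma~\ref{lem:non-dominance}(i) applied with $\M_1 = \M'$, $\M_2 = \M$.

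The hard part is establishing the other half of Pareto-dominance, $V_{L}(\M', u_O) \geq V_{L}(\M, u_O)$ for \emph{all} $u_O$. In the polytopal proof of Theorem~\ref{thm:pareto-optimal-characterization}, this step is essentially free because the lex-maximizer of $(u_O, u_L)$ over $\M$ is always at a vertex, so deleting the single vertex $\csp_0$ either leaves the optimizer's choice unchanged in $\M'$ or pushes it to a CSP with $u_L = \Uzs$, which no-regret of $\M'$ harmlessly matches. In the non-polytopal setting the cap $\M \setminus \{\csp : \langle v, \csp \rangle \leq \alpha\}$ can contain entire curved portions of $\partial \M$ with $u_L > \Uzs$, so the optimizer's tiebreak-winning choice in $\M$ may be a ``good-for-both'' CSP that is excluded by the cut and whose $u_L$-value cannot be recovered at the optimizer's new favorite in $\M'$; a purely geometric fix seems hard because the closed convex hull of $\M \setminus \M^{-}$ already equals $\M$, so one cannot simultaneously remove a point from $\M^{-}$ and keep every point of $\M \setminus \M^{-}$. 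I expect the cleanest workaround to be a polytopal-approximation argument: produce a sequence of polytopal no-regret menus $\hat{\M}_n = \conv(\M_{NSR} \cup F_n) \subseteq \M$, where $F_n$ is a growing finite set of extreme points of $\M$ that always includes a point in $\M^{-} \setminus \M_{NSR}^{-}$ and satisfies $\hat{\M}_n \to \M$ in Hausdorff distance; invoke Theorem~\ref{thm:pareto-optimal-characterization} to dominate each $\hat{\M}_n$ by some $\hat{\M}_n'$; and pass to a Hausdorff limit. Making this rigorous requires verifying that $V_L(\cdot, u_O)$ is sufficiently continuous under Hausdorff convergence of menus so that uniform weak dominance survives the limit, which is where I anticipate the real work.
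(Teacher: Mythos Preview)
This statement is a \emph{conjecture} in the paper, not a theorem; the authors explicitly leave the non-polytopal ``only if'' direction open. Your ``if'' direction is correct and matches the paper's own observation (made right before the conjecture) that the forward direction of Theorem~\ref{thm:pareto-optimal-characterization} never used polytopality.

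For the ``only if'' direction you have correctly located the central obstruction, and your proposal does not resolve it. Two concrete gaps:
\begin{itemize}
\item \textbf{Exposing $\csp_e$ in $\M$.} Your tangent-cone claim, that every $d \in T_{\csp_e}\M$ with $\langle u_L, d\rangle = 0$ lies in $T_{\csp_e}\M^{-}$, is false in general: $\M^{-}$ is the face of $\M$ where the linear functional $u_L$ attains its minimum, and at a boundary point of such a face the tangent cone of the body can contain horizontal directions that are merely limits of strictly-upward directions and hence not tangent to the face. (Think of a half-ball sitting on a line segment: at an endpoint of the segment the body's tangent cone is a half-space, while the face's tangent cone is a ray.) So the argument that some $v = w - t\,u_L$ uniquely exposes $\csp_e$ in all of $\M$ is incomplete; an exposed point of $\M^{-}$ need not be exposed in $\M$.
\item \textbf{Weak dominance for all $u_O$.} Even granting the halfspace cut, you correctly note that the cap $\M \setminus \M'$ may contain curved boundary portions with $u_L > \Uzs$ that some optimizer would have selected; your cut then strictly lowers $V_L$ at that $u_O$, so $\M'$ does not dominate $\M$. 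The polytopal-approximation workaround you sketch runs directly into the fact (which the paper itself flags in Appendix~\ref{app:pareto-dominance-eq}) that $V_L(\cdot, u_O)$ is not continuous: even if $\hat{\M}_n \to \M$ and $\hat{\M}_n' \to \M'$ in Hausdorff distance with $V_L(\hat{\M}_n', u_O) \geq V_L(\hat{\M}_n, u_O)$ for every $n$, neither side need converge to the corresponding limit value. The ``real work'' you anticipate is precisely the content of the open problem.
\end{itemize}
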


On the other hand, the restriction to no-regret menus is necessary for the characterization of Theorem \ref{thm:pareto-optimal-characterization} to hold. To see this, note that another interesting corollary of Lemma \ref{lem:non-dominance} is that any minimal asymptotic menu is Pareto-optimal (in fact, we have the slightly stronger result stated below).

\begin{corollary}\label{cor:minimal_is_po}
Let $\M$ be an inclusion-minimal asymptotic menu (i.e., with the property that no other asymptotic menu $\M'$ satisfies $\M' \subset \M$). Then the menu $\M' = \conv(\M, \{\csp^{+}\})$ is Pareto-optimal. 
\end{corollary}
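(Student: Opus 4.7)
The plan is to suppose for contradiction that $\M' = \conv(\M, \csp^+)$ is Pareto-dominated by some asymptotic menu $\M''$ and derive a contradiction via Lemma~\ref{lem:non-dominance} combined with the inclusion-minimality of $\M$. By Lemma~\ref{lem:has_max_point}, I may enlarge $\M''$ to $\conv(\M'', \csp^+)$ without weakening its Pareto-domination claim, so WLOG $\csp^+ \in \M''$ and $(\M')^+ = (\M'')^+ = \{\csp^+\}$, matching the hypothesis of Lemma~\ref{lem:non-dominance} applied with $\M_1 = \M'$ and $\M_2 = \M''$.

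I then split on whether $\M'' \subseteq \M'$. If $\M'' \not\subseteq \M'$, Lemma~\ref{lem:non-dominance}(i) applies directly to produce an optimizer payoff on which $\M'$ strictly outperforms $\M''$, contradicting Pareto-domination. In the remaining case $\M'' \subseteq \M'$, my goal is to argue that $\M \subseteq \M''$: given this, convexity of $\M''$ together with $\csp^+ \in \M''$ yield $\M' = \conv(\M, \csp^+) \subseteq \M''$, and combined with $\M'' \subseteq \M'$ this forces $\M'' = \M'$, contradicting strict Pareto-domination.

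To establish $\M \subseteq \M''$, I consider the closed convex set $\M \cap \M'' \subseteq \M$ and aim to show it is itself an asymptotic menu, so that inclusion-minimality of $\M$ forces $\M \cap \M'' = \M$. By Theorem~\ref{thm:characterization}, this reduces to showing $\M \cap \M''$ is response-satisfiable: for each $x \in \Delta_m$ I need $y \in \Delta_n$ with $x \otimes y \in \M \cap \M''$. For $x$ with $x_{i^*} = 0$, any $y$ for which $x \otimes y \in \M' = \conv(\M, \csp^+)$ must in fact satisfy $x \otimes y \in \M$: writing $x \otimes y = \lambda \csp + (1-\lambda)\csp^+$ with $\csp \in \M$, the $(i^*, j^*)$-coordinate of the left side is $x_{i^*} y_{j^*} = 0$, while the right side is at least $1-\lambda$, forcing $\lambda = 1$. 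Hence the response-satisfying $y$ furnished by $\M''$ automatically lies in $\M \cap \M''$.

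The main obstacle is verifying response-satisfiability of $\M \cap \M''$ at $x$ with $x_{i^*} > 0$, where the $\M''$-response can genuinely involve a nontrivial $\csp^+$-component. Here I would exploit the structural rigidity of inclusion-minimal asymptotic menus: every extreme point of $\M$ must be a product distribution $x_0 \otimes y_0$ serving as the unique product response to a corresponding $x_0$ in $\M$ (a consequence of minimality together with Theorem~\ref{thm:characterization}), and then a ray-shooting argument from $\csp^+$ through the $\M''$-response onto $\M$ would either extract a shared product response in $\M \cap \M''$, or else allow one to directly construct a closed convex response-satisfiable proper subset of $\M$, contradicting inclusion-minimality.
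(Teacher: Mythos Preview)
The paper gives no proof for this corollary beyond citing Lemma~\ref{lem:non-dominance}. Your strategy---normalize so $(\M'')^{+}=\{\csp^{+}\}$, then split on whether $\M''\subseteq\M'$---is sound, and the case $\M''\not\subseteq\M'$ is correctly handled by Lemma~\ref{lem:non-dominance}(i). When $\csp^{+}\in\M$ (so that $\M'=\M$), the remaining case $\M''\subsetneq\M'$ is ruled out directly by inclusion-minimality; this is in fact the only instance the paper actually invokes (Theorem~\ref{thm:high-regret-po} takes $\M=\{x\otimes j^{*}\}$, which already contains $\csp^{+}$).

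The gap you flag for $x_{i^{*}}>0$ is genuine, and your sketched fix does not work. The structural claim---that each extreme point of an inclusion-minimal $\M$ is the \emph{unique} product response at its optimizer-marginal---is false: with $m=n=2$, the menu
\[
\M=\conv\bigl(e_1\otimes e_1,\;e_2\otimes e_2,\;(\tfrac12,\tfrac12)\otimes e_1,\;(\tfrac12,\tfrac12)\otimes e_2\bigr)
\]
is inclusion-minimal (the response at $(\alpha,1-\alpha)$ is forced to $e_1$ for $\alpha>\tfrac12$ and to $e_2$ for $\alpha<\tfrac12$, and closure then pins down all four vertices), yet $(\tfrac12,\tfrac12)$ has two product responses in $\M$. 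Worse, for this $\M$ and any $u_L$ with $\csp^{+}=e_1\otimes e_2$ and $u_L(2,1)$ small enough, the vertex $(\tfrac12,\tfrac12)\otimes e_1$ becomes the $u_L$-minimizer of $\M'=\conv(\M,\csp^{+})$, and the asymptotic menu $\M''=\conv(e_1\otimes e_1,\;\csp^{+},\;e_2\otimes e_2)\subsetneq\M'$ Pareto-dominates $\M'$: the lone missing vertex carries the lowest learner value, so deleting it never hurts the learner and strictly helps whenever that vertex would have been the optimizer's unique choice. Hence the corollary as stated appears to fail when $\csp^{+}\notin\M$, and the argument cannot be completed in that regime.
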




Corollary \ref{cor:minimal_is_po} allows us to construct some high-regret asymptotic menus that are Pareto-optimal. For example, we can show that the algorithm that always plays the learner's component of $\csp^{+}$ is Pareto-optimal.

\begin{theorem}\label{thm:high-regret-po}
Let $\cM$ be the asymptotic menu of the form $\cM = \{x \otimes j^* \mid x \in \Delta_{m}\}$ (where $j^*$ is the learner's component of $\csp^{+} = (i^*) \otimes (j^*)$). Then $\cM$ is Pareto-optimal.
\end{theorem}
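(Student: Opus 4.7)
The plan is to apply Corollary \ref{cor:minimal_is_po} by showing that $\cM$ is itself an inclusion-minimal asymptotic menu that already contains $\csp^{+}$.

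First I would verify that $\cM$ is a valid asymptotic menu. The set is closed and convex since it is the image of the compact convex set $\Delta_m$ under the continuous linear map $x \mapsto x \otimes j^*$. For the characterization in Theorem \ref{thm:characterization}, note that for any $x \in \Delta_m$, the point $x \otimes j^* \in \cM$ witnesses the response-satisfiability condition. Also, $\csp^{+} = (i^*) \otimes (j^*) \in \cM$ by construction, so no extension is necessary.

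The key step is to show $\cM$ is inclusion-minimal among asymptotic menus. The crucial geometric observation is that $\cM$ is an affine slice of $\Delta_{mn}$ in which each CSP is \emph{uniquely determined by its optimizer marginal}: the only CSP in $\cM$ whose projection to $\Delta_m$ equals $x$ is $x \otimes j^*$. Suppose toward contradiction that some closed convex $\cM' \subsetneq \cM$ is also a valid asymptotic menu. Pick $x_0 \in \Delta_m$ with $x_0 \otimes j^* \notin \cM'$; then $\cM'$ contains no CSP with optimizer marginal $x_0$ whatsoever, violating the necessary condition of Theorem \ref{thm:characterization}. Thus no proper sub-menu exists, so $\cM$ is inclusion-minimal.

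Finally, since $\csp^{+} \in \cM$, the menu $\conv(\cM, \{\csp^{+}\})$ from Corollary \ref{cor:minimal_is_po} is just $\cM$ itself, and the corollary immediately gives Pareto-optimality. The main subtlety is really just the uniqueness-of-marginal argument in the minimality step; everything else is a direct application of prior machinery. Note that $\cM$ is typically high-regret (the learner always plays the fixed action $j^*$ even when the optimizer plays something that makes $j^*$ far from a best response), so this also illustrates that Corollary \ref{cor:high-regret-non-dominance} cannot be strengthened to rule out high-regret Pareto-optimal menus, consistent with the discussion preceding the theorem.
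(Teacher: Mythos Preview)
Your proposal is correct and follows exactly the same approach as the paper: show $\cM$ is an inclusion-minimal asymptotic menu via the characterization of Theorem~\ref{thm:characterization}, note that $\csp^{+}\in\cM$, and conclude via Corollary~\ref{cor:minimal_is_po}. The paper's own proof is a terse two-line version of what you wrote; your expanded uniqueness-of-marginal argument is precisely the justification the paper leaves implicit when it says ``by Theorem~\ref{thm:characterization}, $\cM$ is inclusion-minimal.''
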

\begin{proof}
By Theorem \ref{thm:characterization}, $\cM$ is inclusion-minimal. Since $\cM$ also includes the CSP $\csp^{+}$, it is Pareto-optimal by Corollary \ref{cor:minimal_is_po}.
\end{proof}

Note that in general, the menu $\M$ in Theorem \ref{thm:high-regret-po} is not no-regret, and may have $U^{-}(\M) < \Uzs$. We leave it as an interesting open question to provide a full characterization of all Pareto-optimal asymptotic menus.

\subsection{Implications for no-regret and no-swap-regret menus}
\label{sec:regret-implications}

Already Theorem \ref{thm:pareto-optimal-characterization} has a number of immediate consequences for understanding the no-regret menu $\M_{NR}$ and the no-swap-regret menu $\M_{NSR}$, both of which are polytopal no-regret menus by their definitions in \eqref{eq:no-regret-constraint} and \eqref{eq:no-swap-regret-constraint} respectively. As an immediate consequence of our characterization, we can see that the no-swap-regret menu (and hence any no-swap-regret learning algorithm) is Pareto-optimal.

\begin{corollary} \label{cor:NSR_pareto_optimal}
The no-swap-regret menu $\M_{NSR}$ is a Pareto-optimal asymptotic menu.
\end{corollary}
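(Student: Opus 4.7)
The plan is to apply Theorem \ref{thm:pareto-optimal-characterization} directly, which reduces the statement to verifying three conditions about $\M_{NSR}$: that it is a valid asymptotic menu, that it is a polytopal no-regret menu, and that its minimum-value set equals $\M^{-}_{NSR}$ (the last being tautological).

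First I would observe that $\M_{NSR}$ is polytopal: by definition it is the convex hull of all CSPs satisfying the finite collection of linear constraints \eqref{eq:no-swap-regret-constraint} (one for each $j \in [n]$ and each alternative action $j^{*}$), intersected with the simplex $\Delta_{mn}$. As the intersection of finitely many half-spaces with $\Delta_{mn}$, this is a polytope. Next, I would note that every no-swap-regret CSP is a no-regret CSP: summing the no-swap-regret constraint \eqref{eq:no-swap-regret-constraint} over $j \in [n]$ against any fixed alternative $j^{*} \in [n]$ yields the no-regret constraint \eqref{eq:no-regret-constraint}. Hence $\M_{NSR} \subseteq \M_{NR}$, so $\M_{NSR}$ is a no-regret menu. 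Finally, $\M_{NSR}$ is a valid asymptotic menu: by Lemma \ref{lem:nsr_characterization}, for any $x \in \Delta_{m}$ we can pick $y \in \BR_{L}(x)$ and obtain $x \otimes y \in \M_{NSR}$, so the response-satisfiability condition of Theorem \ref{thm:characterization} holds.

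With these observations in hand, Theorem \ref{thm:pareto-optimal-characterization} applies to $\M_{NSR}$, and the condition $\M^{-}_{NSR} = \M^{-}_{NSR}$ is trivially satisfied. We therefore conclude that $\M_{NSR}$ is Pareto-optimal.

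There is really no obstacle here, since all the heavy lifting has been absorbed into Theorem \ref{thm:pareto-optimal-characterization}; the only care needed is to check that the definitional setup (polytopality, being no-regret, being a valid asymptotic menu) is indeed in place so that the characterization theorem can be invoked verbatim.
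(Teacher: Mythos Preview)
Your proposal is correct and matches the paper's approach exactly: the paper presents Corollary~\ref{cor:NSR_pareto_optimal} as an immediate consequence of Theorem~\ref{thm:pareto-optimal-characterization}, having already noted (in Section~\ref{sec:regret-implications} and after Lemma~\ref{lem:nsr_characterization}) that $\M_{NSR}$ is a polytopal no-regret asymptotic menu. You have simply spelled out the verifications that the paper leaves implicit.
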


It would perhaps be ideal if $\M_{NSR}$ was the unique Pareto-optimal no-regret menu, as it would provide a somewhat clear answer as to which learning algorithm one should use in a repeated game. Unfortunately, this is not the case -- although $\M_{NSR}$ is the minimal Pareto-optimal no-regret menu, Theorem \ref{thm:pareto-optimal-characterization} implies there exist infinitely many distinct Pareto-optimal no-regret menus. 

On the more positive side, Theorem \ref{thm:pareto-optimal-characterization} (combined with Lemma \ref{lem:char_upwards_closed}) gives a recipe for how to construct a generic Pareto-optimal no-regret learning algorithm: start with a no-swap-regret learning algorithm (the menu $\M_{NSR}$) and augment it with any set of additional CSPs that the learner and optimizer can agree to reach. This can be any set of CSPs as long as i. each CSP $\csp$ has no regret, and ii. each CSP has learner utility $u_L(\csp)$ strictly larger than the minimax value $U_{ZS}$. 

\begin{corollary}
There exist infinitely many Pareto-optimal asymptotic menus.
\end{corollary}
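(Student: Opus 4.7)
The plan is to invoke Theorem~\ref{thm:pareto-optimal-characterization}, which says that for polytopal no-regret menus, Pareto-optimality is equivalent to sharing the minimum-value set of $\M_{NSR}$. It therefore suffices to exhibit an uncountable family of distinct polytopal no-regret menus whose minimum-value sets all equal $\M_{NSR}^{-}$.

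I would pick a no-regret CSP $\csp_0 \in \M_{NR} \setminus \M_{NSR}$ (such a point exists for generic $u_L$ whenever $m, n \geq 2$), and by convex separation fix a linear functional $\ell$ with $\ell(\csp_0) > s$, where $s := \sup_{\csp \in \M_{NSR}} \ell(\csp)$. Consider the one-parameter family of no-regret CSPs
$$\csp_\lambda \;:=\; (1-\lambda)\csp^{+} + \lambda\csp_0, \qquad \lambda \in [0, 1].$$
Since $\ell(\csp^{+}) \leq s < \ell(\csp_0)$, the map $\lambda \mapsto \ell(\csp_\lambda)$ is strictly increasing, and there is a threshold $\lambda^{*} \in [0, 1)$ above which $\ell(\csp_\lambda) > s$. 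For each $\lambda \in (\lambda^{*}, 1]$, define $\M_\lambda \;:=\; \conv(\M_{NSR} \cup \{\csp_\lambda\})$. By Lemma~\ref{lem:char_upwards_closed} each $\M_\lambda$ is a valid asymptotic menu; it is polytopal (convex hull of a polytope and one extra point) and no-regret (contained in $\M_{NR}$).

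Two checks remain. First, $\M_\lambda^{-} = \M_{NSR}^{-}$: by Lemma~\ref{lem:menu-min-value} the $\supseteq$ direction is immediate, so it suffices to note that every CSP added in forming $\M_\lambda$ has $u_L$-value strictly above $U_{ZS}$. This follows from $u_L(\csp^{+}) > U_{ZS}$ (using only $m \geq 2$ together with the uniqueness of $\csp^{+}$, since $\max_y u_L(x, y) < u_L(\csp^{+})$ for any $x$ not concentrated on $i^{*}$) and $u_L(\csp_0) \geq U_{ZS}$ (from the no-regret constraint on $\csp_0$); every new convex combination inherits this strict inequality. Second, the menus are pairwise distinct: for $\lambda_1 < \lambda_2$ in $(\lambda^{*}, 1]$,
$$\sup_{\csp \in \M_{\lambda_1}} \ell(\csp) \;=\; \max\bigl(s,\, \ell(\csp_{\lambda_1})\bigr) \;=\; \ell(\csp_{\lambda_1}) \;<\; \ell(\csp_{\lambda_2}),$$
so $\csp_{\lambda_2} \notin \M_{\lambda_1}$. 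Applying Theorem~\ref{thm:pareto-optimal-characterization} then yields an uncountable family of Pareto-optimal asymptotic menus. The main obstacle is the very first step — guaranteeing $\M_{NR} \supsetneq \M_{NSR}$ for the $u_L$ in question; in any degenerate case where these coincide one would instead need to exhibit infinitely many Pareto-optimal menus of the high-regret type from Theorem~\ref{thm:high-regret-po} (e.g. by perturbing the construction there along a continuous parameter).
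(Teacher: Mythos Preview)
Your approach is essentially the same as the paper's: both argue that one can take $\M_{NSR}$, adjoin any no-regret CSP with learner utility strictly above $U_{ZS}$, and apply Theorem~\ref{thm:pareto-optimal-characterization}; you simply instantiate this recipe with an explicit one-parameter family $\{\M_\lambda\}$ and verify distinctness via a separating functional, whereas the paper leaves the construction at the level of a recipe. One tiny slip: at $\lambda = 1$ you only have $u_L(\csp_0) \geq U_{ZS}$, not strict, so $\M_1^{-} = \M_{NSR}^{-}$ can fail there---but restricting to $\lambda \in (\lambda^{*},1)$ already gives uncountably many menus, so this is harmless.
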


Finally, perhaps the most interesting consequence of Theorem \ref{thm:pareto-optimal-characterization} is that, despite this apparent wealth of Pareto-optimal menus and learning algorithms, the no-regret menu $\M_{NR}$ is very often \emph{Pareto-dominated}. In particular, it is easy to find learner payoffs $u_L$ for which $\M^{-}_{NR} \neq \M^{-}_{NSR}$, as we show below.

\begin{corollary}
There exists a learner payoff\footnote{In fact, there exists a positive measure of such $u_L$. It is easy to adapt this proof to work for small perturbations of the given $u_L$, but this fact is also implied by Theorem \ref{thm:high-mb-swap-regret} (and its extension in Appendix \ref{app:proof_no_pos_neg_regret}), which proves a similar statement for the mean-based menu.} $u_L$ for which the no-regret $\M_{NR}$ is \emph{not} a Pareto-optimal asymptotic menu.
\end{corollary}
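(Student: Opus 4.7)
The strategy is to apply Theorem~\ref{thm:pareto-optimal-characterization} directly to $\M_{NR}$ itself. First, I would observe that $\M_{NR}$ is polytopal, since it is defined as the intersection of $\Delta_{mn}$ with the finitely many linear no-regret inequalities~\eqref{eq:no-regret-constraint}. Theorem~\ref{thm:pareto-optimal-characterization} then says $\M_{NR}$ is Pareto-optimal iff $\M^{-}_{NR} = \M^{-}_{NSR}$, and combined with the inclusion $\M^{-}_{NSR} \subseteq \M^{-}_{NR}$ from Lemma~\ref{lem:menu-min-value}, my task reduces to exhibiting a single $u_L$ (meeting the standing unique-pure-maximizer and no-weakly-dominated-action assumptions) for which this inclusion is strict.

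Next I would take $u_L$ to be a small perturbation of the rock--paper--scissors payoff matrix on $\{R, P, S\} \times \{R, P, S\}$, with $u_L(i,i) = 0$, $u_L = +1$ on the three learner-winning entries $(R,P), (P,S), (S,R)$, and $u_L = -1$ on the three learner-losing entries. For the unperturbed matrix, $U_{ZS} = 0$ and the diagonal CSP $\csp$ with $\csp_{i,i} = 1/3$ attains $u_L(\csp) = 0$ and satisfies the no-regret constraint: the optimizer's marginal is uniform, and each learner action yields expected payoff $0$ against it. However $\csp$ fails the no-swap-regret constraint~\eqref{eq:no-swap-regret-constraint}: conditioned on the learner's action being $R$ (probability $1/3$), the optimizer plays $R$ deterministically, and the swap $R \mapsto P$ raises the learner's conditional payoff from $0$ to $1$. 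Hence $\csp \in \M^{-}_{NR} \setminus \M^{-}_{NSR}$.

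The final step is reconciling with the paper's standing assumptions. Symmetric RPS has three maximizing pure profiles, violating the unique-maximizer assumption; I would fix this by perturbing the three winning entries by arbitrarily small distinct amounts to single out a unique one. RPS has no weakly dominated learner actions (each column is the unique best response to exactly one optimizer row), and this persists under small perturbations. Both $\M^{-}_{NR}$ and $\M^{-}_{NSR}$ are faces of polytopes whose defining inequalities depend linearly on $u_L$, and the NSR violation at $\csp$ has strict slack $1/3$, so for all sufficiently small perturbations the separation $\M^{-}_{NR} \setminus \M^{-}_{NSR} \neq \emptyset$ persists (the witness perturbs slightly within $\M^{-}_{NR}$). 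The main obstacle I anticipate is simply this continuity/genericity bookkeeping around the perturbation; it is routine once the structural RPS example is in hand, and Theorem~\ref{thm:pareto-optimal-characterization} does all the heavy lifting.
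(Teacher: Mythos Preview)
Your proposal is correct and matches the paper's proof almost exactly: both invoke Theorem~\ref{thm:pareto-optimal-characterization} on $\M_{NR}$ for Rock--Paper--Scissors, with the uniform diagonal CSP witnessing $\M_{NR}^{-}\setminus\M_{NSR}^{-}\neq\emptyset$. The paper simply uses unperturbed RPS and defers the genericity issue to a footnote, whereas you attempt the perturbation explicitly; just note that bumping only the three winning entries gives the diagonal CSP strictly positive regret (against the uniform optimizer marginal the best fixed action now earns $\eps_{\max}/3>0$), so the witness leaves $\M_{NR}$ and you would need to replace it by a nearby zero-regret CSP supported on the perturbed minimax strategy---a routine fix.
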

\begin{proof}
Take the learner's payoff from Rock-Paper-Scissors, where the learner and optimizer both have actions $\{a_1, a_2, a_3\}$, and $u_L(a_i, a_j) = 0$ if $j = i$, $1$ if $j = i + 1 \bmod{3}$, and $-1$ if $j = i - 1 \bmod{3}$. For this game, $\Uzs = 0$ (the learner can guarantee payoff $0$ by randomizing uniformly among their actions).

Now, note that the CSP $\csp = (1/3)(a_1 \otimes a_1) + (1/3)(a_2 \otimes a_2) + (1/3)(a_3 \otimes a_3)$ has the property that $u_L(\csp) = 0 = \Uzs$ and that $\csp \in \M_{NR}$, but also that $\csp \not\in \M_{NSR}$ (e.g. it is beneficial for the learner to switch from playing $a_1$ to $a_2$). Since $\M_{NR}$ is a polytopal no-regret menu, it follows from our characterization in Theorem \ref{thm:pareto-optimal-characterization} that $\M_{NR}$ is not Pareto-optimal.
\end{proof}



\section{Mean-based algorithms and menus}
\label{sec:mb_algos_and_menus}

In this section, we return to one of the main motivating questions of this work: are standard online learning algorithms (like multiplicative weights or follow-the-regularized-leader) Pareto-optimal? Specifically, are \emph{mean-based} no-regret learning algorithms, which always approximately best-respond to the historical sequence of observed losses, Pareto-optimal?

We will show that the answer to this question is \emph{no}: in particular, there exist payoffs $u_L$ where the menus of some mean-based algorithms (specifically, menus for multiplicative weights and FTRL) are not Pareto-optimal. Our characterization of Pareto-optimal no-regret menus in the previous section (Theorem \ref{thm:pareto-optimal-characterization}) does most of the heavy lifting here: it means that in order to show that a specific algorithm is not Pareto-optimal, we need only find a sequence of actions by the optimizer that both causes the learner to end up with the zero-sum utility $u_{ZS}$ and high swap-regret (i.e., at a point not belonging to $\M^{-}$). Such games (and corresponding trajectories of play by the optimizer) are relatively easy to find -- we will give one explicit example shortly that works for any mean-based algorithm.

However, there is a catch -- our characterization in Theorem \ref{thm:pareto-optimal-characterization} only applies to \emph{polytopal} menus (although we conjecture that it also applies to non-polytopal menus). So, in order to formally prove that a mean-based algorithm is not Pareto-optimal, we must additionally show that its corresponding menu is a polytope. 
Specifically, we give an example of a family of games where the asymptotic menus of all $\ftrl$ algorithms have a simple description as an explicit polytope which we can show is not Pareto-optimal.


\begin{theorem}
    \label{thm:ftrl_main}
    There exists a learner payoff $u_L$ with $m=2$ actions for the optimizer and $n=3$ actions for the learner for which all $\ftrl$ algorithms are Pareto-dominated~\footnote{This result also holds for any sufficiently small perturbation of $u_L$.}.
\end{theorem}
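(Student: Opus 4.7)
The approach is to apply the characterization of Theorem~\ref{thm:pareto-optimal-characterization}. That result states that a polytopal no-regret menu $\M$ fails to be Pareto-optimal whenever $\M^{-} \supsetneq \M_{NSR}^{-}$. So to prove every $\ftrl$ algorithm is Pareto-dominated, it suffices to construct a learner payoff $u_L$ with $m=2$ and $n=3$ for which, for every $\ftrl$ algorithm $\cA$, the asymptotic menu $\M(\cA)$ is (a) a no-regret polytope, and (b) contains a CSP $\csp$ achieving learner utility $\Uzs$ that is \emph{not} in $\M_{NSR}^{-}$.

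To achieve this, I would borrow one of the $2 \times 3$ principal--agent instances of~\cite{guruganesh2024contracting}, whose payoff structure is specifically designed so that the empirical best-response map of mean-based algorithms is tractable. The features I need from $u_L$ are: (i) there is a minimax-utility CSP with strictly positive swap regret---a correlated play pattern under which the learner matches the zero-sum value $\Uzs$ but some action $j$ could be profitably swapped against its marginal $\beta(j)$ in the sense of Lemma~\ref{lem:nsr_characterization}---and (ii) the learner's best-response regions, as a function of the optimizer's mixed strategy in the one-dimensional simplex $\Delta_2$, consist of finitely many intervals so that $\ftrl$'s play is essentially piecewise-constant on any slowly-varying optimizer trajectory.

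The main technical work is to characterize $\M(\cA)$ exactly as a polytope. The inclusion ``$\supseteq$'' follows from the standard mean-based manipulation template of~\cite{braverman2018selling,deng2019strategizing}: by slowly shifting the historical average $\overline{x}_t = \tfrac{1}{t}\sum_{s<t}x_s$ through $\Delta_2$ and holding it for prescribed fractions of rounds in each best-response region, the optimizer can force $\ftrl$ to play the corresponding pure best response on that fraction of rounds, thereby realizing any convex combination of the product CSPs $x \otimes y$ with $y \in \BR_L(x)$---in particular the bad CSP of feature (i). The inclusion ``$\subseteq$'' is the hard part: I would partition any optimizer trajectory into maximal phases on which $\overline{x}_t$ lies in a fixed best-response interval, and invoke the mean-based property (Definition~\ref{def:mean-based}) to show that on each such phase $\ftrl$'s play concentrates on that interval's unique best response up to $o(1)$. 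Summing across the finitely many phase-types and rescaling yields a limiting CSP that is a convex combination of finitely many product CSPs, giving a polytopal description.

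The hardest step is this upper bound, since in general the menu of a mean-based algorithm need not be polytopal. My plan is to exploit three features of the construction: $m=2$ makes $\overline{x}_t$ one-dimensional so its range decomposes into finitely many intervals; the chosen $u_L$ has only finitely many best-response boundaries on that segment; and $\ftrl$ with a strongly convex regularizer spends only $o(T)$ rounds within the $o(1)$ neighborhood of each boundary where its play is not essentially pure. Together these let me discretize an arbitrary optimizer trajectory into finitely many phase types and express its limiting CSP as a convex combination of explicit extreme points. Once this polytopal description is established and I verify that $\M(\cA)^{-}$ strictly contains $\M_{NSR}^{-}$ via the CSP from feature (i), Theorem~\ref{thm:pareto-optimal-characterization} immediately yields Pareto-domination. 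Robustness under perturbation of $u_L$ follows because both the polytopal phase decomposition and the strict containment of minimum-value sets are open conditions on the payoffs.
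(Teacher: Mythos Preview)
Your high-level strategy matches the paper's, but the ``$\subseteq$'' direction as you sketch it does not go through. The claim that $\ftrl$ spends only $o(T)$ rounds near a best-response boundary is false: the optimizer can hold the historical average $\overline{x}_t$ exactly on a boundary for $\Omega(T)$ rounds, during which $\ftrl$ plays a genuine mixture of the two tied actions (and different regularizers produce different mixtures). The paper handles this not by arguing such segments are short but by showing each can be \emph{replaced} by a nearby-in-CSP segment that leaves the boundary quickly (Lemma~\ref{lem:nc_replacement}). The replacement requires two $\ftrl$-specific ingredients you do not invoke: a regret \emph{lower} bound due to~\cite{gofer2016lower}, which on any boundary segment forces both tied actions to be near-best-responses to their respective conditional optimizer distributions; and the fact that when one action is clearly historically dominated, $\ftrl$ behaves as a lower-dimensional $\ftrl$ instance on the remaining actions (last clause of Lemma~\ref{lem:ftrl_properties}). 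The paper must also separately bound the \emph{number} of boundary crossings via a game-specific drift argument (Lemma~\ref{lem:constant_drift}); without this, the per-segment sublinear errors could accumulate to $\Omega(T)$.

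A smaller point: your ``$\supseteq$'' description mislabels what is reached. The convex hull of products $x \otimes y$ with $y \in \BR_L(x)$ is exactly $\M_{NSR}$ (Lemma~\ref{lem:nsr_characterization}), which by definition cannot contain a positive-swap-regret CSP. The manipulation you describe actually realizes $\sum_i t_i(x_i \otimes b_i)$ where $b_i$ best-responds to the \emph{historical average} during phase $i$, not to $x_i$; the paper formalizes this strictly larger set as the mean-based menu $\MB$ and locates the bad CSP there (Theorem~\ref{thm:high-mb-swap-regret}).
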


In Section~\ref{sec:mean-based-menu} we introduce the concept of the ``mean-based menu'', a menu of CSPs that is achievable against any mean-based algorithm, and introduce the family of games we study. We then (in Section \ref{sec:mb-is-polytope}) show that for this game, the mean-based menu is a polytope. Finally, in Section \ref{sec:mb-equals-mwu}, we prove that for sufficiently structured mean-based algorithms (including multiplicative weights update and FTRL) their asymptotic menu must equal this mean-based menu.

\subsection{The mean-based menu}
\label{sec:mean-based-menu}

To understand the behavior of mean-based algorithms, we will introduce an convex set that we call the \emph{mean-based menu}, consisting of exactly the correlated strategy profiles that are attainable against any mean-based learning algorithm. 

\sloppy{To define the mean-based menu, we will make use of a continuous-time reformulation of strategizing against mean-based learners presented in Section 7 of \cite{deng2019strategizing}. In this reformulation, we represent a trajectory of play by a finite non-empty list of triples $\tau = \{(x_1, t_1, b_1), (x_2, t_2, b_2), \dots, (x_k, t_k, b_k)\}$ where $x_{i} \in \Delta^m$, $t_i \in \mathbb{R}_{+}$, and $b_{i} \in [n]$. Each segment of this strategy intuitively represents a segment of time of length $t_i$ where the optimizer plays a mixed strategy $x_i$ and the learner responds with the pure strategy $b_i$.}

Let $\overline{x}_{i} = \sum_{j=1}^{i} (x_jt_j) / \sum_{j=1}^{i} t_j$ (with $\overline{x}_0 = 0$) be the time-weighted average strategy of the optimizer over the first $i$ triples. In order for this to be a valid trajectory of play for an optimizer playing against a mean-based learner, a trajectory must satisfy $b_{i} \in \BR_{L}(\overline{x}_{i-1})$ and $b_{i} \in \BR_{L}(\overline{x}_{i})$. That is, the pure strategy played by the learner during the $i$th segment must be a best-response to the the average strategy of the optimizer at the beginning and ends of this segment. We let $\cT$ represent the set of all valid trajectories (with an arbitrary finite number of segments). 

Given a trajectory $\tau$, we define $\Prof(\tau) \in \Delta_{mn}$ to equal the correlated strategy profile represented by this trajectory, namely:

$$\Prof(\tau) = \frac{\sum_{i=1}^{k} \tau_{i}(x_{i} \otimes b_{i})}{\sum_{i=1}^{k} \tau_{i}}.$$

Finally, we define the \textit{mean-based menu} $\MB$ to equal the convex closure of $\Prof(\tau)$ over all valid trajectories $\tau \in \cT$. Although $\MB$ is defined solely in terms of the continuous formulation above, it has the operational significance of being a subset of the asymptotic menu of any mean-based algorithm (the proof of Lemma \ref{lem:sub-mean-based} is similar to that of Theorem 9 in \cite{deng2019strategizing}, and is deferred to the appendix).

\begin{lemma}\label{lem:sub-mean-based}
    If $\cA$ is a mean-based learning algorithm with asymptotic menu $\M$, then $\MB \subseteq \M$. 
\end{lemma}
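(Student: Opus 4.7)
The plan is to reduce to showing that $\Prof(\tau) \in \M$ for each individual valid trajectory $\tau \in \cT$, since $\M$ is convex and $\MB$ is defined as the convex closure of $\{\Prof(\tau) : \tau \in \cT\}$. Fix such a trajectory $\tau = \{(x_i, t_i, b_i)\}_{i=1}^{k}$ and, by rescaling, assume $\sum_i t_i = 1$. The crucial geometric observation is that within segment $i$, the running cumulative mixed strategy of the optimizer travels along the line segment from $\overline{x}_{i-1}$ to $\overline{x}_{i}$ (since any running average at fractional position $s \in [T_{i-1}, T_i]$ is a convex combination of $\overline{x}_{i-1}$ and $x_i$, and $\overline{x}_i$ itself lies on this same segment). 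Because the region $\{x \in \Delta_m : b_i \in \BR_L(x)\}$ is a convex polytope (cut out by the linear inequalities $u_L(x, b_i) \geq u_L(x, j)$) containing both endpoints by the validity of $\tau$, the action $b_i$ remains a best-response throughout the entire segment.

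Next, for a horizon $T$, I would have the optimizer execute the discretization that plays $x_i$ in every round $t$ with $\lfloor T_{i-1} T \rfloor < t \leq \lfloor T_i T \rfloor$. Write the mean-based parameter as $\gamma(t) = o(1)$. In the \emph{strict} case — where there exists some $\eps > 0$ such that $u_L(\overline{x}, b_i) - u_L(\overline{x}, j) \geq \eps$ for every $j \neq b_i$ and every $\overline{x}$ on the segment from $\overline{x}_{i-1}$ to $\overline{x}_i$ — Definition~\ref{def:mean-based} directly forces the learner to put weight at most $\gamma(t)$ on each alternative action during segment $i$ (once $\gamma(t) < \eps$ and after an initial sublinear transient during which the empirical running average is within $O(1/t)$ of the intended cumulative average). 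Summing over segments gives that the empirical CSP satisfies $\frac{1}{T}\sum_{t=1}^T x_t \otimes y_t = \sum_{i} t_i (x_i \otimes b_i) + o(1) = \Prof(\tau) + o(1)$, so $\Prof(\tau) \in \M(\cA) = \M$ by taking $T \to \infty$.

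To handle the \emph{non-strict} case, where $b_i$ may merely tie with other actions at some point of the segment (in particular at the endpoints $\overline{x}_{i-1}$ or $\overline{x}_i$), I would perturb $\tau$ to a nearby trajectory $\tau_\delta$ with the following properties: (a) each cumulative average $\overline{x}^{(\delta)}_i$ lies in the relative interior of $\{x : b_i \in \BR_L(x)\} \cap \{x : b_{i+1} \in \BR_L(x)\}$, so that the boundary best-response constraints are satisfied with strict margin; (b) each segment of $\tau_\delta$ has a uniform strict margin for $b_i$ of some $\eps_\delta > 0$; and (c) $\Prof(\tau_\delta) \to \Prof(\tau)$ as $\delta \to 0$. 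The no-weakly-dominated-actions hypothesis on $u_L$ guarantees that each region $\{x : b_i \in \BR_L(x)\}$ has nonempty relative interior, making such interior perturbations possible. Applying the strict case to $\tau_\delta$ yields $\Prof(\tau_\delta) \in \M$, and closedness of $\M$ together with $\delta \to 0$ gives $\Prof(\tau) \in \M$.

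The main obstacle is the perturbation construction in the last step: one must simultaneously adjust all the $x_i$ and possibly insert short transition segments to move every $\overline{x}_i$ into the strict interior of the appropriate best-response region without changing which action $b_i$ is selected, while keeping the total perturbation $O(\delta)$ in the resulting CSP. The risk is pathological trajectories where $b_i$ ties along an entire segment, or where two consecutive best-responses $b_i, b_{i+1}$ are only jointly feasible on the boundary of their respective regions; these are precisely the cases that the no-weakly-dominated-actions assumption is needed to rule out. Once this perturbation is set up, the rest of the argument reduces to the clean mean-based tracking calculation already sketched.
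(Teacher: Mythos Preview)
Your overall strategy matches the paper's: reduce to individual trajectories, discretize and invoke the mean-based property for trajectories with strict best-response margins, and perturb to reach the strict case. The geometric observation that the running average traverses the segment from $\overline{x}_{i-1}$ to $\overline{x}_i$ (both of which lie in the convex region $\{x : b_i \in \BR_L(x)\}$) is exactly what underlies both proofs.

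Where your write-up is shakier is the perturbation step, and the paper's construction is worth knowing because it cleanly sidesteps the tension you yourself flag. Your properties (a) and (b) conflict as stated: if $b_i \neq b_{i+1}$, the intersection $\{x : b_i \in \BR_L(x)\} \cap \{x : b_{i+1} \in \BR_L(x)\}$ is a face on which $b_i$ and $b_{i+1}$ are \emph{tied}, so placing $\overline{x}_i^{(\delta)}$ in its relative interior cannot give $b_i$ a uniform strict margin up to the right endpoint of segment $i$. The paper resolves this not by perturbing the $x_i$'s or the $\overline{x}_i$'s at all, but by \emph{interleaving}: before each segment $(x_i,t_i,b_i)$ it inserts a short segment $(y_i,\delta_i)$ where $y_i$ is chosen (using the no-weakly-dominated-actions assumption) to strictly incentivize $b_i$, with $\delta_i$ large enough relative to preceding $\delta_j$'s that the running average $\overline{y}_i$ of the perturbation segments alone strictly incentivizes $b_i$. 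Then both endpoints of the $i$th main segment in the new trajectory are nontrivial convex combinations of $\overline{x}_{i-1}$ (resp.\ $\overline{x}_i$) and $\overline{y}_i$, hence lie strictly inside the $b_i$ region. The crucial point is that in the perturbed trajectory the right endpoint of segment $i$ and the left endpoint of segment $i+1$ are \emph{different} points (separated by the $(i{+}1)$st perturbation segment), so there is no need for a single point to be strictly interior to two distinct best-response regions. Scaling all $\delta_i$ down uniformly preserves this structure and makes the CSP drift $O(\eps)$.

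So your plan is sound, but replace the attempt to move each $\overline{x}_i$ into a joint relative interior with the paper's interleaving trick; it makes the ``main obstacle'' disappear.
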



Note that unlike $\M_{NR}$ and $\M_{NSR}$, which contain all no-regret/no-swap-regret menus, Lemma \ref{lem:sub-mean-based} implies that $\M_{MB}$ is \emph{contained} in all mean-based menus. Since there exist mean-based algorithms that are no-regret (e.g., multiplicative weights), Lemma \ref{lem:sub-mean-based} immediately also implies that the mean-based menu is a subset of the no-regret menu (that is, $\MB \subseteq \M_{NR}$)\footnote{It is natural to ask whether $\MB$ is in general just equal to $\M_{NR}$. In Appendix \ref{app:mb_nr_gap}, we show this is not the case.}.

To construct our counterexample, we will specifically consider games with $m=2$ actions for the optimizer and $n=3$ actions for the learner, where the learner's payoff $u_L$ is specified by the following $3 \times 2$ matrix\footnote{Although we present all the results in this section for a specific learner's payoff $u_L$, they continue to hold for a positive-measure set of perturbations of this payoff function; see Appendix~\ref{app:mean_based_pareto_dominated} for details.}:

\begin{equation}\label{eq:game}
u_L = \begin{bmatrix}
  0 & 0 \\
 -1/6 & 1/3 \\
 -1/2 & 1/2
\end{bmatrix}.
\end{equation}

For convenience, we will label the learner's three actions (in order) $A$, $B$, and $C$, and the optimizer's two actions as $N$ and $Y$ (so e.g. $u_L(Y, B) = 1/3$, and $u_L(N, 0.5A + 0.5C) = -1/4$). As mentioned in the introduction, this example originates from an application in \cite{guruganesh2024contracting} for designing dynamic strategies for principal-agent problems when an agent runs a mean-based learning algorithm (the details of this application are otherwise unimportant). 

An interesting property of this choice of $u_L$ is that there exist points in $\MB^{-}$ not contained within $\M_{NSR}^{-}$.

\begin{theorem}\label{thm:high-mb-swap-regret}
For the choice of $u_L$ in \eqref{eq:game}, $\MB^{-} \not \subseteq \M_{NSR}^{-}$.
\end{theorem}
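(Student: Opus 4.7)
The plan is to exhibit a single correlated strategy profile $\csp \in \MB$ that attains the minimum learner value $u_L(\csp) = \Uzs$ but violates the no-swap-regret inequality~\eqref{eq:no-swap-regret-constraint}, so that $\csp \in \MB^{-}$ while $\csp \notin \M_{NSR} \supseteq \M_{NSR}^{-}$.

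First I would read off the learner's best-response partition. Writing $x_p = pN + (1-p)Y$, the rows of $u_L$ give $u_L(x_p, A) = 0$, $u_L(x_p, B) = \tfrac{1}{3} - \tfrac{p}{2}$, and $u_L(x_p, C) = \tfrac{1}{2} - p$, so $\BR_L(x_p) = \{C\}$ for $p < 1/3$, $\{B\}$ for $1/3 < p < 2/3$, and $\{A\}$ for $p > 2/3$, with $\{B,C\}$ at $p = 1/3$ and $\{A,B\}$ at $p = 2/3$. A one-line minimax computation then yields $\Uzs = 0$ (achieved for any $p \ge 2/3$).

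Next I would construct the three-segment trajectory
\[
\tau = \bigl\{(Y,\,1,\,C),\ (N,\,1/2,\,C),\ (N,\,3/2,\,B)\bigr\}
\]
in the continuous-time model of Section~\ref{sec:mean-based-menu}. The running averages are $\overline{x}_1 = Y$, $\overline{x}_2 = \tfrac{1}{3}N + \tfrac{2}{3}Y$ (on the $B$--$C$ boundary), and $\overline{x}_3 = \tfrac{2}{3}N + \tfrac{1}{3}Y$ (on the $A$--$B$ boundary), so $C \in \BR_L(\overline{x}_1) \cap \BR_L(\overline{x}_2)$ and $B \in \BR_L(\overline{x}_2) \cap \BR_L(\overline{x}_3)$. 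Hence $\tau$ is a valid trajectory and
\[
\csp := \Prof(\tau) = \tfrac{1}{3}\,(Y \otimes C) + \tfrac{1}{6}\,(N \otimes C) + \tfrac{1}{2}\,(N \otimes B) \in \MB.
\]
A direct computation gives $u_L(\csp) = \tfrac{1}{3}\cdot\tfrac{1}{2} + \tfrac{1}{6}\cdot(-\tfrac{1}{2}) + \tfrac{1}{2}\cdot(-\tfrac{1}{6}) = 0$, and since $\MB \subseteq \M_{NR}$, Lemma~\ref{lem:menu-min-value} gives $\Um(\MB) = \Uzs = 0$, so $\csp \in \MB^{-}$. Finally, I would test~\eqref{eq:no-swap-regret-constraint} at $j = B$: with $\csp_{NB} = 1/2$ and $\csp_{YB} = 0$, the LHS is $\tfrac{1}{2}\,u_L(N, B) = -\tfrac{1}{12}$, whereas swapping $B \mapsto A$ yields $\tfrac{1}{2}\,u_L(N, A) = 0$. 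The constraint fails, so $\csp \notin \M_{NSR}$, and a fortiori $\csp \notin \M_{NSR}^{-}$, establishing the separation.

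The main obstacle is finding the trajectory: it must simultaneously (i) obey the mean-based best-response handshake $b_i \in \BR_L(\overline{x}_{i-1}) \cap \BR_L(\overline{x}_i)$ at every segment boundary, (ii) produce exactly enough positive $(Y,C)$ mass to cancel the negative $(N,C)$ and $(N,B)$ contributions so the CSP lands at the zero-sum value, and (iii) leave the final CSP outside $\M_{NSR}$. Keeping $C$ as a best response forces the running average to stay at most $\tfrac{1}{3}$-weighted on $N$, which locks the $(Y,C):(N,C)$ mass ratio at $2:1$ and yields a net $+\tfrac{1}{6}$ of learner utility per unit mass of $C$-play; the auxiliary $(N,B)$ segment, which drives the average up to the $A$--$B$ boundary, is precisely what cancels this surplus, and it is also what manufactures the exploitable swap from $B$ to $A$.
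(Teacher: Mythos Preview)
Your proof is correct and essentially the same as the paper's. Your three-segment trajectory $(Y,1,C),(N,1/2,C),(N,3/2,B)$ yields exactly the CSP $\tfrac{1}{3}(Y\otimes C)+\tfrac{1}{6}(N\otimes C)+\tfrac{1}{2}(N\otimes B)$ that the paper exhibits; indeed, merging your first two segments (both with learner response $C$) and rescaling recovers the paper's two-segment trajectory $\{(N/3+2Y/3,\,1/2,\,C),\ (N,\,1/2,\,B)\}$. One small remark: your appeal to Lemma~\ref{lem:menu-min-value} for $\Um(\MB)=\Uzs$ is slightly loose since that lemma is stated for no-regret asymptotic menus, but what you actually need (and have) is just that every point of $\MB\subseteq\M_{NR}$ has $u_L\ge\Uzs$, together with your exhibited point at value~$0$.
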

\begin{proof}
Consider the continuous-time trajectory $\tau = \{(N/3 + 2Y/3, 1/2, C), (N, 1/2, B)\}$. First, note that it is indeed a valid trajectory: we can check that $\ox_1 = N/3 + 2Y/3$, $\ox_2 = 2N/3 + Y/3$, $\BR_{L}(N/3 + 2Y/3) = \{B, C\}$, and that $\BR_{L}(2N/3 + Y/3) = \{A, B\}$. For this trajectory, $\Prof(\tau) = (1/6) (N \otimes C) + (1/3) (Y \otimes C) + (1/2) (N \otimes B)$.

For this game, $U_{ZS} = 0$ (the learner can guarantee non-negative utility by playing $Y$ and the optimizer can guarantee non-positive utility for the learner by playing $N$). Note that $u_L(\Prof(\tau)) = (1/6)(-1/2) + (1/3)(1/2) + (1/2)(-1/6) = 0$, so $\Prof(\tau)$ lies in $\MB^{-}$. But the swap regret of $\Prof(\tau)$ is positive, since by swapping all weight on $B$ to $A$ increases the learner's average utility by $1/12$, so $\Prof(\tau) \not\in \M_{NSR}$. 
\end{proof}

Combined with our characterization of Pareto-optimal no-regret menus (Theorem \ref{thm:pareto-optimal-characterization}), the above theorem implies that any no-regret mean-based algorithm with a polytopal menu is Pareto-dominated. In the next two sections, we will show that for the set of games we are considering a large class of mean-based algorithms have polytopal menus.

\subsection{The mean-based menu is a polytope}
\label{sec:mb-is-polytope}

Our first goal will be to show that the mean-based menu $\MB$ for this game is a polytope. Our main technique will be to show that trajectories $\tau$ whose profiles correspond to extreme points of $\MB$ must be simple and highly structured (e.g. have a small number of segments). We can then write down explicit systems of finitely many linear inequalities to characterize these profiles (implying that their convex closure $\MB$ must be a polytope). 


\begin{figure}[htbp]
    \centering
    \includegraphics[width=0.5\textwidth]{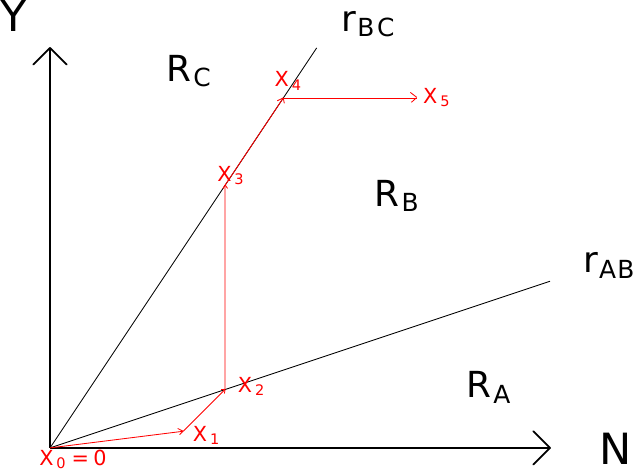}
    \caption{A simple trajectory with five segments.}
    \label{fig:mb_trajectory}
\end{figure}

Throughout the following arguments, it will be helpful to geometrically visualize a trajectory by thinking about how the cumulative strategy taken by the optimizer evolves over time. This starts at zero and traces out a polygonal line with breakpoints $X_{i} = \sum_{j=1}^{i} x_{j}t_{j}$, with segment $i$ of the trajectory taking place between breakpoints $X_{i-1}$ and $X_{i}$ (see Figure~\ref{fig:mb_trajectory}. Since the optimizer's strategy each round belongs to $\Delta_{2}$, this trajectory always stays in the positive quadrant $\Rset_{\geq 0}^2$, and strictly increases in $\ell_{1}$-norm over time. 


The best-response condition on $b_i$ can also be interpreted through this diagram. The quadrant can be divided into three cones $R_{A}, R_{B}, R_{C}$, where for any $j \in \{A, B, C\}$, $R_{j}$ contains the states $X$ where pure strategy $j$ for the learner is a best response to the normalized historical strategy $X / |X|_{1}$. Separating $R_{A}$ and $R_{B}$ is a ray $r_{AB}$ (where actions 1 and 2 are both best-responses to states on this ray); similarly, separating $R_{B}$ and $R_{C}$ is a ray $r_{BC}$. Here (and throughout this section) we use the term \emph{ray} to specifically refer to a set of points that lie on the same path through the origin (i.e., a set of states $X$ with the same normalized historical strategy $X/|X|_1$). 

We begin with several simplifications of trajectories that are not specific to the payoff matrix above, but hold for any game. We first show it suffices to consider trajectories where no two consecutive segments lie in the same cone.

\begin{lemma}\label{lem:switch_regions}
For any $\tau \in \cT$, there exists a trajectory $\tau' \in \cT$ such that $\Prof(\tau') = \Prof(\tau)$, and such that there are no two consecutive segments in $\tau'$ where the learner best-responds with the same action (i.e., $b_{i} \neq b_{i+1}$ for all $i$).
\end{lemma}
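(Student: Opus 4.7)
The plan is to prove this by a straightforward merge-and-iterate argument: whenever two consecutive segments share the same learner best-response action, we combine them into a single segment, and we show that this operation both preserves the profile and the validity of the trajectory.

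Concretely, suppose $\tau$ contains consecutive triples $(x_i, t_i, b)$ and $(x_{i+1}, t_{i+1}, b)$ with the same learner action $b \in [n]$. I would replace these two triples with a single triple $(x', t', b)$ where
\[
t' = t_i + t_{i+1}, \qquad x' = \frac{t_i x_i + t_{i+1} x_{i+1}}{t_i + t_{i+1}}.
\]
The contribution of the original two segments to the unnormalized profile is $t_i (x_i \otimes b) + t_{i+1}(x_{i+1} \otimes b) = (t_i x_i + t_{i+1} x_{i+1}) \otimes b = t'(x' \otimes b)$, which is exactly the contribution of the new merged segment, and all other segments are untouched, so $\Prof(\tau') = \Prof(\tau)$. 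Moreover, since $x' \in \Delta_m$ as a convex combination of $x_i, x_{i+1} \in \Delta_m$ and $t' > 0$, the replacement produces a well-formed triple.

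For validity of the new trajectory, observe that the cumulative weighted averages $\overline{x}_j$ are unchanged for $j \le i-1$ and for $j \ge i+1$ (since the cumulative sums only change at indices strictly inside the merged block, which no longer exist). In particular, the new merged segment begins with time-average $\overline{x}_{i-1}$ and ends with time-average $\overline{x}_{i+1}$. The original validity of $\tau$ gave $b_i = b \in \BR_L(\overline{x}_{i-1})$ and $b_{i+1} = b \in \BR_L(\overline{x}_{i+1})$, which are exactly the two best-response conditions required of the merged segment. All other segments retain their original endpoints and best-response conditions, so the merged trajectory is in $\cT$.

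Finally, iterate this merging operation. Each application strictly decreases the number $k$ of segments (by one) while preserving both the profile and validity, so the process terminates after at most $|\tau|-1$ steps at some trajectory $\tau' \in \cT$ with $\Prof(\tau') = \Prof(\tau)$ and no two consecutive segments sharing the same best-response action. The main conceptual point — and essentially the only thing one needs to notice — is that the best-response condition at the intermediate breakpoint $\overline{x}_i$ becomes irrelevant after merging, because it is no longer an endpoint of any segment; all other best-response conditions carry over unchanged. This is what makes the argument work cleanly without any assumption on $u_L$.
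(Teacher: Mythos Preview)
Your proof is correct and follows essentially the same approach as the paper: merge any two consecutive segments with the same learner action into a single segment using the time-weighted average of the optimizer's strategies, and iterate until no such pair remains. If anything, your version is more thorough, since you explicitly verify that the merged trajectory remains valid (the paper only checks that the profile is preserved).
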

\begin{proof}
Consider a trajectory $\tau$ with two consecutive segments of the form $(x_i, t_i, b)$ and $(x_{i+1}, t_{i+1}, b)$ for some $b \in \{A, B, C\}$. Note that replacing these two segments with the single segment $((x_it_{i} + x_{i+1}t_{i+1})/(t_i + t_{i+1}), t_i + t_{i+1}, b)$ does not change the profile of the trajectory (e.g., in Figure \ref{fig:basic_trajectory}, we can average the first two segments into a single segment lying along $r_{AB}$). We can form $\tau'$ by repeating this process until no two consecutive $b_i$ are equal. 
\end{proof}

We similarly show that we cannot have three consecutive segments that lie along the same ray.

\begin{lemma}\label{lem:leave_ray}
We say a segment $(x_i, \tau_i, b_i)$ of a trajectory lies along a ray $r$ if both $X_{i-1}$ and $X_{i}$ lie on $r$. For any $\tau \in \cT$, there exists a trajectory $\tau' \in \cT$ such that $\Prof(\tau') = \Prof(\tau)$ and such that there are no three consecutive segments of $\tau'$ that lie along the same ray.
\end{lemma}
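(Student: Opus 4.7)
The plan is to reduce any trajectory to one with no three consecutive segments along the same ray by reordering and merging within each maximal run of along-ray segments, leveraging the fact that such segments share a common optimizer strategy. First I would apply Lemma~\ref{lem:switch_regions} to assume without loss of generality that $b_i \neq b_{i+1}$ for all $i$.

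Next, I would consider any maximal run of consecutive segments $(x_i, t_i, b_i), \ldots, (x_{i+k-1}, t_{i+k-1}, b_{i+k-1})$ that all lie along a common ray $r$. Because each such segment has both breakpoints on $r$, the normalized strategy $x_j$ in each segment must equal $x^*$, the unit direction of $r$. If $r$ lies in the interior of some cone $R_b$, then $\BR_L$ along $r$ equals $\{b\}$, so every label in the run is $b$; combined with the alternation from Lemma~\ref{lem:switch_regions}, this forces the run to have length at most $1$. Otherwise $r$ is one of the two boundary rays $r_{AB}$ or $r_{BC}$, separating cones $R_{b_1}$ and $R_{b_2}$, and the alternation forces the run's labels to strictly alternate between $b_1$ and $b_2$.

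Since every segment in the run has the same $x^*$, the profile contribution $x^* \otimes ((\sum_{j : b_j = b_1} t_j) b_1 + (\sum_{j : b_j = b_2} t_j) b_2)$ depends only on the totals $T_{b_1}$ and $T_{b_2}$, not on the order. I would therefore replace the run with at most two segments $(x^*, T_{b_1}, b_1), (x^*, T_{b_2}, b_2)$. The best-response conditions hold at every new breakpoint because all breakpoints lie on $r$, where both $b_1$ and $b_2$ are best responses. The pre-run segment has a label different from the original first label $b_1$, so placing $b_1$ first preserves the alternation property at the left boundary of the run.

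The main obstacle is the right boundary: if the original run had odd length and ended in $b_1$, while the post-run segment carries label $b_2$, the reordered run ends in $b_2$, creating adjacent $b_2$ segments. I would resolve this by applying Lemma~\ref{lem:switch_regions} once more to merge these two segments into one. By maximality of the run, the post-run segment does not lie along $r$, so its far endpoint is off $r$; the merged segment therefore starts on $r$ but ends off $r$ and is no longer ``along $r$,'' leaving only one along-$r$ segment in the run. Iterating this procedure over all maximal runs produces the desired $\tau'$ with $\Prof(\tau') = \Prof(\tau)$ and no three consecutive segments along any single ray.
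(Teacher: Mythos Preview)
Your proof is correct and uses essentially the same idea as the paper: along a ray all segments share the same $x^*$, $\BR_L(x^*)$ has at most two elements, and reordering followed by Lemma~\ref{lem:switch_regions} collapses any such block without changing $\Prof$. The only difference is presentation---the paper argues iteratively (given three consecutive along-ray segments, two labels must coincide, reorder so they are adjacent, merge, repeat) while you process each maximal run in one shot; your right-boundary merge is harmless but unnecessary, since the lemma only forbids three consecutive along-ray segments, not adjacent equal labels.
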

\begin{proof}
Assume $\tau$ contains a sequence of three segments $(x_i, t_i, b_i)$, $(x_{i+1}, t_{i+1}, b_{i+1})$, and $(x_{i+2}, t_{i+2}, b_{i+2})$ that all lie along the same ray $r$. This means that $X_{i}$, $X_{i+1}$, and $X_{i+2}$ are all non-zero points that lie on $r$, and therefore that $\ox_i = \ox_{i+1} = \ox_{i+2}$. This means that $b_{i}$, $b_{i+1}$, and $b_{i+2}$ must all belong to $\BR_L(\ox_i)$. Note also that this means that we can arbitrarily reorder these segments without affecting $\Prof(\tau)$.

However, $\BR_L(\ox_i)$ contains at most two actions for the agent (it is a singleton unless $\ox_i$ lies along $r_{AB}$ or $r_{BC}$, in which case it equals $\{A, B\}$ or $\{B, C\}$ respectively). This means that at least two of $\{b_{i}, b_{i+1}, b_{i+2}\}$ must be equal. By rearranging these segments so they are consecutive and applying the reduction in Lemma \ref{lem:switch_regions}, we can decrease the total number of segments in our trajectory while keeping $\Prof(\tau)$ invariant. We can repeatedly do this until we achieve the desired $\tau'$. 
\end{proof}


Finally we show that we only need to consider trajectories that end on a best-response boundary (or single segment trajectories).

\begin{lemma}\label{lem:end_on_br}
For any $\tau \in \cT$, we can write $\Prof(\tau)$ as a convex combination of profiles $\Prof(\tau')$ of trajectories $\tau'$ that either: i. consist of a single segment, or ii. terminate on a best-response boundary (i.e., $|\BR_L(\ox_k)| \geq 2$). 
\end{lemma}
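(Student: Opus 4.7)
The plan is to prove this by induction on the number of segments $k$ in $\tau$. The base case $k = 1$ is trivial: $\tau$ itself is a single-segment trajectory and satisfies condition (i), so the decomposition $\Prof(\tau) = 1 \cdot \Prof(\tau)$ suffices.

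For the inductive step, fix $\tau$ with $k \geq 2$ segments. If $|\BR_L(\ox_k)| \geq 2$, then $\tau$ satisfies condition (ii) directly. Otherwise $b_k$ is the unique best-response at $\ox_k$, so $\ox_k$ lies strictly in the interior of $R_{b_k}$. Let $\tau_0 = \{(x_1, t_1, b_1), \dots, (x_{k-1}, t_{k-1}, b_{k-1})\}$ be $\tau$ with its last segment removed; the validity conditions on segments $1, \dots, k-1$ are inherited, so $\tau_0 \in \cT$. Writing $T_i = \sum_{j \leq i} t_j$, the identity
\[
\Prof(\tau) = \frac{T_{k-1}}{T_k}\,\Prof(\tau_0) + \frac{t_k}{T_k}\,(x_k \otimes b_k)
\]
is immediate from the definition of $\Prof$.

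I would then split on whether $b_k \in \BR_L(x_k)$. In Case A ($x_k \in R_{b_k}$), the single-segment trajectory $\tau' = \{(x_k, 1, b_k)\}$ is valid and $\Prof(\tau') = x_k \otimes b_k$, so combining the identity above with the inductive decomposition of $\Prof(\tau_0)$ gives the desired convex combination. In Case B ($x_k \notin R_{b_k}$), I extend the last segment instead of peeling it: consider $\tau(s) = \{(x_1, t_1, b_1), \dots, (x_k, t_k + s, b_k)\}$ for $s \geq 0$. Its terminal running average moves linearly in the simplex $\Delta^m$ from $\ox_k$ (interior of $R_{b_k}$ at $s=0$) toward $x_k$ (outside $R_{b_k}$) as $s \to \infty$, so by continuity there is a unique smallest $s^* > 0$ at which this running average hits the boundary of $R_{b_k}$. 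Convexity of $R_{b_k}$ ensures that $\tau_{\mathrm{ext}} := \tau(s^*)$ is a valid trajectory, and it terminates on a best-response boundary. Setting $T^{**} := T_{k-1} + t_k + s^* > T_k$, we have
\[
\Prof(\tau_{\mathrm{ext}}) = \frac{T_{k-1}}{T^{**}}\,\Prof(\tau_0) + \frac{T^{**} - T_{k-1}}{T^{**}}\,(x_k \otimes b_k),
\]
and eliminating $(x_k \otimes b_k)$ between this identity and the one for $\Prof(\tau)$ yields
\[
\Prof(\tau) = \frac{T_{k-1}(T^{**} - T_k)}{T_k(T^{**} - T_{k-1})}\,\Prof(\tau_0) + \frac{t_k\, T^{**}}{T_k(T^{**} - T_{k-1})}\,\Prof(\tau_{\mathrm{ext}}),
\]
which is a bona fide convex combination: both coefficients are non-negative precisely because $T^{**} > T_k$, and a direct check confirms that they sum to $1$. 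Applying the inductive hypothesis to $\tau_0$ then completes the decomposition.

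The main obstacle is Case B. When $b_k$ fails to be a best-response to $x_k$ on its own, naively peeling off the last segment produces the profile $x_k \otimes b_k$, which is not itself the profile of any valid trajectory. The extension trick is the key idea: by lengthening the last segment until its running average first exits $R_{b_k}$, one produces a good trajectory $\tau_{\mathrm{ext}}$ that terminates on a boundary, and $\Prof(\tau)$ sits naturally as a convex combination of $\Prof(\tau_0)$ and $\Prof(\tau_{\mathrm{ext}})$. The inequality $T^{**} > T_k$, which is guaranteed exactly by $\ox_k$ being in the interior of $R_{b_k}$, is precisely what keeps both weights non-negative.
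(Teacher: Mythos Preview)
Your proposal is correct and follows essentially the same approach as the paper's proof: induction on the number of segments, the same case split on whether $b_k \in \BR_L(x_k)$, and in Case B the same extension trick of lengthening the last segment until $\ox_k$ first hits the boundary of $R_{b_k}$, then expressing $\Prof(\tau)$ as a convex combination of the prefix and the extended trajectory. The only difference is that the paper simply asserts that $\Prof(\tau)$ is a convex combination of $\Prof(\tau(0))$ and $\Prof(\tau(t_k^{\max}))$, whereas you explicitly compute the coefficients and verify they are non-negative and sum to one.
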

\begin{proof}
We will proceed by induction on the number of segments in $\tau$. Assume $\tau$ is a trajectory with $k \geq 2$ segments that does not terminate on a best-response boundary (so $\BR_L(\ox_k) = \{b_k\}$). Let $(x_k, t_k, b_k)$ be the last segment of this trajectory, and let $\tau_{pre}$ be the trajectory formed by the first $k-1$ segments.

There are two cases. First, if $b_k \in \BR_L(x_k)$, then the single segment $(x_k, t_k, b_k)$ is itself a valid trajectory; call this trajectory $\tau_{post}$. Then $\Prof(\tau)$ is a convex combination of $\Prof(\tau_{pre})$ and $\Prof(\tau_{post})$, which both satisfy the inductive hypothesis.

Second, if $b_k \not\in \BR_L(x_k)$, then let $\tau(t)$ be the trajectory formed by replacing $t_k$ with $t$ (so the last segment becomes $(x_k, t, b_k)$). Since $b_k \not\in \BR_L(x_k)$, there is a maximal $t$ for which $\tau(t)$ is a valid trajectory (since for large enough $t$, $\ox_k$ will converge to $x_k$ and $b_k$ will no longer be a best response to $\ox_k$). Let $t_{k}^{\max}$ be this maximal $t$ and note that $\tau(t_{k}^{\max})$ has the property that it terminates on a best-response boundary. But now note that $\Prof(\tau)$ can be written as the convex combination of $\Prof(\tau(0))$ and $\Prof(\tau(t_k^{\max}))$ both of which satisfy the constraints of the lemma (the first by the inductive hypothesis, as it has one fewer segment, and the second because it terminates on a best-response boundary). The analysis of this case is shown in Figure~\ref{fig:convex1_trajectory}.
\end{proof}

\begin{figure}[htbp]
    \centering
    \includegraphics[width=0.5\textwidth]{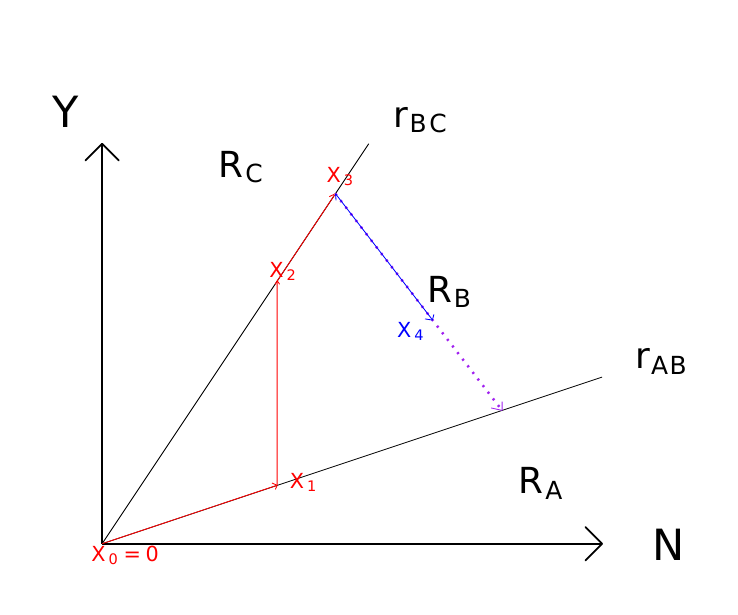}
    \caption{The trajectory ending at $X_4$ is drawn in red with the last segment in blue -- it can be seen as a convex combination of the trajectory upto $X_3$ (all in red) and the latter's extension marked by the purple segment.}
    \label{fig:convex1_trajectory}
\end{figure}

We will say a \textit{proper trajectory} is a trajectory that satisfies the conditions of Lemmas \ref{lem:switch_regions}, \ref{lem:leave_ray}, and \ref{lem:end_on_br}. A consequence of these lemmas is that  $\MB$ is also the convex closure of the profiles of all proper trajectories, so we can restrict our attention to proper trajectories from now on. 

Before we further simplify the class of proper trajectories, we introduce some additional notation to describe suffixes of valid trajectories. We define an \emph{offset trajectory} $(\tau, X_0)$ to be a pair containing a trajectory $\tau$ and a starting point $X_0 \in \Rset^{2}_{\geq 0}$. Similar to ordinary trajectories, in order for an offset trajectory to be valid, we must have that $b_{i} \in \BR(\overline{x}_{i-1}) \cap \BR(\overline{x}_{i})$; however, for offset trajectories, the definition of $\overline{x}_i$ changes to incorporate the starting point $X_{0}$ via

$$\overline{x}_i = \frac{X_{0} + \sum_{j=1}^{i}t_jx_{j}}{||X_{0}||_{1} + \sum_{j=1}^{i}t_{j}}.$$

\noindent
The profile $\Prof(\tau, X_0)$ of an offset trajectory is defined identically to $\Prof(\tau)$ (in particular, it only depends on the segments after $X_0$, and not directly on $X_0$). For offset trajectories, we define $X_i = X_0 + \sum_{j=1}^{i} x_{j}t_{j}$. We will commonly use offset trajectories to describe suffixes of proper trajectories -- for example, we can model the suffix of $\tau = \{(x_1, t_1, b_1), \dots, (x_k, t_k, b_k)\}$ starting at segment $j$ with the offset trajectory $(\tau', X_0')$, where $\tau'$ is the suffix of $\tau$ starting at segment $j$ and $X_0' = X_{j-1}$. 


If an offset trajectory of length $k$ has the property that $X_{k} = \lambda X_{0}$ for some $\lambda > 1$, we call this trajectory a \emph{spiral}. The following lemma shows that the profile of any spiral belongs to $\MB$.

\begin{figure}[htbp]
    \centering
    \includegraphics[width=0.5\textwidth]{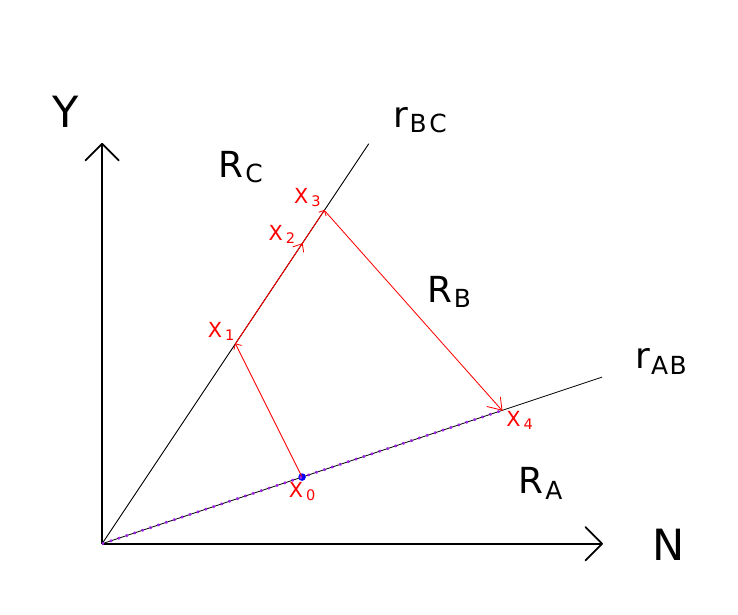}
    \caption{An offset trajectory (in red) that is also a spiral, with $X_4$ and $X_0$ on the same ray (along $r_{AB}$ and marked in purple) }
    \label{fig:offset_trajectory}
\end{figure}

\begin{lemma}\label{lem:spiral}
Let $\psi = (\tau, X_0)$ be a spiral. Then $\Prof(\psi) \in \MB$.
\end{lemma}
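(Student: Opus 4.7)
The plan is to iterate the spiral $N$ times so that the contribution of the starting offset $X_0$ becomes negligible, and then appeal to the closedness of $\MB$. The core observation is a scaling symmetry: for any $c > 0$, the averaged strategy $\bar x_i = (X_0 + \sum_{j \le i} t_j x_j)/(\|X_0\|_1 + \sum_{j \le i} t_j)$ is unchanged if we replace $X_0$ by $cX_0$ and every $t_j$ by $c t_j$. Consequently, if $(\tau, X_0)$ is a valid offset trajectory, then so is $(c \cdot \tau, c X_0)$, where $c \cdot \tau$ denotes scaling every segment duration by $c$, and the two have the same profile.

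Applying this with $c = \lambda$ at the starting point $X_k = \lambda X_0$, I would concatenate $(\tau, X_0)$ with its $\lambda$-scaled copy $(\lambda \cdot \tau, \lambda X_0)$. The endpoint of the concatenation is $\lambda^2 X_0$, and a direct calculation shows that the averaged strategies at the segments inherited from the second copy equal the corresponding $\bar x_i$'s of $(\tau, X_0)$, so all best-response conditions at the new seam reduce to the original ones. Iterating this, I would build the offset trajectory $\sigma_N = (\tau^{(N)}, X_0)$ formed by concatenating $N$ copies of the spiral with segment durations scaled by $1, \lambda, \lambda^2, \ldots, \lambda^{N-1}$ respectively; it ends at $\lambda^N X_0$ and has profile exactly $\Prof(\psi)$, since each copy contributes the same weighted profile.

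To convert $\sigma_N$ into an ordinary trajectory from the origin, I would prepend a single ``ramp'' segment $(X_0/\|X_0\|_1,\, \|X_0\|_1,\, b_1)$, where $b_1$ is the learner's action in the first segment of $\tau$. Because $(\tau, X_0)$ is a valid offset trajectory, $b_1 \in \BR_L(X_0/\|X_0\|_1)$, so the ramp is valid; and the ramp leaves the averaged optimizer strategy at exactly $X_0/\|X_0\|_1$, which is the value $\bar x_0$ against which the first segment of $\sigma_N$ was already verified. Thus the combined object, call it $\hat\sigma_N$, is a valid ordinary trajectory from the origin and $\Prof(\hat\sigma_N) \in \MB$.

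Finally, $\Prof(\hat\sigma_N)$ is the duration-weighted convex combination of the ramp's profile and $\Prof(\psi)$, with weights $\|X_0\|_1$ and $(1 + \lambda + \cdots + \lambda^{N-1}) \sum_i t_i$ respectively; since $\lambda > 1$, the second weight diverges with $N$, so $\Prof(\hat\sigma_N) \to \Prof(\psi)$. As $\MB$ is a closed convex set by definition, $\Prof(\psi) \in \MB$. The main obstacle I anticipate is the bookkeeping of best-response conditions at the many new seams (the ramp/spiral join and the joins between consecutive scaled copies); but the scaling identity for $\bar x_i$ collapses all of these to the original validity of $(\tau, X_0)$ and the spiral condition $X_k = \lambda X_0$.
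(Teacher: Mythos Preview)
Your proposal is correct and follows essentially the same approach as the paper's proof: prepend a single ramp segment from the origin to $X_0$, then concatenate geometrically-scaled copies of the spiral (with scale factors $1,\lambda,\lambda^2,\ldots$), and use that the ramp's relative weight vanishes as the number of copies grows, together with closedness of $\MB$. The only cosmetic difference is that the paper labels the ramp's learner action as an arbitrary $b_0\in\BR_L(X_0/\|X_0\|_1)$, whereas you specifically take $b_1$; your choice is justified by exactly the same best-response condition.
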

\begin{proof}
Let $x_0 = X_0 / ||X_0||_1 \in \Delta_2$, and choose an arbitrary $b_0 \in \BR_L(x_0)$. Also, let $\lambda = X_k/X_0$ be the scaling factor of $\psi$. Fix any positive integer $n \geq 1$, and consider the trajectory $\pi_n$ formed by concatenating the initial segment $(x_0, ||X_0||_1, b_0)$ with the offset trajectories $(\lambda^i\tau, \lambda^iX_0)$ for each $1 \leq i \leq n$ in that order. Note that this forms a valid trajectory since the $i$th offset trajectory ends at $\lambda^{i+1}X_0$, the starting point of the next trajectory.

Let $T_n$ be the total length of trajectory $\pi_n$. Note that $\Prof(\pi_n) = (||X_0||_1 (x_0 \otimes b_0) + (T_n - ||X_0||_1) \Prof(\tau))/T_n$. As $n \rightarrow \infty$, this approaches $\Prof(\tau)$, so $\Prof(\tau) \in \MB$ (recall $\MB$ is the convex closure of all profiles of trajectories).
\end{proof}

We say a trajectory or spiral is \emph{primitive} if it has no proper suffix that is a spiral. The following lemma shows that it suffices to consider primitive trajectories and spirals when computing $\MB$. 


\begin{lemma}\label{lem:primitive}
Let $\tau$ be a proper trajectory. Then $\Prof(\tau)$ can be written as a convex combination of the profiles of primitive trajectories and primitive spirals.
\end{lemma}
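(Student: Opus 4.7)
The plan is to induct on the number of segments $k$ of $\tau$, and in fact to prove the slightly stronger claim that this decomposition exists for any \emph{valid} trajectory (not just proper ones), so that the inductive hypothesis can be applied to prefixes. The base case $k = 1$ is immediate: a single-segment trajectory has no proper suffixes at all and is therefore vacuously primitive.

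For the inductive step, if $\tau$ is itself primitive we are done. Otherwise, among all proper suffixes of $\tau$ that are spirals I would pick the \emph{shortest} such suffix $\psi$, starting at the offset point $X_j$ for some $1 \leq j \leq k-1$. The key observation is that this $\psi$ is necessarily primitive: any proper suffix of $\psi$ is a strictly shorter proper suffix of $\tau$, and by the minimality of $\psi$ none of these can be spirals.

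Letting $\tau_{\mathrm{pre}}$ denote the prefix consisting of the first $j$ segments, I would then write
\[
\Prof(\tau) \;=\; \frac{T_j}{T}\,\Prof(\tau_{\mathrm{pre}}) \;+\; \frac{T - T_j}{T}\,\Prof(\psi),
\]
where $T_j = \sum_{i \leq j} t_i$ and $T = \sum_{i \leq k} t_i$; this identity is immediate from the fact that profiles are time-weighted averages of the per-segment product distributions. Since $\tau_{\mathrm{pre}}$ is a valid trajectory with strictly fewer segments, the inductive hypothesis expresses $\Prof(\tau_{\mathrm{pre}})$ as a convex combination of profiles of primitive trajectories and primitive spirals, and substituting gives the desired decomposition for $\Prof(\tau)$.

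The one subtlety, and the reason I would strengthen the inductive statement to cover arbitrary valid trajectories, is that $\tau_{\mathrm{pre}}$ need not itself be \emph{proper} even when $\tau$ is (truncation can destroy the endpoint-on-boundary condition of Lemma~\ref{lem:end_on_br}). This is harmless because the conclusion of Lemma~\ref{lem:primitive} only requires the summands to be \emph{primitive}, not proper, and primitivity is defined purely in terms of suffix structure. Beyond this, the only thing to verify is that ``shortest spiral suffix'' is well-defined, which holds because $\tau$ has only finitely many suffixes.
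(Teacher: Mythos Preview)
Your proof is correct and follows essentially the same inductive scheme as the paper---split off a primitive spiral suffix and recurse on the prefix---with your explicit choice of the \emph{shortest} spiral suffix nicely justifying why a primitive one exists. Your properness concern, while harmless, is actually unnecessary: since a proper $\tau$ with $k \geq 2$ segments terminates at some $X_k$ on a best-response boundary, and the spiral condition $X_k = \lambda X_j$ forces $X_j$ onto the same ray through the origin, the prefix $\tau_{\mathrm{pre}}$ also terminates on a best-response boundary and is itself proper.
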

\begin{proof}
We induct on the length of the trajectory. All trajectories with one segment are primitive.

Consider a trajectory $\tau$ with length $k$. If $\tau$ is not primitive, some suffix of $\tau$ is a primitive spiral. Divide $\tau$ into its prefix trajectory $\tau_{pre}$ and this suffix spiral  $\psi_{suff}$. Now, $\Prof(\tau)$ is a convex combination (weighted by total duration) of $\Prof(\tau_{pre})$ and $\Prof(\psi_{suff})$. But $\psi_{suff}$ is a primitive cycle, and $\Prof(\tau_{pre})$ is the convex combination of profiles of primitive cycles/trajectories by the inductive hypothesis. This completes the proof.
\end{proof}

One consequence of Lemma \ref{lem:primitive} is that $\MB$ is the convex hull of the profiles of all primitive trajectories and cycles. It is here where we will finally use our specific choice of game (or more accurately, the fact that the learner only has three actions and hence there are only three regions in the state space). 

\begin{lemma}\label{lem:characterize_primitive}
Any primitive trajectory or primitive spiral for a game with $m = 2$ and $n = 3$ has at most three segments.
\end{lemma}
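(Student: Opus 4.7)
The plan is to translate the combinatorial structure into plane geometry, exploiting the fact that $m = 2$ forces every $x_i \in \Delta_2$ to have non-negative entries, so the cumulative state $X_i \in \mathbb{R}^2_{\ge 0}$ moves monotonically in each coordinate as $i$ grows. The three best-response cones $\overline{R}_A, \overline{R}_B, \overline{R}_C$ are ordered angularly, and since $R_A$ and $R_C$ are not adjacent, direct $A \leftrightarrow C$ transitions are impossible; this pins every intermediate breakpoint $X_j$ ($1 \le j \le k-1$) to $r_{AB}$ or $r_{BC}$. For non-trivial trajectories (those with at least one segment off a coordinate axis — axial trajectories collapse to one segment by Lemma~\ref{lem:switch_regions}), coordinate-monotonicity makes the coordinate axes unreachable from the interior, so Lemma~\ref{lem:end_on_br} places $X_k$ on $r_{AB}$ or $r_{BC}$ as well.

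The key observation is that primitivity becomes a same-ray avoidance condition. Since $X_k \ge X_{j-1}$ componentwise, any two of these points lying on a common ray automatically satisfy $X_k = \lambda X_{j-1}$ for some $\lambda > 1$, so the suffix starting at segment $j$ is a spiral. Thus a primitive trajectory can have no intermediate breakpoint on the same ray as $X_k$. Taking $X_k \in r_{AB}$ WLOG, every intermediate $X_j$ must lie on $r_{BC}$. But then each segment $i = 2, \ldots, k-1$ connects two points of $r_{BC}$ via a coordinate-monotone line, forcing its direction along $r_{BC}$; and segment $1$ runs from $X_0 = 0$ to $X_1 \in r_{BC}$, whose direction is again along $r_{BC}$. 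Lemma~\ref{lem:leave_ray} then forbids three consecutive same-ray segments, giving $k - 1 \le 2$, i.e., $k \le 3$.

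For primitive spirals the same strategy applies, with the offset $X_0$ replacing the origin. Since a primitive spiral arises as a suffix of a proper trajectory, $X_0$ is itself an intermediate breakpoint of that trajectory and hence lies on $r_{AB}$ or $r_{BC}$; the spiral condition places $X_k$ on the same ray. Assuming WLOG $X_0, X_k \in r_{AB}$ with intermediate breakpoints on $r_{BC}$, segment $1$ is now a cross from $r_{AB}$ to $r_{BC}$ rather than a ray segment, so Lemma~\ref{lem:leave_ray} on its own only yields $k \le 4$. The main obstacle, and the only real case-work, is eliminating the remaining $k = 4$ possibility: a short combinatorial check on admissible action sequences (no two consecutive equal, no $AC/CA$ adjacency, $b_1 \in \mathrm{BR}(\overline{x}_0) = \{A, B\}$) forces the pattern $BCBC$, whose final $C$-segment must end on $r_{BC}$ rather than $r_{AB}$ — a contradiction. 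Hence $k \le 3$ for primitive spirals as well, completing the proof.
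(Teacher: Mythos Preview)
Your argument matches the paper's for trajectories: all of $X_1,\dots,X_{k-1}$ are forced onto the ray opposite $X_k$, and since $X_0=0$ lies on every ray, Lemma~\ref{lem:leave_ray} gives $k-1\le 2$. For spirals the paper simply asserts that the argument ``works in exactly the same way,'' but you are right to be more careful: because $X_0$ now sits on the \emph{same} ray as $X_k$, segment~1 is a cross rather than a ray-segment, and Lemma~\ref{lem:leave_ray} applied to segments $2,\dots,k-1$ only yields $k\le 4$. Your elimination of $k=4$ is the right move, and the conclusion is correct, though the constraints you list do not by themselves force the pattern $BCBC$ (for instance $ABAB$ also satisfies ``no two consecutive equal, no $AC/CA$, $b_1\in\{A,B\}$''). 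What you are missing is the second half of the validity condition on $b_1$: you need $b_1\in\BR_L(\overline{x}_1)=\{B,C\}$ as well as $b_1\in\BR_L(\overline{x}_0)=\{A,B\}$, which pins $b_1=B$. With that, and the analogous constraints $b_2,b_3\in\{B,C\}$ coming from $X_1,X_2,X_3\in r_{BC}$, the pattern $BCBC$ is indeed forced, and the final $C$-segment conflicts with $X_4\in r_{AB}$ exactly as you say. So your proof is correct and in fact patches a small gap that the paper's terse spiral treatment leaves open.
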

\begin{proof}
Consider a primitive trajectory $\tau$ with $k > 3$ segments. Since it is a proper trajectory, each of the cumulative points $X_1, X_2, \dots X_k$ must lie on a best-response boundary (one of the two rays $r_{AB}$ or $r_{BC}$). Since the trajectory is primitive, all of $X_1, X_2, \dots, X_{k-1}$ must lie on a different ray from $X_k$, and hence must all lie on the same ray. But by Lemma \ref{lem:leave_ray}, we can't have $X_0 = 0$, $X_1$, $X_2$, and $X_3$ all lie on the same ray. Therefore $k-1 \leq 2$, and $k \leq 3$. The argument for primitive spirals works in exactly the same way.
\end{proof}

It only remains to show that there are finitely many extremal profiles of primitive trajectories and spirals. To this end, call the sequence of pure actions $(b_1, b_2, \dots, b_k)$ taken by the learner in a given trajectory (or spiral) the \emph{fingerprint} of that trajectory (or spiral). 

\begin{lemma}\label{lem:fingerprint_lp}
Fix a sequence $\bb \in [n]^{k}$. Then the convex hull of $\Prof(\tau)$ over all trajectories with fingerprint $\bb$ is a polytope (the intersection of finitely many half-spaces). Similarly, the convex hull of $\Prof(\gamma)$ over all spirals with fingerprint $\bb$ is also a polytope.
\end{lemma}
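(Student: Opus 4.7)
The plan is to parameterize trajectories (and spirals) with a fixed fingerprint $\bb$ using non-negative vectors in such a way that all constraints and the profile map become \emph{linear}, after which the desired result reduces to the standard fact that the linear image of a (possibly unbounded) polyhedron is itself a polyhedron.

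For trajectories, I replace each segment $(x_i,t_i,b_i)$ by the single unnormalized vector $w_i := t_i x_i \in \Rset^m_{\geq 0}$. Writing $W_i = \sum_{j\leq i} w_j$, the normalized partial average is $\overline{x}_i = W_i/\|W_i\|_1$, and the best-response condition $b_i \in \BR_L(\overline{x}_{i-1}) \cap \BR_L(\overline{x}_i)$ becomes
\begin{equation*}
   u_L(W_j, b_i) \;\geq\; u_L(W_j, b) \qquad \forall b \in [n],\; j \in \{i-1,i\},
\end{equation*}
after clearing the positive denominator $\|W_j\|_1$ (the $j=0$ case is vacuous since $W_0 = 0$). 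Each such inequality is linear and homogeneous in $w$. Imposing the scale-fixing equation $\sum_i \|w_i\|_1 = 1$ (which loses no generality because the profile is scale-invariant in $w$) yields a bounded polyhedron $P_{\bb} \subseteq \Rset^{mk}_{\geq 0}$. On this set the profile map $\phi(w) = \sum_i w_i \otimes b_i$ is linear and agrees with $\Prof(\tau)$, so the set of achievable trajectory profiles equals $\phi(P_{\bb})$, a linear image of a polytope and hence a polytope. As $\phi(P_{\bb})$ is already convex, it coincides with its own convex hull.

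For spirals I introduce the additional parameter $X_0 \in \Rset^m_{\geq 0}$. The spiral condition $X_k = \lambda X_0$ with $\lambda > 1$ is equivalent to $\sum_i w_i = \mu X_0$ for some $\mu > 0$, i.e.\ $X_0 = \alpha \sum_i w_i$ with $\alpha = 1/\mu > 0$. Since $\Prof(\psi)$ depends only on the $w_i$'s and is scale-invariant, I again normalize $\sum_i \|w_i\|_1 = 1$ and take $(\alpha,w) \in \Rset_{\geq 0} \times \Rset^{mk}_{\geq 0}$ as parameters. Under this substitution each running state
\begin{equation*}
   X_0 + \sum_{j \leq i} w_j \;=\; \alpha \sum_{j=1}^{k} w_j \,+\, \sum_{j=1}^{i} w_j
\end{equation*}
is linear in $(\alpha,w)$, so all best-response constraints at $\overline{x}_0,\ldots,\overline{x}_k$ become linear inequalities in the parameters. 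The feasible set $Q_{\bb}$ is therefore a polyhedron (possibly unbounded in the $\alpha$ direction), and the profile map $(\alpha,w) \mapsto \sum_i w_i \otimes b_i$ is linear and independent of $\alpha$. The set of spiral profiles is therefore the linear image of a polyhedron, which is again a polyhedron; being contained in the bounded simplex $\Delta_{mn}$, it is in fact a polytope, and (as a convex set) equals its own convex hull.

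The main potential obstacle is the spiral-closure condition $\sum_i w_i \parallel X_0$, which as written is bilinear in $(X_0,w)$ and would break the linear structure the rest of the argument relies on. I sidestep this by exploiting the scale invariance of the profile to parameterize $X_0$ directly as a non-negative scalar multiple of $\sum_i w_i$ via one extra scalar $\alpha$; this converts the bilinear parallelism constraint into a linear substitution and reduces the spiral case to essentially the same framework used for trajectories.
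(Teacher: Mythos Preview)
Your trajectory argument is correct and is essentially the paper's: pass to the variables $w_i=t_ix_i$, normalize $\sum_i\|w_i\|_1=1$, note that the best-response constraints and the profile map are linear, and conclude that the linear image of the resulting polytope is a polytope.

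The spiral argument, however, contains a genuine error. You claim that each running state
\[
X_0+\sum_{j\le i}w_j\;=\;\alpha\sum_{j=1}^k w_j+\sum_{j=1}^i w_j
\]
is linear in $(\alpha,w)$. It is not: the term $\alpha\sum_j w_j$ is a product of the scalar variable $\alpha$ with a linear function of $w$, and is therefore bilinear. Every best-response inequality at $\overline{x}_0,\dots,\overline{x}_k$ inherits this bilinearity, so your feasible set $Q_{\bb}$ is in general only semi-algebraic rather than a polyhedron, and the ``linear image of a polyhedron'' step no longer applies. The substitution $X_0=\alpha\sum_j w_j$ did not eliminate the nonlinearity of the spiral-closure condition; it merely relocated it from the closure constraint into the best-response constraints. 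By contrast, the paper keeps $X_0$ itself as a free nonnegative variable alongside the $w_i$: then each state $X_0+\sum_{j\le i}w_j$ is genuinely linear in $(X_0,w)$, all best-response inequalities are linear, and one projects a bona fide polyhedron onto the $w$-coordinates. (The paper's LP does not explicitly enforce the closing condition $X_k\parallel X_0$, so what it literally parameterizes is the larger family of valid offset trajectories with fingerprint $\bb$, of which the spirals form a subset.)
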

\begin{proof}
First, note that by scaling any spiral or trajectory by a constant, we can normalize the total time $\sum_{i=1}^{k} t_i$ to equal $1$. If this is the case, then we can write $\Prof(\tau) = \sum_{i=1}^{k} (t_ix_i) \otimes b_i$. For a fixed fingerprint $\bb$, this is a linear transformation of the $k$ variables $y_i = t_ix_i$. We will show that the set of possible values for $(y_1, y_2, \dots, y_k)$ forms a polytope.

To do so, we will write an appropriate linear program. Our variables will be the time-weighted optimizer actions $y_1, \dots, y_k \in \Delta_{2}$, times $t_1, \dots, t_k$, and cumulative actions $X_1, X_2, \dots, X_k$ (where $X_i = y_1 + y_2 + \dots + y_i$). In the case of a spiral, $X_0$ will also be one of the variables (for trajectories, we can think of $X_0$ being fixed to $0$). The following finite set of linear constraints precisely characterizes all possible values of these variables for any valid trajectory or spiral with this fingerprint:

\begin{itemize}
    \item Time is normalized to 1: $t_1 + t_2 + \dots + t_k = 1$.
    \item Relation between $X_i$ and $y_i$: $X_i = y_1 + \dots + y_i$.
    \item Relation between $y_i$ and $t_i$: For each $y_i$, $||y_i||_{1} = y_{i, 1} + y_{i, 2} = t_i$.
    \item Best-response constraint: For each $1 \leq i \leq k$, $X_i \in R_{b_i}$ and $X_{i-1} \in R_{b_{i}}$. Here $R_{b_i}$ (in our example, one of $\{R_A, R_B, R_C\}$) is the cone of states where $b_i$ is a best-response, and is specified by finitely many linear constraints.
\end{itemize}

The set of values of $(y_1, y_2, \dots, y_k)$ is a projection of the polytope defined by the above constraints and hence is also a polytope. 
\end{proof}

We can now prove the main result of this section:

\begin{theorem}
\label{thm:mb_is_a_polytope}
The mean-based menu $\MB$ of the learner payoff defined by \eqref{eq:game} is a polytope.
\end{theorem}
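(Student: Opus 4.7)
The plan is to assemble this theorem directly from the machinery already developed in Lemmas \ref{lem:primitive}, \ref{lem:characterize_primitive}, and \ref{lem:fingerprint_lp}, with essentially no further geometric argument needed.

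First, I would invoke Lemma \ref{lem:primitive} to rewrite $\MB$ as the convex hull of the profiles of all primitive trajectories and primitive spirals. Next, I would apply Lemma \ref{lem:characterize_primitive}, which restricts attention to primitive trajectories and spirals with at most three segments. Since the learner has $n=3$ actions, the fingerprint $\bb = (b_1, \dots, b_k)$ of any such primitive trajectory or spiral lies in the finite set $\bigcup_{k=1}^{3} [3]^k$, which has only $3 + 9 + 27 = 39$ elements. Combined with the fact that consecutive segments cannot share a learner best-response (Lemma \ref{lem:switch_regions}), many of these are vacuous, but the key point is just finiteness.

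For each fingerprint $\bb$, Lemma \ref{lem:fingerprint_lp} asserts that the convex hull of profiles of trajectories with fingerprint $\bb$ is a polytope, and likewise for spirals with fingerprint $\bb$. Therefore $\MB$ is the convex hull of a finite union of polytopes, namely
\[
\MB = \conv\left(\bigcup_{\bb} P_{\bb}^{\mathrm{traj}} \cup \bigcup_{\bb} P_{\bb}^{\mathrm{spiral}}\right),
\]
where the union ranges over fingerprints of length at most three and $P_{\bb}^{\mathrm{traj}}, P_{\bb}^{\mathrm{spiral}}$ denote the polytopes produced by Lemma \ref{lem:fingerprint_lp}.

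The only remaining step is the standard fact that the convex hull of a finite union of polytopes is itself a polytope: each $P_{\bb}^{\mathrm{traj}}$ and $P_{\bb}^{\mathrm{spiral}}$ has finitely many vertices, so their union has finitely many vertices, and the convex hull of finitely many points in $\Delta_{mn}$ is a bounded polytope. I do not expect any serious obstacle here; the entire content of the theorem has been packaged into the earlier lemmas, and the proof is a clean composition of them. The only mild care needed is to confirm that the polytopes $P_{\bb}^{\mathrm{spiral}}$ from Lemma \ref{lem:fingerprint_lp} really do give profiles in $\MB$, which is precisely the content of Lemma \ref{lem:spiral}.
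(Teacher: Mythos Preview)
Your proposal is correct and matches the paper's own proof essentially line for line: the paper simply observes that Lemma \ref{lem:characterize_primitive} bounds the number of fingerprints, Lemma \ref{lem:fingerprint_lp} shows each fingerprint yields a polytope, and hence $\MB$ is the convex hull of finitely many polytopes. Your version is slightly more explicit about invoking Lemma \ref{lem:primitive} and Lemma \ref{lem:spiral}, but the approach is identical.
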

\begin{proof}
By Lemma \ref{lem:characterize_primitive}, there are finitely many possible fingerprints for a primitive spiral or trajectory. By Lemma \ref{lem:fingerprint_lp}, the set of profiles for each of these fingerprints is a polytope, so $\MB$ is equal to the convex hull of finitely many polytopes (and hence itself is a polytope).
\end{proof}

As a side note, note that the proof of Lemma \ref{lem:characterize_primitive} allows the optimizer to efficiently optimize against a mean-based learner in this class of games by solving a small finite number of LPs (since the output of the LP described in the proof provides not just the extremal CSP in $\M_{MB}$, but also valid trajectory attaining that CSP). It is an interesting question whether there exists a bound on the total number of possible fingerprints in more general games (a small bound would allow for efficient optimization, but it is not clear any finite bound exists).


\subsection{The mean-based menu equals the FTRL menu}
\label{sec:mb-equals-mwu}

We would now like to show that for the games in question, the menu $\MB$ is not merely a subset of the asymptotic menu of a mean-based no-regret algorithm (as indicated by Lemma \ref{lem:sub-mean-based}), but actually an exact characterization of such menus. While we do not show this in full generality, we show this for the large class of algorithms in $\ftrl$; recall these are instantiations of Follow-The-Regularized-Leader with a strongly convex regularizer. These algorithms exhibit the following useful properties (in fact, any learning algorithm satisfying these properties can be shown to be Pareto-dominated via the machinery we introduce).

\begin{lemma}
\label{lem:ftrl_properties}
    Algorithms $\cA$ in $\ftrl$ exhibit the following properties:
    \begin{itemize}
        \item $\A$ is $\gamma$-mean-based where $\gamma(t) = o(1)$ and $\gamma(t) \cdot t$ is a weakly monotone function of $t$.
        \item On every sequence of play, the regret of $\A$ lies between $0$ and $\regbound(T)$, for a sublinear function $\regbound(T) = o(T)$. (The lower bound on regret is due to~\cite{gofer2016lower}.)
        \item For any $T$, there exists a function $\wtA_T:\mathbb{R}^{n} \rightarrow \Delta_n$ such that the action $y_t$ taken by algorithm $A^T$ at round $t$ can be written in the form:
        
        $$y_t = \wtA_{T}(U_{t,1}, U_{t,2}, \dots, U_{t, n}),$$

        \noindent
        where $U_{t, j} = \sum_{s=1}^{t-1} u_L(x_s, j)$ is the cumulative payoff of action $j$ up to round $t$. Moreover, for any $\Delta \in \mathbb{R}$, 
        
        $$\wtA_T(U_1 + \Delta, U_2 + \Delta, \dots, U_{n} + \Delta) = \wtA_T(U_1, U_2, \dots, U_n).$$
        
        In words, the moves played by $\A$ each round are entirely determined by the cumulative payoffs of the actions and are invariant to additively shifting the cumulative payoffs of all actions by the same quantity.
        
        \item 

        There exists a sublinear function $\sshift(T) = o(T)$ (depending only on $\gamma$ and $R$) with the following property: if  $U \in \mathbb{R}^n$ satisfies $U_{j} \leq \max_{j' \in [n]} U_{j'} - \gamma(T)\cdot T$, then there exists an $\wtA'_{T}$ corresponding to a $\ftrl_{T}(\eta', R')$ algorithm over $n-1$ actions where if we define $y \in \Delta_n$ and $y' \in \Delta_{n-1}$ via

        \begin{align*}
          y &= \wtA_{T}(U_{1}, \dots, U_{j-1}, U_j, U_{j+1}, \dots, U_{n})  \\
          y' &= \wtA'_{T}(U_{1}, \dots, U_{j-1}, U_{j+1}, \dots, U_{n}), 
        \end{align*}
        
        then $||y - \pi(y')||_1 \leq \sshift(T)/T$. Here $\pi$ is the embedding of $\Delta_{n-1}$ into $\Delta_{n}$ by setting the $\pi(x)_{j'} = x_{j'}$ for $j' < j$, $\pi(x)_{j} = 0$, and $\pi(x)_{j'} = x_{j'-1}$ for $j' > j$. Intuitively, this states that whenever certain actions are historically-dominated, it is as if we are running a different FTRL instance on the non-dominated actions. 
    \end{itemize}
\end{lemma}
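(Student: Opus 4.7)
The plan is to verify the four bullet points in turn; the first three follow from standard FTRL calculations, while the fourth is the main technical content.

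For property (1), I would apply the classical stability lemma for FTRL with a bounded regularizer: comparing $y_t$ against the point mass $e_{j^*}$ where $j^* = \arg\max_{j'} U_{t,j'}$ inside $F(y) := \langle y, U_t \rangle - R(y)/\eta_T$ gives $y_{t,j} \leq 2B/(\eta_T \delta)$ whenever $U_{t,j} \leq U_{t,j^*} - \delta$, where $B$ bounds the range of $R$. Setting $\gamma(t) := \sqrt{2B/(\eta_T t)}$ therefore certifies $\gamma$-mean-basedness, and the product $\gamma(t)\cdot t = \sqrt{2Bt/\eta_T}$ is manifestly weakly monotone in $t$; since $\eta_T = 1/o(T)$, $\gamma(T) = o(1)$. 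For property (2), the upper bound on regret is the classical FTRL guarantee for strongly convex regularizers (see e.g.\ \cite{hazan201210}), and the non-negativity of regret follows from \cite{gofer2016lower}, whose hypotheses are met by FTRL. For property (3), the update rule can be written as $y_t = \arg\max_{y \in \Delta_n} \langle y, U_t \rangle - R(y)/\eta_T$, which depends on the history only through $U_t$, so I define $\wtA_T(U) := \arg\max_{y \in \Delta_n} \langle y, U \rangle - R(y)/\eta_T$. Shift invariance is immediate because $\langle y, U + \Delta \mathbf{1} \rangle = \langle y, U \rangle + \Delta$ for every $y \in \Delta_n$, leaving the argmax unchanged.

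For property (4), I take $\eta' := \eta_T$ and $R'(y') := R(\pi(y'))$; strong convexity of $R'$ is inherited from $R$, so $\wtA'_T$ is an honest $\ftrl_T(\eta', R')$ algorithm on $n-1$ actions, and $\pi(y') = \arg\max\{\langle z, U \rangle - R(z)/\eta_T : z \in \Delta_n,\ z_j = 0\}$. The deficit hypothesis together with property (1) forces $y_j \leq \gamma(T)$. To bound $\|y - \pi(y')\|_1$, I would introduce the relaxed value function $V(\epsilon) := \max\{\langle z, U \rangle - R(z)/\eta_T : z \in \Delta_n,\ z_j \leq \epsilon\}$, which is concave with $V(y_j) = F(y)$ and $V(0) = F(\pi(y'))$. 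Combining the fundamental theorem of calculus with the envelope theorem, $V(y_j) - V(0) = \int_0^{y_j} \lambda(\epsilon)\, d\epsilon$, where $\lambda(\epsilon) \geq 0$ is the KKT multiplier for the constraint $z_j \leq \epsilon$. Once this integral is controlled, $\mu/\eta_T$-strong concavity of $F$ yields $\|y - \pi(y')\|_2^2 \leq (2\eta_T/\mu)(F(y) - F(\pi(y')))$, and a conversion to $\ell_1$ gives $\|y - \pi(y')\|_1 = o(1)$, so I may set $\sshift(T) := T \cdot \|y - \pi(y')\|_1 = o(T)$.

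The main obstacle is to bound the multiplier $\lambda(\epsilon)$ uniformly in $\epsilon \in [0, y_j]$ for an arbitrary bounded, strongly convex $R$. The argument is clean in two canonical cases: for the entropic regularizer a direct calculation yields the exact identity $\pi(y')_i = y_i/(1-y_j)$, so $\|y - \pi(y')\|_1 = 2 y_j \leq 2\gamma(T)$ and $\sshift(T) = O(\gamma(T)\cdot T)$; for the quadratic regularizer the KKT conditions show that removing the action $j$ shifts the sum-constraint multiplier only by $O(y_j/\eta_T)$, yielding $\|y - \pi(y')\|_\infty = O(y_j)$. For a generic bounded, strongly convex $R$, some additional regularity of $\nabla R$ near the boundary of $\Delta_n$ (e.g.\ the standard assumption that $R$ is Legendre or has a Lipschitz gradient on the active face) is required to keep $\lambda(\epsilon)$ controlled as $\epsilon \to 0$; with such control, $\int_0^{y_j} \lambda(\epsilon)\, d\epsilon = O(y_j/\eta_T)$ (up to at most logarithmic factors as in the entropic case), and propagating this through strong concavity gives $\|y - \pi(y')\|_1 = o(1)$, hence $\sshift(T) = o(T)$ as required.
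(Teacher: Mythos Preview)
Your treatment of properties (1)--(3) matches the paper's: the same mass-shifting argument for mean-basedness, the same citations for the regret bounds, and the same definition of $\wtA_T$ together with shift-invariance from $\langle y, \mathbf{1}\rangle = 1$.

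For property (4), your setup is identical to the paper's (same $\eta' = \eta_T$, same $R'(y') = R(\pi(y'))$, same use of strong concavity to convert a bound on $F(y) - F(\pi(y'))$ into a bound on $\|y - \pi(y')\|$), but your route to bounding $F(y) - F(\pi(y'))$ is more complicated than necessary and, as you yourself flag, does not close without extra regularity on $\nabla R$. The paper avoids the envelope/KKT machinery entirely. Instead of integrating the multiplier, it simply \emph{constructs} a feasible point for the restricted problem: take $z$ obtained from $y$ by moving the (at most $\gamma(T)$) mass on coordinate $j$ to the leader $j^*$. Then $z_j = 0$, $\langle U, z\rangle \geq \langle U, y\rangle$, and $\|y - z\|_1 \leq 2\gamma(T)$, so $|R(y) - R(z)| \leq r(2\gamma(T))$ where $r(\varepsilon) := \sup\{|R(y_1)-R(y_2)| : \|y_1 - y_2\|_1 \leq \varepsilon\}$ is the modulus of continuity of $R$. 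Since $R$ is bounded and convex on the compact simplex, $r(\varepsilon) \to 0$ as $\varepsilon \to 0$, and one gets directly
\[
F(y) - F(\pi(y')) \;\leq\; F(y) - F(z) \;\leq\; \frac{r(2\gamma(T))}{\eta_T},
\]
which feeds into the strong-concavity inequality exactly as you wrote. This yields $\sshift(T) = T\sqrt{r(2\gamma(T))/\alpha} = o(T)$ with no assumption on $\nabla R$ near the boundary. In short: the quantity you are trying to bound via $\int_0^{y_j}\lambda(\varepsilon)\,d\varepsilon$ is precisely $F(y) - F(\pi(y'))$, and that can be bounded in one line by exhibiting a near-optimal feasible point rather than by analyzing the multiplier.
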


With these properties, we can prove the main theorem of this section -- that any FTRL algorithm has asymptotic menu equal to $\M_{MB}$ for the game defined in \eqref{eq:game}.

\begin{theorem}
\label{thm:sbm=mb}
For the $u_L$ defined in \eqref{eq:game}, if $\cA$ is an algorithm in $\ftrl$, then $\cM(\cA) = \MB$. 
\end{theorem}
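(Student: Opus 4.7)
The plan is to prove the stated equality by showing both inclusions. The forward inclusion $\MB \subseteq \cM(\cA)$ is immediate from Lemma~\ref{lem:sub-mean-based}, since any $\cA \in \ftrl$ is mean-based. The substantive content lies in the reverse inclusion $\cM(\cA) \subseteq \MB$: I need to show that for any sequence $x_1, \dots, x_T$ of optimizer plays against $A^T$, the induced empirical CSP $\frac{1}{T} \sum_t x_t \otimes y_t$ lies within $o(1)$ of $\MB$ as $T \to \infty$.

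For this, I would track the \emph{active set} $D(t) = \{j \in [n] : U_{t,j} \geq \max_{j'} U_{t,j'} - \gamma(T) \cdot T\}$ of non-strongly-dominated actions, where $U_{t,j} = \sum_{s<t} u_L(x_s, j)$. A direct check from~\eqref{eq:game} shows that the $U^A = U^B$ contour is the ray $r_{AB}$, the $U^B = U^C$ contour is $r_{BC}$, and the $U^A = U^C$ contour is a third ray on which $B$ strictly dominates both by a positive constant gap. Hence, for all sufficiently large $T$, $D(t)$ can only take the five values $\{A\}, \{B\}, \{C\}, \{A,B\}, \{B,C\}$, matching exactly the five region types (three cones and two boundary rays) used by continuous-time trajectories in $\cT$.

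I would then partition $[1,T]$ into maximal phases on which $D(t)$ is constant, and construct a trajectory $\tau \in \cT$ segment-by-segment. On a singleton phase $D = \{b\}$ the mean-based property forces $y_{t,b} \geq 1 - \gamma(T)$, so the phase contributes $o(1)$-close to $\bar x \otimes b$ (where $\bar x$ is the time-average of $x_t$ over the phase) and I record a single segment $(\bar x, t_{\mathrm{phase}}/T, b)$; the best-response condition $b \in \BR_L(\bar x)$ is automatic from $D(t) = \{b\}$ at both endpoints. On a boundary phase $D = \{b_1, b_2\}$, the dropping-dominated-actions property of Lemma~\ref{lem:ftrl_properties} implies that the learner's play is within $\sshift(T)/T$ of an independent $\ftrl_T$ instance on just $\{b_1, b_2\}$; the no-regret property of that two-action sub-FTRL then forces the cumulative strategy to stay within $o(1)$ of $r_{b_1 b_2}$ throughout and pins down (up to $o(T)$) how the phase splits its time between $b_1$ and $b_2$. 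I encode this as two segments lying along $r_{b_1 b_2}$ with the matching weights.

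The main technical obstacle is the accounting at the boundaries between phases, where the mean-based and dropped-actions guarantees do not apply cleanly. I would control this by observing that each phase transition requires some action to traverse the dominance gap $\gamma(T) \cdot T$, which takes $\Omega(\gamma(T) \cdot T)$ rounds; thus the total number of phases is $O(1/\gamma(T))$, and the aggregate contribution of ``transition rounds'' to the empirical CSP is $O(1/(T \gamma(T))) = o(1)$, using $T \gamma(T) \to \infty$ (which follows from the monotonicity of $\gamma(t) \cdot t$ asserted in Lemma~\ref{lem:ftrl_properties}). Concatenating the segments produced above then yields a $\tau \in \cT$ whose profile is within $o(1)$ of the empirical CSP; since $\MB$ is closed, taking $T \to \infty$ gives $\cM(\cA) \subseteq \MB$, completing the proof.
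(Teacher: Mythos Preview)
Your high-level strategy---the five-valued $D(t)$ observation and the two-action reduction on boundary stretches via Lemma~\ref{lem:ftrl_properties}---matches the paper's, but the accounting has two genuine gaps.

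First, the phase count. The claim that each transition of $D(t)$ costs $\Omega(\gamma(T)T)$ rounds is false: when $U^A-U^B$ hovers near the threshold $\gamma(T)T$ (e.g.\ under the optimizer cycle $Y,N,N$, which leaves $U^A-U^B$ invariant over each period while keeping $U^C$ far below), the set $D(t)$ flips between $\{A\}$ and $\{A,B\}$ every few rounds, yielding $\Theta(T)$ phases. The paper does not bound the number of $D(t)$-phases at all; instead it groups time into \emph{non-crossing} blocks (maximal stretches that start and return to the same soft boundary $S_{AB}$ or $S_{BC}$ without reaching the other) and \emph{crossing} blocks, and bounds only the number of crossings, using a game-specific drift-from-origin fact: every optimizer action moves the cumulative state a uniformly positive $\ell_1$ distance from the origin (Lemma~\ref{lem:constant_drift}), so each crossing after time $\phi$ has length $\Omega(\phi)$ (Lemma~\ref{lemma:bound_crossing}). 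Choosing $\phi=\sqrt{f(T)T}$ then balances the number of blocks against the per-block error.

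Second, the boundary-phase encoding. Recording a $\{b_1,b_2\}$ phase as two segments along $r_{b_1 b_2}$ with the empirical weights $\alpha_{b_1},\alpha_{b_2}$ does not reproduce the phase's CSP, because $x_t$ and $y_t$ are correlated within the phase: the true contribution is $\alpha_{b_1}(D_{b_1}\otimes b_1)+\alpha_{b_2}(D_{b_2}\otimes b_2)$ with $D_{b_i}$ the \emph{conditional} optimizer average given the learner plays $b_i$. To place these as valid segments in $\cT$ (equivalently, to show both $D_{b_1}$ and $D_{b_2}$ lie near $r_{b_1 b_2}$) you need both the no-regret upper bound \emph{and} the non-negative-regret lower bound of FTRL from Lemma~\ref{lem:ftrl_properties}; the upper bound alone does not pin down the split. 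This is exactly the content of the paper's replacement lemma (Lemma~\ref{lem:nc_replacement}), and it is also where the $O(f(T))$ per-block error arises---which is why controlling the number of blocks in the first gap is essential.
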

\begin{proof}[Proof sketch]
We provide a sketch of the proof here, and defer the details to Appendix~\ref{app:sbm=mb}. The non-trivial part of the proof to show is to show that $\cM(\cA) \subseteq \MB$ since the other direction follows directly from $\cA$ being mean-based and Lemma~\ref{lem:sub-mean-based}. To this end, we focus on a discrete-time version of the trajectories used to build $\MB$. Intuitively, these trajectories guarantee that a unique action is significantly the best action in hindsight, or the historical leader, on all but a sublinear number of time steps. In other words, all mean-based algorithms behave identically, i.e induce identical CSPs, up to subconstant factors against each such optimizer subsequence. 

The main idea in the proof is that we can transform any optimizer sequence inducing a CSP against $\cA$ into a sequence with the above properties, against which all mean-based algorithms (including $\cA$) induce almost identical CSPs. Clearly, the parts of the trajectory that lie in a region with a ``clear'' historical leader are not an issue, and the challenge to this approach comes from parts of a trajectory that lie in a region where more than one action is close to being the leader. In such regions, different mean-based algorithms could potentially exhibit markedly different behavior. The key to addressing this issue is that in the game~\eqref{eq:game}, there can be at most two actions in contention to be the leader at any point of time. Utilizing the last property of  algorithms in $\ftrl$ shown in Lemma~\ref{lem:ftrl_properties}, it is as if $\cA$ is being played exclusively with two actions in such segments. The no-regret upper and lower bound guarantees, when suitably applied in such segments, automatically strengthen into a stronger guarantee, which we use in concert with the regret lower bounds to construct a local transformation of the trajectory in such a region (found in Lemma~\ref{lem:nc_replacement}). Each such transformation add a sublinear error to the generated CSP.  Thus, to bound the total drift from the original CSP, we show that we do not need a large number of transformations (the specific number balances with the asymptotics of the sublinear error that we make in each local transformation) using some properties of~\eqref{eq:game}.
    
\end{proof}

As a corollary, we can finally prove the main result of this section -- that there exist learner payoffs for which algorithms in $\ftrl$ are Pareto dominated. This result follows directly from an application of Theorem~\ref{thm:pareto-optimal-characterization} to the menu of such an algorithm; since we have shown it is a polytopal no-regret menu and its minimum value points are a superset of the minimum value points of $\M_{NSR}$.  

\begin{corollary}
\label{cor:coup_de_grace_mw}
For the $u_L$ defined in \eqref{eq:game}, all algorithms in $\ftrl$ are Pareto-dominated.
\end{corollary}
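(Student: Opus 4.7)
The plan is to assemble the pieces already established in this section and invoke the characterization of Pareto-optimal polytopal no-regret menus from Theorem~\ref{thm:pareto-optimal-characterization}. First I would translate the claim about the algorithm $\cA$ into a claim about its asymptotic menu: Theorem~\ref{thm:sbm=mb} gives $\cM(\cA) = \MB$, and Lemma~\ref{lem:menu_alg_equiv} implies $V_L(\cA, u_O) = V_L(\MB, u_O)$ for every optimizer payoff $u_O$. Hence $\cA$ is Pareto-dominated as a learning algorithm if and only if $\MB$ is Pareto-dominated as an asymptotic menu, so it suffices to show the latter.

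Next I would verify that $\MB$ satisfies the hypotheses of Theorem~\ref{thm:pareto-optimal-characterization}. Theorem~\ref{thm:mb_is_a_polytope} shows $\MB$ is polytopal, and since every algorithm in $\ftrl$ is no-regret, Theorem~\ref{thm:nr_nsr_containment} together with $\MB = \cM(\cA)$ yields $\MB \subseteq \M_{NR}$, so $\MB$ is a no-regret menu.

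Finally I would apply Theorem~\ref{thm:high-mb-swap-regret}, which exhibits an explicit CSP $\csp \in \MB$ with $u_L(\csp) = U_{ZS}$ but with strictly positive swap regret. In particular $\csp \in \MB^{-}$ but $\csp \notin \M_{NSR}$, and therefore $\csp \notin \M_{NSR}^{-}$. This witnesses $\MB^{-} \neq \M_{NSR}^{-}$, so Theorem~\ref{thm:pareto-optimal-characterization} concludes that $\MB$ is not Pareto-optimal, i.e., it is Pareto-dominated by some asymptotic menu. Pulling this back through Lemma~\ref{lem:menu_alg_equiv} gives that $\cA$ is Pareto-dominated, as desired.

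Since all of the substantive work is carried out by the preceding theorems (the polytopality of $\MB$ in Theorem~\ref{thm:mb_is_a_polytope}, the equality $\cM(\cA) = \MB$ in Theorem~\ref{thm:sbm=mb}, and the failure of the minimum-value-set condition in Theorem~\ref{thm:high-mb-swap-regret}), there is no real obstacle in the corollary itself; the only thing to be careful about is that \emph{not Pareto-optimal} for menus means exactly \emph{Pareto-dominated by some asymptotic menu}, and that this translates to the algorithmic definition of Pareto-dominance through Lemma~\ref{lem:menu_alg_equiv}. Both facts are immediate from the definitions given in Section~\ref{sec:menu_results}.
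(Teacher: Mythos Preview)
Your proposal is correct and follows essentially the same approach as the paper: assemble the three ingredients (polytopality via Theorem~\ref{thm:mb_is_a_polytope}, $\cM(\cA)=\MB$ via Theorem~\ref{thm:sbm=mb}, and $\MB^{-}\neq\M_{NSR}^{-}$ via Theorem~\ref{thm:high-mb-swap-regret}) and feed them into the characterization Theorem~\ref{thm:pareto-optimal-characterization}. The only difference is that you spell out explicitly the passage between algorithms and menus through Lemma~\ref{lem:menu_alg_equiv}, which the paper leaves implicit.
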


We have proved a sequence of results about algorithms in $\ftrl$ for a single learner payoff, i.e.~\ref{eq:game}, but our results in fact hold for a positive measure set of $2 \times 3$ learner payoffs. All the results in this section can be generalized to perturbed versions of this learner payoff \eqref{eq:perturbed_game}, and we show how to do so in Appendix~\ref{app:mean_based_pareto_dominated}.

\bibliographystyle{alpha}
\bibliography{references}
 
\appendix
\section{Examples of asymptotic menus}\label{sec:examples}

In the section, we give some explicit examples of asymptotic menus for simple learning algorithms. All of these algorithms will operate in a $2 \times 2$ game where the optimizer has actions $A$ and $B$ and the learner has actions $P$ and $Q$.

\subsubsection*{Algorithm 1: always play $Q$}

We will begin by considering the algorithm $\cA_1$ that always plays action $Q$ (i.e., always plays $y_t = (0, 1)$). The optimizer can choose to play any sequence of actions against this, and therefore, for any finite time horizon $T$, the resulting menu $\M(\cA^{T})$ is the convex hull of $A \otimes Q$ and $B \otimes Q$. It follows that the asymptotic menu $\M(\cA)$ is also the convex hull of $A \otimes Q$ and $B \otimes Q$ (i.e., all CSPs of the form $x \otimes Q$ for any strategy $x$ of the optimizer).

\subsubsection*{Algorithm 2: always play $P/2 + Q/2$}

Similar to $\cA_2$, $\cA_1$ always plays the fixed mixed strategy $P/2 + Q/2$. Again, only the optimizer's marginal distribution matters and the asymptotic menu $\cM(\cA_2)$ is the convex hull of $A \otimes (P/2 + Q/2)$ and $B \otimes (P/2 + Q/2)$. 


\subsubsection*{Algorithm 3: Alternate between $P$ and $Q$}

Algorithm $\cA_3$ plays action $P$ on odd rounds and action $Q$ on even rounds. Although this may seem similar to $\cA_2$, this change actually introduces many new CSPs into the asymptotic menu.

In particular, while the learner's marginal distribution is the same (in the limit) as under $\A_{2}$, the optimizer can now correlate their actions with the actions of the learner. Playing against $\A_3$ is akin to playing against two disjoint algorithms similar to $\cA_1$ (for odd and even rounds) and averaging the results, since the algorithm is completely non-adaptive. Since in odd rounds the adversary can implement any CSP of the form $x_1 \otimes P$ for $x_1 \in \Delta_2$, and in even rounds the adversary can implement any CSP of the form $x_2 \otimes Q$ for $x_2 \in \Delta_2$, the asymptotic menu $\M(\cA_3)$ is given by the set:

$$\M(\cA_3) = \left\{\frac{1}{2} (x_1 \otimes P) + \frac{1}{2}(x_2 \otimes Q) \mid x_1, x_2 \in \Delta_2\right\}.$$

Note this is exactly the set of CSPs whose marginal distribution for the learner is $P/2 + Q/2$.

\subsubsection*{Algorithm 4: Play $P$ until $A$, then only $Q$}

Finally, we consider an algorithm $\cA_4$ which plays the action $P$ until the first time the optimizer plays action $A$, after which it switches forever to playing action $Q$ (a ``grim trigger'' strategy). Note that unlike the previous learning algorithms, $\cA_{4}$ is responsive to what actions the optimizer has played in the past. 

To analyze this algorithm, consider the first time step $t$ (or more generally the fraction of $T$) where the optimizer does not play $B$. Until this happens, the only possible outcome in the transcript of the game is $(B,P)$, so the average CSP is $B \otimes P$. One the single time step $t$ where the optimizer first plays an action other than $B$, the learner will still play $P$. But for all time steps beyond this, the algorithm behaves exactly as $\cA_1$ (with menu given by the convex hull of $A\otimes Q$ and $B \otimes Q$). As the contribution of the single time step in which the deviation occurs goes to zero in the limit of $T$, the resulting asymptotic menu $\M(\cA_4)$ is therefore given by the convex hull $\conv(\{B \otimes P, A \otimes Q, B \otimes Q\})$.
\section{Generalized optimizer distributions}\label{sec:generalized_optimizer_distributions}
Our results can be generalized to a setting with a finitely supported distribution over optimizers. The learner plays the same game (same number of actions) independently with each optimizer and employing the same algorithm, all of whom may have different payoffs, and obtains as their payoff the weighted (by the distribution) sum of payoffs. The optimizers are $O_1, O_2 \cdots O_k$, each with their own payoff function $u_{O_i}$; and with probability $\alpha_i$ of occurring and the distribution is referred to as $\cO$. Formally, we define the value of the learning algorithm $\cA$ against optimizer distribution $\cO$ as $V_{L}(\cA, \cO) := \sum_i \alpha_i V_{L}(\cA, O_i)$. The notions of Pareto-domination and Pareto-optimality are similarly defined, over the set of optimizer distributions, with the new learner payoff definition.

We show that the Pareto-domination relation between algorithms is invariant across these two definitions, and therefore the algorithms that are Pareto-dominated (Pareto-optimal) are the same set in both settings. 

\begin{lemma}
$\mathcal{A}_{1}$ Pareto-dominates $\mathcal{A}_{2}$ over the space of all generalized optimizers iff $\mathcal{A}_{1}$ Pareto-dominates $\mathcal{A}_{2}$ over the space of all linear optimizers.
\end{lemma}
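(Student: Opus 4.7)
The plan is to exploit the fact that, by definition, the value $V_L(\cA,\cO)$ is an affine function of the generalized optimizer $\cO$, specifically $V_L(\cA,\cO)=\sum_i \alpha_i V_L(\cA,O_i)$, and that every linear (single) optimizer corresponds to the generalized optimizer which is a point mass. The proof then splits cleanly into two directions, neither of which requires any heavy machinery beyond this observation.

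For the forward direction, I would assume $\cA_1$ Pareto-dominates $\cA_2$ over generalized optimizers. Condition (i) of Pareto-dominance (weak inequality everywhere) immediately restricts to linear optimizers because each single $u_O$ is a point-mass generalized optimizer. For condition (ii), fix a generalized $\cO=\sum_i \alpha_i\delta_{O_i}$ with $V_L(\cA_1,\cO)>V_L(\cA_2,\cO)$. Using linearity, $\sum_i \alpha_i(V_L(\cA_1,O_i)-V_L(\cA_2,O_i))>0$, while each summand is $\geq 0$ by (i); so at least one $O_i$ must realize a strict inequality, giving the linear witness for (ii').

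For the backward direction, assume $\cA_1$ Pareto-dominates $\cA_2$ over linear optimizers. For any finitely supported $\cO=\sum_i\alpha_i\delta_{O_i}$, summing the per-optimizer inequalities $V_L(\cA_1,O_i)\geq V_L(\cA_2,O_i)$ with nonnegative weights $\alpha_i$ yields (i) at the generalized level. For (ii), let $u_O$ be any single-optimizer witness of strict inequality at the linear level and take $\cO=\delta_{u_O}$; then $V_L(\cA_1,\cO)=V_L(\cA_1,u_O)>V_L(\cA_2,u_O)=V_L(\cA_2,\cO)$.

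I do not anticipate any real obstacle: the only conceptual subtlety is the (ii)$\Rightarrow$(ii') step in the forward direction, which requires the averaging argument above to convert a strict inequality of convex combinations into a strict inequality at one of the constituent points, and this is a one-line consequence of the nonnegativity supplied by (i). No continuity, measurability, or topology arguments are needed since the distributions are assumed to be finitely supported.
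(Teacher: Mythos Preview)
Your proposal is correct and follows essentially the same approach as the paper: both directions rely on linearity of $V_L(\cA,\cO)$ in $\cO$ and the embedding of single optimizers as point-mass generalized optimizers. The only cosmetic difference is that in the forward direction you invoke nonnegativity of each summand (from the weak inequality) to extract a strict witness, while the paper uses the direct contrapositive ``if every term satisfied $\leq$ then so would the sum''; these are logically equivalent one-liners.
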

\begin{proof}
First, we will prove the forward direction: assume that $\cA_2$ is Pareto-dominated by $\cA_1$ over the space of all generalized optimizers. Then, there is at least one generalized optimizer $\cO$ for which $V_{L}(\cA_1, \cO) > V_{L}(\cA_2, \cO)$. We therefore have:

\begin{align*}
V_{L}(\cA_1, \cO) > V_{L}(\cA_2, \cO) \\
\Rightarrow \sum_{i=1}^{k}V_{L}(\cA_1, \mu_{\cO_i})\alpha_{i} > \sum_{i=1}^{k}V_{L}(\cA_2, \mu_{\cO_i})\alpha_i\\
\Rightarrow \exists\, i\text{ s.t. }V_{L}(\cA_1, \mu_{\cO_i}) \alpha_{i} > V_{L}(\cA_2, \mu_{\cO_i})\alpha_i \\
\Rightarrow \exists\, i \text{ s.t. }V_{L}(\cA_1, \mu_{\cO_i}) > V_{L}(\cA_2, \mu_{\cO_i}).
\end{align*}

Thus, there is at least one linear optimizer for which $V_{L}(\cA_1, \mu_{\cO_i}) > V_{L}(\cA_2, \mu_{\cO_i})$. Next, we need that, for all linear optimizers, $V_{L}(\cA_1, \mu_{\cO_i}) \geq V_{L}(\cA_2, \mu_{\cO_i})$. Assume for contradiction that this is not the case. Then, there exists some $i$ such that $V_{L}(\cA_1, \mu_{\cO_i}) < V_{L}(\cA_2, \mu_{\cO_i})$. But this is a contradiction, as $\cO_i$ is also a generalized optimizer. Therefore, there is at least one linear optimizer $\cO$ for which $V_{L}(\cA_1, \mu_i) > V_{L}(\cA_2, \mu_i)$, and for all for all linear optimizers, $V_{L}(\cA_1, \mu_{i}) \geq V_{L}(\cA_2, \mu_{i})$. Thus, $\cA_1$ Pareto-dominates $\cA_2$ over the space of all linear optimizers. 

For the reverse direction, assume $\cA_2$ is Pareto-dominated in the original single optimizer setting by $\cA_1$. Thus, there exists an optimizer $O$ for which $V_{L}(\cA_1, O) > V_{L}(\cA_2, O)$ - thinking of a distribution $\cO$ with all the mass on this single optimizer, we see that $V_{L}(\cA_1, \cO) > V_{L}(\cA_2,\cO)$. Assume for contradiction there there is some optimizer distribution $\cO = \{O_i, \alpha_i \}_{i=1}^k$ such that $V_{L}(\cA_1, \cO) < V_{L}(\cA_2,\cO)$. We have already seen that this implies an optimizer $O_i$ such that $V_{L}(\cA_1, O_i) < V_{L}(\cA_2,O_i)$, which results in a contradiction. 
\end{proof}

\section{Pareto-dominance at a point implies Pareto-dominance over a positive measure set}
\label{app:pareto-dominance-eq}

In this section, we prove that in our definition of Pareto-domination, it does not matter whether we require one menu to strictly outperform the other menu for only a single choice of optimizer payoffs $u_O$ or a positive measure set of optimizer payoffs -- both notions result in the same partial ordering across menus.

Intuitively, we can expect this to be the case since if $V_{L}(\cM_1, u_O) > V_{L}(\cM_2, u_O)$ and both functions are continuous in $u_O$, then there must be an open neighborhood of $u_O$ where this continues to hold. Unfortunately, $V_L$ is in general not continuous (e.g., in the case that $\cM$ is a polytope, it is a piece-wise constant function). However, it is still nicely enough behaved that this same intuition continues to hold.

We begin by proving some convex analysis preliminaries. In the lemmas that follow, we say $x$ is an \emph{extreme point} of a (bounded, closed) convex set $S$ if $x$ cannot be written as the convex combination of other points in $S$. We say $x$ is an extreme point of $S$ in the direction $u$ if $x$ is a maximizer of the linear function $f(s): S \rightarrow \mathbb{R} = \langle u, s\rangle$; similarly, we say $x$ is the \emph{unique} extreme point of $S$ in the direction $u$ if it is the unique such maximizer. 

\begin{lemma}\label{lem:pareto-eq-3}
Let $x$ be a (not necessarily unique) extreme point of the bounded convex set $S$ in the direction $u$. Then, for any $\eps > 0$, there exists a $u'$ with $|u - u'| < \eps$ where $x$ is the unique extreme point of $S$ in the direction $u'$.
\end{lemma}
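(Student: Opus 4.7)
The plan is to perturb $u$ by a small multiple of a direction $w$ that already exposes $x$ uniquely in $S$; the unique-exposure property will then transfer to the perturbed direction by a short summation argument.

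First I would produce such a direction $w$. The key observation is that since $x$ is an extreme point of the bounded closed convex set $S$, the normal cone $N_x = \{w : x \in \argmax_{s \in S} \langle w, s\rangle\}$ contains some direction for which $\{x\} = \argmax_{s\in S}\langle w, s\rangle$. In the polytopal cases relevant here, $N_x$ is in fact a full-dimensional polyhedral cone whose relative interior consists of exactly such directions, so one can take $w$ to be any unit vector in $\mathrm{int}(N_x)$. (For a general closed convex body an extreme point need not be exposed, so this step is where hypotheses would need to be strengthened; but this subtlety does not arise in the applications in this paper.)

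Next I would set $u' = u + (\eps/2)\, w$, so that $\|u' - u\| = \eps/2 < \eps$, and verify that $x$ is the unique maximizer of $\langle u', \cdot\rangle$ over $S$. For any $s \in S$ with $s \neq x$,
\[
\langle u', x\rangle - \langle u', s\rangle \;=\; \langle u, x-s\rangle \;+\; \tfrac{\eps}{2}\,\langle w, x-s\rangle,
\]
where $\langle u, x-s\rangle \geq 0$ because $x$ maximizes $\langle u, \cdot\rangle$ on $S$ by hypothesis, and $\langle w, x-s\rangle > 0$ because $w$ uniquely exposes $x$. Summing a nonnegative and a strictly positive number, the right-hand side is strictly positive, so $x$ strictly beats $s$ in direction $u'$.

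The main obstacle is thus the first step, producing the exposing direction $w$; the rest is a one-line perturbation argument that works for any positive scaling. The only role played by $\eps$ is to bound how large a multiple of $w$ we may add, so in fact any $\delta \in (0, \eps/\|w\|)$ would work in place of $\eps/2$.
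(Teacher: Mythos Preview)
Your approach is essentially identical to the paper's: find a direction $u_0$ (your $w$) in which $x$ is the \emph{unique} maximizer, then take $u' = u + \eps u_0$ and observe that the two contributions to $\langle u', x-s\rangle$ are respectively nonnegative and strictly positive. The paper compresses your verification into a single sentence, but the content is the same.

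One remark: you are right to flag the gap about exposed versus extreme points. The paper simply asserts that ``every extreme point of a convex subset of Euclidean space is the unique maximizer of at least one linear functional,'' which, as you note, is false for arbitrary closed convex bodies. The lemma as stated only says $x$ is ``an extreme point in the direction $u$,'' which in the paper's terminology merely means $x$ maximizes $\langle u,\cdot\rangle$; for the conclusion to hold one must additionally know $x$ is an extreme point in the standard sense (and exposed). In the places the lemma is invoked this is fine, since one can always select an extreme point of the relevant face, and the menus under consideration are polytopes. So your caveat is well-placed, and otherwise your argument matches the paper's.
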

\begin{proof}
Every extreme point of a convex subset of Euclidean space is the unique maximizer of at least one linear functional; assume $x$ is the unique extreme point in the direction $u_0$. Then if we take $u' = u + \eps u_0$, $x$ is also the unique extreme point in the direction $u'$. 
\end{proof}

\begin{lemma}\label{lem:pareto-eq-1}
Assume $x$ is the unique extreme point of the bounded convex set $S$ in the direction $u$. Then for all $\eps > 0$, there exists a $\delta > 0$ s.t.

$$\diam\left(\{s \in S \mid \langle u, s \rangle \geq \langle u, x \rangle - \delta \}\right) \leq \eps.$$
\end{lemma}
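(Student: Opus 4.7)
The plan is to proceed by contradiction, leveraging compactness of $S$ (bounded and closed in Euclidean space, hence compact) together with the fact that the uniqueness of $x$ as a maximizer of $\langle u, \cdot \rangle$ rigidly constrains where near-maximizers can lie.

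First I would suppose for contradiction that the conclusion fails: there exists some $\eps > 0$ such that for every $\delta > 0$, the set $S_\delta = \{s \in S \mid \langle u, s\rangle \geq \langle u, x\rangle - \delta\}$ has diameter strictly greater than $\eps$. Specializing to $\delta_n = 1/n$, I can extract two sequences $\{s_n\}, \{t_n\} \subseteq S$ with $\langle u, s_n\rangle, \langle u, t_n\rangle \geq \langle u, x\rangle - 1/n$ and $\|s_n - t_n\| > \eps$.

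Next I would invoke compactness of $S$ to pass to convergent subsequences $s_n \to s^*$ and $t_n \to t^*$ with $s^*, t^* \in S$. Continuity of the linear functional $\langle u, \cdot\rangle$ gives $\langle u, s^*\rangle \geq \langle u, x\rangle$ and similarly for $t^*$, so both $s^*$ and $t^*$ are maximizers of $\langle u, \cdot\rangle$ on $S$. By the assumed uniqueness of $x$ as such a maximizer, $s^* = t^* = x$. But then $\|s_n - t_n\| \to \|s^* - t^*\| = 0$, contradicting $\|s_n - t_n\| > \eps$ for all $n$.

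I do not expect any step here to be a serious obstacle; the argument is essentially the standard compactness-plus-uniqueness routine, and the only mild point to verify is that $S$ is indeed compact (i.e., closed as well as bounded), which is consistent with the convention set in the preceding discussion and is in any case the natural setting since we are dealing with asymptotic menus that are already assumed closed and convex in $\Delta_{mn}$.
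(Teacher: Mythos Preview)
The proposal is correct and takes essentially the same approach as the paper: contradiction plus compactness to produce a limit point maximizing $\langle u,\cdot\rangle$ yet distinct from $x$. The only cosmetic difference is that the paper extracts a single sequence $s_\delta$ with $|s_\delta - x|\ge \eps/2$ (using $x\in S_\delta$ and the triangle inequality), whereas you carry two sequences $s_n,t_n$ and derive the contradiction from $\|s_n-t_n\|\to 0$; both are equally valid.
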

\begin{proof}
Assume to the contrary that the statement is false for some choice of $S$, $x$, $u$, and $\eps$. This means that for any $\delta > 0$, the set $S_{\delta} = \{s \in S \mid \langle u, s \rangle \geq \langle u, x \rangle - \delta \}$ has diameter at least $\eps$; in particular, there must exist a point $s_{\delta} \in S_{\delta}$ where $|s_{\delta} - x| \geq \eps/2$. Now pick any sequence $\delta_1, \delta_2, \dots$ converging to $0$, and look at the sequence of points $s_{\delta_i}$. Since they are a sequence of points inside the compact set $S$, some subsequence of this sequence converges to a limit $\overline{s}$ with the properties that $\langle u, \overline{s}\rangle = \langle u, x\rangle$ and $|\overline{s} - x| \geq \eps/2$. But this contradicts the fact that $x$ is the unique extreme point of $S$ in direction $u$. 
\end{proof}

\begin{lemma}\label{lem:pareto-eq-2}
Assume $x$ is the unique extreme point of the bounded convex set $S$ in the direction $u$. Then for any open neighborhood $X$ of $x$ there exists an open neighborhood $U$ of $u$ s.t. for any $u' \in U$, all extreme points of $S$ in the direction $u'$ lie in $X$.
\end{lemma}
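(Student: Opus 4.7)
The plan is to leverage Lemma \ref{lem:pareto-eq-1} (which controls the diameter of near-maximal slabs of $S$ in the direction $u$) together with a standard perturbation estimate to show that any maximizer of $\langle u', \cdot \rangle$ for $u'$ close to $u$ must lie inside a near-maximal slab in the direction $u$. Concretely, I would first pick $\eps > 0$ small enough that the open Euclidean ball $B(x, \eps)$ is contained in $X$ (such an $\eps$ exists since $X$ is open and contains $x$). Then by Lemma \ref{lem:pareto-eq-1}, there is a $\delta > 0$ such that the slab $S_{\delta} = \{s \in S \mid \langle u, s\rangle \geq \langle u, x\rangle - \delta\}$ has diameter at most $\eps$; since $x \in S_{\delta}$, this forces $S_{\delta} \subseteq B(x, \eps) \subseteq X$.

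Next I would define the open neighborhood $U$ of $u$ to be the open ball $\{u' : \|u' - u\| < \delta / (\diam(S) + 1)\}$ (with $\diam(S) < \infty$ since $S$ is bounded). Given any $u' \in U$ and any extreme point $x'$ of $S$ in the direction $u'$, I would use the defining inequality $\langle u', x'\rangle \geq \langle u', x\rangle$ to estimate
\begin{equation*}
\langle u, x'\rangle - \langle u, x\rangle \;=\; \langle u', x'\rangle - \langle u', x\rangle + \langle u - u', x' - x\rangle \;\geq\; \langle u - u', x' - x\rangle \;\geq\; -\|u'-u\|\cdot \diam(S) \;>\; -\delta.
\end{equation*}
Hence $x' \in S_{\delta}$, and therefore $x' \in X$, which is precisely the conclusion we want.

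There is no real obstacle here: the work of controlling the geometry of near-maximal slabs has already been done in Lemma \ref{lem:pareto-eq-1}, and what remains is a textbook Berge-style upper-hemicontinuity estimate comparing inner products against $u$ and $u'$. The only minor care needed is to ensure $\diam(S)$ is finite (which follows from boundedness of $S$) and to handle the degenerate case $\diam(S) = 0$ (in which case $S = \{x\}$ and any $U$ works). If desired, one could phrase the final step as an application of the maximum theorem to the constant correspondence $u' \mapsto S$ and the continuous payoff $(u', s) \mapsto \langle u', s\rangle$, but the direct computation above is shorter and self-contained.
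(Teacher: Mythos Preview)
The proposal is correct and follows essentially the same approach as the paper: both invoke Lemma~\ref{lem:pareto-eq-1} to get a slab $S_\delta$ of small diameter, then use a Cauchy--Schwarz perturbation estimate to show any $u'$-maximizer lies in $S_\delta$. The only cosmetic difference is that you bound $\langle u-u', x'-x\rangle$ via $\diam(S)$, whereas the paper bounds $\langle \nu, x'\rangle$ and $\langle \nu, x\rangle$ separately via $B = \max_{s\in S}|s|$; your version is arguably a bit cleaner.
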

\begin{proof}
Since $X$ is an open neighborhood of $x$, $X$ contains a ball of radius $\eps$ centered on $x$ (for some $\eps > 0$). By Lemma \ref{lem:pareto-eq-1}, there exists a $\delta > 0$ such that $\diam\left(\{s \in S \mid \langle u, s \rangle \geq \langle u, x \rangle - \delta \}\right) \leq \eps$. We will pick $U$ to be the open ball of radius $r = \delta/3B$ around $u$, where $B = \max_{s \in S} |s|$ is the maximum norm of any element of $S$.

To see why this works, consider any $u' \in U$; we can write $u' = u + \nu$ where $|\nu| \leq r$. Assume that $x'$ is an extreme point of $S$ in the direction $u'$ and $|x - x'| > \eps$. But now, note that:

$$\langle u', x'\rangle = \langle u, x'\rangle + \langle \nu, x'\rangle < \langle u, x \rangle - \delta + rB,$$

\noindent
where this inequality follows since $|x' - x| > \eps$, so $x'$ cannot lie within $S_{\delta} = \{s \in S \mid \langle u, s \rangle \geq \langle u, x \rangle - \delta \}$, and therefore $\langle u, x'\rangle < \langle u, x\rangle - \delta$ (also $|\nu| \leq r$, $|x'| \leq B$). Similarly, note that:

$$\langle u', x\rangle = \langle u, x \rangle + \langle \nu, x \rangle \geq \langle u, x \rangle - rB.$$

Since $2rB < \delta$, we have that $\langle u', x \rangle > \langle u', x'\rangle$. But this means that $x'$ could not have been an extreme point of $S$ in direction $u'$, as desired.
\end{proof}

The above lemmas connect with our analysis of asymptotic menus since if $\cM$ has a unique extreme point $\csp$ in the direction $u_O$, then $V_{L}(\cM, u_O) = u_L(\csp)$. We use this to prove the following two sublemmas. In the first sublemma, we show that we can find an open neighborhood around $u_O$ of optimizer payoffs $u$ such that $V_{L}(\cM, u)$ is never too much larger than $V_L(\cM, u_O)$ (this we can actually do directly, and will not require any of the lemmas above). 

\begin{lemma}\label{lem:pareto-equiv-menu-1}
Let $\cM$ be an asymptotic menu and $u_O$ an optimizer payoff where $V_{L}(\cM, u_O) = v$. Then, for any $\eps > 0$, there exists an open neighborhood of optimizer payoffs $U$ around $u_O$ such that $V_{L}(\cM, u'_O) \leq v + \eps$ for any $u'_O \in U$.
\end{lemma}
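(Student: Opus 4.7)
The plan is to argue by contradiction, leveraging the upper hemicontinuity of the argmax correspondence $u \mapsto S(u) := \argmax_{\csp \in \cM} u(\csp)$ that holds whenever $\cM$ is compact. Suppose to the contrary that for some fixed $\eps > 0$, there is a sequence of optimizer payoffs $u_n \to u_O$ with $V_{L}(\cM, u_n) > v + \eps$ for all $n$. For each $n$, I extract a witness $\csp_n \in S(u_n)$ with $u_L(\csp_n) > v + \eps$; such a witness exists because $\cM$ is compact (so $S(u_n)$ is nonempty and compact) and $u_L$ is continuous, making the max defining $V_L$ attained.

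Next, I would apply compactness of $\cM \subseteq \Delta_{mn}$ to extract a subsequence (not renamed) along which $\csp_n \to \csp^* \in \cM$. Continuity of $u_L$ passes the inequality to the limit as $u_L(\csp^*) \geq v + \eps$. Simultaneously, the optimality condition $u_n(\csp_n) \geq u_n(\csp)$ for every $\csp \in \cM$ passes to the limit via joint continuity of the bilinear pairing $(u, \csp) \mapsto u(\csp)$, giving $u_O(\csp^*) \geq u_O(\csp)$ for all $\csp \in \cM$, i.e.\ $\csp^* \in S(u_O)$. Then by the definition $v = \max_{\csp \in S(u_O)} u_L(\csp)$, we must have $u_L(\csp^*) \leq v$, which contradicts $u_L(\csp^*) \geq v + \eps$.

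The argument uses only compactness of $\cM$, continuity of the payoff functions, and the sequential characterization of neighborhoods in finite-dimensional space, so I do not anticipate a serious obstacle. The point that requires the most care is the direction of the bound: because $V_L$ maximizes $u_L$ over the optimizer's argmax (learner-favorable tie-breaking), the limit argument automatically delivers only the one-sided bound $V_L(\cM, u_O') \leq v + \eps$ stated in the lemma. A symmetric lower bound on $V_L(\cM, u_O')$ need not hold, since a small perturbation of $u_O$ can strictly remove some of the learner-favorable tied optima at $u_O$, which is exactly why the next lemma in the paper (handling the lower-bound direction) would require a separate, more delicate argument rather than another application of this compactness lemma.
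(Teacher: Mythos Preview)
Your proposal is correct and follows essentially the same approach as the paper's proof: both argue by contradiction, extract a sequence of optimizer payoffs $u_n \to u_O$ with witnesses $\csp_n \in \argmax_{\csp \in \cM} u_n(\csp)$ satisfying $u_L(\csp_n) > v + \eps$, pass to a convergent subsequence via compactness of $\cM$, and verify that the limit point is a $u_O$-maximizer with $u_L \geq v + \eps$, contradicting the definition of $v$. Your version is slightly more explicit about why the optimality condition survives the limit (joint continuity of the pairing) and about why only the one-sided bound is available, but the substance is identical.
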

\begin{proof}
Assume to the contrary that any open neighborhood $U$ around $u_O$ contains a payoff $u'_O$ with $V_{L}(\cM, u'_O) > v + \eps$. Let $U_1, U_2, \dots$ be a sequence of open balls around $u_O$ with decreasing radius converging to $0$, and for each $U_i$ let $u_i$ be the payoff such that $V_{L}(\cM, u_i) > v + \eps$, and let $\csp_i$ be the corresponding maximizing point in the menu $\cM$ (i.e., $u_L(\csp_i) > v + \eps$). 

Since the $\csp_i$ all belong to the compact convex set $\cM$, there exists a subsequence converging to some $\overline{\csp} \in \cM$. This limit point must have the property that $u_L(\overline{\csp}) \geq v + \eps$. But since $\overline{\csp}$ is also the limit of extreme points of $\cM$ in directions converging to $u_O$, it must be an extreme point of $u_O$. But this contradicts the fact that $V_{L}(\cM, u_O) = v$ (since $V_{L}(\cM, u_O)$ is the maximum of $u_L(\csp)$ over all maximizers $\csp$ of $u_O$, of which $\overline{\csp}$ is one). The conclusion follows.
\end{proof}

In the second lemma, we show that we can find an open set of optimizer payoffs $u$ where $V_L(\cM, u)$ is never too much larger than $V_L(\cM, u_O)$. Note that since we define $V_L(\cM, u_O)$ to be the \emph{largest} value of $u_L(\csp)$ over all $\csp \in \cM$ maximizing $u_O(\csp)$, this is not quite symmetric with Lemma \ref{lem:pareto-equiv-menu-1}. In fact, we will prove a weaker condition (instead of an open neighborhood of $u_O$, a sufficiently nearby open set of payoffs), and it is here that we will use Lemmas \ref{lem:pareto-eq-3} and \ref{lem:pareto-eq-2}.

\begin{lemma}\label{lem:pareto-equiv-menu-2}
Let $\cM$ be an asymptotic menu and $u_O$ an optimizer payoff where $V_{L}(\cM, u_O) = v$. Then, for any $\eps > 0$, there exists an open set of optimizer payoffs $U$ such that $V_{L}(\cM, u'_O) \geq v - \eps$ for any $u' \in U$, and such that there exists a $u' \in U$ with $|u' - u_O| < \eps$.
\end{lemma}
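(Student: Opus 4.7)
The plan is to use the tie-breaking rule to our advantage: we cannot expect $V_L(\cM, \cdot)$ to be lower semi-continuous on a neighborhood of $u_O$ itself (the optimizer could break ties toward a different $u_L$-maximizer under a small perturbation), but after an arbitrarily small perturbation of $u_O$ we can isolate the ``good'' tie-breaker as the unique extreme point, and then invoke Lemma~\ref{lem:pareto-eq-2}.

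More concretely, I would first let $F = \argmax_{\csp \in \cM} u_O(\csp)$ and $F' = \argmax_{\csp \in F} u_L(\csp)$, so that $u_L(\csp) = v$ for every $\csp \in F'$. Picking $\csp^*$ to be any extreme point of $F'$, standard convex-analytic facts (faces of faces of $\cM$ are faces of $\cM$) give that $\csp^*$ is an extreme point of $\cM$, and it is in particular an extreme point of $\cM$ in the direction $u_O$. Next, apply Lemma~\ref{lem:pareto-eq-3}: for any chosen tolerance, there is a perturbed payoff $u'_O$, arbitrarily close to $u_O$, for which $\csp^*$ becomes the \emph{unique} extreme point of $\cM$ in direction $u'_O$.

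Once $u'_O$ is unique-maximizing, Lemma~\ref{lem:pareto-eq-2} yields an open neighborhood $U$ of $u'_O$ such that for every $\tilde u \in U$, every extreme point of $\cM$ in the direction $\tilde u$ lies inside a prescribed open neighborhood $X$ of $\csp^*$. Since $u_L$ is linear (hence Lipschitz) on the bounded set $\cM$, we can choose $X$ small enough that $|u_L(\csp) - u_L(\csp^*)| < \eps$ throughout $X$. Since $V_L(\cM, \tilde u)$ equals the maximum of $u_L$ over the extreme points of $\cM$ in direction $\tilde u$ (a linear functional on a face is maximized at an extreme point), this gives $V_L(\cM, \tilde u) \ge u_L(\csp^*) - \eps = v - \eps$ for every $\tilde u \in U$. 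Finally, by choosing the perturbation in the Lemma~\ref{lem:pareto-eq-3} step of size less than $\eps$, we ensure $u'_O \in U$ itself satisfies $|u'_O - u_O| < \eps$, which supplies the required ``nearby witness'' point in $U$.

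The only mild subtlety I anticipate is making sure the extreme point $\csp^*$ of $F'$ really is extreme in $\cM$ (handled by the fact that $F' \subseteq F \subseteq \cM$ are successive exposed faces) and that ``$V_L = \max u_L$ over extreme points in direction $\tilde u$'' is valid even when there are ties (this is fine because the face of maximizers of $\tilde u$ is the convex hull of those extreme points, and a linear function is maximized at a vertex). Both are routine, so I expect the bulk of the argument to reduce cleanly to chaining Lemma~\ref{lem:pareto-eq-3} and Lemma~\ref{lem:pareto-eq-2} as above.
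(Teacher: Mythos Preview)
Your proposal is correct and follows essentially the same route as the paper: pick an extreme point $\csp^*$ of $\cM$ that maximizes $u_O$ and attains $u_L = v$, apply Lemma~\ref{lem:pareto-eq-3} to make it the unique maximizer at a nearby $u'_O$, then apply Lemma~\ref{lem:pareto-eq-2} to obtain the open set $U$. If anything, you are more careful than the paper in justifying why such a $\csp^*$ is an extreme point of $\cM$ (via the faces-of-faces argument) and why the conclusion $V_L \ge v - \eps$ follows on $U$.
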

\begin{proof}
Since $V_{L}(\cM, u_O) = v$, there must be a $\csp \in \cM$ which is an extreme point in the direction $u_O$, and for which $u_L(\csp) = v$. By Lemma \ref{lem:pareto-eq-3}, this means there exists a payoff $u'_O$ for which $|u'_O - u_O| \leq \eps$ and for which $\csp$ is the unique optimizer of $u'_O$. Then, by Lemma \ref{lem:pareto-eq-2}, there exists an open neighborhood $U$ of $u'_O$ such that for any payoff $u \in U$, the optimizer in $\cM$ of $u$ has distance at most $\eps$ from $\csp$ (we pick $X$ to be the open ball of radius $\eps$ around $\csp$ in this application of Lemma \ref{lem:pareto-eq-2}). This set $U$ has our desired properties. 
\end{proof}

With Lemmas \ref{lem:pareto-equiv-menu-1} and \ref{lem:pareto-equiv-menu-2}, we can now prove the main result of this section.

\begin{theorem}\label{thm:pareto-equiv}
Let $\cM_1$ and $\cM_2$ be two menus such that there exists a $u_O$ where $V_{L}(\cM_1, u_O) > V_{L}(\cM_2, u_O)$. Then there exists a positive measure set $U$ of optimizer payoffs where $V_{L}(\cM_1, u) > V_{L}(\cM_2, u)$ for any $u \in U$.
\end{theorem}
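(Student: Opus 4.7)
The plan is to combine the two preparatory lemmas (Lemma \ref{lem:pareto-equiv-menu-1} and Lemma \ref{lem:pareto-equiv-menu-2}) to exhibit a nonempty open intersection of payoffs on which $V_L(\cM_1, \cdot)$ stays at least $v_1 - \eta$ while $V_L(\cM_2, \cdot)$ stays at most $v_2 + \eta$, where $v_i = V_L(\cM_i, u_O)$ and $\eta$ is a small fraction of the gap $v_1 - v_2$. Any nonempty open subset of $\mathbb{R}^{mn}$ has positive Lebesgue measure, so producing such an intersection suffices.

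Concretely, set $v_i = V_L(\cM_i, u_O)$ and $\eta = (v_1 - v_2)/3 > 0$. First I would apply Lemma \ref{lem:pareto-equiv-menu-1} to $\cM_2$ at $u_O$ with tolerance $\eta$: this yields an open neighborhood $U_2$ of $u_O$ (say, containing an open ball of radius $\delta > 0$) such that $V_L(\cM_2, u) \le v_2 + \eta$ for every $u \in U_2$. Then I would apply Lemma \ref{lem:pareto-equiv-menu-2} to $\cM_1$ at $u_O$ with tolerance $\eps = \min(\eta, \delta/2)$: this yields an open set $U_1$ with $V_L(\cM_1, u) \ge v_1 - \eta$ for every $u \in U_1$, together with some $u_1 \in U_1$ satisfying $|u_1 - u_O| < \eps \le \delta/2 < \delta$. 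In particular $u_1 \in U_2$, so $U_1 \cap U_2$ is a nonempty open set. For every $u$ in this intersection,
\[
V_L(\cM_1, u) \ge v_1 - \eta > v_2 + \eta \ge V_L(\cM_2, u),
\]
since $v_1 - \eta - (v_2 + \eta) = v_1 - v_2 - 2\eta = \eta > 0$. Taking $U = U_1 \cap U_2$ completes the argument.

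The real obstacle here is not in this assembly step but in the two preparatory lemmas, which the excerpt has already done the work to establish: $V_L(\cM, \cdot)$ is in general not continuous, so one needs genuinely separate arguments for the two one-sided bounds. Lemma \ref{lem:pareto-equiv-menu-1} gives upper semicontinuity at $u_O$ directly via a compactness argument. Lemma \ref{lem:pareto-equiv-menu-2} is more delicate because lower semicontinuity can fail at points $u_O$ where the maximizer of $u_O$ over $\cM$ is not unique; the excerpt handles this by perturbing $u_O$ slightly (via Lemma \ref{lem:pareto-eq-3}) to isolate a single extreme point, then invoking the stability result Lemma \ref{lem:pareto-eq-2}. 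Given these tools, the proof of Theorem \ref{thm:pareto-equiv} is the short parameter-matching argument above.
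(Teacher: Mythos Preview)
Your proposal is correct and follows essentially the same argument as the paper: apply Lemma~\ref{lem:pareto-equiv-menu-1} to $\cM_2$ to obtain an open ball around $u_O$, then apply Lemma~\ref{lem:pareto-equiv-menu-2} to $\cM_1$ with a tolerance small enough that the resulting open set meets that ball, and conclude on the intersection. The only differences are cosmetic (you use $\eta=(v_1-v_2)/3$ and $\delta/2$ where the paper uses $\eps<(v_1-v_2)/10$ and the ball radius $r$).
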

\begin{proof}
Pick an $\eps < (V_{L}(\cM_1, u_O) - V_{L}(\cM_2, u_O))/10$. By Lemma \ref{lem:pareto-equiv-menu-1}, there exists an open neighborhood $U_2$ of optimizer payoffs around $u_O$ such that $V_{L}(\cM_2, u) \leq V_{L}(\cM_2, u_O) + \eps$ for all $u \in U_2$. This open set $U_2$ must contain a ball of some radius $r$; set $\eps' = \min(r, \eps)$. 

Then, by Lemma \ref{lem:pareto-equiv-menu-2}, there exists an open set $U_1$ of optimizer payoffs such that $V_{L}(\cM_1, u) \geq V_{L}(\cM_1, u_O) - \eps'$ for all $u \in U_1$ and such that there exists a $u \in U_1$ such that $|u - u_0| < \eps'$.

This means that the sets $U_1$ and $U_2$ have non-empty intersection. Let $U = U_1 \cap U_2$; since $U$ is a non-empty open set, it has positive measure. But now note that for any $u \in U$, $V_{L}(\cM_1, u) \geq V_{L}(\cM_1, u_O) - \eps' > V_{L}(\cM_2, u_O) + \eps \geq v_{L}(\cM_2, u)$, as desired.
\end{proof}

\section{Inconsistent learning algorithms}
\label{sec:inconsistent}
In the majority of this paper, we restrict our attentions to consistent learning algorithms -- learning algorithms $\cA$ with a unique asymptotic menu $\cM(\cA)$ such that if you play against $\cA$ for a sufficiently large time horizon, you can implement any CSP $\csp \in \cM(\cA)$ arbitrarily closely.

Not all learning algorithms are consistent. A simple example is an algorithm $\cA$ that always plays action $1$ on odd rounds and always plays action $2$ on even rounds. Nonetheless, it is still possible to extend much of our analysis to the setting of inconsistent learning algorithms. The main caveat is that instead of each learning algorithm being associated with a single asymptotic menu, it will be associated with a collection of asymptotic menus, each corresponding to a limit point of the sequence of menus for each time horizon $T$.

\begin{theorem}\label{thm:nonconsistent}
For any (not necessarily consistent) learning algorithm $\cA$, the sequence $\M(A^1), \M(A^2), \dots$ has at least one limit point under the Hausdorff metric. Moreover, all such limit points are valid asymptotic menus.
\end{theorem}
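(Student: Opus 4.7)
The plan is to handle the two assertions separately, using compactness for existence and Theorem~\ref{thm:characterization} for the characterization of limit points.

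\textbf{Existence of a limit point.} Every menu $\M(A^T)$ is a nonempty closed (in fact convex) subset of the compact set $\Delta_{mn}$. By Blaschke's selection theorem, the collection of nonempty closed subsets of a compact metric space is itself compact under the Hausdorff metric. Hence the sequence $\M(A^1), \M(A^2), \dots$ admits at least one Hausdorff-convergent subsequence, which furnishes a limit point $\M^{*} \subseteq \Delta_{mn}$.

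\textbf{Limit points are valid asymptotic menus.} By Theorem~\ref{thm:characterization}, it suffices to verify that $\M^{*}$ is closed, convex, and satisfies the response-satisfiability condition: for every $x \in \Delta_{m}$, there exists $y \in \Delta_{n}$ such that $x \otimes y \in \M^{*}$. Closedness of $\M^{*}$ is automatic from Hausdorff convergence, and convexity follows from the fact that the Hausdorff limit of convex sets in a Euclidean space is convex (convex combinations of nearby points in the approximating sets produce nearby points in the limit). For the response-satisfiability condition, fix any $x \in \Delta_{m}$ and let $\M(A^{T_k}) \to \M^{*}$ be the subsequence realizing the limit. For each $k$, consider the transcript in which the optimizer plays the fixed mixed strategy $x$ in every round; the resulting average CSP against $A^{T_k}$ is of the form $x \otimes y_k$ for some $y_k \in \Delta_{n}$, and therefore lies in $\M(A^{T_k})$. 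Since $\Delta_{n}$ is compact, a further subsequence of $\{y_k\}$ converges to some $y^{*} \in \Delta_{n}$, giving $x \otimes y_k \to x \otimes y^{*}$. Hausdorff convergence then forces $x \otimes y^{*} \in \M^{*}$ (as limits of points in sets $\M(A^{T_k})$ that Hausdorff-converge to a closed set $\M^{*}$ must lie in $\M^{*}$). Thus $\M^{*}$ satisfies the hypotheses of Theorem~\ref{thm:characterization}, and so is a valid asymptotic menu.

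\textbf{Main obstacle.} The only mildly delicate step is confirming that the Hausdorff limit inherits convexity and that the pointwise limit $x \otimes y^{*}$ genuinely lies inside $\M^{*}$; both are standard consequences of Hausdorff convergence of closed subsets of a compact Euclidean set, and no construction of an explicit algorithm $\cA^{*}$ with $\M(\cA^{*}) = \M^{*}$ is needed once Theorem~\ref{thm:characterization} is invoked. In particular, the characterization theorem lets us sidestep having to stitch together the finite-horizon algorithms $\{A^{T_k}\}$ into a single consistent learning algorithm with asymptotic menu exactly $\M^{*}$.
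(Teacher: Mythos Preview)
Your proof is correct, and the existence argument (via compactness of the hyperspace) is essentially the same as the paper's. The second part, however, follows a genuinely different route from the paper. The paper proves that each Hausdorff limit $\M^{*}$ is a valid asymptotic menu by \emph{explicitly constructing} a consistent learning algorithm $\cA^{*}$ with $\M(\cA^{*}) = \M^{*}$: for time horizon $T$, it greedily picks the largest $T_i \leq T$ from the convergent subsequence, runs $A^{T_i}$ for those many rounds, and recurses on the remainder; the analysis then bounds the Hausdorff distance between $\M((\cA^*)^T)$ and $\M^*$ directly. You instead bypass any construction by invoking Theorem~\ref{thm:characterization}: you verify closedness, convexity, and response-satisfiability of $\M^{*}$ (the last by noting that the ``always play $x$'' transcript produces a point $x \otimes y_k \in \M(A^{T_k})$ and taking limits). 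Your approach is shorter and cleaner once Theorem~\ref{thm:characterization} is in hand; the paper's approach is more self-contained and constructive, yielding an explicit algorithm realizing $\M^{*}$, which may be of independent interest but is not needed for the bare statement of the theorem.
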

\begin{proof}
Since the space of convex subsets of a bounded set is sequentially compact under the Hausdorff metric (see \cite{henrikson1999completeness}), it follows that the sequence $\M(A^1), \M(A^2), \dots$ has at least one limit point.

To prove the second part: consider a limit point $\M$ of the above sequence. This means that there is a subsequence $\M(A^{T_1}), \M(A^{T_2}), \dots$ of the above sequence that converges to $\M$ under the Hausdorff metric. We will construct a learning algorithm $\cA^*$ whose asymptotic menu is $\cM$. 

The algorithm $\cA^*$ will work as follows. To run $\cA^*$ for a time horizon of length $T$, choose the largest $i$ such that $T_i \leq T$. Run $A^{T_i}$ for the first $T_i$ rounds of the game. Then recursively run $\cA^*$ for the remaining $T - T_i$ rounds (i.e., finding the next largest $i'$ s.t. $T_{i'} \leq T - T_i$, and so on). 

We will now prove that sequence of menus corresponding to $\cA^*$ converges to $\M$ under the Hausdorff metric. Fix an $\eps > 0$. Since the subsequence $M(A^{T_i})$ converges to $\cM$, there must be a threshold $n_\eps$ such that for any $n \geq n_{\eps}$, the Hausdorff distance $d_H(\M(A^{T_{n}}), \M) < \eps/100$. On the other hand, in any execution of $A^*$, at most $T_{n+1}$ rounds are spent playing one of the first $n$ sub-algorithms $A^{T_1}, A^{T_2}, \dots, A^{T_n}$. Therefore, if we pick $T \geq 100 T_{n_{\eps}}/\eps$, we can write any CSP $\csp$ reached when playing against $(A^*)^T$ in the form:

$$\csp = \frac{\eps}{100} \csp_{\mathrm{pre}} + \left(1 - \frac{\eps}{100}\right) \csp_{\mathrm{post}},$$

\noindent
where $\csp_{\mathrm{pre}}$ is the CSP induced by the first $\eps/100$ fraction of rounds, and $\csp_{\mathrm{post}}$ is the CSP induced by the remainder of the rounds. Since $\csp_{\mathrm{post}}$ is a convex combination of CSPs belonging to the sets $\M(A^{T_{n}})$ for $n \geq n_{\eps}$, we must have $d(\csp_{\mathrm{post}}, \M) \leq \eps/100$. Since the maximum norm of an element in $\Delta_{mn}$ is bounded by $1$, this implies (via the triangle inequality) that $d(\csp, \M) < \frac{\eps}{100} + \frac{\eps}{100} < \eps$. It follows that every point in $\M((A^*)^{T})$ is within $\eps$ of an element of $\cM$. But also, for any target $\csp^* \in \M$, we can pick $\csp_{\mathrm{post}}$ to be the convex combination of the closest element to $\csp^*$ in each of the relevant menus $\M(A^{T_n})$ -- this also guarantees $d(\csp_{\mathrm{post}}, \csp^*) \leq \eps/100$ and thus that $d(\csp, \csp^*) < \eps$ and so every point in $\cM$ is within $\eps$ of an element of $\cM((A^*)^T)$. It follows that $\cM(A^*) = \cM$. 
\end{proof}

The fact that no-regret and no-swap-regret algorithms induce no-regret and no-swap-regret menus (Theorem \ref{thm:nr_nsr_containment}) extends easily to the case of inconsistent algorithms.

\begin{lemma}\label{lem:nr_nsr_containment_nonconsistent}
If a $\M$ is an asymptotic menu of the (inconsistent) no-regret learning algorithm $\cA$, then $\M(\cA) \subseteq \M_{NR}$. If a $\M$ is an asymptotic menu of the (inconsistent) no-swap-regret learning algorithm $\cA$, then $\M(\cA) \subseteq \M_{NSR}$.
\end{lemma}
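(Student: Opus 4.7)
The plan is to reduce the claim to a statement about individual CSPs by exploiting convexity of the regret function, so that the inconsistent case follows from essentially the same idea used in Theorem~\ref{thm:nr_nsr_containment}, but applied on a convergent subsequence of time horizons rather than to a single consistent limit.

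First I would observe that the external-regret functional
\[
R(\csp) \;=\; \max_{j^{*} \in [n]} \sum_{i,j} \csp_{ij}\bigl(u_L(i, j^{*}) - u_L(i, j)\bigr)
\]
is continuous and convex on $\Delta_{mn}$, as a maximum of finitely many linear functions of $\csp$, and that $\csp \in \M_{NR}$ iff $R(\csp)\le 0$. The analogous swap-regret functional $R_{\mathrm{sw}}$, obtained by maximizing instead over maps $\pi:[n]\to[n]$, has the same two properties and cuts out $\M_{NSR}$.

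Next, I would argue by contradiction. Suppose $\M$ is an asymptotic menu of the no-regret algorithm $\cA$ in the sense of Theorem~\ref{thm:nonconsistent}, and pick some $\csp^{*} \in \M\setminus\M_{NR}$, so that $R(\csp^{*}) = \alpha > 0$. By definition of asymptotic menu for inconsistent algorithms, there is a subsequence $T_1 < T_2 < \cdots$ along which $\M(A^{T_i}) \to \M$ in Hausdorff distance; so I can select $\csp_i \in \M(A^{T_i})$ with $\csp_i \to \csp^{*}$, and continuity of $R$ gives $R(\csp_i) > \alpha/2$ for all sufficiently large $i$. Since $\M(A^{T_i})$ is the convex hull of average CSPs $\tfrac{1}{T_i}\sum_{t} x_t \otimes y_t$ attainable against $A^{T_i}$, writing $\csp_i$ as such a convex combination and invoking convexity of $R$ produces an \emph{attainable} CSP $\csp_i^{\star}$ with $R(\csp_i^{\star}) \ge R(\csp_i) > \alpha/2$. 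But then the corresponding optimizer sequence against $A^{T_i}$ incurs cumulative external regret at least $(\alpha/2)T_i$; since $T_i \to \infty$, this is not $o(T)$, contradicting the no-regret guarantee of $\cA$.

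The no-swap-regret case requires no new ideas: the same argument goes through with $R_{\mathrm{sw}}$ in place of $R$ and $\M_{NSR}$ in place of $\M_{NR}$, using convexity of $R_{\mathrm{sw}}$ at the convex-hull step. The one step I expect to warrant the most care is that convex-hull reduction — it is tempting to apply continuity of $R$ directly to a vertex of $\M(A^{T_i})$, but one actually needs convexity to pass from a point of the convex hull to a concrete attainable CSP. Once this is noted, the apparent obstacle introduced by inconsistency — that one cannot simply quote Theorem~\ref{thm:nr_nsr_containment} for a single consistent limit algorithm — dissolves, because the no-regret (resp.\ no-swap-regret) hypothesis on $\cA$ applies verbatim to every horizon $T_i$ in the convergent subsequence.
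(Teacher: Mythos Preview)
Your proposal is correct and follows essentially the same approach as the paper. The paper's proof is a one-line pointer back to Theorem~\ref{thm:nr_nsr_containment} applied along the convergent subsequence; you unpack that same argument explicitly as a contradiction, and your emphasis on the convexity of $R$ (to pass from a point of the convex hull $\M(A^{T_i})$ to a genuinely attainable CSP) is exactly the content of the paper's implicit step $\M(A^T)\subseteq \M_{NR}(\varepsilon_T)$.
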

\begin{proof}
Follows identically to the proof of Theorem \ref{thm:nr_nsr_containment}, but applied to the subsequence of menus converging to $\M$. 
\end{proof}

As a result of Theorem \ref{thm:nonconsistent} and Lemma \ref{lem:nr_nsr_containment_nonconsistent}, we immediately obtain a stronger version of Theorem \ref{thm:unique_nsr_menu}: not only do all consistent no-swap-regret algorithms share the same asymptotic menu, but in fact every no-swap-regret algorithm shares the sasme asymptotic menu (and is in fact, consistent).

\begin{corollary}
\label{cor:all_nsr_algos}
Any no-swap-regret algorithm is consistent (and therefore by Theorem \ref{thm:unique_nsr_menu} has a unique asymptotic menu of $\M_{NSR}$).
\end{corollary}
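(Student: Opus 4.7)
The plan is to combine Theorem~\ref{thm:nonconsistent} with the two containment results for no-regret and no-swap-regret menus, adapted to the inconsistent setting, to show that every limit point of the menu sequence $\M(A^1), \M(A^2), \dots$ of a no-swap-regret algorithm $\cA$ must equal $\M_{NSR}$. Since the ambient space of closed convex subsets of $\Delta_{mn}$ is sequentially compact under the Hausdorff metric, a sequence with a unique limit point actually converges, which gives consistency.

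More concretely, let $\cA$ be a no-swap-regret algorithm and let $\M$ be any Hausdorff-limit point of the sequence $\M(A^1), \M(A^2), \dots$ (at least one exists by Theorem~\ref{thm:nonconsistent}). By Lemma~\ref{lem:nr_nsr_containment_nonconsistent}, we immediately get $\M \subseteq \M_{NSR}$. For the reverse inclusion, I would extend Lemma~\ref{lem:nsr_within_all_nr} to the inconsistent setting: the original proof shows that against any no-regret algorithm, an optimizer can asymptotically induce any best-response CSP of the form $x \otimes y$ with $y \in \BR_L(x)$ (using the no weakly dominated actions hypothesis to break ties correctly). The same argument, applied along the subsequence of horizons $T_{i_k}$ whose menus converge to $\M$, shows each such CSP lies in $\M$; taking the convex closure and using Lemma~\ref{lem:nsr_characterization} gives $\M_{NSR} \subseteq \M$. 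Combining the two containments, $\M = \M_{NSR}$.

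Now to conclude consistency, suppose for contradiction that $\M(A^1), \M(A^2), \dots$ does not converge to $\M_{NSR}$ under the Hausdorff metric. Then there exists an $\eps > 0$ and an infinite subsequence $\{T_k\}$ with $d_H(\M(A^{T_k}), \M_{NSR}) \geq \eps$ for every $k$. By sequential compactness of the space of closed convex subsets of the bounded set $\Delta_{mn}$ under the Hausdorff metric (see \cite{henrikson1999completeness}, as invoked in the proof of Theorem~\ref{thm:nonconsistent}), this subsequence itself has a further Hausdorff-convergent sub-subsequence, whose limit $\M'$ satisfies $d_H(\M', \M_{NSR}) \geq \eps$. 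But $\M'$ is also a limit point of the original sequence, so by the argument of the previous paragraph $\M' = \M_{NSR}$, a contradiction. Hence $\cA$ is consistent and $\M(\cA) = \M_{NSR}$, which together with Theorem~\ref{thm:unique_nsr_menu} gives the claim.

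The main obstacle is the inconsistent-algorithm version of Lemma~\ref{lem:nsr_within_all_nr}: one must verify that the construction used to realize an arbitrary CSP of the form $x \otimes y$ with $y \in \BR_L(x)$ against a no-regret learner can be carried out specifically along the subsequence of horizons $T_{i_k}$ converging to a given limit menu, rather than for all sufficiently large $T$. This is a mild adaptation (one chooses the optimizer strategy per horizon, using the no-regret guarantee and the no weakly dominated actions assumption exactly as before), but it is the only place where real care is needed.
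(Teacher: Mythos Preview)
Your proposal is correct and follows the same overall structure as the paper: show every Hausdorff limit point $\M$ of the menu sequence equals $\M_{NSR}$, then invoke sequential compactness to conclude convergence. The one difference is in how you obtain the reverse containment $\M_{NSR}\subseteq\M$. The paper uses the full strength of Theorem~\ref{thm:nonconsistent} to realize $\M$ as the asymptotic menu of some consistent algorithm $\cA^*$, and then applies Theorem~\ref{thm:unique_nsr_menu} to $\cA^*$ (implicitly treating $\cA^*$ as a no-swap-regret algorithm because $\M(\cA^*)\subseteq\M_{NSR}$). You instead adapt the argument of Lemma~\ref{lem:nsr_within_all_nr} directly along the convergent subsequence of horizons, never needing the explicit construction of $\cA^*$. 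Your route is slightly more self-contained and sidesteps the small informality of declaring $\cA^*$ to be no-swap-regret; the paper's route is more modular in that it reuses Theorem~\ref{thm:unique_nsr_menu} as a black box. The obstacle you flag---that the optimizer's inducing strategy must work specifically along the chosen subsequence---is genuine but, as you note, mild: the optimizer simply plays $x$ for all $T_{i_k}$ rounds, and the no-regret guarantee forces the time-averaged learner response toward $\BR_L(x)$ along that subsequence.
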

\begin{proof}
Let $\cA$ be a no-swap-regret algorithm. By Theorem \ref{thm:nonconsistent} and Lemma \ref{lem:nr_nsr_containment_nonconsistent}, any asymptotic menu $\cM$ of $\cA$ must be the asymptotic menu of a consistent no-swap-regret algorithm. By Theorem \ref{thm:unique_nsr_menu}, this implies $\cM = \cM_{NSR}$. This means the sequence $\cM(A^1), \cM(A^2), \dots$ has a unique limit point $\M_{NSR}$ and therefore converges, so the original algorithm $\cA$ was consistent.
\end{proof}

It is an interesting question whether standard (mean-based) no-regret learning algorithms such as multiplicative weights or FTRL are consistent learning algorithms. Although intuitively it seems like this should be the case (due to the simple form such algorithms take), it is surprisingly difficult to show this formally. In particular, given a CSP $\csp$ reached by e.g. multiplicative weights for some large finite time horizon $T$, it is not obvious how to reach a nearby CSP $\csp'$ for larger time horizons $T'$ (it is not sufficient to just ``scale up'' the sequence of adversary actions). We leave this as an interesting open question. Nonetheless, the results of this section imply that we can still say meaningful things about asymptotic menus of such algorithms (and we prove in Section \ref{sec:mb-equals-mwu} that these algorithms are in fact consistent for some payoffs $u_L$).
\section{Additional Results and Proofs from Section~\ref{sec:mb_algos_and_menus}}
\label{app:mean_based_pareto_dominated}


This appendix section continues the discussion in Section~\ref{sec:mb_algos_and_menus}, discusses additional results about $\MB$, extends the set of learner payoffs~\eqref{eq:learner_value} for which we prove the Pareto-domination of $\ftrl$ algorithms and presents the complete proof of Theorem~\ref{thm:sbm=mb}.


\subsection{Additional Results about $\MB$}
\label{app:proof_no_pos_neg_regret}


Lemma~\ref{lem:sub-mean-based} showed that $\MB$ is contained within the menu of every mean-based algorithm. Since mean-based no-regret algorithms are known to exist; we immediately obtain the following corollary (additionally using Theorem~\ref{thm:nr_nsr_containment}, which tells us that the asymptotic menu of every no-regret algorithm is contained within $\M_{NR}$).

\begin{corollary}
     $\MB \subseteq \M_{NR}$.
 \end{corollary}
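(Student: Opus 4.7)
The plan is to chain together two inclusions already established earlier in the paper, via the existence of a single algorithm that is simultaneously mean-based and no-regret. Specifically, I would invoke Lemma~\ref{lem:sub-mean-based}, which gives $\MB \subseteq \M(\cA)$ for \emph{every} mean-based learning algorithm $\cA$, and Theorem~\ref{thm:nr_nsr_containment}, which gives $\M(\cA) \subseteq \M_{NR}$ for every no-regret algorithm~$\cA$.

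The only remaining ingredient is to produce a witness algorithm $\cA$ which falls in both classes. This is immediate: Multiplicative Weights (or any FTRL instantiation with a strongly convex regularizer and an appropriately decaying learning rate) is both mean-based (as noted in the discussion following Definition~\ref{def:mean-based}) and no-regret (a classical guarantee; see \cite{cesa2006prediction}). Picking such an $\cA$, we obtain
\[
\MB \;\subseteq\; \M(\cA) \;\subseteq\; \M_{NR},
\]
which is exactly the claimed inclusion.

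There is no real obstacle here — the work has already been done by Lemma~\ref{lem:sub-mean-based} and Theorem~\ref{thm:nr_nsr_containment}, and the corollary is essentially a one-line composition. The only thing worth being explicit about is the existence of a mean-based no-regret algorithm; since such algorithms are standard and explicitly mentioned earlier in the paper, this can be cited rather than constructed.
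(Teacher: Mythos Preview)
Your proposal is correct and is essentially identical to the paper's own argument: the paper also derives the corollary by invoking Lemma~\ref{lem:sub-mean-based} and Theorem~\ref{thm:nr_nsr_containment}, chained through the existence of a mean-based no-regret algorithm. The only cosmetic difference is that you name Multiplicative Weights explicitly, whereas the paper merely asserts such algorithms exist.
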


One way to interpret this is that in the CSP $\Prof(\tau)$ associated with any valid trajectory $\tau$, the learner can never benefit by unilaterally deviating: they always do at least as well as playing a best response to the optimizer's marginal distribution. Interestingly, we can show that mean-based algorithms also have the property that they can never do much better than this -- no profile can end up with negative regret. This property is useful in generalizing a result in Section~\ref{sec:mb_algos_and_menus}.

\begin{lemma}
\label{lem:no_pos_neg_regret}
    For any valid trajectory $\tau$, the value of the learner in the CSP $\Prof(\tau)$ exactly equals the value obtained by best-responding to the optimizer's marginal distribution.
\end{lemma}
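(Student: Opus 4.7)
The plan is to prove the lemma by induction on the number of segments $k$ of the trajectory $\tau = \{(x_1,t_1,b_1),\ldots,(x_k,t_k,b_k)\}$, using both halves of the validity condition $b_i \in \BR_L(\ox_{i-1}) \cap \BR_L(\ox_i)$. Concretely, let $T_i = \sum_{j \le i} t_j$ and $U_i = \sum_{j \le i} t_j\,u_L(x_j, b_j)$ denote the cumulative time and cumulative learner utility after the first $i$ segments. The inductive claim I will maintain is
\[
U_i \;=\; T_i \cdot \max_{b \in [n]} u_L(\ox_i, b).
\]
Taking $i = k$ and dividing by $T_k$ gives exactly what the lemma asserts, since $\ox_k$ is by definition the time-weighted marginal distribution of the optimizer along $\tau$.

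The base case $i=1$ is immediate because $\ox_1 = x_1$ and $b_1 \in \BR_L(\ox_1)$, so $u_L(x_1,b_1) = \max_b u_L(\ox_1,b)$. For the inductive step, the key identity is the linearity of $u_L$ in its first argument:
\[
T_i\, u_L(\ox_i, b_i) \;=\; u_L(T_{i-1}\,\ox_{i-1} + t_i x_i,\, b_i) \;=\; T_{i-1}\,u_L(\ox_{i-1}, b_i) + t_i\,u_L(x_i,b_i).
\]
By the validity condition, $b_i \in \BR_L(\ox_{i-1})$, so $u_L(\ox_{i-1}, b_i) = \max_b u_L(\ox_{i-1}, b)$, and the induction hypothesis turns the first summand into exactly $U_{i-1}$. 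Thus the right-hand side equals $U_{i-1} + t_i\,u_L(x_i,b_i) = U_i$. Finally, again by validity, $b_i \in \BR_L(\ox_i)$, so $u_L(\ox_i, b_i) = \max_b u_L(\ox_i, b)$, completing the inductive step.

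There is no real obstacle here; the only thing to be slightly careful about is that the validity condition is used in two distinct ways per step — once to invoke the inductive hypothesis at $\ox_{i-1}$ via $b_i \in \BR_L(\ox_{i-1})$, and once to identify $u_L(\ox_i, b_i)$ with the maximum via $b_i \in \BR_L(\ox_i)$. Both conditions are exactly what is needed; removing either would break the argument, which neatly explains why the two-sided best-response condition was imposed in the continuous-time formulation. Since the argument gives equality (not just inequality) at every step, it immediately implies the two-sided statement that profiles of valid trajectories have zero regret: the learner neither gains nor loses relative to best-responding to the optimizer's marginal.
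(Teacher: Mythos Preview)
Your proof is correct and follows essentially the same approach as the paper's: both proceed by induction on the number of segments, maintaining the invariant that the learner's accumulated utility after $i$ segments equals $T_i \cdot \max_b u_L(\ox_i,b)$, and both use the two halves of the validity condition in exactly the way you describe — $b_i \in \BR_L(\ox_{i-1})$ to hook into the inductive hypothesis and $b_i \in \BR_L(\ox_i)$ to re-establish it. The only cosmetic difference is that the paper works with the normalized per-round utility $u_L(\Prof(\tau_i))$ while you track the unnormalized cumulative $U_i$; these are equivalent after dividing by $T_i$.
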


This result implies that all extreme points of $\MB$ have exactly zero regret, however, interior points might still have negative regret, since a regret lower bound is not preserved by convex combinations of CSPs.

\begin{proof}
Consider any continuous time-formulation trajectory $\tau = \{(x_1, t_1, b_1), (x_2, t_2, b_2), \dots, (x_k, t_k, b_k)\}$; let $\alpha_O$ be the marginal distribution of the optimizer in the CSP $\Prof(\tau)$. We show that $\mu_L(\Prof(\tau)) = \max_{a \in [n]} \mu_L(\alpha_O \otimes a)$. The proof proceeds via induction on $k$, the length of the transcript. Let $\tau_i$ represent the prefix of the first $i$-triples of $\tau$ (this is also a valid trajectory). Recall that $\overline{x}_{i} = \sum_{j=1}^{i} (x_jt_j) / \sum_{j=1}^{i} t_j$ (with $\overline{x}_0 = 0$) is the time-weighted average strategy of the optimizer over the first $i$ triples. The base case is $k=1$, for which the induction hypothesis is satisfied, since the validity condition guarantees that $b_1 \in BR_L(x_1)$. Assuming the induction hypothesis for $i$, that is equivalent to stating that $\mu_L(\tau_i) = \max_{a \in [n]} \mu_L(a \otimes \overline{x}_i)$.

However, the validity condition guarantees that $b_{i+1} \in \argmax_{a \in [n]} \mu_L(a \otimes \overline{x}_i)$ implying that $\mu_L(\tau_i) =  \mu_L(b_{i+1} \otimes \overline{x}_i)$. However, the CSP $\tau_{i+1}$ is generated from $\tau_i$ by playing $b_{i+1}$ against $x_{i+1}$ and reweighting accordingly, this leads to $\mu_L(\tau_{i+1}) =  \mu_L(b_{i+1} \otimes \overline{x}_{i+1})$. The validity condition also tells us that $b_{i+1} \in \argmax_{a \in [n]} \mu_L(a \otimes \overline{x}_{i+1})$, which implies that $\mu_L(\tau_{i+1}) = \max_{a \in [n]} \mu_L(a \otimes \overline{x}_{i+1})$, extending the induction hypothesis and completing the proof.  
\end{proof}

\subsection{Extending the Learner Payoffs}

In Section~\ref{sec:mb_algos_and_menus}, we showed a learner payoff~\eqref{eq:game} under which $\ftrl$ algorithms are Pareto-dominated, through a sequence of results about the menus of such algorithms showing that they fit our characterization of Pareto-dominated no-regret algorithms. The same sequence of results can be shown to be true for a positive measure set of learner payoffs obtained by perturbing~\eqref{eq:game}.

The learner payoffs we consider are of the following form, where all $\epsilon_{1}...\epsilon_{6}$ are between $0$ and $\varepsilon <= \frac{1}{100}$, and $\epsilon_{1} > \epsilon_{2}$:

\begin{equation}
\label{eq:perturbed_game}
\mu_{\mathcal{L}} = 
    \begin{bmatrix}
        & \text{N} & \text{Y} \\
        \text{A} &  \epsilon_{1} & \epsilon_{2}  \\
        \text{B} &  -1/6+ \epsilon_{3} & 1/3+ \epsilon_{4} \\
        \text{C} & -1/2+ \epsilon_{5} & 1/2+ \epsilon_{6} \\
    \end{bmatrix}
\end{equation}




\begin{theorem}\label{thm:high-mb-swap-regret_generalized}
For any choice of $u_L$ in \eqref{eq:perturbed_game}, $\MB^{-} \not \subseteq \M_{NSR}^{-}$.
\end{theorem}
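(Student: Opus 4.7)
The plan is to adapt the two-segment trajectory from the proof of Theorem~\ref{thm:high-mb-swap-regret} to the perturbed family~\eqref{eq:perturbed_game}. Because the perturbations are bounded by $1/100$, the best-response structure is preserved: A, B, C are the unique best responses on three cones separated by two best-response rays, which meet the simplex of optimizer mixed strategies at points $\bar{x}^{AB}_{*} = \beta^{*}_{AB} N + (1-\beta^{*}_{AB}) Y$ and $\bar{x}^{BC}_{*} = \beta^{*}_{BC} N + (1-\beta^{*}_{BC}) Y$, with $\beta^{*}_{AB} = 2/3 + O(\varepsilon)$ and $\beta^{*}_{BC} = 1/3 + O(\varepsilon)$.

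I would then consider the trajectory $\tau = \{(\bar{x}^{BC}_{*}, s, C), (N, 1-s, B)\}$, choosing $s = (1-\beta^{*}_{AB})/(1-\beta^{*}_{BC}) \in (0,1)$ so that $\bar{x}_2 = s \bar{x}^{BC}_{*} + (1-s) N$ is exactly $\bar{x}^{AB}_{*}$. Validity is immediate from $C \in \BR_L(\bar{x}^{BC}_{*}) = \{B,C\}$ and $B \in \BR_L(\bar{x}^{BC}_{*}) \cap \BR_L(\bar{x}^{AB}_{*}) = \{B,C\} \cap \{A,B\}$. Using the two boundary identities $u_L(\bar{x}^{BC}_{*}, C) = u_L(\bar{x}^{BC}_{*}, B)$ and $u_L(\bar{x}^{AB}_{*}, B) = u_L(\bar{x}^{AB}_{*}, A)$ in succession, the profile's learner utility telescopes to $u_L(\Prof(\tau)) = s\, u_L(\bar{x}^{BC}_{*}, B) + (1-s)\, u_L(N, B) = u_L(\bar{x}^{AB}_{*}, B) = u_L(\bar{x}^{AB}_{*}, A) = \beta^{*}_{AB}\epsilon_1 + (1-\beta^{*}_{AB})\epsilon_2$.

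Next, I would establish $u_L(\bar{x}^{AB}_{*}, A) = U_{ZS}$ via an envelope argument on $f(\beta) = \max_a u_L(\beta N + (1-\beta) Y, a)$: this piecewise linear function is governed successively by C, B, A as $\beta$ ranges over $[0,1]$, with strictly negative slopes on the first two regions (roughly $-1$ and $-1/2$) and strictly positive slope $\epsilon_1 - \epsilon_2 > 0$ on the last region (here the hypothesis $\epsilon_1 > \epsilon_2$ is critical), so the unique minimizer is $\beta = \beta^{*}_{AB}$. Hence $\Prof(\tau) \in \MB^{-}$. For the swap regret, swapping $B$ to $A$ in $\Prof(\tau)$ changes the learner's utility by $(1-s)[u_L(N,A) - u_L(N,B)] = (1-s)[1/6 + \epsilon_1 - \epsilon_3]$, which is bounded below by a constant (around $1/12 - O(\varepsilon)$) uniformly over the allowed perturbations. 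Thus $\Prof(\tau)$ has strictly positive swap regret and is not in $\M_{NSR}$, let alone in $\M_{NSR}^{-}$.

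The main technical obstacle is the envelope analysis identifying $U_{ZS}$ with $u_L(\bar{x}^{AB}_{*}, A)$: one must verify that the perturbed best-response structure on the optimizer's simplex consists of exactly the three cones above (no new action becomes optimal, and each cone remains nondegenerate), and that the slope of $u_L(\cdot, A)$ in the direction $N - Y$ is strictly positive, which is precisely why the constraint $\epsilon_1 > \epsilon_2$ is built into~\eqref{eq:perturbed_game}. Everything else transfers uniformly from the unperturbed argument.
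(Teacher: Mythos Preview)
Your proposal is correct and follows essentially the same approach as the paper: both construct the two-segment trajectory starting on the $BC$ boundary and ending on the $AB$ boundary, verify validity, show the optimizer's marginal is the minimax distribution at $\beta^{*}_{AB}$, and exhibit positive swap regret via the $B\to A$ swap. The only cosmetic difference is that the paper invokes Lemma~\ref{lem:no_pos_neg_regret} to conclude $u_L(\Prof(\tau))=U_{ZS}$ from the marginal being minimax, whereas you compute $u_L(\Prof(\tau))$ directly via the telescoping boundary identities and then run the envelope argument on $f(\beta)$; both routes are equivalent.
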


The proof of this result is along the same lines as that of Theorem~\ref{thm:high-mb-swap-regret}, with some additional accounting to adjust for the perturbation, and can be found in Appendix~\ref{sec:proof_srmeanbased}.

\begin{theorem}
\label{thm:mb_is_a_polytope_generalized}
The mean-based menu $\MB$ of any learner payoff of the form~\eqref{eq:perturbed_game} is a polytope.
\end{theorem}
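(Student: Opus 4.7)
The plan is to observe that the entire proof of Theorem~\ref{thm:mb_is_a_polytope} goes through verbatim for any learner payoff of the form~\eqref{eq:perturbed_game}, provided we verify a single geometric fact about the best-response regions: namely, that the positive quadrant $\mathbb{R}_{\geq 0}^2$ decomposes into exactly three non-degenerate cones $R_A, R_B, R_C$, separated by two distinct rays $r_{AB}$ and $r_{BC}$, with each action being the unique best response in the interior of its corresponding cone.

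First, I would inspect each of the auxiliary lemmas used in Section~\ref{sec:mb-is-polytope}. Lemmas~\ref{lem:switch_regions}, \ref{lem:leave_ray}, \ref{lem:end_on_br}, \ref{lem:spiral}, \ref{lem:primitive}, and \ref{lem:fingerprint_lp} are stated and proved in a game-independent way: they rely only on the definition of a valid trajectory and (in the case of Lemma~\ref{lem:leave_ray} and Lemma~\ref{lem:fingerprint_lp}) on the fact that $\mathrm{BR}_L$ of a state on a best-response boundary has exactly two elements. Both of these properties persist for any learner payoff in which the best-response regions are a union of cones meeting only on codimension-one rays, so these lemmas apply to \eqref{eq:perturbed_game} without modification.

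The only place the specific game matters is Lemma~\ref{lem:characterize_primitive}, which uses the fact that $R_A$, $R_B$, $R_C$ partition the positive quadrant into three cones separated by two rays in order to conclude that a primitive trajectory/spiral has at most three segments. So the substantive step is to verify that the perturbed payoff~\eqref{eq:perturbed_game} still produces this geometry. For an optimizer mixed strategy $pN + (1-p)Y$, the differences $u_L(A)-u_L(B)$ and $u_L(B)-u_L(C)$ are affine functions of $p$ whose unperturbed roots lie at $p = 2/3$ and $p=1/3$ respectively. Since the perturbations $\varepsilon_i$ are bounded in magnitude by $\varepsilon \leq 1/100$, the roots move by $O(\varepsilon)$ and stay strictly inside $(0,1)$, and in particular the root of $u_L(B)-u_L(C)$ remains strictly less than the root of $u_L(A)-u_L(B)$. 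A direct sign check on the three sub-intervals then confirms that $A$ is the unique best response for $p$ above the larger root, $B$ for $p$ between the two roots, and $C$ for $p$ below the smaller root, with learner actions strictly worse than the appropriate best response at each boundary ray (e.g.\ $A$ is strictly dominated at the $BC$-boundary). Extending these rays through the origin yields exactly the three-cone structure required.

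With that geometric fact in hand, the proof of Lemma~\ref{lem:characterize_primitive} applies without change (a primitive trajectory has all but possibly the last breakpoint on a single best-response ray, and Lemma~\ref{lem:leave_ray} forbids three consecutive breakpoints on the same ray). The conclusion of Theorem~\ref{thm:mb_is_a_polytope} then follows identically: there are finitely many possible fingerprints for primitive trajectories and primitive spirals, each fingerprint yields a polytope of profiles by Lemma~\ref{lem:fingerprint_lp}, and $\MB$ is the convex hull of this finite union of polytopes, hence itself a polytope. The only obstacle is confirming the cone structure survives perturbation, which as above reduces to a routine sign check and is the only place that the bound $\varepsilon \leq 1/100$ is used.
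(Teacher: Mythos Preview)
Your proposal is correct and follows exactly the paper's approach: the paper's own proof of Theorem~\ref{thm:mb_is_a_polytope_generalized} is the single sentence ``The proof of Theorem~\ref{thm:mb_is_a_polytope} works as is for~\eqref{eq:perturbed_game},'' and you have simply spelled out why this is so, correctly isolating Lemma~\ref{lem:characterize_primitive} (and the two-ray/three-cone best-response structure it relies on) as the only place the specific payoff matters and verifying that this structure survives the $O(\varepsilon)$ perturbation. One minor inaccuracy: Lemma~\ref{lem:fingerprint_lp} does not actually use that best-response sets on boundaries have exactly two elements---it only needs the cones $R_{b_i}$ to be cut out by finitely many linear constraints---but this does not affect the validity of your argument.
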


The proof of Theorem~\ref{thm:mb_is_a_polytope} works as is for~\eqref{eq:perturbed_game}.

\begin{theorem}
\label{thm:sbm=mb_generalized}
For any $u_L$ defined in \eqref{eq:perturbed_game}, if $\cA$ is a $\ftrl$ algorithm, then $\cM(\cA) = \MB$. 
\end{theorem}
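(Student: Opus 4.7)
The plan is to retrace the proof of Theorem~\ref{thm:sbm=mb} and argue that every ingredient is either payoff-independent or depends continuously on $u_L$ at the point \eqref{eq:game}, so that sufficiently small perturbations of the form \eqref{eq:perturbed_game} leave all the relevant structural and quantitative properties intact. The inclusion $\MB \subseteq \cM(\cA)$ is immediate from Lemma~\ref{lem:sub-mean-based} for any learner payoff, since $\ftrl$ algorithms are mean-based and this half of the argument never uses the specific numerical entries of $u_L$. The entire task is therefore to re-establish $\cM(\cA) \subseteq \MB$ for the perturbed game.

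First I would isolate the qualitative structural facts about \eqref{eq:game} that the proof of Theorem~\ref{thm:sbm=mb} actually relies on: (i) each of the three learner actions A, B, C is undominated and best-responds to a non-degenerate cone $R_A, R_B, R_C$ in the cumulative optimizer space $\Rset_{\geq 0}^2$; (ii) the three cones are separated by exactly two rays $r_{AB}$ and $r_{BC}$, so that at any cumulative state at most two actions are simultaneously within $\gamma(T)\cdot T$ of the leader; and (iii) no pure optimizer strategy makes three learner actions simultaneously best responses. All three conditions are open in $u_L$ and are strict at the point \eqref{eq:game}, so they continue to hold for every sufficiently small perturbation of the form \eqref{eq:perturbed_game}. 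Concretely, one checks that for $\varepsilon \le 1/100$ and $\varepsilon_1 > \varepsilon_2$, the ordering of slopes $(\text{slope of }A) < (\text{slope of }B) < (\text{slope of }C)$ in payoff space is preserved, which guarantees the same three-cone structure with perturbed boundary rays $r_{AB}, r_{BC}$.

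Next I would replay the proof of Theorem~\ref{thm:sbm=mb} on the perturbed instance. The local replacement argument (the analogue of Lemma~\ref{lem:nc_replacement}) uses only the properties of $\ftrl$ recorded in Lemma~\ref{lem:ftrl_properties}, which are purely algorithmic and payoff-agnostic, together with the fact that in any contention region at most two actions are relevant, which we have just re-verified. The discrete-time trajectory construction (trading a mean-based optimizer sequence for one on which a unique action is the $\gamma$-leader except on a sublinear set of rounds) is then carried out region-by-region, incurring a sublinear error per replacement that depends only on the FTRL regret bound $\regbound(T)$ and shift bound $\sshift(T)$.

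The main obstacle, and the only place where quantitative rather than qualitative information about $u_L$ enters, is the bookkeeping that bounds the total drift between the CSP induced by $\cA$ and a CSP in $\MB$. This bound is controlled by two payoff-dependent constants: the minimum angular separation between $r_{AB}$ and $r_{BC}$ (which controls how many local replacements are needed to resolve a trajectory into contention-free pieces), and the minimum payoff gap between a non-leader action and the leader outside contention regions (which controls how quickly FTRL commits to the leader). Both are continuous and strictly positive functions of $u_L$ at \eqref{eq:game}. I would therefore fix a small neighborhood $\cU$ of \eqref{eq:game} on which both quantities are uniformly bounded below by a positive constant, check that $\cU$ contains all payoffs of the form \eqref{eq:perturbed_game} for $\varepsilon$ small enough, and conclude that the same sublinear error estimates from Theorem~\ref{thm:sbm=mb} apply verbatim with constants depending only on $\cU$. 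The theorem follows by taking $T \to \infty$.
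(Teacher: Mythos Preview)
Your proposal is correct and takes essentially the same approach as the paper. The paper literally states that ``the proof of this result is exactly the same as that of Theorem~\ref{thm:sbm=mb}'' and then presents a single unified proof in Appendix~\ref{app:sbm=mb} that is written from the outset with explicit $\pm O(\varepsilon)$ slack in the state-space observations (Observations~\ref{obs:shifts} and~\ref{obs:br_intervals}) and in the drift bound (Lemma~\ref{lem:constant_drift}), so that it covers \eqref{eq:game} and \eqref{eq:perturbed_game} simultaneously. Your continuity/robustness framing is a clean way to explain \emph{why} that reuse is legitimate, isolating exactly the payoff-dependent ingredients (the two-ray cone structure, the non-existence of an $r_{AC}$ boundary, and the strictly positive angular separation and leader gap), all of which are indeed open conditions at \eqref{eq:game}; the paper instead just carries the $\varepsilon$ terms through the computation explicitly. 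One small caveat: your phrasing ``for $\varepsilon$ small enough'' is slightly weaker than what is needed, since \eqref{eq:perturbed_game} fixes $\varepsilon \le 1/100$, so in a fully rigorous write-up you would want to verify (as the paper's appendix does numerically) that $1/100$ is already inside the neighborhood $\cU$ you describe, not merely that some neighborhood exists.
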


The proof of this result is exactly the same as that of Theorem~\ref{thm:sbm=mb}, for which we provided a proof sketch in Section~\ref{sec:mb_algos_and_menus}. A full version of this proof is given below in Appendix~\ref{app:sbm=mb}.

\begin{corollary}
For every $u_L$ defined in \eqref{eq:perturbed_game}, all $\ftrl$ algorithms are Pareto-dominated.
\end{corollary}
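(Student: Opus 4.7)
The plan is to assemble this corollary as a direct consequence of the machinery built up in Section~\ref{sec:mb_algos_and_menus}, together with the characterization of Pareto-optimality from Theorem~\ref{thm:pareto-optimal-characterization}. Let $\cA$ be an arbitrary algorithm in $\ftrl$, and let $\M = \cM(\cA)$ be its asymptotic menu. By Lemma~\ref{lem:menu_alg_equiv}, to show that $\cA$ is Pareto-dominated it suffices to show that $\M$ is Pareto-dominated.

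First I would verify the three hypotheses required to invoke Theorem~\ref{thm:pareto-optimal-characterization}. By Theorem~\ref{thm:sbm=mb_generalized}, $\M = \MB$, and then Theorem~\ref{thm:mb_is_a_polytope_generalized} tells us that $\M$ is polytopal. Since any $\ftrl$ algorithm is no-regret (see the second bullet of Lemma~\ref{lem:ftrl_properties}), $\M$ is a no-regret menu. Thus $\M$ is a polytopal no-regret menu, to which Theorem~\ref{thm:pareto-optimal-characterization} applies.

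Next I would check the failure of the Pareto-optimality criterion. By Theorem~\ref{thm:high-mb-swap-regret_generalized}, there exists a CSP in $\MB^{-}$ that does not lie in $\M_{NSR}^{-}$, so in particular $\M^{-} = \MB^{-} \neq \M_{NSR}^{-}$. By the ``only if'' direction of Theorem~\ref{thm:pareto-optimal-characterization}, $\M$ is not Pareto-optimal, i.e., there exists an asymptotic menu $\M'$ that Pareto-dominates $\M$. Translating back via Lemma~\ref{lem:menu_alg_equiv} (applied to any algorithm $\cA'$ realizing $\M'$, which exists by definition of an asymptotic menu), $\cA$ is Pareto-dominated, completing the proof.

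There is really no hard step here: all of the work was done in establishing Theorems~\ref{thm:sbm=mb_generalized}, \ref{thm:mb_is_a_polytope_generalized}, \ref{thm:high-mb-swap-regret_generalized}, and \ref{thm:pareto-optimal-characterization}. The only mild subtlety to double-check is that the proof of Theorem~\ref{thm:pareto-optimal-characterization} in fact constructs a concrete dominating menu (which it does, by removing an extreme point from $\M^{-}$ that is not in $\M_{NSR}^{-}$ and taking the convex hull with $\M_{NSR}$), so no appeal to a non-constructive result is needed.
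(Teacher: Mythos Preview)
Your proposal is correct and follows essentially the same approach as the paper: verify via Theorems~\ref{thm:sbm=mb_generalized} and~\ref{thm:mb_is_a_polytope_generalized} that the $\ftrl$ menu is the polytope $\MB$, note it is no-regret, invoke Theorem~\ref{thm:high-mb-swap-regret_generalized} to see $\M^{-}\neq\M_{NSR}^{-}$, and apply Theorem~\ref{thm:pareto-optimal-characterization}. Your extra care in translating menu-level Pareto-domination back to algorithm-level Pareto-domination via Lemma~\ref{lem:menu_alg_equiv} is a nice touch the paper leaves implicit.
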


The proof of this corollary is exactly the same as that of Corollary~\ref{cor:coup_de_grace_mw}, with the referenced results in Section~\ref{sec:mb_algos_and_menus} swapped out for the corresponding ones in this Appendix \ref{app:mean_based_pareto_dominated}.


\subsection{Proof of Theorem~\ref{thm:high-mb-swap-regret_generalized}}
\label{sec:proof_srmeanbased}



Fix any fixed learner payoff $u_L \in \eqref{eq:perturbed_game}$. We prove this result by showing a point in $\MB$ for which the learner obtains their maximin value and has positive swap regret (and is hence not in $\M_{NSR}$). More atomically, we come up with a valid trajectory $\tau$ in the continuous formulation mentioned in Section~\ref{sec:mb_algos_and_menus}. Based on Lemma~\ref{lem:no_pos_neg_regret}, we know that for $u_L(\Prof(\tau))$ to be the maximin value of the learner, the optimizer's marginal distribution in $\Prof(\tau)$ must be equal to the minimax distribution -- this informs our construction of $\tau$. The first step in our proof is therefore (approximately) identifying the minimax distribution for the learner payoff.

\begin{claim}
    There is a unique, minimax strategy $M$ against the learner payoff $u_L$ where $M$ picks action $X$ with probability $p^* = 2/3 \pm O(\varepsilon)$.
\end{claim}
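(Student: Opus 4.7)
The plan is to reduce the claim to a one-dimensional analysis: since the optimizer has only two actions, any strategy is determined by a single parameter $p \in [0,1]$ (the probability of playing $N$), and the minimax value is just the minimum over $p$ of the upper envelope of three affine functions $f_A(p), f_B(p), f_C(p)$ (one per learner action). So I would first write out
\[
f_A(p) = \epsilon_2 + p(\epsilon_1 - \epsilon_2), \quad
f_B(p) = \tfrac{1}{3} + \epsilon_4 - \tfrac{p}{2} + p(\epsilon_3-\epsilon_4), \quad
f_C(p) = \tfrac{1}{2} + \epsilon_6 - p + p(\epsilon_5-\epsilon_6),
\]
and analyze the upper envelope $F(p) = \max\{f_A(p), f_B(p), f_C(p)\}$.

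Next I would handle the $\varepsilon=0$ case as a sanity check: there, $f_A \equiv 0$, $f_B = 1/3 - p/2$, $f_C = 1/2 - p$. Comparing pairwise, one checks that $F(p) = f_C(p)$ on $[0,1/3]$, $F(p) = f_B(p)$ on $[1/3, 2/3]$, and $F(p) = f_A(p) = 0$ on $[2/3,1]$, so $F$ is decreasing on $[0,2/3]$ and constant zero on $[2/3,1]$. Without perturbation the minimax point is thus not unique (any $p \in [2/3,1]$ works); uniqueness comes entirely from the assumption $\epsilon_1 > \epsilon_2$, which makes $f_A$ strictly increasing with slope $\epsilon_1 - \epsilon_2 > 0$. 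Since $f_B$ has slope $-1/2 + O(\varepsilon) < 0$ and $f_A$ has slope $O(\varepsilon) > 0$, their intersection is the unique minimum of $F$ on the interval where these are the top two, and on a neighborhood of $2/3$ one can verify (using that $f_C$ lies strictly below $f_A, f_B$ near $p = 2/3$ by a gap of roughly $1/6$) that $F$ equals exactly $\max(f_A, f_B)$ there.

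Solving $f_A(p) = f_B(p)$ then yields
\[
p\bigl[\tfrac{1}{2} + (\epsilon_1 - \epsilon_2) - (\epsilon_3 - \epsilon_4)\bigr] = \tfrac{1}{3} + \epsilon_4 - \epsilon_2,
\]
so $p^* = 2/3 + O(\varepsilon)$. Uniqueness of the argmin follows because on $[0, p^*)$ we have $f_B > f_A$ and $f_B$ strictly decreasing (with the dominated region of $f_C$ handled by the slope comparison $|f_C'| > |f_B'|$ to leading order), and on $(p^*, 1]$ we have $f_A > f_B$ and $f_A$ strictly increasing. Hence $F$ is strictly decreasing to the left of $p^*$ and strictly increasing to the right, so $p^*$ is the unique minimizer.

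The main obstacle, though a minor one, is making sure that in the perturbed problem $f_C$ genuinely stays below $\max(f_A, f_B)$ at the candidate minimizer $p^* \approx 2/3$, so that the $A$–$B$ intersection really is the global minimum of $F$ and not an irrelevant local feature; this is why I would explicitly compute the gap $f_B(2/3) - f_C(2/3) = 1/6 + O(\varepsilon)$ in the unperturbed game and then argue by continuity that it survives any perturbation of size at most $\varepsilon \le 1/100$. Once that gap is in hand the rest of the argument is a routine slope-comparison on three affine functions.
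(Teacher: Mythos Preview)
Your proposal is correct and reaches the same conclusion as the paper, but the organization differs. The paper first argues, via a perturbation/contradiction argument, that any minimax strategy must make the learner indifferent between at least two actions, and then eliminates the $A$--$C$ and $B$--$C$ indifference points by checking that another action dominates (for $A$--$C$) or that the resulting value exceeds what pure $N$ achieves (for $B$--$C$), leaving $A$--$B$ as the only candidate. You instead analyze the piecewise-linear upper envelope $F(p)=\max(f_A,f_B,f_C)$ directly: the slope signs ($f_A$ strictly increasing since $\epsilon_1>\epsilon_2$, $f_B,f_C$ strictly decreasing) immediately force a unique minimum at the point where $f_A$ meets $\max(f_B,f_C)$, and the gap $f_B(2/3)-f_C(2/3)=1/6+O(\varepsilon)$ pins this to the $A$--$B$ crossing. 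Your route is a bit more elementary and makes the role of the hypothesis $\epsilon_1>\epsilon_2$ completely transparent (it is exactly what gives $f_A$ positive slope and hence uniqueness), while the paper's case elimination is closer in spirit to a standard game-theoretic minimax argument; both arrive at the same computation $p^*=(1/3+O(\varepsilon))/(1/2+O(\varepsilon))=2/3+O(\varepsilon)$.
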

\begin{proof}
    First, we claim that the minmax strategy must cause the learner to be indifferent between at least two of their actions. To see this, assume for contradiction the following cases:

    \begin{itemize}
    \item The minimax strategy $s$ strictly incentivizes $A$ over $B$ and $C$ by $\delta$. Then the optimizer can define a new strategy $(1 - \frac{\delta'}{2})s + \frac{\delta'}{2}Y$ for any $\delta' \leq \delta$, which still incentivizes $A$ but gives the learner strictly lower payoff. The only way this would not decrease payoff is if the strategy was already the pure strategy $Y$, but this cannot incentivize $A$. So this derives a contradiction.
    \item  The minimax strategy $s$ strictly incentivizes $B$ over $A$ and $C$ by $\delta$. Then the optimizer can define a new strategy $(1 - \frac{\delta'}{2})s + \frac{\delta'}{2}N$ for any $\delta' \leq \delta$, which still incentivizes $B$ but gives the learner strictly lower payoff. The only way this would not decrease payoff is if the strategy was already the pure strategy $N$, but this cannot incentivize $B$. So this derives a contradiction.
    \item  The minimax strategy $s$ strictly incentivizes $C$ over $A$ and $B$ by $\delta$. Then the optimizer can define a new strategy $(1 - \frac{\delta'}{2})s + \frac{\delta'}{2}N$ for any $\delta' \leq \delta$, which still incentivizes $B$ but gives the learner strictly lower payoff. The only way this would not decrease payoff is if the strategy was already the pure strategy $N$, but this cannot incentivize $C$. So this derives a contradiction.
    \end{itemize}

    Next, note that $A$ and $C$ are equal when $P(N) = p$ for some $p = \frac{1}{2} \pm O(\epsilon)$. Furthermore, when $P(N) = \frac{1}{2} \pm O(\epsilon)$, $B$ is better than both $A$ and $C$. So there cannot be a minimax strategy when $A$ and $C$ are tied best repsonses. Furthermore, note that $B$ and $C$ are equal when $P(N) = q$ for some $q =  \frac{1}{3} \pm O(\epsilon)$. Against this distribution, the learner can play $C$ and get payoff $\frac{1}{2}.\frac{2}{3} - \frac{1}{2}.\left(\frac{1}{3} \pm O(\epsilon)\right) = \frac{1}{6} \pm O(\epsilon)$. This cannot be the minimax strategy, as if the optimizer just plays $N$, the learner will get payoff at most $\epsilon_{1} < \frac{1}{6} \pm O(\epsilon)$.

    Therefore, the only remaining possibility is that the minimax distribution is when $A$ and $B$ are equal. This is a unique distribution, and will be of the form $P(N) = p^*$ for some  $p^* =  \frac{2}{3} \pm O(\epsilon)$

\end{proof}

Using a similar analysis, there is a unique distribution for the optimizer for which the learner's best response is tied between actions $B$ and $C$, this distribution plays action $N$ with probability $q^* = \frac{1}{3} \pm O(\varepsilon)$. Consider the continuous time trajectory $\tau = \{(q^*\cdot N + (1-q^*) \cdot Y, t_1, C), (N, t_2, B)\}$ where $t_1 + t_2 = 1$, $t_1,t_2 \ge 0$; and $t_1$ and $t_2$ are chosen so that $t_1 q + t_2  = p^*$. A feasible solution exists with $t_1, t_2 = \frac{1}{2} \pm O(\varepsilon)$ because $q$ is very close to $1/2$, and $p^*$ is very close to $2/3$. By construction, this is indeed a valid trajectory: we can check that $\ox_1 = q^*\cdot N + (1-q^*) \cdot Y$, $\ox_2 = p^* \cdot N + (1-p^*)Y$, $\BR_{L}(q^*\cdot N + (1-q^*) \cdot Y) = \{B, C\}$, and that $\BR_{L}(p^* \cdot N + (1-p^*)Y) = \{A, B\}$. For this trajectory, $\Prof(\tau) = t_1 q^* (N \otimes C) + t_1(1-q^*) (Y \otimes C) + t_2 (N \otimes B)$. The swap regret of $\Prof(\tau)$ is positive, since by swapping all weight on $B$ to $A$ increases the learner's average utility by $\frac{t_2}{6} \ge \frac{1}{24}$ (since $t_2 \ge \frac{1}{4}$), so $\Prof(\tau) \not\in \M_{NSR}$. However, by construction the marginal distribution of the optimizer in $\Prof(\tau)$ is exactly the minimax distribution of the optimizer, implying that the learner gets the minimax value, completing the proof.

\subsection{Proof of Theorem~\ref{thm:sbm=mb}/Theorem~\ref{thm:sbm=mb_generalized}}
\label{app:sbm=mb}







As a consequence of Lemma~\ref{lem:ftrl_properties}, the $\ftrl$ algorithm $\A$ is $\gamma$-mean based with $\gamma(t)\cdot t 
 $ being a sublinear and monotone function of $t$.  
Let $\intersectionmb$ denote the intersection of the menu of all mean-based algorithms. A restatement of Lemma~\ref{lem:sub-mean-based} is that $\MB \subseteq \intersectionmb$. However, by definition -- $\intersectionmb \subseteq \M(\A)$ implying that $\MB \subseteq \M(\A)$. To complete the proof of the lemma only, we only need to show that the menu $\M = \M(\A)$ of a given $\ftrl$ algorithm is contained in $\MB$.


To this end, we pick up each extreme point of $\M$ and show its containment in $\MB$. For each such CSP, we demonstrate a sequence of points in $\MB$ converging to this CSP; which, when combined with the closed and bounded property of $\MB$ (a polytope, see Theorem~\ref{thm:mb_is_a_polytope}), implies that the limit point is in $\MB$.

As our proof sketch indicated, this sequence of points is generated by observing the finite time menus of the algorithm $\cA$. For each point $P$ in $\M$, there is a a sequence of CSPs $P_1, P_2, \dots $ in the finite time menus that converge to $P$. Our proof shows a sequence of points $Q_1, Q_2, \dots, Q_T, \ldots \in \MB$  satisfying the property that $d(P_T,Q_T) = o(1)
~\footnote{Here, as everywhere else in this section, $o(1)$ refers to a sub-constant dependence on $T$}$. Directly using the triangle inequality then guarantees that the sequence $\{Q_T\}$ converges to $P$.


For ease of writing, we introduce $f(T) = \max \{ \gamma(T) \cdot T, \regbound(T), \sshift(T)\}$. 
$\rho(T)$ is the regret bound guaranteed by Lemma~\ref{lem:ftrl_properties}; while $\sshift(T)$ is the sublinear function associated with $\A$ from the last clause of Lemma~\ref{lem:ftrl_properties}, and is roughly the upper bound in the difference in the strategy chosen by the original algorithm and that chosen by the same algorithm restricted to two actions when the third action is sufficiently far from being a historical best action.~\esh{Sanity check this usage, please} This function $f$, being a max of sublinear functions, is itself sublinear. There exists a positive constant $K$ such that the magnitude of the regret of the algorithm is upper bounded by $K f(T)$. For ease of writing, we consider $K=1$, although our proof will work for any constant $K>0$. Even though, Lemma~\ref{lem:ftrl_properties} guarantees that the regret is contained in $[0,\regbound(T)]$; we work with a weaker bound of regret being contained in $[-\regbound(T),\regbound(T)]$

The rest of the proof focuses mostly on proving the existence of $Q_T$ given $P_T$. Without loss of generality, we restrict our attention to the simpler case where $P_T$ is an extreme point of the menu $\M(\A)_T$; and find a corresponding $Q_T$. The same idea can be extended to arbitrary menu points by rewriting $P_T$ as a convex combination of $mn+1$ extreme points (invoking Caratheodory's theorem).


We introduce some notation to aid with our proofs, and formalize the notions of boundaries and finite time trajectories. We track the (partial) state of learning algorithms by measuring the cumulative payoffs of the three actions $A,B,C$ after $t-1$ moves, denoted by $\sigma^A_t, \sigma^B_t, \sigma^C_t$. 
Since $\ftrl$ algorithms only care about the relative differences in the payoffs of the actions,  we track these differences $u_t := \sigma^A_t - \sigma^C_t$ and $v_t := \sigma^B_t - \sigma^C_t$, and refer to $(u_t,v_t) \in \mathbb{R}^2$ as the state of the learner. The move $x_t$ of the optimizer takes the learner's state from $(u_t,v_t)$ to $(u_{t+1},v_{t+1})$.
Recall that a CSP is an extreme point in the menu of an algorithm only if there exists an ordered sequence of optimizer moves combined with the algorithm's responses, that induces this point. $P_T$ can thus be thought of as being generated by a sequence of optimizer actions $x_1,x_2, \cdots, x_T$, in conjunction with the responses of the $\ftrl$ algorithm $\A$.

\sloppy{This sequence of optimizer actions induces a corresponding sequence of learner states $(u_1, v_1), (u_2, v_2), \dots, (u_T, v_T)$.
We refer to the curve generated by joining consecutive states with line segments as the finite-time trajectory associated with the point $P_T$ (for convenience, we will drop the ``finite-time'' identifier and simply call these trajectories, until otherwise specified). 
The part of the trajectory associated with a contiguous subsequence of time steps is referred to a sub-trajectory. It is useful to keep in mind that the finite-time trajectory is entirely a function of the optimizer's actions and the learner's payoff function, and that different learner algorithms can potentially induce significantly different CSPs for the same trajectory.}

We divide our state space $\mathbb{R}^2$ into multiple zones, based on the historical best response and its corresponding advantage in these zones. 
These zones are demarcated by the following lines:

\begin{itemize}
    \item $H_{AB}$ is the set of all points $(u,v)$ where $u=v$ and $u,v \ge 0$.
    \item $H_{BC}$ is the set of all points $(u,v)$ where $v=0$ and $u<0$.
    \item $H_{AC}$ is the set of all points $(u,v)$ where $u=0$ and $v<0$.
    \item $S_{AB}$ is the set of all points $(u,v)$ where $v-u =f(T)$ and $u,v \ge 0$. 
    \item $\overline{S}_{AB}$ is the set of all points $(u,v)$ where $u-v =f(T)$ and $u,v \ge 0$.
    \item $S_{BC}$ is the set of all points $(u,v)$ where $v = f(T)$ and $u<0$.
    \item $\overline{S}_{BC}$ is the set of all points $(u,v)$ where $v = -f(T)$ and $u<0$.
    \item $S_{AC}$ is the set of all points $(u,v)$ where $u = f(T)$ and $v<0$.
    \item $\overline{S}_{AC}$ is the set of all points $(u,v)$ where $u = -f(T)$ and $v<0$.
    
\end{itemize}

We refer to $H_{AB},H_{BC}, H_{AC}$ as the ``hard" boundaries separating the three actions, and refer to $S_{AB}, S_{BC}, S_{AC},\overline{S}_{AB}, \overline{S}_{BC}, \overline{S}_{AC}$ as the ``soft" boundaries.
The separation of $f(T)$ between the corresponding hard and soft boundaries is interpreted an upper bound on the regret of algorithm $\A$ and the separation in cumulative payoff corresponding to the mean-based condition of $\cA$. While we have defined all possible hard and soft boundaries, some of them are not explicitly used in our subsequent analysis. This is due to an asymmetry in the learner payoffs that prevents certain states from being reached, such as points on $H_{AC}$ that are far from the origin~\footnote{This asymmetry is exploited later to prove Lemma~\ref{lem:constant_drift}, a crucial result showing that the state of the algorithm is always moving away from the origin}.

The learner is said to have a ``clear" historical best response action at the start of a given round if the action with the largest cumulative payoff has an advantage of at least $f(T)$ over the other actions. In other words, at state $(u,v)$; $A$ is a clear historical best response when $u > f(T)$ and $u-v > f(T)$, $B$ is a clear historical best response when $v > f(T)$ and $v-u > f(T)$ and $C$ is a clear historical best response when $u <- f(T)$ and $v < - f(T)$. The significance of this definition can be seen by recalling that a $\gamma$-mean based learner plays a clear historical best response at round $t$ with probability at least $1- 2 \gamma(t)$ and plays the other actions with probability at most $\gamma(t)$. In words, a clear historical best response at round $t$ is practically guaranteed to be played by $\cA$, with some $o(1)$ probability mass for other actions.

Figure~\ref{fig:basic_state_space} shows a picture of the state space, with separations for the soft and hard boundaries.

\begin{figure}[htbp]
    \centering
    \includegraphics[width=0.5\textwidth]{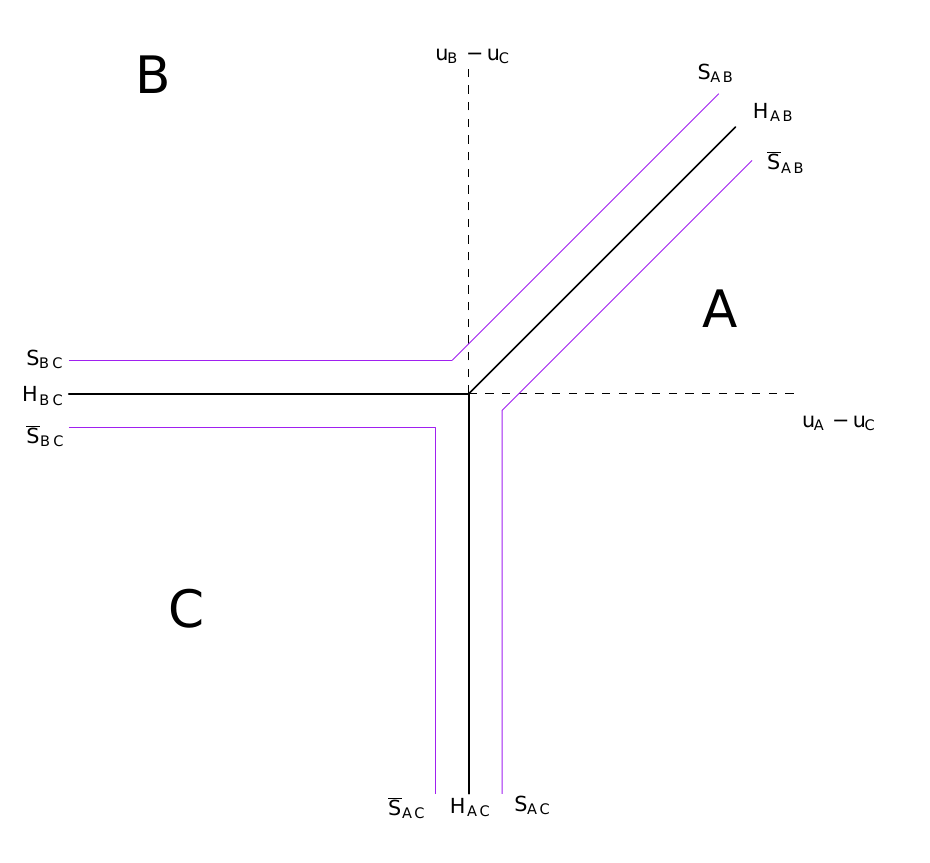}
    \caption{The state space with hard and soft boundaries, in black and purple respectively.}
    \label{fig:basic_state_space}
\end{figure}

 Consider the original point $P$ that we begin with. For $t_1 < t_2$, we call the part of a trajectory from $(u_{t_1}, v_{t_1})$ to $(u_{t_2}, v_{t_2})$ a segment. We break up its trajectory into the following segments:

 \begin{itemize}
     \item (I) An initial segment starting at $(0,0)$ and continuing until the first point where either $S_{AB}$ or $S_{BC}$ is crossed. 
     \item (NC1) A maximal segment that starts within a move of crossing $S_{AB}$ and returns to a point within a move of $S_{AB}$ without ever crossing $S_{BC}$.
     \item (NC2)  A maximal segment that starts within a move of crossing $S_{BC}$ and returns to a point within a move of $S_{BC}$ without ever crossing $S_{AB}$.
     \item (C1)
     A minimal segment that starts within a move of crossing $S_{AB}$ and crosses $S_{BC}$ with its last line segment.
     \item (C2)
     A segment that starts within a move of crossing $S_{BC}$ and crosses $S_{AB}$ with its last line segment.
     \item (F) The final segment is broken up into a maximal C1 or C2 segment followed by a segment starting at one of the soft boundaries and staying in the zone with a unique clear historical best response. 
 \end{itemize}

The different types of segments for a trajectory are depicted in Figure~\ref{fig:basic_trajectory}.

\begin{figure}[htbp]
    \centering
    \includegraphics[width=0.5\textwidth]{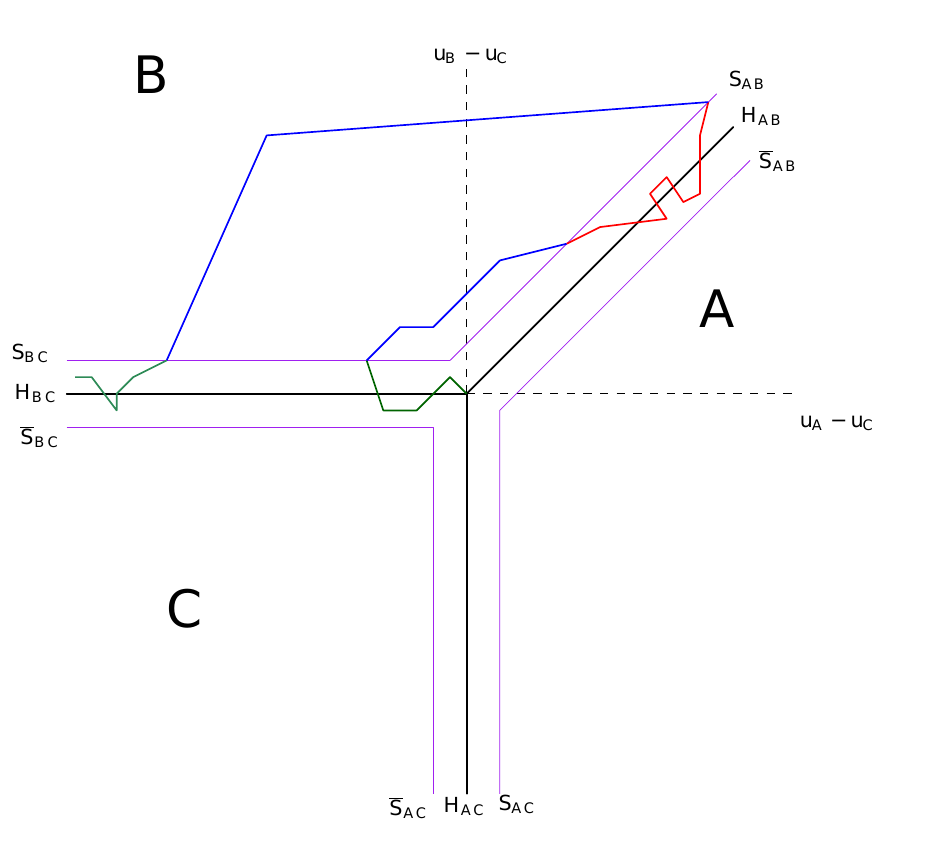}
    \caption{A trajectory is shown with different types of segments having different colors. In order, it contains: an I segment (in green), a C2 segment (in blue), an NC1 segment (in red), a C1 segment (in blue), and a final F segment (in green). 
    }
    \label{fig:basic_trajectory}
\end{figure}

The careful reader might ponder about the problem created by these segments not ending exactly at the boundaries, but overshooting by one time step. This is indeed connected to a deeper problem, connected to trajectories that oscillate back and forth across these boundaries a significant number of times. We address this issue in detail later by proving (using Lemma~\ref{lemma:bound_crossing}) that the number of segments is sublinear; and thus for sake of convenience, we can assume these segments exactly end on the corresponding boundaries. 


 \begin{observation}
     The trajectory of states cannot intersect with $H_{AC}$ after $t = 0$.
 \end{observation}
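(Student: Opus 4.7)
The plan is to turn this observation into a direct computation using the explicit form of $u_L$ from~\eqref{eq:game}. Every point on the trajectory after $t=0$ is the state corresponding to some strictly positive amount of cumulative optimizer play, and I will show that on this particular game such a state cannot simultaneously satisfy $u = 0$ and $v < 0$ because of an asymmetry between actions $B$ and $C$.

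First I parameterize the trajectory. For each integer $t \geq 0$, let the cumulative optimizer play be $X_t = \sum_{s=1}^t x_s = (p_N, p_Y) \in \mathbb{R}_{\geq 0}^2$ with $p_N + p_Y = t$. The trajectory of states is the polygonal line connecting the successive $(u_t, v_t)$, and each point on the segment from $(u_t, v_t)$ to $(u_{t+1}, v_{t+1})$ corresponds to cumulative play $X_t + \lambda \cdot x_{t+1}$ for some $\lambda \in [0,1]$. Hence every point on the trajectory other than the origin is the state induced by some $(p_N, p_Y) \in \mathbb{R}_{\geq 0}^2$ with $p_N + p_Y > 0$.

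Next I compute $u$ and $v$ in closed form from $u_L$ in~\eqref{eq:game}. Since $u_L(N,A) = u_L(Y,A) = 0$, $u_L(N,B) = -1/6$, $u_L(Y,B) = 1/3$, $u_L(N,C) = -1/2$, and $u_L(Y,C) = 1/2$, the cumulative payoffs are $\sigma^A = 0$, $\sigma^B = -p_N/6 + p_Y/3$, and $\sigma^C = -p_N/2 + p_Y/2$, giving
\[
u \;=\; \sigma^A - \sigma^C \;=\; \tfrac{1}{2}(p_N - p_Y), \qquad v \;=\; \sigma^B - \sigma^C \;=\; \tfrac{1}{6}(2p_N - p_Y).
\]

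Finally, suppose for contradiction that some trajectory point after $t=0$ lies on $H_{AC}$, so $u = 0$ and $v < 0$. From $u = 0$ we get $p_N = p_Y$, and combined with $p_N + p_Y > 0$ this forces $p_N = p_Y > 0$. Substituting into $v$ yields $v = p_N / 6 > 0$, which contradicts $v < 0$. Hence no point of the trajectory after $t = 0$ can intersect $H_{AC}$. No step is difficult; the only thing worth flagging is that the argument uses the specific payoff matrix in an essential way (it is the asymmetry that whenever actions $A$ and $C$ are tied in cumulative payoff, action $B$ is strictly ahead), which is exactly the structural asymmetry the text alludes to just before the observation. The same computation carries over to the perturbed payoffs~\eqref{eq:perturbed_game} with the strict inequality $v > 0$ maintained for sufficiently small $\varepsilon$.
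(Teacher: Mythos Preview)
Your proof is correct and follows essentially the same approach as the paper: both argue that a point on $H_{AC}$ would require the historical average optimizer action to make $A$ and $C$ tied while $B$ is strictly worse, and that no such distribution exists for this $u_L$. The paper states this abstractly as a fact about best responses, while you carry out the explicit computation showing $u=0$ forces $p_N=p_Y>0$ and hence $v=p_N/6>0$; these are the same argument at two levels of detail.
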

 \begin{proof}
     For this event to happen, the historical average optimizer action (corresponding to the point of intersection, possibly for a fractional time step) must induce a tie in best-responses between actions $A$ and $C$ - no such optimizer distribution exists for these learner payoffs, ruling out the possibility.
 \end{proof}
 Armed with this observation, it is straightforward to see that this decomposition of trajectories is uniquely defined and can be constructed directly by splitting a trajectory into an initial segment (I) followed by alternating non-crossing (NC) and crossing (C) segments and finishing with a final (F) segment. Since the trajectory never crosses $H_{AC}$; the only way crossing segments move across from $S_{AB}$ to $S_{BC}$ is by going through the region where $B$ is a clear historical best response. The alternating nature of non-crossing and crossing segments implies that the number of non-crossing segments is at most the number of crossing segments plus 1. 


The main constructive step involved in the proof is in showing how to transform the trajectory corresponding to point $P_T$
into another trajectory which spends only sublinear time within the soft boundaries and induces, together with algorithm $\cA$, a CSP $P'_T$ that is a sub-constant distance to $P_T$.

\begin{lemma}
\label{lem:main_trajectory_transformation}
    Given any trajectory inducing a CSP $P_T$ when played against $\cA$ with time horizon $T$, there exists a trajectory that spends all but $o(T)$ time outside the soft boundary and induces a CSP $P'_T$ such that $d(P_T, P'_T) = o(1)$ .
\end{lemma}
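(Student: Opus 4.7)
The plan is to process the decomposition of the original trajectory into segments (I, NC1, NC2, C1, C2, F) and perform a local replacement on each segment so that the resulting trajectory spends only $o(T)$ time inside the soft-boundary strips, while the contribution of each segment to the induced CSP changes by only a sublinear amount. Summing these perturbations over all segments will yield the desired bound $d(P_T, P'_T) = o(1)$.

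The first ingredient is a sublinear bound $N(T) = o(T)$ on the total number of segments, which is the role of Lemma~\ref{lemma:bound_crossing} alluded to in the main text. This bound is critical for two reasons: it lets us disregard the $O(1)$-round boundary-overshoot at the endpoint of each segment (since $N(T)\cdot O(1) = o(T)$ overshoot rounds contribute $o(1)$ to the CSP), and it caps the error budget we have to distribute across the local transformations. Crossing segments (C1, C2) are then easy to handle: they pass from one soft-boundary strip to another through a region with a unique clear historical best response, so the $\gamma$-mean-based property of $\cA$ forces the learner to place all but $o(1)$ mass on the clear leader throughout the interior of the crossing. The only time the trajectory spends inside a soft strip during a crossing segment is while crossing its $O(f(T))$-width, and summed over $\le N(T)$ crossing segments this amounts to $O(N(T)\cdot f(T)) = o(T)$ rounds.

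The main obstacle is the non-crossing segments NC1 and NC2, which is where Lemma~\ref{lem:nc_replacement} does the work. An NC segment starts and ends near a single soft boundary (say $S_{AB}$), so throughout the segment the third action ($C$) is historically dominated by at least $\gamma(T)\cdot T$. By the last clause of Lemma~\ref{lem:ftrl_properties}, the play of $\cA$ on such a segment then coincides, up to an additive $\sshift(T)/T$ per round, with the play of an $\ftrl$ instance over only the two actions $\{A,B\}$. For a two-action $\ftrl$ the regret is pinned in $[0,\rho(T)]$ (using \cite{gofer2016lower} for the lower bound), and this two-sided regret control can be translated into tight control on the per-round $A$-versus-$B$ mixture as a function of the running average optimizer action. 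We then replace the NC segment with a two-piece ``bang-bang'' trajectory that stays on either side of $S_{AB}$ in the pure-$A$ and pure-$B$ zones in proportions chosen to reproduce the same endpoint on $S_{AB}$ and the same $(A,B)$-marginal contribution to the CSP, up to additive error $O(f(T))$. Because both $A$ and $B$ are best responses to many optimizer distributions along $S_{AB}$, this replacement is realizable by a valid finite-time trajectory against $\cA$ (and moreover by a valid continuous-time trajectory in $\MB$, which is what we ultimately need).

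Finally, each of the $\le N(T)$ replacements perturbs the induced CSP by $O((f(T) + \sshift(T))/T)$, and the aggregate perturbation is $O(N(T)\cdot f(T)/T) = o(1)$ provided the sublinear functions are balanced appropriately; this balancing is what drives the choice of $f(T) = \max\{\gamma(T)T,\rho(T),\sshift(T)\}$ in the setup. The initial segment I and the final segment F are handled analogously --- F reduces to a C-segment followed by a sub-trajectory in a unique-leader zone, where no replacement is needed, and I is short because the state must leave the origin quickly (here we use the asymmetry exploited by Lemma~\ref{lem:constant_drift} to ensure the trajectory drifts steadily outward). The hardest step is establishing Lemma~\ref{lem:nc_replacement}: once the effective two-action reduction is in hand, the tight two-sided regret bounds on $\ftrl$ force $\cA$'s behavior on an NC segment to essentially match that of a mean-based two-action best-responder, which is what makes the replacement possible.
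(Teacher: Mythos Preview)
Your overall architecture matches the paper's --- decompose into I/NC/C/F segments, leave the crossing segments alone, and use Lemma~\ref{lem:nc_replacement} on the non-crossing ones --- but there is a real gap in the error accounting.

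You assert a bound $N(T) = o(T)$ on the total number of segments and then write $N(T)\cdot f(T) = o(T)$ and $N(T)\cdot f(T)/T = o(1)$. Neither follows. Lemma~\ref{lemma:bound_crossing} does \emph{not} give a global $o(T)$ bound; it gives $O(T/\phi)$ segments after time $\phi$. Taking $\phi$ of order $f(T)$ (the length of the initial segment) yields $N(T)=O(T/f(T))$ segments, and then $N(T)\cdot f(T)=O(T)$, which is useless. Two sublinear quantities do not in general multiply to something sublinear; the ``balancing'' you allude to is not the definition of $f(T)$ but a separate threshold choice that you have not made. The paper sets $g(T)=\sqrt{f(T)\,T}$ and splits the horizon: the first $g(T)$ rounds contribute at most $g(T)=o(T)$ to both the CSP drift and the time in the soft strip, while after time $g(T)$ Lemma~\ref{lemma:bound_crossing} with $\phi=g(T)$ gives only $O(T/g(T))=O(\sqrt{T/f(T)})$ segments, each incurring $O(f(T))$ error, for a total $O(\sqrt{f(T)\,T})=o(T)$. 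Without this split (or an equivalent device) the per-segment errors from Lemma~\ref{lem:nc_replacement} do not sum to anything sublinear.

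Two smaller points: crossing segments lie entirely in a clear-leader region (that is their definition), so they are left untouched rather than ``spending $O(f(T))$ time in the strip''; and the initial segment I is not short in general --- the trajectory can drift along $H_{AB}$ far from the origin without ever crossing $S_{AB}$ or $S_{BC}$ --- so I (and the non-crossing portion of F) must also be handled via the NC replacement, not dismissed as negligible.
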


The proof is presented later in this section. Armed with such a trajectory, we then show how to construct a point $Q_T \in \MB$ that is $o(1)$ close to $P'_T$, and hence close to $P_T$. Based on the finite-time trajectory inducing $P'_T$, we can construct a valid trajectory $\tau_T$ in the continuous-time formulation referred to in Section~\ref{sec:mb_algos_and_menus}. We do this by taking the trajectory inducing $P'_T$, breaking it up whenever it crosses a hard boundary separating two learner actions, and replacing each segment by the time average action repeated for the length of the trajectory. We also attach the corresponding learner best response action to each segment of this trajectory to complete the triple. Since $P'_T$ spends only sublinear time within the soft boundaries, it is easy to see that the CSP $Q_T := \Prof(\tau_T)$ is sublinearly close to $P'_T$ and hence to $P_T$. Thus, we have a sequence of valid trajectories $\{\tau_T\}$ whose images under the continuous function $\Prof$ are converging to $P$. Since $\MB$ is a closed and bounded set (by virtue of being a polytope, see Theorem~\ref{thm:mb_is_a_polytope}), the point $P$ is contained in $\MB$, completing the proof.

\subsubsection{Proof of Lemma~\ref{lem:main_trajectory_transformation}}

We outline this transformation of the trajectory below, through a sequence of lemmas that prove some critical properties about this class of games and give procedures to transform the different types of segments in the target trajectory.

Based on the learner utilities, we observe that any single move of the optimizer shifts the state of the learner by some constant distance (in say, the $\ell_1$ norm).
\begin{observation}
\label{obs:shifts}
    Starting at state $(u,v)$ and playing optimizer distribution $\alpha(N)$ for one step changes the state to $(u+\xi_1, v +\xi_2)$ with
    \begin{itemize}
        \item  $\xi_1 = 1/2 \pm 2\varepsilon$ and $\xi_2 = 1/3 \pm 2\varepsilon$ when $\alpha(N) = 1$.
        \item  $\xi_1 = -1/2 \pm2 \varepsilon$ and $\xi_2 = -1/6 \pm 2\varepsilon$ when $\alpha(N) = 0$ .
    \end{itemize}
    where $\xi = a \pm b$ is equivalent to stating $\xi \in [a-b,a+b]$.
\end{observation}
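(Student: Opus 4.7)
The plan is simply to unfold the definitions $u_t = \sigma^A_t - \sigma^C_t$ and $v_t = \sigma^B_t - \sigma^C_t$ and read off the one-round payoff increments directly from the learner payoff matrix \eqref{eq:perturbed_game}. Writing $\Delta\sigma^j$ for the single-round increment in cumulative payoff for learner action $j \in \{A,B,C\}$, I have $\xi_1 = \Delta\sigma^A - \Delta\sigma^C$ and $\xi_2 = \Delta\sigma^B - \Delta\sigma^C$, so both quantities are fully determined once we know the optimizer's pure move and consult the appropriate column of $u_L$.

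I would handle the two cases separately. For $\alpha(N)=1$, the optimizer plays $N$, giving $\Delta\sigma^A = \epsilon_1$, $\Delta\sigma^B = -1/6 + \epsilon_3$, and $\Delta\sigma^C = -1/2 + \epsilon_5$; subtracting yields $\xi_1 = 1/2 + (\epsilon_1 - \epsilon_5)$ and $\xi_2 = 1/3 + (\epsilon_3 - \epsilon_5)$. Since each $\epsilon_i$ lies in $[0,\varepsilon]$, both error terms lie in $[-\varepsilon,\varepsilon] \subseteq [-2\varepsilon, 2\varepsilon]$, which is exactly what is claimed. The $\alpha(N)=0$ case is entirely parallel: with the optimizer playing $Y$, the increments come from the second column of $u_L$, producing $\xi_1 = -1/2 + (\epsilon_2 - \epsilon_6)$ and $\xi_2 = -1/6 + (\epsilon_4 - \epsilon_6)$, and the same $O(\varepsilon)$ bound applies.

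There is no real technical obstacle here; the observation is a direct computation from the payoff matrix \eqref{eq:perturbed_game}. The only mildly non-trivial aspect is the slightly loose slack of $\pm 2\varepsilon$ (rather than the tight $\pm \varepsilon$ the calculation actually yields), which appears to be stated this way to give later lemmas room to absorb additional $O(\varepsilon)$ rounding errors without having to re-derive constants.
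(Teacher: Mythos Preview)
Your proposal is correct and is exactly the intended verification: the paper states this as an observation without proof, and the direct computation from the payoff matrix \eqref{eq:perturbed_game} that you carry out is precisely what one needs to do. Your remark that the actual slack is only $\pm\varepsilon$ (with the stated $\pm 2\varepsilon$ being a harmless loosening) is also accurate.
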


Next, we observe that the optimizer's action space in the stage game can be partitioned into contiguous intervals corresponding to which action of the learner is a best response to it. We note that the breakpoints are close to $1/3$ and $2/3$, where the optimizer action is parametrized by the probability with which action $N$ is played.
\begin{observation}
\label{obs:br_intervals}
    For an optimizer distribution $\alpha(N)$; there are numbers $q = 1/3 \pm O(\varepsilon)$ and $p = 2/3 \pm O(\varepsilon)$ such that :
    \begin{itemize}
        \item If $\alpha(N) \in [0,p)$, $C$ is a unique best response.
        \item If $\alpha(N) \in (p,q)$, $B$ is a unique best response.
        \item If $\alpha(N) \in (q,1]$, $A$ is a unique best response.
    \end{itemize}
\end{observation}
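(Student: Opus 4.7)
The plan is to verify the statement by directly computing the learner's expected payoff for each of the three actions as a linear function of $\alpha(N)$ and locating the break-even points. Writing $\alpha = \alpha(N)$ for brevity, the three payoffs are
\begin{align*}
u_L(A) &= \alpha \epsilon_1 + (1-\alpha)\epsilon_2, \\
u_L(B) &= \alpha(-1/6 + \epsilon_3) + (1-\alpha)(1/3 + \epsilon_4), \\
u_L(C) &= \alpha(-1/2 + \epsilon_5) + (1-\alpha)(1/2 + \epsilon_6).
\end{align*}
All three are affine in $\alpha$, and the key observation is that the slopes of $u_L(A), u_L(B), u_L(C)$ as functions of $\alpha$ are $\epsilon_1 - \epsilon_2 = O(\epsilon)$, $-1/2 + (\epsilon_3 - \epsilon_4)$, and $-1 + (\epsilon_5 - \epsilon_6)$ respectively. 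Hence as $\alpha$ increases, $u_L(C)$ decreases fastest, $u_L(B)$ decreases at an intermediate rate, and $u_L(A)$ stays nearly constant.

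The plan is to first solve for the three pairwise break-even points. Setting $u_L(B) = u_L(C)$ and solving gives $\alpha = 1/3 + O(\epsilon)$, call this $p$; setting $u_L(A) = u_L(B)$ yields $\alpha = 2/3 + O(\epsilon)$, call this $q$; and setting $u_L(A) = u_L(C)$ yields $\alpha = 1/2 + O(\epsilon)$. I would then check the boundary behaviour: at $\alpha = 0$, $u_L(C) \approx 1/2 > u_L(B) \approx 1/3 > u_L(A) \approx 0$, and at $\alpha = 1$, $u_L(A) \approx 0 > u_L(B) \approx -1/6 > u_L(C) \approx -1/2$, so the best response goes from $C$ through $B$ to $A$ as $\alpha$ sweeps from $0$ to $1$.

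The last step is to rule out the possibility that $A$ beats $B$ at $\alpha \approx 1/2$ (which would eliminate the middle region where $B$ is the unique best response). I would evaluate $u_L(B)$ at $\alpha = p = 1/3 + O(\epsilon)$ and at $\alpha = q = 2/3 + O(\epsilon)$, obtaining $u_L(B) \approx 1/6$ and $u_L(B) \approx 0$ respectively, both strictly larger than the corresponding values of $u_L(A) \approx O(\epsilon)$. Combined with monotonicity of each linear function, this shows that on the open interval $(p, q)$ the action $B$ is strictly the unique best response, and similarly $C$ on $[0, p)$ and $A$ on $(q, 1]$. The only nontrivial thing to be careful about is keeping the $O(\epsilon)$ error terms uniform in the perturbations $\epsilon_1, \ldots, \epsilon_6 \in [0, \epsilon]$ with $\epsilon \leq 1/100$; this is immediate from the bound on each $\epsilon_i$ and the fact that the differences in the ``clean'' (unperturbed) payoffs at each break-even point are bounded below by absolute constants ($1/6$ and $1/6$, as computed above), so for sufficiently small $\epsilon$ the best-response regions indeed have the claimed structure with $p = 1/3 \pm O(\epsilon)$ and $q = 2/3 \pm O(\epsilon)$.

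No real obstacle is expected here; this is a direct linear algebra calculation on a $3\times 2$ payoff matrix, and the only subtlety is tracking the error terms carefully enough to confirm the $O(\varepsilon)$ bounds on the locations of $p$ and $q$.
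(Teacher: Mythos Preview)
Your approach is correct and is exactly what is needed here; the paper states this as an observation without proof, and the direct computation of the three affine payoff functions and their pairwise break-even points is the only sensible route. Two small remarks: (i) you have silently swapped the roles of $p$ and $q$ relative to the paper's statement (the paper writes $q \approx 1/3$, $p \approx 2/3$, which makes the interval $(p,q)$ empty as written---this is a typo in the paper, and your labeling is the one that makes the statement coherent); (ii) in your final check, at $\alpha = q \approx 2/3$ you have $u_L(A) = u_L(B)$ by the very definition of $q$, so ``$u_L(B) \approx 0$ strictly larger than $u_L(A)$'' is not the right inequality to verify there---what you need instead is $u_L(B) > u_L(C)$ at that point (which holds since $u_L(C) \approx -1/6$), confirming that $C$ has already dropped below the upper envelope before $A$ catches up to $B$.
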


These observations lead to an important result -- that the state of the algorithm drifts away from the origin, in the $\ell_1$ distance, by $\theta(t)$ after $t$ time steps. 

\begin{lemma}
There exists constants $C_1, C_2 > 0$ such that every sub-trajectory of length $L$ corresponds to somewhere between $C_1 \cdot L$ and $C_2 \cdot L$ moves, and a constant $B > 0$ such that after $t$ time steps, the trajectory is at least $B \cdot t$-far from the origin in the $\ell_1$ distance. \label{lem:constant_drift}
\end{lemma}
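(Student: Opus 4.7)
The plan is to exploit the linear independence of the two ``extreme'' step vectors from Observation~\ref{obs:shifts}. Specifically, when the optimizer plays the pure action $N$ the state shifts by approximately $(1/2, 1/3)$, and when the optimizer plays $Y$ the state shifts by approximately $(-1/2, -1/6)$. These vectors are linearly independent (their determinant is $1/2 \cdot(-1/6) - 1/3\cdot(-1/2) = 1/12 \neq 0$), and this asymmetry is exactly what will both (a) prevent a single step from having zero length and (b) prevent the cumulative shift over $t$ steps from returning close to the origin. Throughout, I will parameterize the optimizer's distribution in round $s$ by $\alpha_s \in [0,1]$ denoting the probability placed on action $N$, so the shift in round $s$ has the form $\xi_s = (\alpha_s - 1/2,\, \alpha_s/2 - 1/6) + \delta_s$ where $\|\delta_s\|_\infty = O(\varepsilon)$.

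For the first part (linear relation between trajectory length and number of moves), I will simply bound $\|\xi_s\|_1$ uniformly from above and below by positive constants. The upper bound is immediate from Observation~\ref{obs:shifts}, since each $\alpha_s \in [0,1]$ forces $\|\xi_s\|_1 \leq 5/6 + O(\varepsilon)$. For the lower bound, I will minimize $g(\alpha) := |\alpha - 1/2| + |\alpha/2 - 1/6|$ over $\alpha \in [0,1]$; a brief piecewise-linear analysis shows the minimum is $1/12$, attained uniquely at $\alpha = 1/2$. Hence $\|\xi_s\|_1 \geq 1/12 - O(\varepsilon)$ for every step, and summing over a sub-trajectory of $k$ moves yields total length $L \in [k(1/12 - O(\varepsilon)),\, k(5/6 + O(\varepsilon))]$. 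Rearranging gives the desired $C_1, C_2$ for the number of moves as a function of $L$.

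For the second part (drift from the origin), I will sum the shifts directly. After $t$ steps from the origin, the state is $\sum_{s=1}^t \xi_s = (S_t - t/2,\, S_t/2 - t/6) + \Delta_t$, where $S_t := \sum_{s=1}^t \alpha_s$ and $\|\Delta_t\|_\infty = O(\varepsilon t)$. The $\ell_1$ norm of the state is then at least $h(S_t) - O(\varepsilon t)$ where $h(S) := |S - t/2| + |S/2 - t/6|$. I will do a three-case analysis on $S$ relative to the break points $t/3$ and $t/2$: for $S \geq t/2$, $h$ is increasing and $h(t/2) = t/12$; on $[t/3, t/2]$, $h(S) = t/3 - S/2$ which is decreasing, again bottoming out at $t/12$; and for $S \leq t/3$, $h$ is decreasing in $S$ with $h(t/3) = t/6 \geq t/12$. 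Thus $h(S_t) \geq t/12$ uniformly in the optimizer's choices, and choosing $\varepsilon$ small enough (which is guaranteed by the perturbation bound on $u_L$) yields $\|(u_t,v_t)\|_1 \geq Bt$ for some positive constant $B$ (e.g., $B = 1/13$).

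The main obstacle is really just conceptual rather than technical: one has to notice and exploit the fact that the two ``move vectors'' in this payoff are \emph{not} parallel, so no convex combination of them (even one that varies over time, since only the cumulative sum matters) can cancel out. Every other step is a matter of taking absolute values and minimizing a piecewise-linear function in one variable. Care is needed only to absorb the $O(\varepsilon)$ perturbation terms coming from \eqref{eq:perturbed_game} into the constants $C_1, C_2, B$, which is straightforward since the unperturbed quantities ($1/12$, $5/6$) are bounded away from zero and infinity.
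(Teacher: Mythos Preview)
Your proof is correct and follows essentially the same approach as the paper: both exploit the fact that the two step vectors $V_1 \approx (1/2, 1/3)$ and $V_2 \approx (-1/2, -1/6)$ are linearly independent, so every convex combination (and hence every single step and every time-averaged step) has norm bounded away from zero. The only cosmetic difference is that you minimize the piecewise-linear $\ell_1$ norm directly over the mixture parameter, whereas the paper phrases the drift bound geometrically as the $\ell_2$ distance from the origin to the line through $V_1$ and $V_2$; both yield the same $\Theta(1)$ lower bound (your $1/12$ versus the paper's $1/(6\sqrt{5})$).
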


\begin{proof}
Each time step moves the state only by a constant amount in the $\ell_1$ metric (follows directly from Observation~\ref{obs:shifts}). Thus, a segment of some length $L$ must correspond to a contiguous subsequence of at least $C_1 \cdot L$ time steps for some constant $C_1>0$. Showing the second half of the lemma statement will imply a similar upper bound on the number of time steps. To do so, we look at the offsets to the state generated by the two actions $N$ and $Y$, which are $V_1 =  (1/2 \pm 2\varepsilon, 1/3 \pm 2\varepsilon)$ and $V_2 = (-1/2 \pm 2\varepsilon, -1/6 \pm 2\varepsilon)$. The average offset caused by the action of the optimizer over any number of time steps must therefore be some convex combination of $V_1$ and $V_2$ (we use linearity to rewrite the average offset of the actions as the offset generated by the historical average optimizer action). Ignoring the $\varepsilon$ terms for the moment, the line between the points $V_1$ and $V_2$ is $y = x/2 - 1/12$. The $\ell_2$ distance from the origin $(0,0)$ to this line is $\frac{1}{6 \sqrt{5}}$, implying that every point on the line is at least  $\frac{1}{6 \sqrt{10}}$ far from the origin. With the $\varepsilon$ terms included, it is easy to see that this statement is still true for the distance $\frac{1}{6 \sqrt{10}} \pm O(\varepsilon)$. Thus, after $T$ time steps, the state has $\ell_1$ distance at least $\frac{T}{24}$ from the origin, completing the proof.
\end{proof}

The above lemma implies a bound on the number of non-crossing segments in any trajectory.

\begin{lemma}
\label{lemma:bound_crossing}
The number of different segments between rounds $t = \phi$ and $t = T$ is at most $O\left(\frac{T}{\phi}\right)$. 
\end{lemma}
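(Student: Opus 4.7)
The plan is to show that each crossing segment (of type C1 or C2) occurring after round $\phi$ consumes $\Omega(\phi)$ rounds, so that the number of such segments is $O(T/\phi)$; since segment types alternate, this bound then extends to all segments.

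First I would invoke Lemma~\ref{lem:constant_drift} to observe that at any round $t \geq \phi$, the state $(u_t, v_t)$ satisfies $\|(u_t, v_t)\|_1 \geq B \phi$ for a constant $B > 0$. Next I would establish the geometric claim that any point on $S_{AB}$ at $\ell_1$-distance $\geq B\phi$ from the origin lies at Euclidean distance $\Omega(\phi)$ from any point on $S_{BC}$ at $\ell_1$-distance $\geq B\phi$ from the origin. Parametrizing $S_{AB}$ as $\{(u, u + f(T)) : u \geq 0\}$ and $S_{BC}$ as $\{(u, f(T)) : u \leq 0\}$, the $\ell_1$ norm lower bound forces the $u$-coordinates of the two points to differ by $\Omega(\phi)$, provided $\phi$ eventually dominates $f(T) = o(T)$; if $\phi$ is so small that this fails, then $T/\phi$ is already $\Omega(T)$ and the claim is trivial.

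Since every optimizer move shifts the state by only $O(1)$ in either coordinate (Observation~\ref{obs:shifts}), traversing from within $O(1)$ of $S_{AB}$ to within $O(1)$ of $S_{BC}$ requires $\Omega(\phi)$ rounds. Hence each C1 or C2 segment lying entirely past round $\phi$ consumes $\Omega(\phi)$ of the at most $T$ available rounds, yielding $O(T/\phi)$ crossing segments. Since NC and C segments alternate (between an initial I segment and a final F segment), there are at most $O(T/\phi)$ NC segments as well; adding the constant number of I, F, and straddling segments preserves the bound.

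The main obstacle is the geometric lower bound on the distance between points on $S_{AB}$ and $S_{BC}$ that are both far from the origin: while elementary, one must use the explicit definitions of the soft boundaries, because the two boundaries do meet at the origin and so their separation scales linearly with distance from the origin rather than being a fixed constant. Everything else follows from plugging in the constant-drift lemma and the constant-step observation.
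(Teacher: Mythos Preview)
Your approach is essentially the same as the paper's: bound the number of crossing segments by showing each one must traverse a distance $\Omega(\phi)$ between the two soft boundaries, which by the constant step size takes $\Omega(\phi)$ rounds. The paper phrases the geometric step as ``any crossing segment is the largest side of an obtuse-angled triangle whose two smallest sides have length $\Omega(\phi)$'' (the triangle being origin--$S_{AB}$-endpoint--$S_{BC}$-endpoint), while you parametrize the boundaries explicitly; these are the same computation.

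One small slip: your edge-case handling when $\phi$ does not dominate $f(T)$ is not quite right. You write that in this case $T/\phi$ is already $\Omega(T)$, but $\phi \leq C f(T)$ only gives $T/\phi = \Omega(T/f(T))$, not $\Omega(T)$. The paper's fix is cleaner: simply assume WLOG that $\phi > f(T)$, since the initial segment (I) has length $\Omega(f(T))$ and therefore no crossing segment can begin before round $\Omega(f(T))$. Thus bounding segments after round $f(T)$ already bounds all crossing segments, and $O(T/f(T)) \leq O(T/\phi)$ when $\phi \leq f(T)$. This is a one-line patch and does not affect the correctness of your main argument.
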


\begin{proof}
    An upper bound the number of crossing segments after $\phi$ time steps implies the same asymptotic upper bound for the number of non-crossing segments. 
    We assume, without loss of generality, that $\phi > f(T)$ since the first crossing segment can only appear after $\Omega(f(T))$ time steps (which is the minimum length of the initial segment). Using Lemma~\ref{lem:constant_drift}, we see that any crossing segment is the largest side of an obtuse angled triangle whose two smallest sides have length $\Omega(\phi)$. Thus, each non-crossing segment has length $\Omega(\phi)$, and must consequently correspond to $\Omega(\phi)$ time steps (since each time step's line segment has constant length). This directly implies the desired result.
\end{proof}

The critical part of our proof is a method to transform non-crossing segments into a segment with identical start and end points, but one that only spends $O(f(T))$ (i.e sublinear) number of time steps within the soft boundaries; and induces almost the same CSP. For a clean exposition that accounts for different non-crossing segments having different length, we unnormalize the CSP stick to this convention for the proof. By this, we mean that the CSP is redefined to be $\sum_{t=1}^T x_t \otimes y_t$ rather than $\frac{1}{T}\sum_{t=1}^T x_t \otimes y_t$.  
A visual version of this transformation is shown in Figure~\ref{fig:altered_nc}.

\begin{figure}[htbp]
    \centering
    \includegraphics[width=0.5\textwidth]{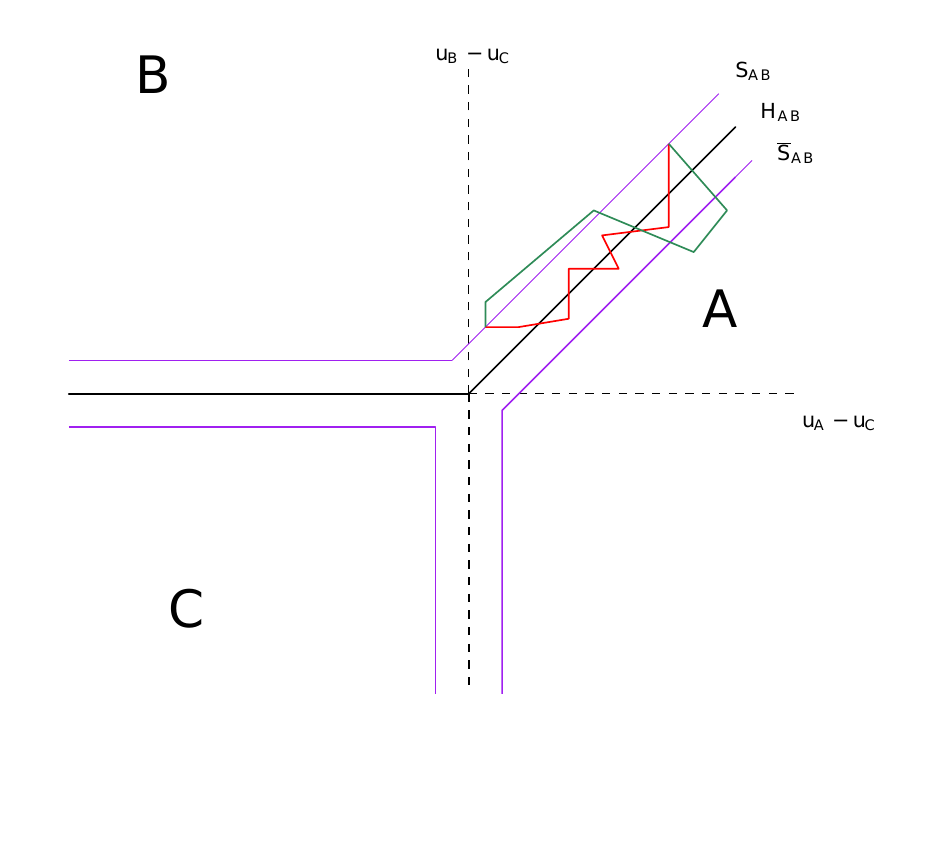}
    \caption{The NC segment (in red) is replaced by a segment (in green) that spends minimal time inside the soft boundary.}
    \label{fig:altered_nc}
\end{figure}

\begin{lemma}
\label{lem:nc_replacement}
Every non-crossing segment can be replaced with a segment beginning and ending in the same place, running for the same number of time steps and spending $O(f(T))$ time within the soft boundaries, with the property that the resulting trajectory induces a CSP which differs by $O(f(T))$ from the starting CSP.
\end{lemma}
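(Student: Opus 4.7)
The plan has three steps: first, reduce the algorithm's behavior on the NC1 segment to a $2$-action FTRL instance on $\{A,B\}$ via Lemma~\ref{lem:ftrl_properties}; second, use the two-sided FTRL regret bounds together with the segment's start/end conditions to pin down both the average optimizer distribution and the algorithm's total mass on each of $A,B$; and third, construct the replacement as a warmup, an A-excursion, a short transition, a B-excursion, and a closing phase, arranged so that each bulk sits strictly inside a clear-best-response region and the algorithm plays near-purely there. The NC2 case will be symmetric.

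For step 1, throughout an NC1 segment the state lies in the soft band around $H_{AB}$ while staying on the non-$S_{BC}$ side, so at every round $C$ trails $\max(A,B)$ historically by more than $f(T)\ge \gamma(T)\cdot T$. The final clause of Lemma~\ref{lem:ftrl_properties} then says the algorithm's round-$t$ distribution agrees with that of some 2-action FTRL instance on $\{A,B\}$ up to $\sshift(T)/T$ per round, i.e., total additive CSP mass error $O(\sshift(T)) = O(f(T))$, which I absorb into the budget; henceforth I treat $\cA$ as producing distributions in $\Delta_{\{A,B\}}$ each round.

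For step 2, let $L$ be the segment length, $X=\sum_t x_t$, $\bar x = X/L$, and let $M_A, M_B, X_A, X_B$ denote the algorithm's total masses on $A,B$ and the corresponding weighted optimizer sums (so $\|X_A\|_1=M_A$, etc.). Since the segment starts and ends within $O(f(T))$ of $H_{AB}$, the cumulative rewards of the fixed actions $A$ and $B$ agree to within $O(f(T))$; by linearity this forces $\bar x$ to lie within $O(f(T)/L)$ of the $A/B$-indifference distribution $x^*_{AB}$ (with $x^*_{AB}(N)=q_{AB}\approx 2/3$). Rerunning the same accounting with $X_A, X_B$ in place of $X$ and invoking the upper bound $\regbound(T)$ and the nonnegative lower bound of Lemma~\ref{lem:ftrl_properties} for the 2-action instance yields that $\bar x_A := X_A/M_A$ and $\bar x_B := X_B/M_B$ both lie within $O(f(T)/\min(M_A,M_B))$ of $x^*_{AB}$. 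This quantitative closeness is the structural fact that enables matching the original CSP.

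For step 3, I pick $k=\Theta(f(T))$ large enough that $k$ pure-$N$ steps drive the state from $H_{AB}$ at least $2f(T)$ into the clear-$A$ region (per Observation~\ref{obs:shifts}), and analogously that an $O(f(T))$-step transition of $\tfrac12 N+\tfrac12 Y$ carries the state across $H_{AB}$ by $2f(T)$ into the clear-$B$ region. The replacement concatenates (a) a $k$-step pure-$N$ warmup, (b) an A-bulk of length $M_A-O(f(T))$ at a deterministic $\tilde x^A$ chosen so the cumulative optimizer action across (a)+(b) equals $X_A$ up to $O(f(T))$, (c) an $O(f(T))$-step transition of $\tfrac12 N+\tfrac12 Y$, (d) a B-bulk of length $M_B-O(f(T))$ at $\tilde x^B$ defined analogously, and (e) an $O(f(T))$-step closing phase restoring the endpoint. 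Because $\bar x_A, \bar x_B$ sit within $O(f(T)/L)$ of $x^*_{AB}$, the distributions $\tilde x^A, \tilde x^B$ are also within $O(f(T)/L)$ of $x^*_{AB}$, so the per-step drift $\Delta(u-v)$ during each bulk is $O(f(T)/L)$ and accumulates to at most $O(f(T))$. Hence the state stays inside the respective clear region throughout each bulk, and the mean-based property of $\cA$ forces it to play $A$ (resp.\ $B$) with probability $1-O(\gamma(T))$ there. The A-bulk then contributes $X_A\otimes A\pm O(f(T))$ to the CSP and the B-bulk contributes $X_B\otimes B\pm O(f(T))$, while the short warmup, transition, and closing phases contribute $O(f(T))$ in total; combined with the $O(\sshift(T))$ reduction error and the $O(\gamma(T)L)\subseteq O(f(T))$ mass the original placed on $C$, this matches the original CSP contribution to within $O(f(T))$.

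I expect the main obstacle to be the calibration in (b) and (d): choosing $\tilde x^A, \tilde x^B$ to be genuine probability distributions that simultaneously (i) absorb the $O(f(T))$ optimizer-sum offset introduced by the warmup, transition, and closing phases, (ii) keep the state strictly inside the respective clear region for the whole bulk, and (iii) deliver the right sum $X_A, X_B$. This works because the offset budget scales with the length of the boost phases ($O(f(T))$) while the deviation of $\bar x_A, \bar x_B$ from the indifference line scales as $O(f(T)/L)$, and the two quantities balance via the linear structure encoded in Observations~\ref{obs:shifts} and~\ref{obs:br_intervals}, leaving enough slack to push $\tilde x^A, \tilde x^B$ into the needed half-planes while preserving the sum-matching.
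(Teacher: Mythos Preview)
Your approach is essentially the paper's: reduce to a two-action FTRL instance on $\{A,B\}$ via the last clause of Lemma~\ref{lem:ftrl_properties}, use the two-sided regret bounds to pin the conditional optimizer averages $\bar x_A,\bar x_B$ near the $A/B$-indifference point, then rebuild the segment as warmup $\to$ $A$-bulk $\to$ short transition $\to$ $B$-bulk $\to$ closing, with each bulk sitting strictly inside a clear-best-response region so that $\cA$ plays near-purely there. The paper plays $D_A,D_B$ directly for slightly fewer rounds and fixes the endpoint at the end rather than perturbing to $\tilde x^A,\tilde x^B$, but this is a cosmetic difference.

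Two small points to tighten. First, the scale you actually get from the regret argument is $|\bar x_A-x^*_{AB}|=O(f(T)/M_A)$ and $|\bar x_B-x^*_{AB}|=O(f(T)/M_B)$, not $O(f(T)/L)$ as you write in step~3 (nor uniformly $O(f(T)/\min(M_A,M_B))$ as in step~2); fortunately this still gives $O(f(T))$ cumulative drift over each bulk, so your conclusion stands. Second, your calibration of $\tilde x^A,\tilde x^B$ as valid distributions tacitly assumes $M_A,M_B\gg f(T)$; when one of them is $O(f(T))$ you should simply drop the corresponding bulk (its contribution to the CSP is already $O(f(T))$), and the paper does this case split explicitly.
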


\begin{proof}

We will prove this for the soft boundary $S_{AB}$. A similar proof works for $S_{BC}$.

Let $s = x_{t_1:t_2}$ be the segment of play by the optimizer starting at $S_{AB}$ and ending somewhere within the soft boundary of $BC$. It has a resulting CSP of $c$ (the average joint distribution over action pairs measured over this sequence). 

Instead of analyzing algorithm $\A$'s induced CSP $c$ on the segment $s$; we analyze the (unnormalized) CSP $c'$ generated by the action of $\A_{2}$, a learning algorithm which is equivalent to $\A$ except that it only considers actions $A$ and $B$, on the same segment.
The last clause of Lemma~\ref{lem:ftrl_properties} shows that the algorithms $\A$ and $\A_2$ pick almost identical distributions in each round (separated by at most $\frac{\sshift(T)}{T}$) -- thus the CSPs $c$ and $c'$ are sublinearly close, i.e., $||c-c'||_1  = O(f(T))$. Further, this alternative algorithm is also an instance of $\ftrl$, guaranteeing the same properties, at least with respect to the actions $A$ and $B$.

\jon{flagging for myself that this is where we use the last property of FTRL.}

Now, we will describe a new trajectory that approximates $c'$. Recall that $x_t$ is the distribution played by the optimizer at round $t$. Let $(\alpha_A^t,\alpha_B^t)$ be the distribution played by the learning algorithm $\A_2$ at round $t$. Define $D_A := \frac{1}{\alpha_A} \sum_{t=t_1}^{t_2} \alpha_A^t x_t$  and $D_B := \frac{1}{\alpha_B} \sum_{t=t_1}^{t_2} \alpha_B^t x_t$ where $\alpha_A := \sum_{t=t_1}^{t_2} \alpha_A^t$ and $\alpha_B := \sum_{t=t_1}^{t_2} \alpha_B^t$. In other words, $D_{A}$ and $D_{B}$ are the average distributions of the optimizer conditioned upon the learner playing $A$ and $B$ respectively, while $\alpha_{A}$ and $\alpha_{B}$ are the sum of the probabilities (over all the rounds in this segment) with which the optimizer plays $A$ and $B$ respectively.

Since the segment $s$ starts and ends at the soft boundary $S_{AB}$, we know that $A$ and $B$ are both within a cumulative value of $O(f(T))$ to a best response to the distribution $d' = \frac{\alpha_{A}D_{A}+\alpha_{B}D_{B}}{\alpha_{A} +\alpha_{B}}$, i.e the average distribution of the optimizer over the segment of interest, played $\alpha_A+ \alpha_B$ times.

We argue that algorithm $\A_2$ has an upper bound of  $O(f(T))$ on the magnitude of regret in this segment. To see this, we imagine a prefix to this segment that starts at the hard boundary $H_{AB}$ and moves to the soft boundary $S_{AB}$ by playing $K'f(T)$ steps of $Y$ for some constant $K'$~\footnote{We employ this argument since regret upper\ lower bounds are typically only guarantees for sequences starting at time $t=0$.}. Now, since this extended sequence begins at the hard boundary, it is as if we started running $\A_2$ at $t=0$ at the start of this sequence (by adding a constant offset to the cumulative payoffs of $A$ and $B$, a legal action, by Lemma~\ref{lem:ftrl_properties}). Thus, we get the regret guaranteed to be in the interval $[-O(f(T)), O(f(T))]$ for the extended sequence (from the properties of $\ftrl$ algorithms \jon{flagging for myself that this is where we use the bound on negative-regret}), and observe that the regret for the ``true" segment is therefore guaranteed to be in $[- H f(T), H f(T)]$ for some constant $H> 0$.

First, we observe that the no-regret property with two actions implies no-swap regret. Quantitatively, we have:

\[\alpha_B (\mu_{\mathcal{L}}(D_B,A) - \mu_{\mathcal{L}}(D_B,B)) \le H f(T) \qquad \text{(based on regret bounds against $A$)}\]
\[\alpha_A (\mu_{\mathcal{L}}(D_B,B) - \mu_{\mathcal{L}}(D_B,A)) \le -H f(T)  \qquad \text{(based on regret bounds against $B$)}\]

We show the stronger property that $A$ and $B$ are both close to a best response to both $D_{A}$ and $D_{B}$ (the observation above only shows that $A$ is a near best response to $D_A$ and $B$ is a near best response to $D_B$). Assume for contradiction that $\alpha_A(\mu_{\mathcal{L}}(D_B,A) - \mu_{\mathcal{L}}(D_B,B)) > 2 H f(T)$.
\begin{align*}
\mu_{\mathcal{L}}(c') & = \alpha_{A}\mu_{\mathcal{L}}(D_{A},A) + \alpha_{B}\mu_{\mathcal{L}}(D_{B},B) \\
& > \alpha_{A}\mu_{\mathcal{L}}(D_{A},B) + \alpha_{B}\mu_{\mathcal{L}}(D_{B},B) + 2H f(T) \qquad \text{(by assumption)}\\
& = (\alpha_A+\alpha_B)\mu_{\mathcal{L}}(d',B) + 2H f(T) \qquad \text{(using linearity of the payoffs)} \\
& \geq \max_{S \in \{A,B\}}(\alpha_A+\alpha_B)\mu_{\mathcal{L}}(d',s) - H f(T) +  2Hf(T) \text{(by the fact that $B$ is close to a best response to $d'$)}\\  
\end{align*}

This is a contradiction for sufficiently large $T$, as $\A_{2}$ has an average regret lower bound of $-H f(T)$. A similar argument can be made with $A$ and $B$ exchanged. Thus, we have :

\[\alpha_A(\mu_{\mathcal{L}}(D_B,A) - \mu_{\mathcal{L}}(D_B,B)) \le 2 H f(T)\]
\[\alpha_B(\mu_{\mathcal{L}}(D_A,B) - \mu_{\mathcal{L}}(D_A,A)) \le 2 H f(T)\]

These inequalities are key to constructing our transformation. We split our analysis into four cases, depending on whether or not actions $A$ and $B$ are both played for a significant amount of time in this segment.


We start by assuming that $\alpha_{A} \geq 8 K_1 f(T)$, $\alpha_{B} \geq 8 K_2f(T)$ for some positive constants $K_1$ and $K_2$ which will be defined shortly. This is the most important case in the analysis, and the other cases (where this assumption is not true) simply follow by exploiting sublinear differences in the construction.

The new sequence is as follows: from the hard boundary, play $N$ until $u_{t} - v_t = 3Hf(T)$. By Observation~\ref{obs:br_intervals}, this will take $K f(T)$ rounds for some constant $K>0$. Then, play $D_{A}$ for $\alpha_{A} - 8K f(T)$ rounds. Then, play $Y$ until $u_{t} - v_t = -3Hf(T)$. We claim that takes only $K_{2}f(T)$ rounds for some constant $K_{2} > 0$ - this is so because the cumulative effect of playing $D_A$ for $\alpha_A$ rounds causes a difference of at most $2Hf(T)$ in the payoffs of actions $A$ and $B$ over these rounds. Therefore, at the time step $t$ at the end of playing $D_A$, $u_t - v_t \le 5Hf(T)$. Then, play $D_{B}$ for $\alpha_{B} - 8K_{2} f(T)$ rounds.

Let $D_{A}(Y)$ be the fraction of the time that $Y$ is played in $D_{A}$, and define $D_{A}(N)$, $D_{B}(Y)$, and $D_{B}(N)$ accordingly. Then, let us consider the number obtained by adding up the probability of playing $N$ in all the rounds described so far. This number $n_N$ is upper bounded by $D_{A}(N)(\alpha_{A} - 8K f(T)) + D_{B}(N)(\alpha_{B} - 8K_{2} f(T)) + K f(T)$.  Likewise, the number of times $n_Y$ we have played $y$ so far is upper bounded by $D_{A}(Y)(\alpha_{A} - 8K f(T)) + D_{B}(Y)(\alpha_{B} - 8K_{2} f(T)) + K_{2} f(T)$.

To finish our sequence, we will play $N$ for $n'_N := d'(N) (\alpha_A + \alpha_B) - n_N$ times, which is lower bounded by  $d'(N) (\alpha_A + \alpha_B)- (D_{A}(N)(\alpha_{A} - 8K f(T)) + D_{B}(N)(\alpha_{B} - 8K_{2} f(T)) + K f(T))$ and then play $Y$ for $n'_Y:= d'(Y)(\alpha_A + \alpha_B) - n_Y$ rounds, which is lower bounded by  $d'(Y)(\alpha_A + \alpha_B) - (D_{A}(Y)(\alpha_{A} - 8K f(T)) + D_{B}(Y)(\alpha_{B} - 8K_{2} f(T)) + K_{2} f(T))$ rounds. We show that this is a legal operation by proving that $n'_N, n'_Y \ge 0$. It suffices to show their lower bounds are non-negative.

Note that by the fact that $A$ and $B$ are near best responses for both $D_{A}$ and $D_B$ and using Observation~\ref{obs:br_intervals}, $D_{A}(N), D_B(N), D_A(Y), D_B(Y) > \frac{1}{4}$.
\begin{align*}
n'_N &= D_{A}(N)\alpha_{A} + D_{B}(N)\alpha_{B} - D_{A}(N)(\alpha_{A} - 8K f(T)) - D_{B}(N)(\alpha_{B} - 8K_{2} f(T)) - K f(T) \\
& =  D_{A}(N)8K f(T) + D_{B}(N)8K_{2} f(T) - K f(T)  \\
& \geq 2K f(T) +2K_{2} f(T) - K f(T) \\
& = K f(T) +2K_2 f(T) > 0 \\
\end{align*}
A similar analysis for $Y$ shows that
\begin{align*}
n'_Y & = D_{A}(Y)\alpha_{A} + D_{B}(Y)\alpha_{B} - D_{A}(Y)(\alpha_{A} - 8K f(T)) - D_{B}(Y)(\alpha_{B} - 8K_{2} f(T)) - K_{2} f(T)  \\
& =  D_{A}(Y)8K f(T) + D_{B}(Y)8K_{2} f(T) + 8K_{2} - K_{2} f(T)  \\
& \geq 2K f(T) +2K_{2} f(T) - K_{2} f(T) \\
& = 2K f(T) + K_{2} f(T) > 0\\
\end{align*}

A similar analysis can also be used to upper bound $n'_X$ and $n'_Y$ by $8(K_1+K_2)f(T)$. Since $n'_X, n'_Y \ge 0$, our construction is valid, and we have played $N$ and $Y$ the same total frequency across the rounds as compared to the original segment. Therefore the endpoint of the new segment will be exactly the same~\footnote{The state at the end of $t$ rounds only depends on the average marginal distribution of the optimizer and is invariant to reordering of the optimizer's play.} and the number of rounds will be as well. Further, we argue that the CSP of this segment is very close to the CSP of the original segment. In the new segment, because we have moved $3Hf(T)$ away from the hard boundary and because we can only move back by $2Hf(T)$ as we play $D_{A}$, the learner will play $A$ the whole time against this (except for a sublinear fraction of the rounds), since we are in the clear historical best response zone for action $A$~\footnote{Here, we use the fact that $f$ is monotone increasing.}. For the same reason, the learner plays $B$ against $D_{B}$ when we play it (except for a sublinear fraction of the rounds). The other drivers of change in the CSP are the $Kf(T)$ rounds of $\alpha_A$ that we do not play $D_A$ (similar for $D_B$), the $K_2 f(T)$ rounds of playing $Y$ that are required to go from the region where $A$ is the clear historical best response to where $B$ is the clear historical best response and the $8(K_1+K_2)(f(T))$ rounds of play at the end to reach the same end point as the original segment. Even if all these changes go in the same direction, the net effect on the CSP is at most $O(f(T))$ deviation in the 
$\ell_1$ distance since each round can only effect a change of $O(1)$ (the constants do not depend on $T$ and are entirely determined by the learner values in the game).

Although we have assumed that $\alpha_{A} \geq 8 K_1 f(T)$, $\alpha_{B} \geq 8 K_2f(T)$, the same analysis can still be done if this assumption is not true. We simply do not play $D_A$/$D_B$ in the construction if the corresponding frequency $\alpha_A$/ $\alpha_B$ is very small. With minor modifications, the same guarantees can be achieved without this assumption.

\end{proof}

Armed with these tools, we focus now on generating $P'_T$ from $P_T$ for a finite time horizon $T$. The procedure starts by decomposing the trajectory of the point $P_T$ in the state space defined above into I,C1,C2,NC1,NC2 and F segments. Each segment is then replaced with another segment (by a suitable modification of the optimizer actions)  with the same start and end point so that the new trajectory has the property that only sublinear time is spent outside the regions with a clear historical best response. We maintain the invariant that the unnormalized CSP induced by this new sequence/ trajectory against algorithm $\A$ is only sublinearly different (i.e $o(T)$ distant in the $\ell_1$ metric) from $P_T$. 


Dealing with the Crossing segments of the type C1 or C2 is the easiest part of the transformation - these segments already spend their entire duration  in a region with a single clear historical best response, so we can leave them untouched. To deal with non-crossing segments (C1/C2); we make use of Lemma~\ref{lem:nc_replacement} to replace them one by one with a segment that has the same start and end point, operates for exactly the same time duration and results in a trajectory that spends time at most $O(f(T))$ outside the clear historical best response region and generates a CSP shifted by only $O(f(T))$ from the starting trajectory's CSP. A concern with this sequence of transformations is that sufficiently large number of crossing segments can result in linear drift from the original CSP/ linear time spent outside the clear historical best response zone. To address this issue, we split the analysis into two parts. Set $g(T) = \sqrt{f(T) \cdot T}$, since $f(T)$ is asymptotically sublinear, so is $g(T)$. 
We split the time horizon $T$ into the first $g(T)$ time steps and the remaining time steps. The play in the first part can only affect the CSP by $O(g(T))$, further the total time spent outside the clear historical best response zone in the first part is also upper bounded by $g(T)$. Using Lemma~\ref{lemma:bound_crossing}, there are at most $O\left(\frac{T}{g(T)} \right) = O \left( \sqrt{ \frac{T}{f(T)}} \right)$ crossing segments in the second half. Therefore, the total drift (in $\ell_1$ distance) from the original CSP is upper bounded by $O \left( f(T) \cdot  \sqrt{ \frac{T}{f(T)}} \right) = O(g(T)) =   o(T)$. Similarly, in the time steps corresponding to the crossing segments, the final trajectory spend time at most $o(T)$ away from the boundary. 

A similar analysis can be done for the initial (I) and final (F) segment, by only replacing the non-crossing part of their trajectories using Lemma~\ref{lem:nc_replacement} and contributing an additional $o(T)$ drift away from the original CSP, and $o(T)$ additional time spent away from the clear historical best response zone. 

\section{Omitted proofs}
\label{app:omitted_proofs}

\subsection{Proof of Lemma~\ref{lem:menu_alg_equiv}}

\begin{proof}[Proof of Lemma~\ref{lem:menu_alg_equiv}]
We will first show that $V_{L}(\M(\cA), u_O) \leq V_{L}(\cA, u_O)$. To see this, let $\csp^*$ be the CSP in $\M(\cA)$ achieving the value $V_L(M(\cA), u_O)$. There exists a sequence $\csp^1, \csp^2, \dots$ with $\csp^{T} \in \M(\cA^T)$ which converges to $\csp^*$. For any menu $\M$ and CSP $\csp \in \M$, let $\eps_{O}(\M, \csp) = (\max_{\csp' \in \M} u_{O}(\csp')) - u_{O}(\csp)$ denote how far $\csp$ is from optimal (for the optimizer) in $\M$. Since $u_O$ is continuous in $\csp$, the function $\eps_{O}$ is continuous, and therefore the sequence $\eps_{O}(\M(\cA^T), \csp^{T})$ converges to $\eps_{O}(\M(\cA), \csp^{*}) = 0$. This means that for any $\eps$ there exists a $T(\eps)$ such that $\eps_{O}(\M(\cA^{T}), \csp^{T}) < \eps$ for any $T \geq T(\eps)$. But this implies there is a transcript in $\cX(\cA, u_O, T, \eps)$ that induces $\csp^{T}$, and in particular, that $V_{L}(\cA, u_O, T, \eps) \geq u_L(\csp^{T})$. In particular, for any $\eps > 0$, $\liminf_{T\rightarrow \infty} V_{L}(\cA, u_O, T, \eps) \geq \liminf_{T\rightarrow\infty} u_L(\csp^{T}) = u_L(\csp^{*}) = V_{L}(\M(\cA), u_O)$. By taking $\eps$ to $0$, it follows that $V_{L}(\M(\cA), u_O) \leq V_{L}(\cA, u_O)$.

We now must show $V_{L}(\cA, u_O) \leq V_{L}(\M(\cA), u_O)$. Let $V = V_{L}(\cA, u_O)$; by the definition \eqref{eq:learner_value} of $V_L(\cA, u_O)$, there must exist a sequence $(T_1, \eps_1), (T_2, \eps_2), \dots$ of strictly increasing $T_i$ and strictly decreasing $\eps_i$ where $V_L(\cA, u_O, T_i, \eps_i)$ converges to $V$. For each $(T_i, \eps_i)$, let $\csp_i \in \M(\cA^{T_i})$ be the CSP induced by a $\eps_i$-optimal point in $\cX(\cA, u_O, T_i, \eps_i)$, i.e., one where $u_L(\csp_i) \geq V_L(\cA, u_O, T_i, \eps_i) - \eps_i$. Since the $\csp_i$ all belong to the bounded closed set $\Delta_{mn}$, there must be a convergent subsequence of these CSPs that converge to a $\csp' \in \M(\cA)$ with the property that $u_L(\csp') = V$. It follows that $V = V_{L}(\cA, u_O) \leq V_{L}(\M(\cA), u_O)$.
\end{proof}

\subsection{Proof of Lemma~\ref{lem:char_upwards_closed}}

\begin{proof}[Proof of Lemma~\ref{lem:char_upwards_closed}]
To prove that some convex set $\cM'$ is an asymptotic menu, it is sufficient to prove there exists a consistent learning algorithm $\cA'$ that has the asymptotic menu $\cM(\cA') = \cM'$. The learning algorithm $\cA'$ is as follows: let $C$ be the smallest integer that is larger than $\sqrt{T}$. Note that this grows asymptotically with $\sqrt{T}$. The learner now construct the set $S$ of at most $O(C^{mn})$ CSPs in $\Delta_{mn}$ containing all CSPs whose coordinates are all integer multiples of $\frac{1}{C}$, and that are within $\ell_{\infty}$-distance $\frac{1}{C}$ of $\cM$. Note that every CSP $\csp$ in $\cM$ is within $\ell_{\infty}$-distance $\frac{1}{C}$ of some CSP of this form.


We can now assign each $\csp' \in S$ a unique sequence of moves which the optimizer can play in order to communicate this CSP to the learner. As there are $C^{mn}$ points and the optimizer can play any of $m$ actions, the optimizer can communicate this in at most $\log(C^{mn}) = mn\log(C)$ rounds, using constant number of bits each round. For the first $mn\log(C)$ rounds of the game, the learner will play their first move, and the optimizer will communicate their preferred $\csp'$. 

After this, the learner will identify the $\csp' \in S$ associated with this sequence, and ascribe a joint cycle of moves they can play with the optimizer $(P(x),P(y))$ that will result in $\csp'$ being played in the limit as the time horizon $T$ goes to infinity. In particular, for some CSP $\csp'$, the sequence will play out in cycles of length $C$, and in each cycle, the learner and optimizer will jointly play the move pair $(i,j)$ for $C \cdot \csp(i,j)$ rounds (recall that this is an integer); iterating over move pairs in lexicographical order. If the opponent ever fails to follow $P(j)$, the optimizer will immediately begin playing the algorithm that converges to $\cM$, ignoring the history of play that happened before this failure occurred. 

First we will show that $\cM_{A'} \subseteq \cM'$. \jon{note: rewrote a little bit of this proof / added more detail in places, make sure I didn't mess up.} Consider, for a game of length $T$, any sequence that the optimizer could play against $A$. During the first $mn\log(C)$ rounds the learner and optimizer will agree on some $\csp'_{\cM'} \in \Delta_{mn}$ and associated sequence $(P(x),P(y))$ s.t. $\lim_{T \rightarrow \infty}(\sum_{t= 1}^{T}P(x)_{t} \otimes P(y)_{t}) = \csp'_{\cM'}$. Then let $t' \geq mn\log(C)$ be the first time that the follower does not play according this sequence. Then, the resulting CSP is of the form 

\begin{align*}
& \frac{1}{T}\sum_{t=1}^{T}x_{t}\otimes y_{t} \\
& =\frac{1}{T}\sum_{t=1}^{mn\log(C)}x_{t}\otimes y_{t} + \frac{1}{T}\sum_{i=1}^{t' - mn\log(C) -1}P(x)_{i} \otimes P(y)_{i} + \frac{1}{T}x_{t'}\otimes y_{t'} + \frac{1}{T}\sum_{t=t'+1}^{T}x_{t}\otimes y_{t} \tag{By the fact that the players coordinate from $mn\log(C)+1$ to $t'-1$}\\
& =\frac{1}{T}\sum_{t=1}^{mn\log(C)}x_{t}\otimes y_{t} + \frac{1}{T}\sum_{i=1}^{t' - mn\log(C) -1}P(x)_{i} \otimes P(y)_{i} + \frac{1}{T}x_{t'}\otimes y_{t'} + \frac{1}{T}\sum_{t=t'+1}^{T}A^{T-t'}(x_{t},...x_{t'+1})\otimes x_{t} \tag{By the fact that $A'$ defects to playing $A$ after round $t'$}\\
\end{align*}




We will now show that, for any $\eps > 0$, for sufficiently large $T$ the above CSP is within $\ell_{\infty}$-distance $\eps$ of a point in $\cM'$. It then follows that $\cM_{A'} \subseteq \M'$.

To do so, first note that if we take $T$ large enough s.t. $\eps T > 10mn\log(C)$, then (simply by modifying $2mn\log(C)$ terms of magnitude at most $1/T$) the above CSP is within distance $\eps/5$ of the CSP

$$\tilde{\csp} = \frac{1}{T}\sum_{i=1}^{t'}P(x)_{i} \otimes P(y)_{i} + \frac{1}{T}\sum_{t=t'+1}^{T}A^{T-t'}(x_{t},...x_{t'+1})\otimes x_{t}.$$

\noindent
We will also define

\begin{equation}\label{eq:Mpdef}
\tilde{\csp}_{\cM'} = \frac{1}{t'}\sum_{i=1}^{t'}P(x)_{i}
\end{equation}

\noindent
and

\begin{equation}\label{eq:Mdef}
\tilde{\csp}_{\cM} = \frac{1}{T-t'}\sum_{t=t'+1}^{T}A^{T-t'}(x_{t},...x_{t'+1})\otimes x_{t}.
\end{equation}

By doing so we can write $\tilde{\csp} = \frac{t'}{T}\csp_{\cM} + (1 - \frac{t'}{T})\csp_{\cM'}$. We will now show that there exist $\csp_{\cM} \in \cM$ and $\csp_{\cM'} \in \cM'$ such that, for sufficient large $T$, 

\begin{align}
\frac{t'}{T}||\tilde{\csp}_{\cM'} - \csp_{\cM}||_{\infty} &\leq \eps/5,\label{eq:Mptilde}\\
\frac{T - t'}{T}||\tilde{\csp}_{\cM} - \csp_{\cM'}||_{\infty} &\leq \eps/5\label{eq:Mtilde}.
\end{align}

Together, these two statements imply that $\tilde{\csp}$ is within distance $2\eps/5$ from a convex combination of two CSPs in $\M'$, and therefore itself within distance $2\eps/5$ from $\M'$.

To prove \eqref{eq:Mptilde}, note that whenever $t'$ is a multiple of $C$, $\tilde{\csp}_{\cM}$ is equal to our selected point $\csp' \in S$. If $t' > 10C/\eps$, then by splitting the sum in \eqref{eq:Mpdef} at the largest multiple of $C$, we have that $||\tilde{\csp}_{\cM'} - \csp'|| \leq \eps/10$. Since $\csp'$ is in turn within distance $1/C$ of a CSP in $\cM'$, it follows that $||\tilde{\csp}_{\cM'} - \csp_{\cM'}||_{\infty} \leq \eps/10 + 1/C$ for this CSP $\csp_{\cM'}$; for sufficiently large $T$, $C \leq 10/\eps$ and we have established \eqref{eq:Mpdef}. On the other hand, if $t' \leq 10C / \eps$, then $t'/T \leq 10C/(\eps T)$, which again is at most $\eps/5$ for sufficiently large $T$ (in particular, some $T = \Omega(\eps^{-4})$ suffices).

Similarly, to prove \eqref{eq:Mtilde}, note that $\tilde{\csp}_{\cM}$ belongs to the finite-time menu $\cM(A^{T-t'})$. Let $d(t)$ equal the Hausdorff distance between $\cM(A^{t})$ and $\cM$; since the sequence $\cM(A^{t})$ converges in the Hausdorff to $\cM$, $d(t) = o(1)$ is sub-constant in $t$. In particular, for sufficiently large $T$, we have that $d(\sqrt{T}) \leq \eps/10$. This directly implies that if $T - t' \geq \sqrt{T}$, then \eqref{eq:Mtilde} holds. But if $T - t' < \sqrt{T}$, then the LHS of \eqref{eq:Mtilde} is at most $1/\sqrt{T}$, which is also at most $\eps/10$ for sufficiently large $T$.  This concludes the proof that $\cM(\cA') \subseteq \cM'$.

Now we will prove that $\cM' \subseteq \cM(\cA')$. To see this, consider any point $\csp \in \cM'$. We will show that it is possible to achieve this point in the limit against $A'$. The way that the Optimizer will achieve this point is by picking the CSP $\csp'_{net(T)}$ in the $\frac{1}{C}$-net $S$ which is closest in distance to their desired point in $\cM'$ and communicating with the Learner that this is what they want to achieve. This defines a limit cycle sequence $(P(x),P(y))$ which the two players will play together. Then, as long as the Optimizer doesn't defect from this sequence, the CSP of the final point is

\begin{align*}
& \frac{1}{T}\sum_{t=1}^{T}x_{t} \otimes y_{t} = \frac{1}{T}\sum_{t=1}^{mn\log(C)}x_{t} \otimes y_{t} + \frac{1}{T}\sum_{i=1}^{T-mn\log(C)} P(x)_{i} \otimes P(y)_{i}
\end{align*}

Taking the limit:

\begin{align*}
& \lim_{T \rightarrow \infty} \left(\frac{1}{T}\sum_{t=1}^{mn\log(C)}x_{t} \otimes y_{t} + \frac{1}{T}\sum_{i=1}^{T-mn\log(C)} P(x)_{i} \otimes P(y)_{i}\right) \\
& =  \lim_{T \rightarrow \infty} \left(\frac{1}{T}\sum_{i=1}^{T-mn\log(C)} P(x)_{i} \otimes P(y)_{i}\right) \\
& =  \lim_{T \rightarrow \infty} \left(\frac{1}{T}\sum_{i=1}^{T} P(x)_{i} \otimes P(y)_{i}\right)  =  \lim_{T \rightarrow \infty} \csp'_{net(T)} \\
& = \csp^{*}
\end{align*}

Recall that $\csp^{*}$ was our desired point in $\cM'$. Therefore $\cM' \subseteq \cM_{A'}$, completing our proof.
\end{proof}

\subsection{Proof of Theorem~\ref{thm:nr_nsr_containment}}

\begin{proof}[Proof of Theorem~\ref{thm:nr_nsr_containment}]
Fix some $u_L$. Let $\M_{NR}(\varepsilon)$ denote the relaxation of the set $\M_{NR}$ where the no-regret constraint is relaxed to:

$$\sum_{i\in[m]}\sum_{j\in[n]} \csp_{ij}u_{L}(i, j) \geq \max_{j^{*} \in [n]}\sum_{i\in[m]}\sum_{j\in[n]} \csp_{ij}u_{L}(i, j^*) - \varepsilon.$$

Let $\A$ be a no-regret algorithm. Then, for all $(u_L, u_O)$, its average total regret over $T$ rounds is upper bounded by some function $f(T)$ such that $\lim_{T \rightarrow \infty} \frac{f(T)}{T} = 0$. 
For each time horizon $T$,  define the set $S_T := M_{NR}\left( \varepsilon_T \right)$ where $\eps_T = \min\left(\eps_{T-1}, \frac{f(T,n)}{T}\right)$ (so that the sequence $\eps_T$ is non-increasing). It follows that $S_1 \supseteq S_2 \supseteq \cdots$ with $\lim_{T \rightarrow \infty} S_T = \M_{NR}$ (since $\lim_{T \rightarrow \infty}  \varepsilon_T = 0$). But it also follows from the no-regret constraint that $\M(A^{T}) \subseteq S_T$ for each $T$, and thus, $\M(\A) \subseteq \M_{NR}$.

A similar argument works for no-swap-regret through the use of the corresponding relaxation.
\end{proof}

\subsection{Proof of Lemma \ref{lem:nsr_within_all_nr}}

\begin{proof}[Proof of Lemma \ref{lem:nsr_within_all_nr}]
We will show that every CSP of the form $x \otimes y$ with $x \in \Delta_m$ and $y \in \BR_L(x)$ belongs to $\M(\cA)$. By Lemma \ref{lem:nsr_characterization}, this implies that $\M_{NSR} \subseteq \M(\A)$.

First, consider the case where $y$ is the unique best response to $x$, i.e., $\BR_L(x) = \{y\}$. By Theorem \ref{thm:characterization}, the asymptotic menu $\M(\A)$ must contain a point of the form $x \otimes y'$ for some $y' \in \Delta_n$. If $y' \not\in \BR_L(x)$, then this CSP has high regret (it violates \eqref{eq:no-regret-constraint}) and contradicts the fact that $\A$ is no-regret. It follows that we must have $y' = y$.

If $y$ is not a unique best-response to $x$, then we will find a point $x'$ close to $x$ such that $y$ is a unique best response to $x'$. To prove we can do this, note that since $y$ is neither weakly dominated (by assumption) nor strictly dominated (since it is a best-response), there must exist a strategy $x^* \in \Delta^m$ for the optimizer which uniquely incentivizes $y$. Thus, for sufficiently small $\delta > 0$, the optimizer strategy $x(\delta) =(1-\delta) x  + \delta x^*$ uniquely incentivizes $y$ as a best response, and thus $x(\delta) \otimes y \in \M(\A)$. Since $\M(\A)$ is closed, taking the limit as $\delta \rightarrow 0$ implies that $x \otimes y \in \M(\A)$, as desired.
\end{proof}

\subsection{Characterization of $\M_{NR}$}

In Lemma \ref{lem:nsr_characterization}, we show that the no-swap-regret menu can be alternatively characterized as the convex hull of all ``best response'' CSPs, of the form $x \otimes y$ with $x \in \Delta_m$ and $y \in \BR_L(x)$. In this subsection, we prove a similar characterization of the extreme points of $\M_{NR}$: namely, $\M_{NR}$ is the convex hull of all CSPs with exactly zero regret.

Specifically, for a CSP $\csp$ define


$$\Reg(\csp) = \max_{j^{*} \in [n]}\sum_{i\in[m]}\sum_{j\in[n]} \csp_{ij}u_{L}(i, j^*) - \sum_{i\in[m]}\sum_{j\in[n]} \csp_{ij}u_{L}(i, j).$$

We have the following lemma.

\begin{lemma}\label{lem:mnr-characterization}
$\M_{NR}$ is the convex hull of all CSPs $\csp$ with $\Reg(\csp) = 0$.
\end{lemma}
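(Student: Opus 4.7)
The plan is to exploit the fact that $\M_{NR}$, as defined by the no-regret constraint \eqref{eq:no-regret-constraint}, is a polytope cut out of $\Delta_{mn}$ by finitely many explicit linear inequalities (the $n$ no-regret constraints, one per deviation $j^*$). The forward containment is trivial: any CSP with $\Reg(\csp)=0$ satisfies the no-regret condition, hence lies in $\M_{NR}$, so the convex hull of all such CSPs is contained in $\M_{NR}$.

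For the reverse containment, I will show that every extreme point of the polytope $\M_{NR}$ satisfies $\Reg(\csp)=0$; the Krein–Milman theorem (or finite-dimensional convex hull theorem for polytopes) then immediately gives $\M_{NR}$ as the convex hull of its zero-regret extreme points. To see this, let $\csp$ be an extreme point of $\M_{NR}$. The active inequality constraints at $\csp$ consist of some subset of the nonnegativity constraints $\csp_{ij}\geq 0$ together with some subset of the $n$ no-regret constraints. If any no-regret constraint is tight at $\csp$, then by definition the corresponding deviation $j^*$ achieves the maximum in $\Reg$, so $\Reg(\csp)=0$.

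It thus remains to handle the case where no no-regret constraint is tight, so $\csp$ is an extreme point purely by virtue of tight nonnegativity constraints and the normalization constraint $\sum \csp_{ij}=1$. In that case, $\csp$ must be a vertex of the ambient simplex $\Delta_{mn}$, i.e., a pure profile $\csp = \mathbbm{1}_{(i,j)}$ for some $i\in[m]$, $j\in[n]$. But then $\Reg(\csp) = \max_{j^*} u_L(i,j^*) - u_L(i,j) \geq 0$, while membership in $\M_{NR}$ forces $\Reg(\csp)\leq 0$, so $\Reg(\csp)=0$ (equivalently, $j\in\BR_L(i)$). Either way, every extreme point has zero regret.

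The proof should be short and largely a matter of correctly invoking the polytope structure; the only mild subtlety (and the step most likely to require a careful sentence) is confirming that an ``extreme point'' of $\M_{NR}$ in this sense is a vertex that lies in the convex hull of all zero-regret CSPs, rather than some limit point — but since $\M_{NR}$ is a polytope in the bounded set $\Delta_{mn}$, it is the convex hull of its finitely many vertices, so no such subtlety actually arises.
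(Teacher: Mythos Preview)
Your proposal is correct and follows essentially the same approach as the paper's proof: both show the forward containment trivially, then argue that every extreme point of the polytope $\M_{NR}$ has $\Reg(\csp)=0$ by splitting into the case where a no-regret inequality is active (giving $\Reg=0$ directly) and the case where only simplex constraints are active (forcing $\csp$ to be a pure profile $i\otimes j$, for which $\Reg\geq 0$ always and $\Reg\leq 0$ by membership). The paper phrases the dichotomy as ``boundary of $S$ versus extreme point of $\Delta_{mn}$'' rather than in active-constraint language, but the argument is identical.
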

\begin{proof}
We can define $\M_{NR}$ as the intersection of the simplex $\Delta_{mn}$ and the set $S$ of points $x \in \Rset^{mn}$ where $\Reg(x) \leq 0$ (note that we can extend the definition of $\Reg(\csp)$ above to a convex function for all $x \in \Rset^{mn}$). By definition, all CSPs $\csp$ with $\Reg(\csp) = 0$ are contained in $\M_{NR}$. All extreme points $\csp$ of $\M_{NR}$ must lie either on the boundary of $S$ (where it is guaranteed that $\Reg(\csp) = 0$), or must themselves be an extreme point of $\Delta_{mn}$. But the extreme points of $\Delta_{mn}$ correspond to pure strategies of the form $\csp = i \otimes j$. If such a strategy has $\Reg(\csp) \leq 0$, it must also have $\Reg(\csp) = 0$ (since replacing $j^* = j$ leaves the learner's utility unchanged).

It follows that all extreme points of $\M_{NR}$ have zero regret.
\end{proof}

Lemma \ref{lem:mnr-characterization} should not be interpreted as implying that \emph{every} point in $\M_{NR}$ has exactly zero regret; indeed, many points in the interior of $\M_{NR}$ may have negative regret $\Reg(\csp) < 0$. In particular, $\Reg$ is a convex function of $\csp$, and it may be the case that the convex combination of two CSPs may have less regret than either individual CSP. In contrast, the swap regret function is always non-negative, so $\M_{NSR}$ does exactly contain the points with zero swap regret.

\subsection{Proof of Lemma~\ref{lem:sub-mean-based}}

\begin{proof}[Proof of Lemma~\ref{lem:sub-mean-based}]

We will do this in two steps. First, call a segment $(x_i, t_i, b_i)$ of a trajectory \emph{non-degenerate} if $b_i$ is a strict best response of the learner throughout the entire segment, i.e., $\BR_{L}(\ox_{i-1}) = \BR_{L}(\ox_i) = \{b_i\}$. We say a trajectory itself is $\eps$-non-degenerate if at least a $1-\eps$ fraction of the segments of the trajectory (weighted by duration $t_i$) are non-degenerate. We begin by claiming that for any valid trajectory $\tau$ and $\eps > 0$, there exists an $\eps$-non-degenerate trajectory $\tau'$ such that $||\Prof(\tau) - \Prof(\tau')||_{1} \leq \eps$. In other words, for any CSP induced by $\tau$, we can always approximately induce it by a trajectory which is almost always non-degenerate.

To prove this, if $\tau = \{(x_1, t_1, b_1), \dots, (x_k, t_k, b_k)\}$, we will construct a $\tau'$ of the form 

$$\tau' = \{ (y_1, \delta_1), (x_1, t_1, b_1), (y_2, \delta_2), (x_2, t_2, b_2), \dots, (y_{k}, \delta_{k}), (x_k, t_k, b_k)\},$$ 

\noindent
where $\tau'$ is formed by interleaving $\tau$ with a sequence of short ``perturbation'' segments $(y_i, \delta_i)$. The goal will be to choose these segments such that each primary segment of the form $(x_i, t_i, b_i)$ is non-degenerate, while keeping the duration $\delta_i$ of each perturbation segment small. Note that we haven't specified the learner's response for these perturbation segments -- the intent is that since these segments are small, they will not significantly contribute to $\Prof(\tau)$ and we can ignore them (in fact, it is possible that they will cross best-response boundaries, in which case we can split them up into smaller valid segments). 

Now, let $\overline{x'}_{i-}$ and $\overline{x'}_{i+}$ be the time-weighted average strategies of the optimizer right before and right after segment $(x_i, t_i, b_i)$ in $\tau'$, respectively. Let $\overline{y}_i = (\sum_{j=1}^{i} y_j\delta_j) / (\sum_{j=1}^{i} \delta_{j})$ be the time-weighted average strategy of the first $i$ perturbations. Note that $\overline{x'}_{i-}$ is a non-trivial convex combination of $\overline{x}_{i-1}$ and $\overline{y}_i$. Since $b_i \in \BR_L(\ox_{i-1})$, if $b_i$ is a strict best-response to $\overline{y}_i$ (i.e., $\BR_{L}(\overline{y}_i) = \{b_i\}$), then this would imply that $b_i$ is a strict best-response to $\overline{x'}_{i-}$. Likewise, $\overline{x'}_{i+}$ is a non-trivial convex combination of $\ox_{i}$ and $\overline{y}_i$, so this would also imply that $b_i$ is a strict best-response to $\overline{x'}_{i+}$. Altogether, if $\overline{y}_i$ strictly incentivizes action $b_i$, then segment $(x_i, t_i, b_i)$ is non-degenerate.

But note that since no action for the learner is weakly dominated, we can construct a sequence of perturbations where each $\overline{y}_i$ strictly incentivizes action $b_i$ (simply set $y_i$ equal to an action which strictly incentivizes $b_i$, and make $\delta_i$ big enough compared to preceding $\delta$). But also, if a given sequence $\{(y_1, \delta_1), \dots, (y_k, \delta_k)\}$ of perturbations has this property, so does any scaling $\{(y_1, \lambda\delta_1), \dots, (y_k, \lambda\delta_k)\}$ of this sequence. So we can scale this sequence so that $\sum_{i=1}^{k} \delta_{i} \leq 0.5 \cdot \eps \cdot \sum_{i=1}^{k} t_i$. 

Doing this guarantees that the trajectory is $\eps$-non-degenerate. It also guarantees that $\Prof(\tau') = (1 - \eps/2)\Prof(\tau) + (\eps/2)\csp_{\delta}$ for some other CSP $\csp_{\delta} \in \Delta_{mn}$; from this, we have that $||\Prof(\tau) - \Prof(\tau')||_{1} \leq (\eps/2) (||\Prof(\tau)||_1 + ||\csp_{\delta}||_1) = \eps$. Our desired claim follows.

In the second step, we will show that given any $\eps$-non-degenerate trajectory $\tau$, there exists a CSP $\csp$ in $\M$ with the property that $||\csp - \Prof(\tau)||_1 \leq 2\eps$. Combining this with the previous claim, this implies that for any valid trajectory $\tau$, there exists a sequence of CSPs in $\M$ converging to $\Prof(\tau)$, and thus $\Prof(\tau) \in \M$ (and $\M_{MB} \subseteq \M$). 

To prove this, we simply play out the trajectory. Without loss of generality, assume that for $\tau$, the total duration $\sum_{i=1}^{k} t_i = 1$ (if not, scale it so that this is true). The adversary will play each action $x_i$ for $\lfloor t_iT \rfloor$ rounds. 

Now, assume $\cA$ is a mean-based algorithm with a specific rate $\gamma(T)$. Assume segment $i$ of $\tau$ is non-degenerate and consider the action of the learner at round $t = (t_1 + t_2 + \dots + t_{i-1} + \lambda t_i)T$ (for some $0 \leq \lambda \leq 1$). At this point in time, for any action $j \in [n]$ for the learner, we can write

$$\frac{1}{t}\sum_{s=1}^{t}u_L(x_t, j) = (1-\lambda) u_L(\ox_{i-1}, j) + \lambda u_L(\ox_i, j).$$

Since segment $i$ is non-degenerate, $\BR_L(\ox_{i-1}) = \BR_L(\ox_i) = \{b_i\}$, so there exists some constant $g_i$ (depending only on $\tau$ and not on $T$) such that $u_L(\ox_{i-1}, b_i) > u_L(\ox_{i-1}, b') + g_i$ and $u_L(\ox_{i}, b_i) > u_L(\ox_{i}, b') + g_i$ for any $b' \neq b_i$. It follows that for any $b' \neq b_i$,

$$\frac{1}{t}\sum_{s=1}^{t}u_L(x_t, b_i) > g_i + \frac{1}{t}\sum_{s=1}^{t}u_L(x_t, b').$$

Since $t$ scales with $T$, for large enough $T$, we will have $g_i > \gamma(t)$. The mean-based property then implies that the learner must put at least $(1 - n\gamma(t)) = 1 - o(1)$ weight on action $b_i$. It follows that each non-degenerate segment $(x_i, t_i, b_i)$ contributes a $t_i(x_i \otimes b_i)$ term to the limiting CSP as $T \rightarrow \infty$. But this is exactly the term this segment contributes to $\Prof(\tau)$. Since all but an $\eps$-fraction of $\tau$ is comprised of non-degenerate segments, this means the limiting CSP $\csp$ in $\M$ has the property that $||\csp - \Prof(\tau)||_{1} \leq 2\eps$. 
\end{proof}

\subsection{Proof of Lemma \ref{lem:ftrl_properties}}

\begin{proof}[Proof of Lemma \ref{lem:ftrl_properties}]
    We prove these properties in order.
    
    First, for any time horizon $T$, the definition of $\ftrl$ ensures that $\left|\frac{R_T(y)}{\eta_T}\right| \le g(T)$ for some monotone sublinear function $g(T) = o(T)$. At round $t$, consider any action $j$ that is at least $\sqrt{\frac{2g(T)}{T}}$ worse on average than the historical best action $j'$. Assume, for contradiction, that the distribution $y_t$ picked by the algorithm puts mass at least $\sqrt{\frac{2g(T)}{T}}$ on action $j$, i.e. $y_{t,j} \ge \sqrt{\frac{2g(T)}{T}}$. An alternative distribution $y'_t \in \Delta^n$ can thus be constructed from $y_t$ by setting $y'_{t,i} = y_{t,i}$ for all $i \notin \{j,j'\}$, $y'_{t,j} = y_{t,j} - \sqrt{\frac{2g(T)}{T}}$ and  $y'_{t,j'} = y_{t,j'} + \sqrt{\frac{2g(T)}{T}}$.  It is easy to see that $y'_t$ obtains lower value on the objective minimized by $y_t$, leading to a contradiction. Thus, the finite-time algorithm is mean-based with $\gamma(T) =\sqrt{\frac{2g(T)}{T}} $.

    The regret upper bound follows from standard regret properties of FTRL algorithms with strongly convex regularizers (see \cite{hazan201210}), while the lower bound is proved for all FTRL algorithms by~\citet{gofer2016lower}.

    The third property follows directly from the definition of $\ftrl_T(\eta, R)$, for the function $\wtA_T(U)$ given by

    \begin{equation}\label{eq:wta-def}
    \wtA_T(U) = \arg\max_{y \in \Delta_{n}} \left(\langle U, y\rangle - \frac{R(y)}{\eta_T}\right).
    \end{equation}

\noindent
    Note that since $y \in \Delta_{n}$, adding $\Delta$ to each component of $U$ increases $\langle U, y \rangle$ by a $y$-independent $\Delta$, and therefore does not affect the argmax.

    Finally, to show the last property, we will take

    \begin{equation}\label{eq:wtap-def}
    \wtA'_T(U') = \arg\max_{y' \in \Delta_{n-1}} \left(\langle U', y'\rangle - \frac{R'(y')}{\eta_T}\right),
    \end{equation}

    \noindent
    where $U' \in \mathbb{R}^{n-1}$ and $R'(y') = R(\pi(y))$ (recall $\pi$ is the embedding from $\Delta_{n-1}$ to $\Delta_{n}$ that ignores the $j$th coordinate). For a fixed $U$ (and corresponding $U'$ obtained by removing coordinate $j$), let $\Phi(y) = \langle U, y\rangle - R(y)/\eta_T$ and $\Phi'(y') = \langle U', y'\rangle - R'(y')/\eta_T$. Let $y_{\opt}$ and $y'_{\opt}$ be the maximizers of $\Phi$ and $\Phi'$ respectively (i.e., the values of $\wtA_T(U)$ and $\wtA'_{T}(U')$). Finally, let $r(\eps)$ be the maximum value of $R(y_1) - R(y_2)$ over all pairs $(y_1, y_2) \in \Delta_n \times \Delta_n$ with $||y_1 - y_2||_1 \leq \eps$. Since $R(y)$ is bounded and convex, $r(\eps)$ must go to zero as $\eps$ approaches $0$.

    By construction, $R$ is $\alpha$-strongly-convex for some fixed $\alpha > 0$. Then $\Phi(y)$ is $(\alpha/\eta_T)$-convex, so for any $y \in \Delta_n$, if $||y - y_{\opt}||_2 \geq \delta$, then $\Phi(y_{\opt}) - \Phi(y) \geq \alpha\delta^2/\eta_T$. 
    
    On the other hand, consider the $z \in \Delta_{n}$ formed by taking $y_{\opt}$ and moving all the mass on coordinate $j$ to the coordinate $j^{*} = \argmax_{j^{*} \in [n]} U_{j^{*}}$. Since we have moved mass to a better coordinate, we have that $\langle U, z \rangle \geq \langle U, y_{\opt} \rangle$. But also, since $y_{\opt}$ has at most $\gamma(T)$ weight on action $j$ (by the mean-based property we established earlier), $||y_{\opt} - z||_{1} \leq 2\gamma(T)$,  and so $|R(y_{\opt}) - R(z)| \leq r(2\gamma(T))$. Together, this shows that $\Phi(y_{\opt}) - \Phi(z) \leq r(2\gamma(T))/\eta$.

    But since $z$ has no mass on coordinate $j$, $z = \pi(z')$ for some $z' \in \Delta_{n-1}$. From the definition of $\Phi'$, we further have that $\Phi'(z') = \Phi(z)$. Since $y'_{\opt}$ maximizes $\Phi'$, we have that $\Phi'(y'_{\opt}) \geq \Phi'(z')$.

    It follows that $\Phi(y_{\opt}) - \Phi(\pi(y'_{\opt})) = \Phi(y_{\opt}) - \Phi'(y'_{\opt}) \geq r(2\gamma(T))/\eta$. From the earlier statement about the strong convexity of $\Phi$, this implies that $||y_{\opt} - \pi(y'_{\opt})||_2 \leq \sqrt{r(2\gamma(T))/\alpha}$. Since $\gamma(T) = o(1)$ and $r(\eps)$ converges to zero as $\eps$ goes to zero, it follows that if we let $\sshift(T) = T\sqrt{r(2\gamma(T))/\alpha}$, then $||y_{\opt} - \pi(y'_{\opt})|| \leq \sshift(T)/T$, and $\sshift(T) = o(T)$, as desired.

\end{proof}

\section{The mean-based menu and no-regret menu may differ}\label{app:mb_nr_gap}

The mean-based menu contains all CSPs asymptotically reachable against some mean-based algorithm. Since there are many no-regret mean-based algorithms, it is natural to wonder if perhaps the mean-based menu $\M_{MB}$ is simply equal to the no-regret menu $\M_{NR}$: that is, perhaps any CSP $\csp$ with no-regret is achievable as the profile $\Prof(\tau)$ of a valid trajectory $\tau$.

In this Appendix, we show that this is not the case. In particular, we prove the stronger statement that for the choice of $u_L$ in \eqref{eq:game}, no mean-based algorithm has $\M_{NR}$ as its asymptotic menu.

\begin{lemma}
\label{lem:nr_ne_mw}
    For the choice of $u_L$ in \eqref{eq:game}, the asymptotic menu of any mean-based no-regret algorithm $\A$ is a strict subset of $\M_{NR}$. In particular, $\M_{MB} \subset \M_{NR}$. 
\end{lemma}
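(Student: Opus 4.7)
The plan is to exhibit a specific CSP in $\M_{NR}$ that lies outside $\M(\cA)$ for every mean-based no-regret $\cA$. I would use
\[
  \csp^{\sharp} \;=\; \tfrac{1}{2}(N \otimes A) + \tfrac{1}{2}(Y \otimes B).
\]
A direct computation gives learner utility $1/6$ under $\csp^{\sharp}$, while the best fixed response to the marginal $\tfrac{1}{2}N+\tfrac{1}{2}Y$ yields only $1/12$ (attained by $B$; $A$ and $C$ both give $0$), so $\csp^{\sharp} \in \M_{NR}$. Note $\csp^{\sharp} \notin \M_{NSR}$, since swapping the weight on $B$ to $C$ strictly improves the learner's utility; a short case analysis of which valid continuous-time trajectory prefixes can contribute weight to $(Y,B)$ without also contributing to $(Y,A)$ or $(N,B)$ further shows $\csp^{\sharp} \notin \M_{MB}$.

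For the main step, suppose that some mean-based $\cA$ and optimizer sequence $\{x_t\}$ induce a CSP converging to $\csp^{\sharp}$. Let $X_t(N)=\sum_{s \le t} x_s(N)$ and $X_t(Y)=\sum_{s \le t} x_s(Y)$. Because $u_L(\cdot,A)\equiv 0$, the learner's cumulative payoffs satisfy $U^A_t = 0$, $U^B_t = X_t(Y)/3 - X_t(N)/6$, and $U^C_t = (X_t(Y)-X_t(N))/2$. Partition the state space (up to the $O(\gamma(T)T)$ mean-based slack) into an ``$A$-region'' $\{X(Y) < X(N)/2\}$, a ``$B$-region'' $\{X(N)/2 < X(Y) < 2X(N)\}$, and a ``$C$-region'' $\{X(Y) > 2X(N)\}$; within each region the learner puts weight at least $1-O(\gamma(T))$ on the corresponding action. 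Since $\csp^{\sharp}$ assigns zero weight to $(N,B)$, $(Y,A)$, $(N,C)$, and $(Y,C)$, convergence to $\csp^{\sharp}$ forces in the limit: (i) the state spends $o(T)$ time in the $C$-region; (ii) only $o(T)$ $N$-plays occur while the state is in the $B$-region; (iii) only $o(T)$ $Y$-plays occur while the state is in the $A$-region.

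The contradiction is geometric. Decompose the trajectory into maximal time-intervals during which the state remains in the $A$-region. Each such segment must end with the state on the $A$-$B$ boundary $X(Y) = X(N)/2$, since the only way out is a $Y$-play that pushes the ratio above $1/2$, and (by (ii)) each re-entry from $B$ into $A$ is executed by an $N$-play from a state already on (or just above) that same boundary. Hence on each segment with $\Delta N$ new $N$-plays and $\Delta Y$ new $Y$-plays, $\Delta Y = \Delta N/2 + O(1)$. By (ii) the total number of re-entries --- and therefore the total number of $A$-region segments --- is $o(T)$; by (ii) also, all but $o(T)$ of the total $p_N T = T/2$ $N$-plays must occur inside $A$-region segments. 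Summing $\Delta Y = \Delta N/2 + O(1)$ over all segments yields at least $T/4 - o(T)$ $Y$-plays inside $A$-region rounds, contradicting (iii). The main obstacle I anticipate is carefully absorbing the $O(\gamma(T)T)$ mean-based slack into the $o(T)$ error terms: the regions should be thickened by an $O(\gamma(T)T)$ boundary strip, and one must show the trajectory spends only $o(T)$ total time inside this strip.

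Combining $\csp^{\sharp} \notin \M(\cA)$ with $\M(\cA) \subseteq \M_{NR}$ (Theorem~\ref{thm:nr_nsr_containment}) gives $\M(\cA) \subsetneq \M_{NR}$. The ``in particular'' claim $\M_{MB} \subsetneq \M_{NR}$ then follows from $\M_{MB} \subseteq \M(\cA)$ for any mean-based $\cA$ (Lemma~\ref{lem:sub-mean-based}), applied to any concrete mean-based no-regret algorithm such as multiplicative weights.
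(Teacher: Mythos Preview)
Your approach differs substantially from the paper's, and it has a genuine gap that your proposed fix cannot close.

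You partition rounds by the sign of $f_{t-1}=X_{t-1}(Y)-X_{t-1}(N)/2$ and assert that ``within each region the learner puts weight at least $1-O(\gamma(T))$ on the corresponding action.'' This is only true in the \emph{clear} part of each region; inside the strip $|f|\le 3\gamma(t)t$ the mean-based condition does not constrain the learner's choice between $A$ and $B$. Consequently your constraints (ii) and (iii) are only justified for the clear sub-regions, and your proposed remedy is to show the trajectory spends $o(T)$ time in the strip. But this is false conditional on the CSP converging to $\csp^{\sharp}$. Indeed, by your own segment argument (each clear-$A$ segment starts with $f\approx -c$ and ends with $f\approx -c$, so $\Delta Y-\Delta N/2>0$), one gets $Y_{\text{clear-}A}>N_{\text{clear-}A}/2$; since $(Y,A)\to 0$ forces $Y_{\text{clear-}A}=o(T)$, this gives $N_{\text{clear-}A}=o(T)$. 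Combining with $N_{\text{clear-}B},N_{\text{clear-}C}=o(T)$ (from $(N,B),(N,C)\to 0$) and total $N$-plays $\to T/2$, we conclude $N_{\text{strip}}=T/2-o(T)$, hence the strip time is at least $T/2-o(T)$, not $o(T)$. So the obstacle you anticipate is fatal to the argument as outlined: essentially all of the $(N,A)$ mass must be produced inside the strip, where the learner's action is unconstrained by mean-basedness, and it is unclear how (or whether) no-regret alone rules out the required correlation there.

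The paper avoids this difficulty by choosing a different witness CSP, $\gamma=\tfrac{1}{3}(N\otimes A)+\tfrac{1}{3}(Y\otimes A)+\tfrac{1}{3}(Y\otimes C)$, which uses the \emph{non-adjacent} learner actions $A$ and $C$. The $B$-region separating them has width $\Theta(t)$ at time $t$, so any transition between $A$-play and $C$-play forces the learner to spend a constant fraction of time in the clear-$B$ region, hence to play $B$ with positive limiting weight. This gives the clean structural fact that every extreme point of $\M(\cA)$ supported on both $A$ and $C$ must also be supported on $B$. Decomposing $\gamma$ (which has no $B$) then forces the residual $A$-part to equal $(\tfrac12 N+\tfrac12 Y)\otimes A$, which has strictly positive external regret and hence cannot lie in $\M(\cA)\subseteq\M_{NR}$. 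The crucial difference is that the paper exploits a \emph{linear-width} barrier (the $B$-region) rather than trying to control behavior in a \emph{sublinear-width} strip.
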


\begin{proof}
Consider any mean-based no-regret algorithm $\A$. We must show there exists a CSP $\csp$ in $\M_{NR}$ that cannot be in $\M(\A)$. Before doing this, we will prove a useful intermediary result: any extreme point of $\M(\A)$ which contains both the actions $A$ and $C$ must also contain the action $B$. Assume for contradiction that this is not the case. Then, there is an extreme point $p$ which contains $A$ and $C$ but not $B$. Consider some trajectory $\tau$ which achieves $p$ against $\A$ (this must exist because $p$ is an extreme point). Because both $A$ and $C$ are played a constant fraction of the time, there is some constant that can be picked such that, after that time, $A$ and $C$ both must still be played, and therefore there is a switch from $A$ to $C$ or from $C$ to $A$. But during that time, $B$ needs to be played.

Now we can turn back to our main proof. Let the CSP which we claim is not in $\M(\A)$ be $\gamma = (\frac{1}{3}(A,X), \frac{1}{3}(A,Y), \frac{1}{3}(C,Y))$. Assume for contradiction that in fact $\gamma \in \M(\A)$. Then $\gamma$ can be decomposed into extreme points of $\M(\A)$. But none of these extreme points can contain both $A$ and $C$, as proven above. Therefore, one of the extreme points is $CY$, and the remaining extreme points have $(\frac{1}{2}AX, \frac{1}{2}AY)$ as a convex combination. However, $(\frac{1}{2}AX, \frac{1}{2}AY)$ has positive regret (the learner would rather plan $B$). Thus $(\frac{1}{2}AX, \frac{1}{2}AY)$ cannot be within $\M(\A)$, and therefore is not the convex combination of extreme points of $\M(\A)$. This is a contradiction. Therefore, $\gamma \notin \M(\A)$ for any mean-based no-regret algorithm $\A$, completing our proof.
\end{proof}
\section{Beyond utility-specific learners}
\label{sec:game_agnostic}

\subsection{Global Learning Algorithms}
In all previous sections of this paper, we have assumed that the learner has access to full information about their own payoffs at the start of the game. However, we can also think about them as committing to a mapping from learner utilities $u_{L}$ to learning algorithms. We will refer to this mapping as a \emph{global} learning algorithm, and extend our definitions accordingly. When necessary for clarity, we will refer to the standard learning algorithms discussed previously in this paper as \emph{utility-specific} learning algorithms.

\begin{definition}[Global Learning Algorithm]
A global learning algorithm $G$ is a mapping from learner utilities $u_L$ to utility-specific learning algorithms $\A$.
\end{definition}

\begin{definition}[No-Regret Global Algorithm]
A global learning algorithm $G$ is no regret if, for all $u_L$, $G(u_L)$ is a no-regret algorithm.
\end{definition}

\begin{definition}[No-Swap Regret Global Algorithm]
A global learning algorithm $G$ is no swap regret if, for all $u_L$, $G(u_L)$ is a no swap regret algorithm.
\end{definition}

\begin{definition}[Strictly Mean-Based Global Algorithm]
A global learning algorithm $G$ is strictly mean-based if, for all $u_L$, $G(u_L)$ is strictly mean-based.
\end{definition}

While we have previously represented $u_L$ implicitly in our notation for learner payoffs, $u_L$ is unknown to our global learning algorithm $G$, and thus we will instead write $V_{L}(\cA, u_L, u_O)$. We can now define Pareto-optimality and Pareto-domination over this larger space:

\begin{definition}[Asymptotic Pareto-Dominance for Global Learning Algorithms]
A global learning algorithm $G_{1}$ \emph{asymptotically Pareto-dominates} another global learning algorithm $G_{2}$ if, for a positive measure set of $u_{L}$, there is a positive measure set of $u_O$ such that $V_{L}(G_{1}(u_{L}),u_L,u_O) > V_{L}(G_{2}(u_{L}), u_L, u_O)$. Furthermore, for all remaining $u_L$ and $u_O$, $V_{L}(G_{1}(u_{L}),u_L,u_O) \geq V_{L}(G_{2}(u_{L}), u_L, u_O)$. A global learning algorithm is Pareto-optimal if no global learning algorithm Pareto-dominates it.
\end{definition}

There is a tight relationship between Pareto-domination as we have discussed it previously and Pareto-domination over the class of global learning algorithms:

\begin{lemma} \label{lem:global-equivalency}
A global learning algorithm $G$ is asymptotically Pareto-dominated if and only if there exists a positive measure set of $u_L$ such that for each $u_L$, $G(u_L)$ is Pareto dominated by some learning algorithm $\A$.
\end{lemma}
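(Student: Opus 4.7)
The plan is to prove both directions directly from the definitions, with the main work being a careful translation between the global and utility-specific notions of Pareto-dominance.

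For the forward direction ($\Rightarrow$), I would assume $G$ is globally Pareto-dominated by some $G'$. By definition, there is a positive measure set $S$ of learner payoffs $u_L$ such that for each $u_L \in S$ there is a positive measure set of $u_O$ with $V_L(G'(u_L), u_L, u_O) > V_L(G(u_L), u_L, u_O)$, and for every other pair $(u_L, u_O)$ the weak inequality $V_L(G'(u_L), u_L, u_O) \geq V_L(G(u_L), u_L, u_O)$ holds. Fixing any $u_L \in S$, the utility-specific algorithm $G'(u_L)$ then strictly beats $G(u_L)$ on a positive measure of optimizer payoffs and weakly beats it on all other optimizer payoffs, which is precisely the definition of utility-specific Pareto-dominance. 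Thus $G(u_L)$ is Pareto-dominated for every $u_L \in S$, giving the desired positive measure set.

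For the backward direction ($\Leftarrow$), I would suppose that for a positive measure set $S$ of $u_L$, there exists (for each such $u_L$) a utility-specific learning algorithm $\cA_{u_L}$ that Pareto-dominates $G(u_L)$. Define a new global learning algorithm $G'$ by
\[
G'(u_L) = \begin{cases} \cA_{u_L}, & u_L \in S, \\ G(u_L), & u_L \notin S. \end{cases}
\]
For each $u_L \in S$, Pareto-dominance in the utility-specific sense gives a positive measure set of $u_O$ with strict improvement and weak improvement everywhere else. For $u_L \notin S$, the two algorithms agree on every $u_O$. So $G'$ strictly improves on a positive measure of $u_L$ (each paired with a positive measure of $u_O$) and weakly improves everywhere else, which is exactly global Pareto-dominance.

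The two steps that could potentially be obstacles are both mild. First, one has to be careful about what "for all remaining $u_L$ and $u_O$" means in the global definition — I will interpret it as ``for every pair $(u_L, u_O)$ not in the strict-improvement set,'' which is the natural reading and is consistent with the analogous utility-specific definition. Second, the backward construction of $G'$ selects a dominating $\cA_{u_L}$ for each $u_L \in S$; this requires no measurability assumption since a global learning algorithm is defined simply as a mapping, so an arbitrary (choice-based) selection suffices. Neither step involves substantive analytic difficulty, so the lemma reduces to bookkeeping against the two definitions.
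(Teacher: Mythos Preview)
Your proposal is correct and follows essentially the same approach as the paper: for the forward direction you restrict the dominating global algorithm to each $u_L$ in the strict-improvement set, and for the backward direction you splice the utility-specific dominators onto $G$ on the positive-measure set while leaving $G$ unchanged elsewhere. The paper's proof is identical in structure, and your added remarks about the interpretation of ``remaining $(u_L,u_O)$'' and the absence of any measurability requirement on the selection $u_L \mapsto \cA_{u_L}$ are both consistent with how the paper treats these points.
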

\begin{proof}
First, we will prove the forwards direction. Consider some global learning algorithm $G_{2}$ that is asymptotically Pareto-dominated by some global learning algorithm $G_{1}$. Then, by definition, $G_{1}$ Pareto-dominates $G_{2}$ on a positive meausure set of games. Thus for a positive measue set of $u_L$, $G_{2}(u_L)$ is Pareto dominated by the learning algorithm $G_{1}(u_L)$.   

Now, we can prove the backwards direction. Consider some global learning algorithm $G_{2}$ which has some positive measure set of of $u_L$ such that $G_{2}(u_L)$ is Pareto dominated by some learning algorithm $\A$. Let $U^{*}$ be the set of learner utilities such that this Pareto domination occurs, and let $f: U^{*} \rightarrow \A$ be a mapping from these utilities to the learning algorithm that Pareto-dominates $G_{2}(u_L)$. 

Then, we can define a global algorithm $G_{1}$ that takes as input $u_L$ as follows: 
\begin{itemize}
\item If $u_L \in U^*$, behave according to $f(u_L)$.
\item Else, behave according to $G_{2}(u_L)$.
\end{itemize}

We claim that $G_{1}$ Pareto-dominates $G_{2}$. To see this, note that $U^*$ is a positive measure set, so for a positive measure set of $u_L$, $G_{1}(u_L)$ Pareto-dominates $G_{2}(u_L)$. Furthermore, for all $(u_L,u_O)$ pairs, if $u_L \notin U^*$, the algorithms will behave exactly the same (and thus their performance will be equivalent). Finally, if $u_L \in U^*$, by the promise of Pareto-domination of the utility-specific learning algorithm $\A$, there is no $(u_L,u_O)$ pair such that $G_{2}(u_L)$ outperforms $G_{1}(u_L)$.  
\end{proof}

Combining Corollary~\ref{eq:perturbed_game} and Lemma~\ref{lem:global-equivalency}, we attain the following corollary:

\begin{corollary}
    All strictly mean-based global algorithms are Pareto-dominated.
\end{corollary}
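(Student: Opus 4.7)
The plan is to combine two ingredients that have already been assembled in the preceding discussion. First, the generalization of Corollary~\ref{cor:coup_de_grace_mw} (as extended to the family of perturbed payoffs in~\eqref{eq:perturbed_game} in Appendix~\ref{app:mean_based_pareto_dominated}) establishes that for every learner payoff $u_L$ in a positive-measure family, every strictly mean-based utility-specific algorithm is Pareto-dominated by some other utility-specific learning algorithm. Second, Lemma~\ref{lem:global-equivalency} translates this property of utility-specific algorithms into a property of global algorithms: a global algorithm $G$ is Pareto-dominated iff there exists a positive-measure set of $u_L$ on which $G(u_L)$ is Pareto-dominated.

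Given these, the proof is essentially immediate. Let $G$ be any strictly mean-based global algorithm. By definition, for each $u_L$, the utility-specific algorithm $G(u_L)$ is strictly mean-based. Specializing to the positive-measure family of learner payoffs of the form~\eqref{eq:perturbed_game}, the extension of Corollary~\ref{cor:coup_de_grace_mw} guarantees that $G(u_L)$ is Pareto-dominated by some learning algorithm for each such $u_L$. The hypothesis of Lemma~\ref{lem:global-equivalency} is thus satisfied on a positive-measure subset of learner payoffs, so the lemma's conclusion yields that $G$ itself is asymptotically Pareto-dominated.

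The only genuine step here is bookkeeping about what ``strictly mean-based'' means relative to the class covered by the generalized Corollary~\ref{cor:coup_de_grace_mw}. The main obstacle, as already acknowledged in Section~\ref{sec:mb_algos_and_menus}, is that the Pareto-domination result for utility-specific algorithms was proved for the $\ftrl$ class, whose key structural properties are catalogued in Lemma~\ref{lem:ftrl_properties}. To make this corollary's statement formally match, one should either (i) adopt the convention that ``strictly mean-based'' refers precisely to algorithms satisfying the properties of Lemma~\ref{lem:ftrl_properties} (as the remark preceding that lemma already foreshadows: ``in fact, any learning algorithm satisfying these properties can be shown to be Pareto-dominated via the machinery we introduce''), or (ii) verify in the proof of Theorem~\ref{thm:sbm=mb} that only these structural properties of strictly mean-based algorithms are used in the transformation from a trajectory inducing $P_T$ to one inducing the nearby $P'_T$. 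Either resolution leaves the overall two-step argument above intact.
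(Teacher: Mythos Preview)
Your proof is correct and follows essentially the same approach as the paper, which simply states that the corollary is obtained by ``combining Corollary~\ref{eq:perturbed_game} and Lemma~\ref{lem:global-equivalency}'' (where the referenced corollary is the extension of Corollary~\ref{cor:coup_de_grace_mw} to the perturbed family~\eqref{eq:perturbed_game}). Your added discussion about the meaning of ``strictly mean-based'' is a useful clarification, since the paper never formally defines this term for utility-specific algorithms; the proofs indeed only cover the $\ftrl$ class (more generally, algorithms satisfying the properties of Lemma~\ref{lem:ftrl_properties}), so interpreting ``strictly mean-based'' as that class is the reading under which the corollary is rigorously established.
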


And combining Corollary~\ref{eq:perturbed_game}, Theorem~\ref{thm:unique_nsr_menu} and Corollary~\ref{cor:NSR_pareto_optimal}, we attain the following corollary:

\begin{corollary}
    All no swap regret global algorithms are Pareto-optimal.
\end{corollary}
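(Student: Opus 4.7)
The plan is to prove the contrapositive via the equivalence given by Lemma~\ref{lem:global-equivalency}: I will show that if $G$ is a no-swap-regret global algorithm, then for \emph{every} learner payoff $u_L$, the utility-specific algorithm $G(u_L)$ is Pareto-optimal (not just for a measure-one set), so in particular there is no positive measure set of $u_L$ on which $G(u_L)$ is Pareto-dominated.

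First, I would unpack the definitions. Fix an arbitrary $u_L$. By the hypothesis that $G$ is a no-swap-regret global algorithm, $G(u_L)$ is a no-swap-regret utility-specific algorithm for the game with learner payoff $u_L$. Applying Theorem~\ref{thm:unique_nsr_menu} to this algorithm (with the game specified by $u_L$), I get $\M(G(u_L)) = \M_{NSR}$, where $\M_{NSR}$ is the no-swap-regret menu associated with $u_L$. Next, by Corollary~\ref{cor:NSR_pareto_optimal}, this menu $\M_{NSR}$ is Pareto-optimal among asymptotic menus. Finally, Lemma~\ref{lem:menu_alg_equiv} translates Pareto-optimality of the menu back into Pareto-optimality of the underlying algorithm, so $G(u_L)$ is a Pareto-optimal utility-specific algorithm. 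Since $u_L$ was arbitrary, this holds for all $u_L$.

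To finish, I apply Lemma~\ref{lem:global-equivalency} in its contrapositive form: $G$ is Pareto-dominated as a global algorithm if and only if there is a positive measure set of $u_L$ on which $G(u_L)$ is Pareto-dominated by some utility-specific algorithm. We have just shown no such $u_L$ exists at all, so the set is empty (and in particular has measure zero). Hence $G$ cannot be Pareto-dominated, i.e., $G$ is Pareto-optimal.

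There is no real obstacle here; the work has all been done in Theorem~\ref{thm:unique_nsr_menu}, Corollary~\ref{cor:NSR_pareto_optimal}, Lemma~\ref{lem:menu_alg_equiv}, and Lemma~\ref{lem:global-equivalency}. The only point one must be careful about is that these results (especially Theorem~\ref{thm:unique_nsr_menu}, which uses the no-weakly-dominated-actions assumption on $u_L$) were stated under the paper's standing assumptions on $u_L$; since those assumptions hold on a full measure set of $u_L$, the exceptional payoffs do not affect the conclusion, as Pareto-domination in the global sense requires a positive measure set of bad $u_L$.
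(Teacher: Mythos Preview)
Your proposal is correct and follows essentially the same approach as the paper, which simply cites Theorem~\ref{thm:unique_nsr_menu}, Corollary~\ref{cor:NSR_pareto_optimal}, and Lemma~\ref{lem:global-equivalency} without spelling out the details. You have additionally made explicit the use of Lemma~\ref{lem:menu_alg_equiv} and correctly flagged that the standing assumptions on $u_L$ only fail on a measure-zero set, which is harmless for the global Pareto-domination definition.
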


Thus, our main results extend easily to the global setting. However, restating the problem in this way allows us to explore an important subset of global learning algorithms: those which do not directly utilize information about $u_L$.

\subsection{Game-Agnostic Algorithms} 
Both the class of strictly mean-based algorithms and the class of no-swap regret algorithms include many well-known algorithms that do not explicitly map from $u_L$ to a utility-specific learning algorithm. Multiplicative weights, for example, receives information about $u_L$ only in the form of counterfactual payoffs after each round. 

We call algorithms with this level of information about $u_L$ \emph{game-agnostic} algorithms. We will exploit the freedom of dependence that global algorithms can have on $u_L$ in order to explicitly define these game-agnostic algorithms as a subclass of global algorithms.

\begin{definition}[Game-Agnostic Algorithm]
A game-agnostic algorithm is a global learning algorithm which cannot directly utilize information about the initial input $u_L$, or see the number of actions of the optimizer. Instead, this algorithm can only change its state according to the counterfactual payoffs for their actions after each round.
\end{definition}

Game-agnostic algorithms utilize the standard informational model of the online learning setting, in that they only learn about their own utilities counterfactually on a round-per-round basis. We can now define a weaker condition for Pareto-optimality, restricted to the comparison class of game-agnostic algorithms: 

\begin{definition}[Asymptotic Pareto-dominance over Game-Agnostic Algorithms]
A game-agnostic learning algorithm $G$ is \emph{asymptotically Pareto-optimal over Game-Agnostic Algorithms} if it is not asymptotically Pareto-dominated by any game-agnostic learning algorithm. 
\end{definition}

\begin{lemma}
If a game-agnostic algorithm $G$ is Pareto-optimal, it is also Pareto-optimal over game-agnostic algorithms.
\end{lemma}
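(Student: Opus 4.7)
The plan is to observe that this statement follows almost immediately by contrapositive, together with the fact (built into the definitions above) that the class of game-agnostic algorithms is a subclass of the class of global learning algorithms. Concretely, I would argue as follows: suppose for contradiction that $G$ is Pareto-optimal (in the sense of not being Pareto-dominated by any global learning algorithm) but is \emph{not} Pareto-optimal over game-agnostic algorithms. Then by the latter assumption there exists a game-agnostic algorithm $G'$ which asymptotically Pareto-dominates $G$.

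Next, I would invoke the observation that every game-agnostic algorithm is, by its defining paragraph, a particular kind of global learning algorithm (one whose mapping from $u_L$ to utility-specific algorithms only uses counterfactual payoffs rather than direct access to $u_L$). Since $G'$ is a global learning algorithm and $G'$ Pareto-dominates $G$ under the shared definition of asymptotic Pareto-dominance (the definition does not depend on how $G'$ accesses $u_L$), this directly contradicts the Pareto-optimality of $G$ over the class of global algorithms. This completes the proof.

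There is no real technical obstacle here; the only care needed is to verify that the \emph{definition} of asymptotic Pareto-dominance is identical in the two contexts, so that the same witnessing pair $(u_L, u_O)$ used to show $G'$ Pareto-dominates $G$ among game-agnostic algorithms also works to show $G'$ Pareto-dominates $G$ among global algorithms. This is immediate from the definitions stated in this appendix, since Pareto-dominance is a statement purely about the values $V_L(G(u_L), u_L, u_O)$ and $V_L(G'(u_L), u_L, u_O)$ and does not reference how either algorithm was constructed.
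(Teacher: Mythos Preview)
Your proposal is correct and follows essentially the same approach as the paper: both arguments rest on the single observation that game-agnostic algorithms form a subclass of global learning algorithms, so any game-agnostic dominator of $G$ would already contradict $G$'s Pareto-optimality among global algorithms. The only superficial difference is that you phrase this as a contrapositive/contradiction while the paper states it directly.
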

\begin{proof}
Consider some game-agnostic algorithm $G$ which is Pareto-optimal over all $u_L$. Then, it is not asymptotically Pareto-dominated by any global learning algorithm. This includes the set of all game-agnostic learning algorithms. Therefore, $G$ is not Pareto-dominated by any game-agnostic learning algorithm, completing our proof.
\end{proof}

Notably, within the class of no-swap-regret algorithms, which we have proven to be Pareto-optimal, there are many game-agnostic learning algorithms, such as Blum-Mansour and TreeSwap \cite{blum2007external, dagan2023external, peng2023fast}.

\begin{corollary}
All game-agnostic learning algorithms that guarantee sublinear swap regret are Pareto-optimal over the space of all game-agnostic learning algorithms. 
\end{corollary}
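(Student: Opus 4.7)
The plan is to chain together three results that have already been established and reduce the statement to an essentially definitional unpacking. First, I would observe that a game-agnostic algorithm $G$ guaranteeing sublinear swap regret is, viewed as a global algorithm, a no-swap-regret global algorithm in the sense defined earlier in this appendix: for every learner payoff $u_L$, the induced utility-specific algorithm $G(u_L)$ is a no-swap-regret algorithm. This is purely a restatement of the hypothesis, since the swap-regret guarantee is a counterfactual guarantee that does not depend on knowing $u_L$ up front.

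Next, I would apply Theorem~\ref{thm:unique_nsr_menu} pointwise in $u_L$: for each $u_L$, the asymptotic menu of $G(u_L)$ equals $\M_{NSR}$ (the no-swap-regret menu for that $u_L$). Corollary~\ref{cor:NSR_pareto_optimal} then tells us that this menu is Pareto-optimal, and by Lemma~\ref{lem:menu_alg_equiv} this means $G(u_L)$ is a Pareto-optimal utility-specific learning algorithm for every single $u_L$. Hence there is no $u_L$ — and in particular no positive-measure set of $u_L$ — for which $G(u_L)$ is Pareto-dominated by any utility-specific learning algorithm.

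I would then invoke the contrapositive of Lemma~\ref{lem:global-equivalency}: since the set of $u_L$ on which $G(u_L)$ is dominated is empty (and a fortiori has measure zero), $G$ is not asymptotically Pareto-dominated by any global learning algorithm. Thus $G$ is Pareto-optimal over the entire class of global learning algorithms. Finally, the immediately preceding lemma (the one stating that Pareto-optimality of a game-agnostic algorithm implies Pareto-optimality over the game-agnostic class) applies directly and yields the corollary.

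There is essentially no obstacle here: the corollary is a bookkeeping consequence of the infrastructure already built. The only point that warrants a brief sentence of care is checking that ``no-swap-regret'' in the game-agnostic sense (where the guarantee is universal in $u_L$) implies the per-$u_L$ no-swap-regret hypothesis required to invoke Theorem~\ref{thm:unique_nsr_menu}; this is immediate from the definition, since a uniform guarantee specializes to any fixed $u_L$.
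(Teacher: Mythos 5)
Your proof is correct and follows essentially the same route as the paper: the paper also obtains this corollary by noting that a game-agnostic algorithm with sublinear swap regret is a no-swap-regret global algorithm, combining Theorem~\ref{thm:unique_nsr_menu}, Corollary~\ref{cor:NSR_pareto_optimal}, and Lemma~\ref{lem:global-equivalency} to get Pareto-optimality over all global algorithms, and then restricting to the game-agnostic class via the preceding lemma. Your write-up just makes the pointwise-in-$u_L$ bookkeeping more explicit.
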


While we are able to extend our Pareto-optimality results to the game-agnostic setting, we leave as a open question whether $\ftrl$ algorithms are still Pareto-dominated in this general sense. Notably, while many $\ftrl$ algroithms are game-agnostic, the algorithm that we prove Pareto-dominates these algorithms in Corollary \ref{cor:coup_de_grace_mw} is not game-agnostic itself. In particular, the algorithm crucially uses $u_L$ in order to determine which points to ``delete'' from the mean-based menu.


\end{document}